\numberwithin{equation}{section}
\theoremstyle{plain}
\newtheorem{proposition}{Proposition}[section]
\newtheorem{lemma}[proposition]{Lemma}
\newtheorem{theorem}[proposition]{Theorem}
\theoremstyle{definition}
\newtheorem{definition}[proposition]{Definition}
\newtheorem{example}[proposition]{Example}
\newtheorem*{example*}{Example}
\newtheorem{remark}[proposition]{Remark}
\newcommand{\abs}[1]{\left\lvert #1 \right\rvert}
\newcommand{\R}{\mathbf{R}}
\renewcommand{\epsilon}{\varepsilon}
\newcommand{\twin}{\widetilde}
\newcommand{\detJ}{\,\operatorname{J}}
\DeclareMathOperator{\sgn}{sgn}
\newcommand{\X}{\,\begin{tikzpicture} \draw[fill] (0,0) circle (1.1mm); \end{tikzpicture}\,}
\newcommand{\arrowX}{\,\begin{tikzpicture} \draw[fill] (0,0) circle (1.1mm); \draw[thick,->] (0,0.5) -- (0,0.2); \end{tikzpicture}\,}
\newcommand{\Y}{\,\begin{tikzpicture} \draw (0,0) circle (1.1mm); \end{tikzpicture}\,}
\newcommand{\Z}{\phantom{\X}}
\newcommand{\ourtitle}{Non-interlacing peakon solutions of the Geng--Xue equation}
\begin{document}

\title{\ourtitle}
\author{%
  Budor Shuaib\thanks{\textbf{budor.shuaib@liu.se}, \textbf{hans.lundmark@liu.se}, Department of Mathematics, Link\"oping University, SE-581 83 Link\"oping, Sweden.}
  \and
  Hans Lundmark\footnotemark[1]
}

\date{December 20, 2018}
\maketitle

\begin{abstract}
  The aim of the present paper is to derive explicit formulas for
  arbitrary peakon solutions of the Geng--Xue equation, a two-component
  generalization of Novikov's cubically nonlinear Camassa--Holm type equation.
  By performing limiting procedures on the previosly known formulas for so-called
  interlacing peakon solutions,
  where the peakons in the two component occur alternatingly,
  we turn some of the peakons into zero-amplitude ``ghostpeakons'',
  in such a way that the remaining ordinary peak\-ons
  occur in any desired configuration.
  We also study the large-time asymptotics of these solutions.
\end{abstract}

\tableofcontents
\label{toc}

\section{Introduction}
\label{sec:intro}

In 2009, Geng and Xue~\cite{geng-xue:2009:GX-peakon-equation-cubic-nonlinearity}
derived a coupled integrable equation with cubic nonlinearity:
\begin{equation}
  \label{eq:GX}
  \begin{gathered}
    m_t + (m_xu + 3mu_x)v = 0
    ,\\
    n_t + (n_xv + 3nv_x)u = 0
    ,\\
    m = u - u_{xx}
    ,\qquad
    n = v - v_{xx}
    .
  \end{gathered}
\end{equation}
This equation is known as the Geng--Xue (GX) equation and also as the
two-component Novikov equation, since with $u=v$ the system
\eqref{eq:GX} reduces to two copies of the Novikov equation from~2008
\cite{novikov:2009:generalizations-of-CH, hone-wang:2008:cubic-nonlinearity, hone-lundmark-szmigielski:2009:novikov},
\begin{equation}
  \label{eq:Novikov}
  m_t + u (m_x u + 3m u_x) = 0
  ,\qquad
  m = u - u_{xx}
  .
\end{equation}
The GX and Novikov  equations  are mathematical relatives of
the Camassa--Holm (CH) equation
\begin{equation}
  \label{eq:CH}
  m_t + m_x u + 2 m u_x = 0
  ,\qquad
  m = u - u_{xx}
  ,
\end{equation}
which was derived as model for shallow water waves
in~1993
\cite{camassa-holm:1993:CH-orginal-paper, camassa-holm-hyman:1994:CH-new-integrable},
and of the Degasperis--Procesi (DP)
equation found in~1998
\cite{degasperis-procesi:1999:asymptotic-integrability, degasperis-holm-hone:2002:new-integrable-equation-DP},
\begin{equation}
  \label{eq:DP}
  m_t + m_x u + 3 m u_x = 0
  ,\qquad
  m = u - u_{xx}
  .
\end{equation}
Some reasons that these partial differential equations are of interest
are that they are integrable systems and admit \emph{multipeakon}
solutions, which are weak solutions formed by superposition of
peak-shaped waves:
\begin{equation}
  u(x,t) = \sum_{k=1}^N m_k(t) \, e^{-\abs{x - x_k(t)}}
  .
\end{equation}
The positions $x_k(t)$ and the amplitudes $m_k(t)$ are governed by a
system of ODEs, whose precise form depends on which equation we are studying.
The general solution of these peakon ODEs is known for these three
equations (CH, DP and Novikov) in the case that all the amplitudes are
nonzero ($m_k(t)\neq 0$)
\cite{beals-sattinger-szmigielski:2000:moment,lundmark-szmigielski:2005:DPlong,hone-lundmark-szmigielski:2009:novikov}.

The case when some amplitudes are zero might seem uninteresting, since
if $m_k(0)=0$, then $m_k(t)=0$ for all~$t$, and then peakon number~$k$
is simply absent from the solution. However, the equation for $x_k(t)$
in the peakon ODEs is still nontrivial and its solution describes the
trajectory of a zero-amplitude \emph{ghostpeakon} which is influenced
by the other peakons but does not influence them. These ghostpeakon
trajectories are of interest, since they are in fact the
\emph{characteristic curves} associated with the multipeakon solution,
and we recently derived explicit formulas for the ghostpeakons for the
CH, DP and Novikov equations~\cite{lundmark-shuaib:2018p:ghostpeakons}.
Moreover, as we will demonstrate in this paper, the methods used for
studying ghostpeakons are also very useful for studying ordinary
(non-ghost) peakons for the GX equation.

The multipeakon solutions of the Geng--Xue equation,
with $N_1$ peakons in~$u$ and $N_2$ peakons in~$v$,
have the form
\begin{equation}
  \label{eq:GX-multipeakons}
  \begin{split}
    u(x,t) &= \sum_{k=1}^{N_1} m_k(t) \, e^{-\abs{x - x_k(t)}}
    ,\\[1ex]
    v(x,t) &= \sum_{k=1}^{N_2} n_k(t) \, e^{-\abs{x - y_k(t)}}
    ,
  \end{split}
\end{equation}
where the positions $x_k(t)$ and~$y_k(t)$ and the amplitudes
$m_k(t)$ and~$n_k(t)$ in~\eqref{eq:GX-multipeakons} have to satisfy the ODEs
\begin{equation}
  \label{eq:GX-peakon-ode}
  \begin{aligned}
    \dot x_k &= u(x_k) \, v(x_k)
    ,\\
    \dot y_k &= u(y_k) \, v(y_k)
    ,\\
    \dot m_k &= m_k \bigl(u(x_k) \, v_x(x_k) - 2 u_x(x_k) v(x_k) \bigr)
    ,\\
    \dot n_k &= n_k \bigl(u_x(y_k) \, v(y_k) - 2 u(y_k) v_x(y_k) \bigr)
    ,
  \end{aligned}
\end{equation}
in order for~\eqref{eq:GX-multipeakons} to be a weak solution of the Geng--Xue equation.
Here the notation
\begin{equation}
  \label{eq:shorthand-notation}
  u(x_k) = \sum_{i=1}^{N_1} m_i e^{-|x_k-x_i|}
  ,\qquad
  u_x(x_k) = - \sum_{i=1}^{N_1} m_i \sgn(x_i-x_k) \, e^{-|x_k-x_i|}
\end{equation}
is shorthand for the expressions obtained by substituting $x=x_k$
into the peakon ansatz~\eqref{eq:GX-multipeakons} and its derivative,
and similarly for the other quantities appearing in~\eqref{eq:GX-peakon-ode}.
Actually, $u_x(x_k)$ denotes the average of the left and right derivatives,
since $u$ is not differentiable at $x=x_k$.
The convention $\sgn(0)=0$ is used here.

We will assume throughout the paper that all amplitudes are \emph{positive} ($m_k>0$ and $n_k>0$),
the so-called \emph{pure peakon} case
where there are no  ``antipeakons''
with negative amplitude,
and also that the peakons are \emph{non-overlapping},
meaning that $x_i \neq y_j$ for all $i$ and~$j$.

For an explanation of these requirements,
see  Lundmark and Szmigielski~\cite{lundmark-szmigielski:2016:GX-inverse-problem, lundmark-szmigielski:2017:GX-dynamics-interlacing},
who solved the ODEs~\eqref{eq:GX-peakon-ode}
and studied the dynamics of the solutions in the
\emph{interlacing} case where $N_1=N_2=K$ and
\begin{equation}
  \label{eq:interlacing-assumption}
  x_1 < y_1 < x_2 < y_2 < \dots < x_K < y_K
  .
\end{equation}
In Section~\ref{sec:interlacing-review} we will recall their explicit
solution formulas for this interlacing peakon configuration
with an even number of peakons.
There is also the case of an interlacing configuration with an odd number of peakons
($N_1=K+1$ and $N_2=K$),
\begin{equation}
  \label{eq:interlacing-assumption-odd}
  x_1 < y_1 < x_2 < y_2 < \dots < x_K < y_K < x_{K+1}
  ,
\end{equation}
which was not studied by Lundmark and Szmigielski.
The solution for the odd case~\eqref{eq:interlacing-assumption-odd}
could be derived by a slight modification of the inverse spectral procedure
used by them in the even interlacing case~\eqref{eq:interlacing-assumption},
but we will instead obtain it here as a special case of
our more general results.

The purpose of this paper is to derive the solution formulas for an
\emph{arbitrary} (pure) peakon configuration, and study the
asymptotics as $t \to \pm\infty$. The general solution will depend on
$2(N_1+N_2)$ arbitrary parameters, which can in principle be
determined from initial data for the $N_1+N_2$ positions and $N_1+N_2$
amplitudes at time $t=0$, say. However, we will not go into the
details of how to do that, since it would require introducing even
more notation than we already have. Our philosophy is instead to study
the global behaviour of the whole solution for given values of the
parameters.

We can label the peakons in such a way that
\begin{equation}
  x_1 < x_2 < \dots < x_{N_1}
  ,\qquad
  y_1 < y_2 < \dots < y_{N_2}
  ,
\end{equation}
where we also assume that $x_i \neq y_j$ for all $i$ and~$j$,
but we do not impose the condition
\eqref{eq:interlacing-assumption}
or~\eqref{eq:interlacing-assumption-odd}.
In this case, which is in general \emph{non-interlacing},
it is more convenient to use a different notation where we gather
adjacent peakons into ``$X$-groups'' and ``$Y$-groups'' and label
them with two indices, as follows:
\begin{equation}
  \label{eq:non-interlacing-notation}
  \begin{split}
    &
    \underbrace{x_{1,1} < x_{1,2} < \dots < x_{1,N_1^X}}_{\text{First $X$-group}}
    <
    \underbrace{y_{1,1} < y_{1,2} < \dots < y_{1,N_1^Y}}_{\text{First $Y$-group}}
    <
    \dotsb
    \\
    <
    &
    \underbrace{x_{j,1} < x_{j,2} < \dots < x_{j,N_j^X}}_{j\text{th $X$-group}}
    <
    \underbrace{y_{j,1} < y_{j,2} < \dots < y_{j,N_j^Y}}_{j\text{th $Y$-group}}
    <
    \dotsb
    \\
    <
    &
    \underbrace{x_{K,1} < x_{K,2} < \dots < x_{K,N_K^X}}_{\text{Last $X$-group}}
    <
    \underbrace{y_{K,1} < y_{K,2} < \dots < y_{K,N_K^Y}}_{\text{Last $Y$-group}}
    ,
  \end{split}
\end{equation}
and similarly for the amplitudes $m_{j,i}$ and $n_{j,i}$.
Here $N_j^X$ and~$N_j^Y$ denote the number of peakons in the $j$th
$X$-group and $Y$-group, respectively.
Thus, the peakon solutions~\eqref{eq:GX-multipeakons} in the
notation~\eqref{eq:non-interlacing-notation} take the form
\begin{equation}
  \label{eq:non-interlacing-peakons}
  \begin{aligned}
    u(x,t) &= \sum_{k=1}^K \left( \sum_{i=1}^{N^X_k} m_{k,i}(t) \, e^{-\abs{x - x_{k,i}(t)}} \right)
    ,
    \\
    v(x,t) &= \sum_{k=1}^K \left( \sum_{i=1}^{N^Y_k} n_{k,i}(t) \, e^{-\abs{x - y_{k,i}(t)}} \right)
    .
  \end{aligned}
\end{equation}
Since $u$ and $v$ play the same role in the Geng--Xue equation
\eqref{eq:GX}, we will always assume (without loss of generality) that
the first group is an $X$-group.
The last group may then be a $Y$-group as in
\eqref{eq:non-interlacing-notation} and
\eqref{eq:non-interlacing-peakons}, in which case the total number of
groups is even ($K+K$), or it may be an $X$-group, so that there is an
odd number of groups ($(K+1)+K$) and we have $u(x,t)=\sum_{k=1}^{K+1}$
instead in~\eqref{eq:non-interlacing-peakons}.
We will need to treat these even and odd cases separately,
and in fact it will turn out that the asymptotics in the odd case shows some
interesting differences from the even case.
For singletons (groups containing only a single peakon),
we will often write just $x_j$ instead of $x_{j,1}$, and similarly for the other variables.
We may also assume that there is at least one group of each kind;
otherwise the dynamics is trivial (everything is constant).

A simple example with $2+2$ groups is shown in
Figure~\ref{fig:noninterlacing-peakons-K=2}, where there is one
$X$-group with two members, and the other groups are singletons;
we will consider this particular configuration in detail in Example~\ref{ex:proof-technique}.

\begin{figure}
  \centering
  \begin{tikzpicture}
    \draw[->] (-1, 0) -- (9.5, 0) node[below] {$x$};
    \draw[->] (0, -0.5) -- (0, 2.5);
    
    \def\uformula{plot (\noexpand\x,{2.5*exp(-abs(\noexpand\x-1))+1*exp(-abs(\noexpand\x-3))+0.5*exp(-abs(\noexpand\x-5))})}
    \def\vformula{plot (\noexpand\x,{2*exp(-abs(\noexpand\x-2))+1.2*exp(-abs(\noexpand\x-8))})}
    
    \draw[blue] (1, 0.1) -- +(0, -0.2) node[below] {\small $x_1$};
    \draw[red]  (2, 0.1) -- +(0, -0.2) node[below] {\small $y_1$};
    \draw[blue] (3, 0.1) -- +(0, -0.2) node[below] {\small $x_{2,1}$};
    \draw[blue] (5, 0.1) -- +(0, -0.2) node[below] {\small $x_{2,2}$};
    \draw[red]  (8, 0.1) -- +(0, -0.2) node[below] {\small $y_2$};
    
    \begin{scope}[blue,very thick]
      \draw [domain = -1 : 1] \uformula;
      \draw [domain = 1 : 3] \uformula;
      \draw [domain = 3 : 7] \uformula;
      \draw [domain = 7 : 9] \uformula;
    \end{scope}
    \begin{scope}[red,very thick]
      \draw [domain = -1 : 2] \vformula;
      \draw [domain = 2 : 5] \vformula;
      \draw [domain = 5 : 8] \vformula;
      \draw [domain = 8 : 9] \vformula;
    \end{scope}
    \draw (0.3,3) node[right] {$u(x,t) = m_1 \, e^{-\abs{x-x_1}} + m_{2,1} \, e^{-\abs{x-x_{2,1}}} + m_{2,2} \, e^{-\abs{x-x_{2,2}}}$};
    \draw (2,2) node[right] {$v(x,t) = n_1 \, e^{-\abs{x-y_1}} + n_2 \, e^{-\abs{x-y_2}}$};
  \end{tikzpicture}
  \caption{A \textbf{non-interlacing} peakon configuration
    with two groups in each component,
    where all the groups are singletons,
    except the second $X$-group which contains two peakons.}
  \label{fig:noninterlacing-peakons-K=2}
\end{figure}
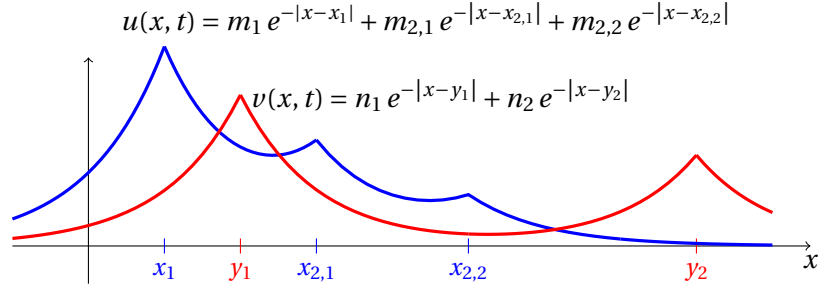

In the notation~\eqref{eq:non-interlacing-notation},
the ODEs~\eqref{eq:GX-peakon-ode} become
\begin{equation}
  \label{eq:GX-peakon-ode-new-notation}
  \begin{split}
    \dot x_{k,i} &= u(x_{k,i}) \, v(x_{k,i})
    ,\\
    \dot y_{j,i} &= u(y_{j,i}) \, v(y_{j,i})
    ,\\
    \dot m_{k,i} &= m_{k,i} \bigl(u(x_{k,i}) \, v_x(x_{k,i}) - 2 u_x(x_{k,i}) v(x_{k,i}) \bigr)
    ,\\
    \dot n_{j,i} &= n_{j,i} \bigl(u_x(y_{j,i}) \, v(y_{j,i}) - 2 u(y_{j,i}) v_x(y_{j,i}) \bigr)
    .
  \end{split}
\end{equation}
To solve these equations, one approach would be to adapt the inverse
spectral technique used by Lundmark and Smigielski in the interlacing
case, but we are not going to do that, for two reasons. Firstly, it is
quite a long and complicated procedure, and secondly, it is not even
clear that it is going to work, since the Lax pairs for the Geng--Xue
equation do not seem to provide sufficiently many constants of motion
in the non-interlacing case; see Section~\ref{sec:effective}.
Motivated by our previous paper~\cite{lundmark-shuaib:2018p:ghostpeakons},
we will instead use the idea of ghostpeakons:
starting from an \emph{interlacing} peakon solution with
a larger number of peakons, we perform appropriate limiting procedures
on the spectral data to ``kill off'' selected peakons,
i.e., we turn them into ghostpeakons by making their amplitudes tend
to zero. After discarding the ghostpeakons and relabeling the variables,
what remains is the desired peakon configuration.

\begin{remark}
  \label{rem:dong-zhou}
  In 2015, Xia and Qiao~\cite{xia-qiao:2015:two-component-CH-with-peakons}
  introduced the integrable two-component Camassa--Holm-type equation
  \begin{equation}
    \label{eq:two-component-CH}
    \begin{aligned}
      m_t &= \tfrac12 \bigl( m (u v - u_x v_x) \bigr)_x - \tfrac12 m (u v_x - u_x v)
      ,\\
      n_t &= \tfrac12 \bigl( n (u v - u_x v_x) \bigr)_x + \tfrac12 n (u v_x - u_x v)
      ,\\
      m &= u - u_{xx}
      ,\quad
      n = v - v_{xx}
      ,
    \end{aligned}
  \end{equation}
  which admits non-overlapping peakon solutions, just like the Geng--Xue equation.
  The solution formulas for the interlacing case have recently been computed by
  Dong and Zhou~\cite{dong-zhou:2018:interlacing-peakons-two-component-CH}
  using inverse spectral techniques.
  It may be interesting to investigate whether the general (non-interlacing)
  solution of~\eqref{eq:two-component-CH}
  can be found in a manner similar to what we do for the GX equation here.
\end{remark}

The outline of this article is as follows.
In section \ref{sec:interlacing-review}, notation and solution formulas for
the Geng--Xue equation in the (even) interlacing peakon case will be
recalled; see in particular Theorem~\ref{thm:interlacing-solution}.
In section \ref{sec:more-notation}, we present some additional notation
which will be used in this paper.
Section~\ref{sec:examples} contains plenty of examples,
both of the proof technique and of the behaviour of the solutions, illustrated in figures.
In Sections \ref{sec:solutions-even} and~\ref{sec:solutions-odd}
we give the complete list of solution formulas for the
general peakon solution of the Geng--Xue equation
in the even case ($2K$ groups) and odd case ($2K+1$ groups),
respectively.
The proofs of these formulas are given in
Sections \ref{sec:proofs-even} and~\ref{sec:proofs-odd},
and we study the asymptotics for the even and odd cases in
Sections \ref{sec:asymptotics-even} and~\ref{sec:asymptotics-odd}, respectively.
In Section~\ref{sec:effective}, it is shown that for each group of peakons one can define
an ``effective position'' and an ``effective amplitude'', which have the same time dependence
as the single peakon in the group would have if the group were a singleton,
and we use this to explain why the solution formulas for singleton groups in non-interlacing
configurations are identical to the previously known formulas from the interlacing case.
In Section~\ref{sec:no-collisions} we prove that for pure peakon solutions there are no collisions
(which implies that these solutions are globally defined in time).
And finally, in Section~\ref{sec:characteristic-curves} we give the formulas for the
characteristic curves associated with an arbitrary peakon solution;
these formulas may have some intrinsic interest, and they also clarify certain aspects
of the structure of the formulas for the peakon solution.

\section{Review of notation and solution formulas for the inter\-lacing case}
\label{sec:interlacing-review}

In this section we will state the solution formulas for
the $K+K$ interlacing peakon solutions~\eqref{eq:interlacing-assumption}
of the GX equation.
These formulas, which were derived by Lundmark and
Szmigielski~\cite{lundmark-szmigielski:2016:GX-inverse-problem,lundmark-szmigielski:2017:GX-dynamics-interlacing},
will be our starting point when we derive the
general non-interlacing solution later.

First we need to define some notation.
The general solution for the $2K$ positions $x_k$ and~$y_k$
and the $2K$ amplitudes $m_k$ and~$n_k$
(where $1 \le k \le K$)
depends on $4K$ constant parameters,
collectively referred to as the \emph{spectral data}.
These parameters are all \emph{positive} in the pure peakon case.
Firstly, there are two sets of eigenvalues $\lambda_i$ and~$\mu_j$,
\begin{equation}
  \label{eq:eigenvalues}
  0 < \lambda_1 < \lambda_2 < \dots < \lambda_K
  ,\qquad
  0 < \mu_1 < \mu_2 < \dots < \mu_{K-1}
  ,
\end{equation}
coming from boundary value problems
associated with the two Lax pairs of the Geng--Xue
equation~\cite{lundmark-szmigielski:2016:GX-inverse-problem}.
These $2K-1$ eigenvalues are accompanied by
$2K-1$ residues of the corresponding Weyl functions (at time $t=0$),
\begin{equation}
  \label{eq:residues}
  a_1(0), \, a_2(0), \, \dots, a_K(0) \in \R_+
  ,\qquad
  b_1(0), \, b_2(0), \, \dots, b_{K-1}(0) \in \R_+
  ,
\end{equation}
and finally there are two additional constants,
\begin{equation}
  \label{eq:CD}
  C, \, D  \in \R_+
  ,
\end{equation}
also related to the Lax pairs.

\begin{remark}
  Lundmark and Szmigielski~\cite{lundmark-szmigielski:2016:GX-inverse-problem,lundmark-szmigielski:2017:GX-dynamics-interlacing}
  used the equivalent parameters
  \begin{equation*}
    b_\infty=D
    \qquad\text{and}\qquad
    b_\infty^* = \frac{C \mu_1 \cdots \mu_{K-1}}{2 \lambda_1 \cdots \lambda_{K}}
  \end{equation*}
  instead of our $C$ and~$D$.
  We have also found it more convenient here to write
  $x_{j}$, $y_{j}$,
  $m_{j}$, $n_{j}$
  instead of their
  $x_{2j-1}$, $x_{2j}$,
  $m_{2j-1}$,~$n_{2j}$.
\end{remark}

The solution formulas also contain the time-dependent quantities
\begin{equation}
  \label{eq:residues-time-dependence}
  a_i(t) = a_i(0) \, e^{t/\lambda_i}
  ,\qquad
  b_j(t) = b_j(0) \, e^{t/\mu_j}
  ,
\end{equation}
which will mostly be denoted simply by $a_i$ and~$b_j$.

\begin{definition}[Integer interval]
  Let $[m,n]=[m,m+1,m+2,\dots,n-1,n]$ for integers $m \le n$. If $m>n$, let $[m,n]= \emptyset$.
  We will sometimes also use the shorter form $[n]$ for~$[1,n]$.
\end{definition}

\begin{definition}
  For $i \ge 0$, let $\binom{[1,K]}{i}$ denote the set of $i$-element subsets
  $I = \{ i_1 < i_2 < \dots < i_n \}$ of the integer interval $[1,K]= \{1,2,\dots, K\}$.
  For  $I \in \binom{[1,A]}{k}$ and $J \in \binom{[1,B]}{l}$, let
  \begin{equation}
    \label{eq:DeltaI}
    \begin{split}
      \Delta_I^2 &
      = \Delta(\lambda_{i_1}, \dots, \lambda_{i_k})^2
      = \prod_{\substack{a,b \in I \\ a < b}} (\lambda_{a} - \lambda_{b})^2
      ,
      \\
      \twin\Delta_J^2 &
      = \Delta(\mu_{j_1}, \dots, \mu_{j_l})^2
      = \prod_{\substack{a,b \in J \\ a < b}} (\mu_{a} - \mu_{b})^2
      ,
      \\
      \Gamma_{IJ} &
      = \Gamma(\lambda_{i_1}, \dots, \lambda_{i_k}; \mu_{j_1}, \dots, \mu_{j_l})
      = \prod_{n \in I, \, m \in J} (\lambda_n + \mu_m)
      ,
    \end{split}
  \end{equation}
  with the special cases
  \begin{equation*}
    \Delta_{\emptyset}^2
    = \Delta_{\{ n \}}^2
    = \twin\Delta_{\emptyset}^2
    = \twin\Delta_{\{ m \}}^2
    = \Gamma_{I\emptyset}
    = \Gamma_{\emptyset J}
    = 1
    .
  \end{equation*}
\end{definition}

\begin{definition}
  \label{def:heineintegral}
  Using the abbreviations
  \begin{equation}
    \label{eq:product-IJ}
    \Psi_{IJ} = \frac{\Delta_I^2 \twin\Delta_J^2}{\Gamma_{IJ}}
    ,\qquad
    \lambda_I^r a_I =\Bigl( \prod_{i \in I} \lambda_i^r a_i \Bigr)
    ,\qquad
    \mu_J^s b_J =\Bigl( \prod_{j \in J} \mu_j^s b_j \Bigr)
    ,
  \end{equation}
  let
  \begin{equation}
    \label{eq:heine-integral-as-sum}
    \begin{split}
      \detJ_{ij}^{rs} &
      = \detJ[A,B,r,s,i,j]
      \\ &
      =
      \begin{cases}
        \displaystyle
        \sum_{I \in \binom{[1,A]}{i}} \sum_{J \in \binom{[1,B]}{j}}
        \Psi_{IJ} \, \lambda_I^r a_I \, \mu_J^s b_J
        ,
        &
        \text{if $0 \le i \le A$ and $0 \le j \le B$}
        ,\\
        0
        ,
        &
        \text{otherwise}
        .
      \end{cases}
    \end{split}
  \end{equation}
  Usually, if the values of $A$ and $B$ are clear from the context,
  we will just write $\detJ_{ij}^{rs}$
  rather than $\detJ[A,B,r,s,i,j]$,
  but in a few places the longer notation is needed for precision.
\end{definition}

\begin{remark}
  \label{rem:J-bimoment-determinant}
  We will sometimes refer to the quantities~$\detJ_{ij}^{rs}$ as ``determinants'',
  since they are originally determinants of bimoments, coming from the theory of
  Cauchy biorthogonal polynomials,
  and the expressions~\eqref{eq:heine-integral-as-sum}
  arise in the case when the two measures involved are
  finite linear combinations of Dirac deltas~\cite[Sections A.3 and~A.4]{lundmark-szmigielski:2016:GX-inverse-problem}.
\end{remark}

\begin{remark}
  Since we are assuming that all spectral data are positive,
  we will have
  \begin{equation}
    \detJ[A,B,r,s,i,j] > 0
  \end{equation}
  if $0 \le i \le A$ and $0 \le j \le B$.
\end{remark}

\begin{example}
  If $A=3$ and $B=2$, then
  \begin{equation}
    \begin{split}
      \detJ_{21}^{01} &
      = \detJ[3,2,0,1,2,1]
      \\ &
      =
      \frac{(\lambda_1 - \lambda_2)^2 \, \mu_1}{(\lambda_1 + \mu_1) (\lambda_2 + \mu_1)} a_1 a_2 b_1
      + \frac{(\lambda_1 - \lambda_3)^2 \, \mu_1}{(\lambda_1 + \mu_1) (\lambda_3 + \mu_1)} a_1 a_3 b_1
      \\ &
      + \frac{(\lambda_2 - \lambda_3)^2 \, \mu_1}{(\lambda_2 + \mu_1) (\lambda_3 + \mu_1)} a_2 a_3 b_1
      + \frac{(\lambda_1 - \lambda_2)^2 \, \mu_2}{(\lambda_1 + \mu_2) (\lambda_2 + \mu_2)} a_1 a_2 b_2
      \\
      &
      + \frac{(\lambda_1 - \lambda_3)^2 \, \mu_2}{(\lambda_1 + \mu_2) (\lambda_3 + \mu_2)} a_1 a_3 b_2
      + \frac{(\lambda_2 - \lambda_3)^2 \, \mu_2}{(\lambda_2 + \mu_2) (\lambda_3 + \mu_2)} a_2 a_3 b_2
      .
    \end{split}
  \end{equation}
  For a list of all nonzero $\detJ[3,2,0,0,i,j]$, see
  Example~A.2 in~\cite{lundmark-szmigielski:2016:GX-inverse-problem}.
\end{example}

The formulas in the following theorem provide a one-to-one
correspondence between the interlacing pure peakon sector
\begin{equation}
  \label{eq:interlacing-assumption-pure}
  \begin{split}
    &
    x_1 < y_1 < x_2 < y_2 < \dots < x_K < y_K
    ,
    \\ &
    m_1, m_2, \dots, m_K \in \R_+
    ,
    \\ &
    n_1, n_2, \dots, n_K \in \R_+
    ,
  \end{split}
\end{equation}
and the set of spectral variables \eqref{eq:eigenvalues}, \eqref{eq:residues} and~\eqref{eq:CD},
for $K \ge 2$.
Together with the time dependence~\eqref{eq:residues-time-dependence},
this provides the interlacing peakon solution $x_j(t)$, $y_j(t)$, $m_j(t)$, $n_j(t)$.
Regarding the case $K=1$, see Remark~\ref{rem:interlacing-K1} below.

\begin{theorem}
  \label{thm:interlacing-solution}
  Let $K \ge 2$.
  In terms of the abbreviations
  \begin{equation}
    \label{eq:XYQP}
    X_k= \frac12 \exp{2x_k}
    ,\quad
    Y_k= \frac12 \exp{2y_k}
    ,\quad
    Q_k = 2 m_k \, e^{-x_k}
    ,\quad
    P_k = 2 n_k \, e^{-y_k}
    ,
  \end{equation}
  \begin{equation}
    \label{eq:LM}
    L = \prod_{i=1}^K \lambda_i
    ,\qquad
    M = \prod_{j=1}^{K-1} \mu_j
    ,
  \end{equation}
  \begin{equation}
    j'= K+1-j
    ,
  \end{equation}
  and with
  \begin{equation*}
    \detJ_{ij}^{rs} = \detJ[K,K-1,r,s,i,j]
    ,
  \end{equation*}
  the general solution of the peakon ODEs~\eqref{eq:GX-peakon-ode}
  in the $K+K$ interlacing pure peakon case~\eqref{eq:interlacing-assumption-pure} is
  \begin{equation}
    \label{eq:interlacing-solution-positions}
    \begin{split}
      X_{j'}
      =
      X_{K+1-j}
      &=
      \begin{cases}
        \displaystyle
        \frac{\detJ_{jj}^{00}}{\detJ_{j-1,j-1}^{11}}
        ,
        &
        1 \le j \le K-1
        ,
        \\[1.5em]
        \displaystyle
        \frac{\detJ_{K,K-1}^{00}}{\detJ_{K-1,K-2}^{11} + C \, \detJ_{K-1,K-1}^{10}}
        ,
        &
        j=K
        ,
      \end{cases}
      \\[1ex]
      Y_{j'}
      =
      Y_{K+1-j}
      & =
      \begin{cases}
        \detJ_{11}^{00} + D \detJ_{10}^{00}
        \,
        ,
        &
        j = 1
        ,
        \\[1.5ex]
        \displaystyle
        \frac{\detJ_{j,j-1}^{00}}{\detJ_{j-1,j-2}^{11}}
        ,
        &
        2 \le j \le K
        ,
      \end{cases}
    \end{split}
  \end{equation}
  and
  \begin{equation}
    \label{eq:interlacing-solution-amplitudes}
    \begin{split}
      Q_{j'}
      =
      Q_{K+1-j}
      &=
      \begin{cases}
        \displaystyle
        \frac{\detJ_{j-1,j-1}^{11} \detJ_{j,j-1}^{01}}{\detJ_{jj}^{10} \detJ_{j-1,j-1}^{10}}
        ,&
        1 \le j \le K-1
        ,
        \\[1.5em]
        \displaystyle
        \frac{M}{L} \left( \frac{\detJ_{K-1,K-2}^{11}}{\detJ_{K-1,K-1}^{10}} + C \right)
        ,
        &
        j=K
        ,
      \end{cases}
      \\[1ex]
      P_{j'}
      =
      P_{K+1-j}
      &=
      \begin{cases}
        \displaystyle
        \frac{1}{\detJ_{10}^{00}}
        ,
        &
        j = 1
        ,
        \\[1.5em]
        \displaystyle
        \frac{\detJ_{j-1,j-2}^{11} \detJ_{j-1,j-1}^{10}}{\detJ_{j-1,j-2}^{01} \detJ_{j,j-1}^{01}}
        ,
        &
        2 \le j \le K
        ,
      \end{cases}
    \end{split}
  \end{equation}
  where the time dependence is given by
  \begin{equation}
    a_i(t) = a_i(0) \, e^{t/\lambda_i}
    ,\qquad
    b_j(t) = b_j(0) \, e^{t/\mu_j}
    .
  \end{equation}
\end{theorem}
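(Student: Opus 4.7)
The plan is to follow the inverse spectral approach of Lundmark and Szmigielski~\cite{lundmark-szmigielski:2016:GX-inverse-problem,lundmark-szmigielski:2017:GX-dynamics-interlacing}, whose papers already contain a proof of precisely this statement; the task here is to verify that their output has been correctly translated into the notation of the rest of this paper. I would therefore structure the argument as a guide to the derivation together with a bookkeeping check of the translation.

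First I would recall the two Lax pairs of the Geng--Xue equation and the discrete spectral problems induced on the line by the interlacing pure peakon ansatz. Each problem has a simple positive spectrum ($\{\lambda_i\}_{i=1}^{K}$ for one and $\{\mu_j\}_{j=1}^{K-1}$ for the other), and the corresponding Weyl functions have partial-fraction expansions whose residues are the positive parameters $a_i(0)$ and $b_j(0)$; the remaining constants $C$ and $D$ encode the asymptotic behavior of the Weyl functions at zero and infinity. Compatibility of each Lax pair then forces the spectra to be time-independent and yields the exponential evolution $a_i(t)=a_i(0)\,e^{t/\lambda_i}$, $b_j(t)=b_j(0)\,e^{t/\mu_j}$, which is the only dynamical content once the inverse step is done.

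Next I would carry out the inverse spectral step using Cauchy biorthogonal polynomials with respect to the pair of discrete measures $\sum_i a_i\,\delta_{\lambda_i}$ and $\sum_j b_j\,\delta_{\mu_j}$ on the positive real axis. Heine-type integral representations express the relevant determinants of bimoments built from the Cauchy kernel $1/(\lambda+\mu)$ as sums over pairs of index subsets, and the discrete evaluation of these integrals produces exactly the quantities $\detJ_{ij}^{rs}$ of Definition~\ref{def:heineintegral}, with the Vandermonde squares $\Delta_I^2 \twin\Delta_J^2$ appearing in the numerators and the Cauchy products $\Gamma_{IJ}$ in the denominators. The positions $X_{k}$, $Y_{k}$ and the amplitudes $Q_{k}$, $P_{k}$ then emerge as specific ratios of such determinants, matching formulas \eqref{eq:interlacing-solution-positions}--\eqref{eq:interlacing-solution-amplitudes}.

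The hard part will be the inverse step: recognising which determinantal ratio corresponds to which peakon variable requires detailed information about the Cauchy biorthogonal polynomials and their zero structure, and is where most of the technical work in~\cite{lundmark-szmigielski:2016:GX-inverse-problem} lives. A secondary difficulty is the bookkeeping forced by the index reversal $j'=K+1-j$ together with the two boundary cases $j=K$ in the $X$-formulas and $j=1$ in the $Y$-formulas, where the extra constants $C$ and $D$ enter from the Weyl-function asymptotics and have to be tracked separately from the bimoment contributions; checking that these boundary terms have been copied correctly from the cited papers is the main verification I would perform beyond the references.
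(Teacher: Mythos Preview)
Your proposal is correct and matches the paper's approach: the paper does not prove Theorem~\ref{thm:interlacing-solution} at all but simply quotes it from Lundmark and Szmigielski~\cite{lundmark-szmigielski:2016:GX-inverse-problem,lundmark-szmigielski:2017:GX-dynamics-interlacing}, with the only accompanying remark being the notational translation (their $b_\infty$, $b_\infty^*$ and $x_{2j-1},x_{2j},m_{2j-1},n_{2j}$ versus the present $D$, $C$ and $x_j,y_j,m_j,n_j$). Your outline of the inverse spectral machinery behind the cited result is more than the paper itself offers, and your identification of the bookkeeping check as the actual task here is exactly right.
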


\begin{remark}
  \label{rem:actual-positions-amplitudes}
  From~\eqref{eq:XYQP} it is clear that
  we can obtain the actual peakon variables via
  \begin{equation}
    x_k = \frac{1}{2} \ln(2 X_k)
    ,\qquad
    y_k = \frac{1}{2} \ln(2 Y_k)
    ,\qquad
    m_k = \frac{\sqrt{X_k} \, Q_k}{\sqrt{2}}
    ,\qquad
    n_k = \frac{\sqrt{Y_k} \, P_k}{\sqrt{2}}
    ,
  \end{equation}
  but the formulas are less complicated when expressed in terms of
  $X_k$, $Y_k$, $Q_k$ and~$P_k$.
\end{remark}

\begin{remark}
  \label{rem:interlacing-K1}
  The formulas in Theorem~\ref{thm:interlacing-solution}
  also work in the case $K=1$,
  provided that one adds the additional constraint $CD>1$.
  In this case, one should disregard the formulas
  stated for $1 \le j \le K-1$ or $2 \le j \le K$,
  and use the other ones.
\end{remark}

\begin{example}[The $3+3$ interlacing case]
  \label{ex:GX-3+3-interlacing}
  When $K=3$ we obtain the solution formulas for the
  $3+3$ interlacing solution.
  We will study this example quite carefully, since we are going to compare it
  with the examples of non-interlacing solutions later,
  in Section~\ref{sec:examples}.

  With $\detJ_{ij}^{rs} = \detJ[3,2,r,s,i,j]$,
  the solution formulas for the positions are
  \begin{subequations}
    \label{eq:GX-3+3-interlacing-joint}
    \begin{equation}
      \label{eq:GX-3+3-interlacing-positions}
      \begin{aligned}
        X_1 = \tfrac12 e^{2x_1}
        &
        = \frac{\detJ_{32}^{00}}{\detJ_{21}^{11} + C \detJ_{22}^{10}}
        ,\qquad
        &
        Y_1 = \tfrac12 e^{2y_1}
        &
        = \frac{\detJ_{32}^{00}}{\detJ_{21}^{11}}
        ,
        \\[1ex]
        X_2 = \tfrac12 e^{2x_2}
        &
        = \frac{ \detJ_{22}^{00}}{\detJ_{11}^{11}}
        ,
        &
        Y_2 = \tfrac12 e^{2y_2}
        &
        = \frac{ \detJ_{21}^{00}}{\detJ_{10}^{11}}
        ,
        \\[1ex]
        X_3 = \tfrac12 e^{2x_3}
        &
        = \detJ_{11}^{00}
        ,
        &
        Y_3 = \tfrac12 e^{2y_3}
        &
        = \detJ_{11}^{00} + D \detJ_{10}^{00}
        ,
      \end{aligned}
    \end{equation}
    while the amplitudes are obtained from
    \begin{equation}
      \label{eq:GX-3+3-interlacing-amplitudes}
      \begin{aligned}
        Q_1 = 2 m_1 e^{-x_1}
        &
        = \frac{\mu_1 \mu_2}{\lambda_1 \lambda_2 \lambda_3 } \,
        \left( \frac{\detJ_{21}^{11}}{\detJ_{22}^{10}} + C \right)
        ,
        \qquad
        &
        P_1 = 2 n_1 e^{-y_1}
        &= \frac{\detJ_{21}^{11} \detJ_{22}^{10}}{\detJ_{21}^{01} \detJ_{32}^{01}}
        ,
        \\[1ex]
        Q_2 = 2 m_2 e^{-x_2}
        &
        = \frac{\detJ_{11}^{11} \detJ_{21}^{01}}{\detJ_{11}^{10} \detJ_{22}^{10}}
        ,
        &
        P_2 = 2 n_2 e^{-y_2}
        &
        = \frac{\detJ_{10}^{11} \detJ_{11}^{10}}{\detJ_{10}^{01} \detJ_{21}^{01}}
        ,
        \\[1ex]
        Q_3 = 2 m_3 e^{-x_3}
        &
        =\frac{\detJ_{10}^{01}}{\detJ_{11}^{10}}
        ,
        &
        P_3 = 2 n_3 e^{-y_3}
        &
        = \frac{1}{\detJ_{10}^{00}}
        .
      \end{aligned}
    \end{equation}
  \end{subequations}
  Plots of the positions $x = x_k(t)$ and $x = y_k(t)$
  are shown in Figure~\ref{fig:GX-3+3-interlacing-positions-all},
  with the parameter values
  \begin{equation}
    \label{eq:GX-3+3-interlacing-spectral-data}
    \begin{gathered}
      \lambda_1 = \frac{1}{5}
      ,\quad
      \lambda_2 = 1
      ,\quad
      \lambda_3 = 2
      ,\qquad
      \mu_1 = \frac{1}{3}
      ,\quad
      \mu_2 = 4
      ,
      \\
      a_1(0) = 10^{-4}
      ,\quad
      a_2(0) = 10^{1}
      ,\quad
      a_3(0) = 10^{3}
      ,\qquad
      b_1(0) = 10^{-6}
      ,\quad
      b_2(0) = 10^{2}
      ,
      \\
      C = 10^{20}
      ,\quad
      D = 10^{18}
      .
    \end{gathered}
  \end{equation}
  Asymptotically, as $t \to \pm \infty$,
  these curves will approach certain straight lines $x = ct + d$,
  whose coefficients $c$ and~$d$ are given by the formulas in  
  Theorem~\ref{thm:asymptotics-singletons-even}
  (due to Lundmark and Szmigielski~\cite{lundmark-szmigielski:2017:GX-dynamics-interlacing}).
  For details about the constant terms~$d$,
  see Theorem~\ref{thm:asymptotics-singletons-even} or Example~\ref{ex:GX-3+3-typicalY2} below.
  Here we only list the $t$-coefficients $c$ that occur,
  the \emph{asymptotic velocities} of the peakons:
  \begin{equation}
    \label{eq:asymptotic-velocities-example}
    \begin{aligned}
      \frac12 \left( \frac{1}{\lambda_1} + \frac{1}{\mu_1} \right) &= 4
      \quad (\text{twice})
      ,\\
      \frac12 \left( \frac{1}{\lambda_2} + \frac{1}{\mu_1} \right) &= 2
      ,\\
      \frac12 \left( \frac{1}{\lambda_2} + \frac{1}{\mu_2} \right) &= \frac{5}{8}
      ,\\
      \frac12 \left( \frac{1}{\lambda_3} + \frac{1}{\mu_2} \right) &= \frac{3}{8}
      ,\\
      \frac12 \frac{1}{\lambda_3} &= \frac{1}{4}
      .
    \end{aligned}
  \end{equation}
  The meaning of the word ``twice'' here is that, as $t \to -\infty$,
  the curves $x = x_1(t)$ and $x = y_1(t)$ both approach the same line,
  while the other curves approach distinct lines:
  \begin{itemize}
  \item $x = x_1(t)$ and $x = y_1(t)$ both approach the same line $x = 4t + \text{constant}$,
  \item $x = x_2(t)$ approaches a line $x = 2t + \text{constant}$,
  \item $x = y_2(t)$ approaches a line $x = \tfrac58 t + \text{constant}$,
  \item $x = x_3(t)$ approaches a line $x = \tfrac38 t + \text{constant}$,
  \item $x = y_3(t)$ approaches a line $x = \tfrac14 t + \text{constant}$.
  \end{itemize}
  And as $t \to +\infty$, the same asymptotic velocities~$c$ appear, but in the opposite order:
  \begin{itemize}
  \item $x = x_1(t)$ approaches a line $x = \tfrac14 t + \text{constant}$.
  \item $x = y_1(t)$ approaches a line $x = \tfrac38 t + \text{constant}$,
  \item $x = x_2(t)$ approaches a line $x = \tfrac58 t + \text{constant}$,
  \item $x = y_2(t)$ approaches a line $x = 2t + \text{constant}$,
  \item $x = x_3(t)$ and $x = y_3(t)$ both approach the same line $x = 4t + \text{constant}$.
  \end{itemize}
  Only the velocities~$c$ are the same;
  the constant terms~$d$ for the lines as $t \to +\infty$ are not the same as for the corresponding lines
  as $t \to -\infty$.
  The shifts in the constant terms can be
  computed~\cite[Corollary~9.5]{lundmark-szmigielski:2017:GX-dynamics-interlacing},
  and turn out to depend only on the parameters $\lambda_i$, $\mu_j$, $C$ and~$D$.
  
  It is not very meaningful to plot the amplitudes $m_k(t)$ and~$n_k(t)$ directly,
  since they exhibit exponential growth or decay as $t \to \pm \infty$.
  Instead, their asymptotic features are most clearly displayed by plotting the curves
  $s = \ln m_k(t)$ and $s = -\ln n_k(t)$,
  as in Figure~\ref{fig:GX-3+3-interlacing-amplitudes-all}.
  These curves will approach lines $s = ct + d$,
  where the slope~$c$ takes on the following values (in order):
  \begin{equation}
    \label{eq:asymptotic-slopes-example}
    \begin{aligned}
      \frac12 \left( \frac{1}{\lambda_1} - \frac{1}{\mu_1} \right) &= 1
      \quad (\text{twice})
      ,\\
      \frac12 \left( \frac{1}{\lambda_2} - \frac{1}{\mu_1} \right) &= -1
      ,\\
      \frac12 \left( \frac{1}{\lambda_2} - \frac{1}{\mu_2} \right) &= \frac{3}{8}
      ,\\
      \frac12 \left( \frac{1}{\lambda_3} - \frac{1}{\mu_2} \right) &= \frac{1}{8}
      ,\\
      \frac12 \frac{1}{\lambda_3} &= \frac{1}{4}
      .
    \end{aligned}
  \end{equation}
  More precisely: as $t \to -\infty$,
  \begin{itemize}
  \item $s = \ln m_1(t)$ and $s = -\ln n_1(t)$ approach a pair of parallel lines $s = t + \text{constant}$,
  \item $s = \ln m_2(t)$ approaches a line $s = -t + \text{constant}$,
  \item $s = -\ln n_2(t)$ approaches a line $s = \tfrac38 t + \text{constant}$,
  \item $s = \ln m_3(t)$ approaches a line $s = \tfrac18 t + \text{constant}$,
  \item $s = -\ln n_3(t)$ approaches a line $s = \tfrac14 t + \text{constant}$,
  \end{itemize}
  and as $t \to +\infty$,
  \begin{itemize}
  \item $s = \ln m_1(t)$ approaches a line $s = \tfrac14 t + \text{constant}$,
  \item $s = -\ln n_1(t)$ approaches a line $s = \tfrac18 t + \text{constant}$,
  \item $s = \ln m_2(t)$ approaches a line $s = \tfrac38 t + \text{constant}$,
  \item $s = -\ln n_2(t)$ approaches a line $s = -t + \text{constant}$,
  \item $s = \ln m_3(t)$ and $s = -\ln n_3(t)$ approach a pair of parallel lines $s = t + \text{constant}$.
  \end{itemize}
  Again, the expressions for the constant terms can be found in
  Theorem~\ref{thm:asymptotics-singletons-even},
  and the shifts when comparing corresponding lines as $t \to \pm\infty$
  depend only on the parameters $\lambda_i$, $\mu_j$, $C$ and~$D$
  \cite[Corollary~9.9]{lundmark-szmigielski:2017:GX-dynamics-interlacing}.
\end{example}

\begin{figure}[H]
  \centering
  \includegraphics[width=13cm]{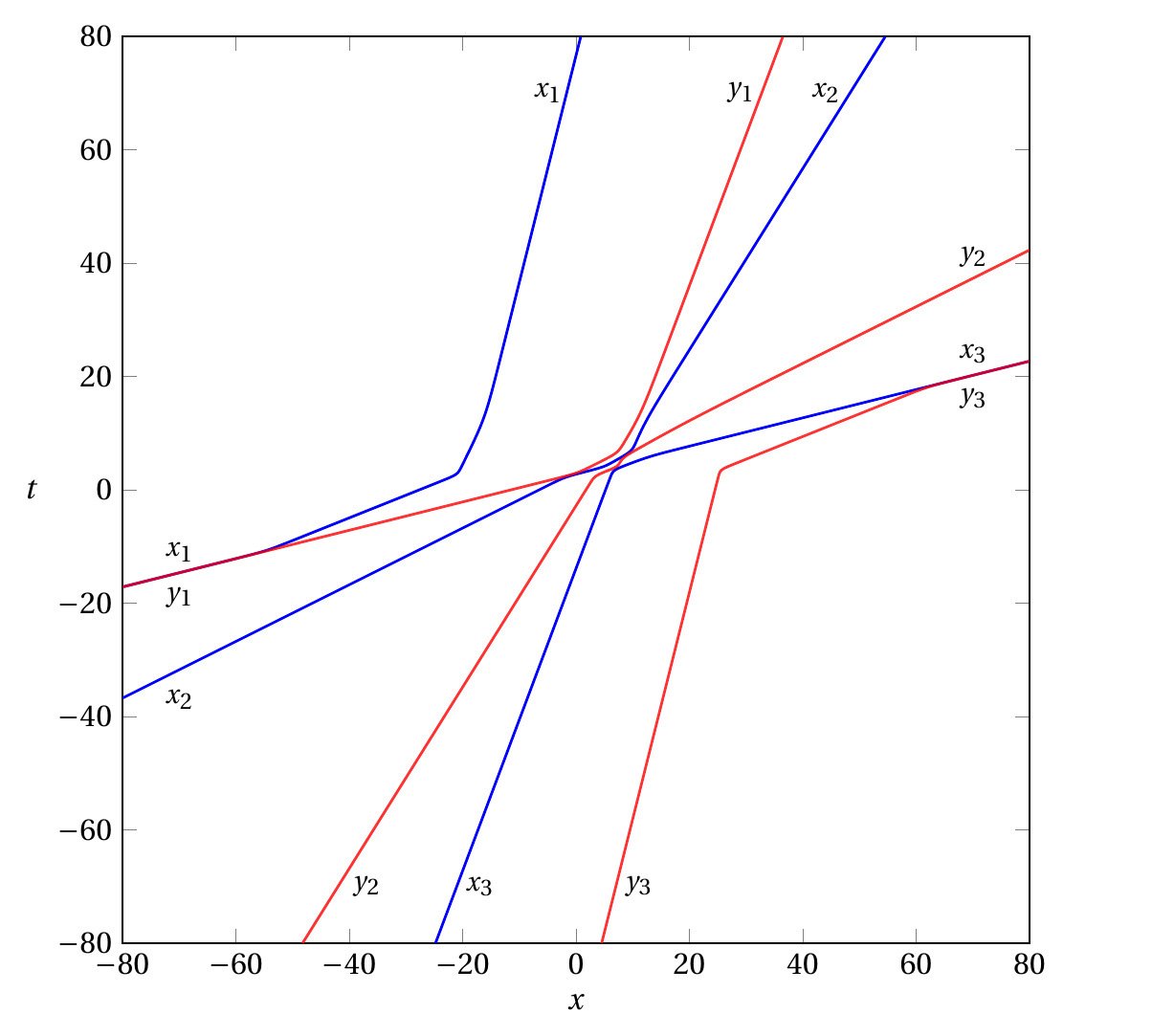}
  \caption{\textbf{Positions in the (even) interlacing case.}
    Spacetime plot, i.e., a plot in the $(x,t)$ plane,
    of the positions $x=x_k(t)$ (blue)
    and $x=y_k(t)$ (red)
    for the 3+3 interlacing peakon solution~\eqref{eq:GX-3+3-interlacing-joint}
    given by the parameter values~\eqref{eq:GX-3+3-interlacing-spectral-data}
    in Example~\ref{ex:GX-3+3-interlacing}.
    Like all the graphics in this article,
    the curves are computed from the exact solution formulas.
    The curves actually never cross or touch,
    since $x_1(t) < y_1(t) < x_2(t) < y_2(t) < x_3(t) < y_3(t)$
    is known to hold for all~$t$.
    As $t \to \pm\infty$, the curves approach certain straight lines $x = ct + d$;
    in this example,
    the asymptotic velocities that occur are
    $c \in \bigl\{ 4,2,\tfrac58,\tfrac38,\tfrac14 \bigr\}$;
    see~\eqref{eq:asymptotic-velocities-example}.
    As $t \to -\infty$, the curves
    $x = x_1(t)$ and $x = y_1(t)$ approach the same straight line
    (with the fastest velocity, $c=4$),
    while each of the other peakons has its own asymptotic velocity.
    Similarly for the curves $x = x_3(t)$ and $x = y_3(t)$ as $t \to +\infty$.
  }
  \label{fig:GX-3+3-interlacing-positions-all}
\end{figure}

\begin{figure}[H]
  \centering
  \includegraphics[width=13cm]{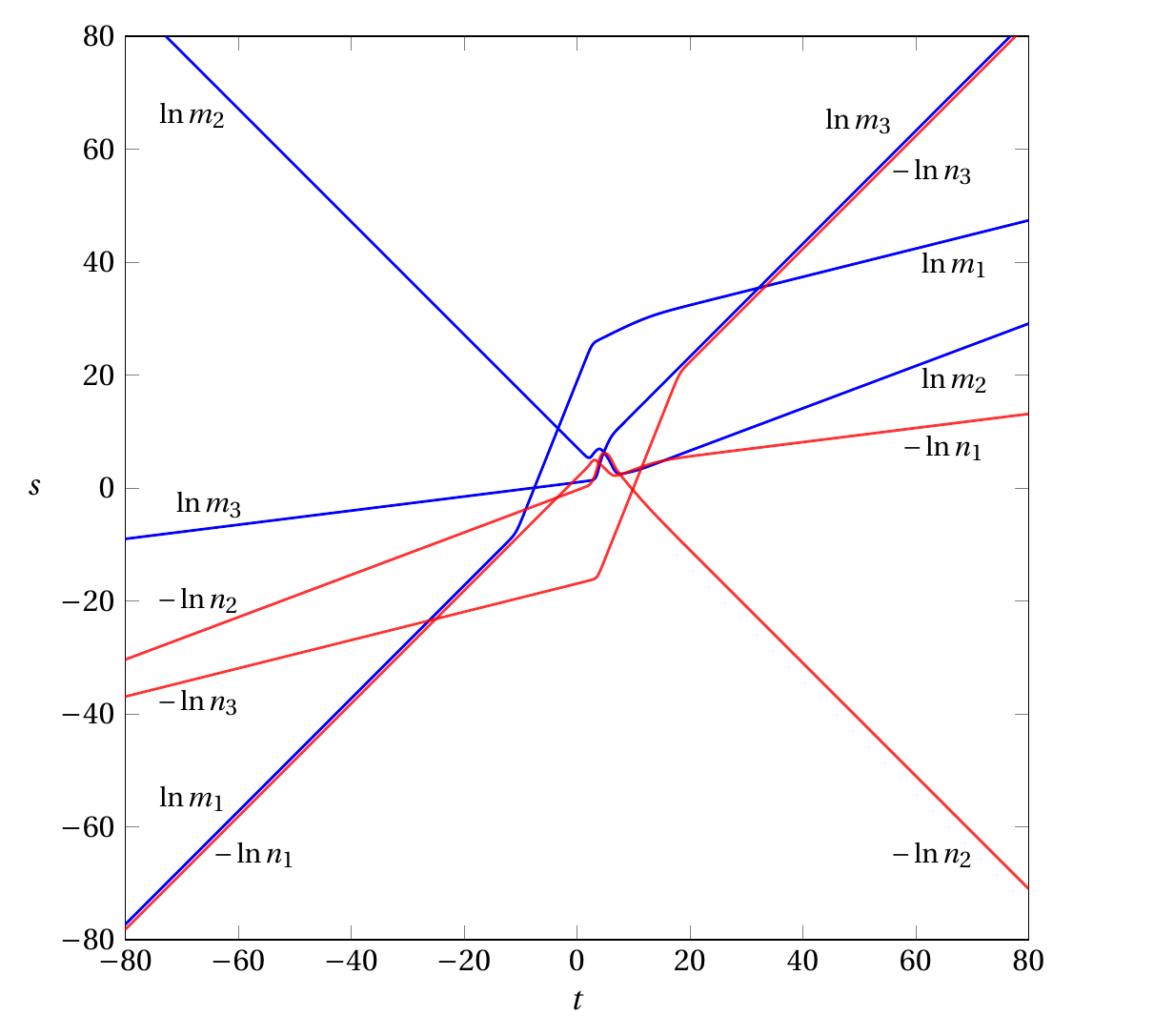}
  \caption{\textbf{Amplitudes in the (even) interlacing case.}
    Plot in the $(t,s)$ plane of the curves $s = \ln m_k(t)$ (blue)
    and $s = -\ln n_k(t)$ (red)
    for the 3+3 interlacing peakon solution
    whose positions were shown in Figure~\ref{fig:GX-3+3-interlacing-positions-all}.
    The peakon amplitudes $m_k(t)$ and~$n_k(t)$ asymptotically have exponential grow or decay,
    so by plotting their logarithms we obtain curves that approach
    certain straight lines $s = ct + d$;
    in this example,
    the slopes that occur are
    $c \in \bigl\{ 1,-1,\tfrac38,\tfrac18,\tfrac14 \bigr\}$;
    see~\eqref{eq:asymptotic-slopes-example}.
    As $t \to -\infty$, the curves
    $s = \ln m_1(t)$ and $s = -\ln n_1(t)$ approach parallel lines
    (with slope~$1$ in this example),
    and likewise for the curves $s = \ln m_3(t)$ and $s = -\ln n_3(t)$ as $t \to +\infty$.
    The other curves approach lines whose slopes in the generic case
    (like here) are all distinct,
    although some slopes may happen to coincide
    for certain choices of the eigenvalues $\lambda_i$ and~$\mu_j$.
  }
  \label{fig:GX-3+3-interlacing-amplitudes-all}
\end{figure}

\section{Additional notation for the non-interlacing case}
\label{sec:more-notation}

The solution formulas for the $K+K$ interlacing case in
Section~\ref{sec:interlacing-review} contained $4K$ spectral variables with
$K+(K-1)$ eigenvalues,
and these parameters will also appear in our solution formulas for the
even non-interlacing case with $K+K$ groups:
\begin{equation}
  \label{eq:spectral-parameters-even}
  \begin{gathered}
    0 < \lambda_1 < \lambda_2 < \dots < \lambda_K
    ,
    \qquad
    0 < \mu_1 < \mu_2 < \dots < \mu_{K-1}
    ;
    \\
    a_1,a_2,\dots,a_K \in \R_+
    ,
    \qquad
    b_1,b_2,\dots,b_{K-1} \in \R_+
    ,
    \qquad
    C, D  \in \R_+
    .
  \end{gathered}
\end{equation}
In the odd case, with $(K+1)+K$ groups,
there will be
$4K+2$ spectral variables with $K+K$ eigenvalues:
\begin{equation}
  \label{eq:spectral-parameters-odd}
  \begin{gathered}
    0 < \lambda_1 < \lambda_2 < \dots < \lambda_K
    ,
    \qquad
    0 < \mu_1 < \mu_2 < \dots < \mu_{K}
    ;
    \\
    a_1,a_2,\dots,a_K \in \R_+
    ,
    \qquad
    b_1,b_2,\dots,b_{K} \in \R_+
    ,
    \qquad
    C, D  \in \R_+
    .
  \end{gathered}
\end{equation}
As before, the variables $\{ a_i, b_j \}$ will have the exponential
time dependence given by~\eqref{eq:residues-time-dependence}.

In addition,
there will be $2(N_j-1)$ internal parameters
associated with each group containing $N_j \ge 2$ peakons,
i.e., with each non-singleton group.
For the $j$th $X$-group from the left,
these internal parameters will be called
\begin{equation}
  \bigl\{ \tau^X_{j,i}, \, \sigma^X_{j,i} \bigr\}_{i=1}^{N^X_j-1}
  ,
\end{equation}
and in the $j$th $Y$-group we use
\begin{equation}
  \bigl\{ \tau^Y_{j,i}, \, \sigma^Y_{j,i} \bigr\}_{i=1}^{N^Y_j-1}
  .
\end{equation}
These parameters have to satisfy certain constraints which arise in the proofs.
By studying the solution formulas in
Sections~\ref{sec:solutions-even} and~\ref{sec:solutions-odd}, one can also
verify directly that these constraints are needed for the peakons
to be ordered correctly (cf. Section~\ref{sec:no-collisions}).

\begin{itemize}
\item The basic constraint is that
  \begin{equation}
    \tau^X_{j,i} > 0
    ,\qquad
    0 < \sigma^X_{j,1} < \sigma^X_{j,2} < \dots < \sigma^X_{j,{N_j^X -1}}
  \end{equation}
  for all~$j$ such that $N_j^X \ge 2$,
  and similarly
  \begin{equation}
    \tau^Y_{j,i} > 0
    ,\qquad
    0 < \sigma^Y_{j,1} < \sigma^Y_{j,2} < \dots < \sigma^Y_{j,{N_j^Y -1}}
    ,
  \end{equation}
  whenever $N_j^Y \ge 2$.
  
\item In case there are two non-singleton groups next to each other,
  then the last~$\sigma$ in the left group must be smaller than the
  first~$\tau$ in the right group:
  \begin{equation}
    \label{eq:constraint-last-sigma-first-tau}
    \sigma^X_{j,{N_j^X -1}} < \tau^Y_{j,1}
    ,
    \qquad \text{or} \qquad
    \sigma^Y_{j,{N_j^Y -1}} < \tau^X_{j+1,1}
    .
  \end{equation}

\item In the even case, if the rightmost group is a singleton ($N_K^Y = 1$)
  and the second rightmost group is not ($N_K^X \ge 2$),
  then the parameter~$D$ must satisfy
  \begin{equation}
    \label{eq:constraint-last-sigma-D-even}
    \sigma^X_{K,{N_K^X -1}} < D
    .
  \end{equation}
  Similarly, in the odd case, if $N_{K+1}^X = 1$ and $N_K^Y \ge 2$, then we require
  \begin{equation}
    \label{eq:constraint-last-sigma-D-odd}
    \sigma^Y_{K,{N_K^Y-1}} < D
    .
  \end{equation}

\item For both the even and the odd case,
  if the leftmost $X$-group is a singleton ($N_1^X = 1$)
  and the leftmost $Y$-group
  is not ($N_1^Y \ge 2$), then
  \begin{equation}
    \label{eq:constraint-C-simpler}
    M < C \, \tau_{1,1}^Y
    .
  \end{equation}
  If the leftmost $X$-group contains $N_1^X \ge 3$ peakons, then
  \begin{equation}
    \label{eq:constraint-C-general}
    \tau^X_{1,1} \, M < C \, \sigma^X_{1,1} \, \tau^X_{1,2}
    ,
  \end{equation}
  where $M$ is the product of all~$\mu_j$
  (i.e., $M = \mu_1 \dotsm \mu_{K-1}$ in the even case
  and $M = \mu_1 \dotsm \mu_{K}$ in the odd case).
  If $N_1^X = 2$, then $C>0$ is the only constraint.
  
\item A special case is the $1+1$ interlacing solution,
  when the above constraints for $C$ and $D$ merge into $1 < C D$;
  cf. Remark~\ref{rem:interlacing-K1} above.
\end{itemize}

When talking about a specific group,
we may simplify the notation by omitting the label $X$ or~$Y$,
and also the group index~$j$, if these are clear from the context.
Thus, in that case we only write $\tau_i$ instead of~$\tau_{j,i}^X$,
for example.
The same thing applies for the sums $T_i$, $S_i$ and~$R_i$
in the following definition.

\begin{definition}
  \label{def:T-S-R}
  For $N \ge 2$ and  $i \in \{1,2,\dots,N-1\}$,
  define the abbreviations
  \begin{equation}
    T_i = \sum_{a=1}^{i} \tau_{a}
    ,\qquad
    S_i = \sum_{b=1}^{i-1} \tau_{b+1} \sigma_{b}
    ,\qquad
    R_{i} = \sigma_i T_i-S_i
    ,
  \end{equation}
  and let $\sigma_0=0$.
  Note that $S_1=0$ since it is an empty sum. Some formulas will also contain $R_0$,
  where by definition we let $R_0=0$.
\end{definition}

\begin{remark}
  The total number of parameters in the solution formulas will
  equal the total number of degrees of freedom in the system, namely
  twice the number of peakons.
  Indeed, the number of spectral parameters is twice the number of
  groups, and each group with $N \ge 2$ peakons
  (i.e., with $N-1$ peakons more than in the interlacing case)
  contributes an
  additional $2(N-1)$ internal parameters.
\end{remark}

We will also use abbreviations corresponding to those defined for
singleton peakons in~\eqref{eq:XYQP}.
\begin{definition}
  \label{def:XYQP-groups}
  Let
  \begin{equation}
    X_{k,i} = \tfrac12 \exp 2 x_{k,i}
    ,\qquad
    Q_{k,i} = 2 m_{j,i} \exp(-x_{k,i})
    ,
  \end{equation}
  and
  \begin{equation}
    Y_{k,i} = \tfrac12 \exp 2 y_{k,i}
    ,\qquad
    P_{k,i} = 2 n_{j,i} \exp(-y_{k,i})
    .
  \end{equation}
\end{definition}

\begin{remark}
  \label{rem:D-is-extra-tau}
  As can be seen in~\eqref{eq:even-Y-rightmost-group-pos},
  for example,
  the parameter~$D$ enters into the solution formulas as if it were an
  additional $\tau$-parameter for the rightmost group,
  i.e., as if that group had parameters
  \begin{equation*}
    \tau_1, \dots, \tau_{N-1}, \tau_N = D
  \end{equation*}
  instead of just $\tau_1,\dots,\tau_{N-1}$.
  In particular, when the rightmost group is a singleton ($N=1$),
  there will be no ``proper'' $\tau$-parameters,
  but the parameter~$D$ will still play a role similar to
  the one played by~$\tau_1$ for $N \ge 2$.
  In this sense,
  the constraints \eqref{eq:constraint-last-sigma-D-even}
  and~\eqref{eq:constraint-last-sigma-D-odd}
  can be viewed as special cases of~\eqref{eq:constraint-last-sigma-first-tau}.

  The constraints on the parameter~$C$ are not as easily interpreted;
  see however the discussions in connection with
  \eqref{eq:char-within-leftmost}
  and~\eqref{eq:leftmost-singleton-rewritten-structure}.
\end{remark}

\section{Examples}
\label{sec:examples}

Before stating the general solution formulas in
Sections~\ref{sec:solutions-even} and~\ref{sec:solutions-odd},
we will write them out in several special cases,
and illustrate the solutions and their asymptotics in figures,
to highlight the new features compared to the $K+K$ interlacing case.
The even case is covered in Section~\ref{sec:examples-groups-even},
and the odd case in Section~\ref{sec:examples-groups-odd}.
But we begin with a couple of examples related to the proofs.

\subsection{Examples of the proof technique}

The general proof will be presented in Section~\ref{sec:proofs-even}
for the even case, and in Section~\ref{sec:proofs-odd} for the odd case.
As a preparation, we here give two examples which illustrate
the technique of our proof.
Example~\ref{ex:proof-technique} shows how to turn a selected peakon
into a zero-amplitude ghostpeakon by reparametrizing the spectral
variables and letting one parameter tend to zero.
Example~\ref{ex:proof-technique2} outlines the overall structure of
the proof, where we start from an interlacing configuration and use
several steps like that in Example~\ref{ex:proof-technique} in order
to obtain the desired non-interlacing configuration.

\begin{example}
  \label{ex:proof-technique}
  Suppose we seek the solution for the configuration
  shown in Figure~\ref{fig:noninterlacing-peakons-K=2} in the introduction,
  where there are $2+2$ groups,
  all singletons except that the second $X$-group contains two peakons:
  \begin{equation*}
    x_1 < y_1 <
    \underbrace{x_{2,1} < x_{2,2}}_{\text{second $X$-group}}
    < y_2
    .
  \end{equation*}
  We will obtain this solution via a limiting procedure,
  where we start from the $3+3$ interlacing case
  \begin{equation*}
    x_1< y_1< x_2 < y_2 < x_3 < y_3
  \end{equation*}
  and make the amplitude~$n_2$ of the peakon at~$y_2$
  tend to zero by manipulating the spectral variables.
  Once this is done, we can simply rename the variables
  for the remaining peakons to get the desired non-interlacing solution.
  
  For $K=3$, the interlacing peakon solutions of the Geng--Xue equation are given by the
  formulas~\eqref{eq:GX-3+3-interlacing-joint}
  in Example~\ref{ex:GX-3+3-interlacing}.
  Let us write $\hat{D}$ instead of~$D$
  in these formulas.
  We aim to kill the peakon at~$y_2$, i.e., to make $n_2=0$.
  For that, we replace the (positive) parameters $\lambda_3$, $\mu_2$, $a_3$, $b_2$, $\hat{D}$
  in the solution formulas with the equivalent set of (positive) parameters
  $\epsilon$, $\tau_1$, $\sigma_1$, $\theta$, $D$ defined by
  \begin{equation}
    \label{eq:replace-spectral-data-3+3}
    \lambda_3 = \frac{\theta}{\tau_1 \epsilon}
    ,\qquad
    \mu_2 = \frac{1}{\epsilon}
    ,\qquad
    a_3 = \frac{\tau_1 (\theta + \tau_1)}{\theta} \epsilon
    ,\qquad
    b_2 = \frac{\sigma_1}{\epsilon}
    ,\qquad
    \hat{D} = D - \sigma_1
    ,
  \end{equation}
  where we are going to let $\epsilon \to 0$ later.
  For $\hat{D}$ to be positive, the constraint $\sigma_1 < D$ must also be imposed.

  Actually, \eqref{eq:replace-spectral-data-3+3} is a slight abuse of notation,
  but we will use it throughout the paper for convenience.
  It is really the constants $a_3(0)$ and~$b_2(0)$ that we are redefining,
  \begin{equation*}
    a_3(0) = \frac{\tau_1 (\theta + \tau_1)}{\theta} \epsilon
    ,\qquad
    b_2(0) = \frac{\sigma_1}{\epsilon}
    ,
  \end{equation*}
  so that, for example, $b_2$ in the computation below actually means
  \begin{equation*}
    b_2(t)
    = b_2(0) \, e^{t/\mu_2}
    = \frac{\sigma_1}{\epsilon} \, e^{\epsilon t}
  \end{equation*}
  rather than just $b_2 = \sigma_1 / \epsilon$.
  The extra factor $e^{\epsilon t}$ has been taken into account when writing
  the $\mathcal{O}(\epsilon)$ terms, but it does not really affect the outcome
  in any way, so simply computing with $b_2 = \sigma_1 / \epsilon$
  would give the same result.
  
  As it turns out, this will give us a ghostpeakon at~$y_2$
  parametrized by the constant~$\theta$,
  and we also obtain the formulas for $x_1$, $y_2$, $x_2$, $x_3$ and~$y_3$,
  which will be independent of~$\theta$;
  see~\eqref{eq:solution-xyxxy-position-old-names}.
  
  Let us first see what becomes of the formula for
  $Y_2 = \frac12 e^{2y_2}$ in~\eqref{eq:GX-3+3-interlacing-positions}:
  \begin{equation}
    \hat{Y}_2 = \frac{\hat{\detJ}_{21}^{00}}{\hat{\detJ}_{10}^{11}}
    .
  \end{equation}
  Here we have written
  $\hat{\detJ}_{rs}^{ij} = \detJ[3,2,r,s,i,j]$
  with a hat,
  to distinguish it from
  $\detJ_{rs}^{ij} = \detJ[2,1,r,s,i,j]$
  without a hat, which will appear in our final formulas.
  Similarly, $\hat{Y}_2$ is the expression that we start with,
  and $Y_2$ will denote the final result after letting $\epsilon \to 0$.
  We need to expand the sums $\hat{\detJ}_{21}^{00}$
  and~$\hat{\detJ}_{10}^{11}$ explicitly in terms of
  $a_1$, $a_2$, $a_3$, $\lambda_1$, $\lambda_2$, $\lambda_3$,
  $b_1$, $b_2$, $\mu_1$, $\mu_2$, $C$ and~$D$.
  Untangling Definition~\ref{def:heineintegral}, we get
  \begin{equation}
    \label{eq:heineintegral0021-explicit}
    \begin{split}
      \hat{\detJ}_{21}^{00} &
      = \frac{(\lambda_1-\lambda_2)^2}{(\lambda_1+\mu_1)(\lambda_2+\mu_1)} a_1 a_2 b_1
      + \frac{(\lambda_1-\lambda_2)^2}{(\lambda_1+\mu_2)(\lambda_2+\mu_2)} a_1 a_2 b_2
      \\ &
      + \frac{(\lambda_1-\lambda_3)^2}{(\lambda_1+\mu_1)(\lambda_3+\mu_1)} a_1 a_3 b_1
      + \frac{(\lambda_2-\lambda_3)^2}{(\lambda_2+\mu_1)(\lambda_3+\mu_1)} a_2 a_3 b_1
      \\ &
      + \frac{(\lambda_1-\lambda_3)^2}{(\lambda_1+\mu_2)(\lambda_3+\mu_2)} a_1 a_3 b_2
      + \frac{(\lambda_2-\lambda_3)^2}{(\lambda_2+\mu_2)(\lambda_3+\mu_2)} a_2 a_3 b_2
    \end{split}
  \end{equation}
  and
  \begin{equation}
    \begin{split}
      \hat{\detJ}_{10}^{11} = a_1 \lambda_1 + a_2 \lambda_2 + a_3 \lambda_3
      .
    \end{split}
  \end{equation}
  Starting with $\hat{\detJ}_{21}^{00}$, we rewrite it a little,
  use the substitution~\eqref{eq:replace-spectral-data-3+3}, and let $\epsilon \to 0$:
  \begin{equation}
    \begin{split}
      \hat{\detJ}_{21}^{00} &
      = \frac{(\lambda_1 - \lambda_2)^2}{(\lambda_1 + \mu_1)(\lambda_2 + \mu_1)} a_1 a_2 b_1
      + \frac{(\lambda_1 - \lambda_2)^2}{(\frac{\lambda_1}{\mu_2} + 1)(\frac{\lambda_2}{\mu_2} + 1) \, \mu_2^2}  a_1 a_2 b_2
      \\ & \quad
      + \frac{\bigl( \frac{\lambda_1}{\lambda_3} - 1 \bigr)^2 \, \lambda_3^2}{(\lambda_1 + \mu_1)(1 + \frac{\mu_1}{\lambda_3}) \, \lambda_3} a_1 a_3 b_1
      + \frac{\bigl(\frac{\lambda_2}{\lambda_3 } - 1 \bigr)^2 \, \lambda_3^2}{(\lambda_2 + \mu_1)(1 + \frac{\mu_1}{\lambda_3}) \, \lambda_3} a_2 a_3 b_1
      \\ & \quad
      + \frac{\bigl(\frac{\lambda_1}{\lambda_3} - 1 \bigr)^2 \lambda_3^2}{(\frac{\lambda_1}{\mu_2} + 1)(\lambda_3 + \mu_2) \, \mu_2} a_1 a_3 b_2
      + \frac{\bigl( \frac{\lambda_2}{\lambda_3} - 1 \bigr)^2 \, \lambda_3^2}{(\frac{\lambda_2}{\mu_2} + 1)(\lambda_3 + \mu_2) \, \mu_2} a_2 a_3 b_2
      \\ &
      = \underbrace{\frac{\bigl( \lambda_1 - \lambda_2 \bigr)^2}{(\lambda_1 + \mu_1)(\lambda_2 + \mu_1)} a_1 a_2 b_1}_{\detJ_{21}^{00}}
      + \underbrace{\left( \frac{a_1 b_1}{\lambda_1 + \mu_1} + \frac{a_2 b_1}{\lambda_2 + \mu_1} \right)}_{\detJ_{11}^{00}} (\theta + \tau_1) \bigl( 1 + \mathcal{O}(\epsilon) \bigr)
      \\ & \quad
      + \underbrace{(a_1 + a_2)}_{\detJ_{10}^{00}} \, (\sigma_1 \, \theta) \bigl( 1 + \mathcal{O}(\epsilon) \bigr)
      \\ &
      \to
      \detJ_{21}^{00} + (\theta + \tau_1) \detJ_{11}^{00} + \sigma_1 \theta \detJ_{10}^{00}
      \,
      ,\qquad
      \text{as $\epsilon \to 0$}
      .
    \end{split}
  \end{equation}
  In $\hat{\detJ}_{10}^{11}$, the parameter $\epsilon$ happens to cancel out completely
  even before we take the limit:
  \begin{equation}
    \begin{split}
      \hat{\detJ}_{10}^{11}&
      = a_1 \lambda_1 + a_2 \lambda_2 + a_3 \lambda_3
      \\ &
      = \underbrace{(a_1 \lambda_1 + a_2 \lambda_2)}_{\detJ_{10}^{11}} + \theta + \tau_1
      .
    \end{split}
  \end{equation}
  This gives the formula for $y_2$,
  which will be the ghostpeakon:
  \begin{equation}
    \begin{split}
      Y_2 &
      = Y_2^{\text{ghost}}
      = \frac{1}{2} \exp 2y_2^{\text{ghost}}
      = \lim_{\epsilon \to 0} \hat{Y}_2
      \\ &
      = \lim_{\epsilon \to 0}  \frac{\hat{\detJ}_{21}^{00}}{\hat{\detJ}_{10}^{11}}
      = \frac{\detJ_{21}^{00} + (\theta + \tau_1) \detJ_{11}^{00} + \sigma_1 \theta \detJ_{10}^{00}}{\detJ_{10}^{11} + \theta + \tau_1}
      ,
    \end{split}
  \end{equation}
  where the parameters $\theta$, $\sigma_1$, $\tau_1$ are positive.
  Similarly, for $\hat{X}_3=\frac12 \exp{2 \hat{x}_3}$ we have
  \begin{equation}
    \begin{split}
      \hat{X}_3 &
      = \hat{\detJ}_{11}^{00}
      \\ &
      = \frac{1}{\lambda_1 + \mu_1} a_1 b_1
      + \frac{1}{\lambda_2 + \mu_1} a_2 b_1
      + \frac{1}{\lambda_3 + \mu_1} a_3 b_1
      \\ & \quad
      + \frac{1}{\lambda_1 + \mu_2} a_1 b_2
      + \frac{1}{\lambda_2 + \mu_2} a_2 b_2
      + \frac{1}{\lambda_3 + \mu_2} a_3 b_2
      \\ &
      = \frac{1}{\lambda_1 + \mu_1} a_1 b_1
      + \frac{1}{\lambda_2 + \mu_1} a_2 b_1
      + \frac{a_3}{\lambda_3} \frac{b_1}{\bigl( 1 + \frac{\mu_1}{\lambda_3} \bigr)}
      \\ & \quad
      + \frac{a_1}{\bigl( \frac{\lambda_1}{\mu_2} + 1 \bigr)} \frac{b_2}{\mu_2}
      + \frac{a_2}{\bigl( \frac{\lambda_2}{\mu_2} + 1 \bigr)} \frac{b_2}{\mu_2}
      + \frac{1}{\lambda_3 + \mu_2} a_3 b_2
      \\ &
      \to
      \underbrace{\frac{1}{\lambda_1 + \mu_1} a_1 b_1 + \frac{1}{\lambda_2 + \mu_1} a_2 b_1}_{\detJ_{11}^{00}}
      + 0
      + \underbrace{(a_1 + a_2)}_{\detJ_{10}^{00}} \sigma_1
      + 0
      \\ &
      = \detJ_{11}^{00} + \sigma_1 \detJ_{10}^{00}
      = X_3
      ,
      \qquad
      \text{as $\epsilon \to 0$}
      .
    \end{split}
  \end{equation}
  In the same way, we find the following expressions for the other variables,
  in the limit $\epsilon\to 0$:
  \begin{equation}
    \label{eq:solution-xyxxy-position-old-names}
    \begin{aligned}
      X_1 &= \frac{\detJ_{21}^{00}}{\detJ_{10}^{11} + C \, \detJ_{11}^{10}}
      ,
      \qquad
      &
      Y_1 &= \frac{\detJ_{21}^{00}}{\detJ_{10}^{11}}
      ,
      \\[1ex]
      X_2 &= \frac{ \detJ_{21}^{00} + \tau_1 \detJ_{11}^{00}}{\detJ_{10}^{11} + \tau_1}
      ,
      &
      Y_2 &= \frac{\detJ_{21}^{00} + (\theta + \tau_1) \detJ_{11}^{00} + \sigma_1 \theta \detJ_{10}^{00}}{\detJ_{10}^{11} + \theta + \tau_1}
      ,
      \\[1ex]
      X_3 &= \detJ_{11}^{00} + \sigma_1 \detJ_{10}^{00}
      ,
      &
      Y_3 &= \detJ_{11}^{00} + D \detJ_{10}^{00}
      .
    \end{aligned}
  \end{equation}
  For the expressions
  $Q_{k} = 2 m_{k} \, e^{-x_{k}}$  and $P_{k} = 2 n_{k} \, e^{-y_{k}}$
  involving the amplitudes, we get
  \begin{equation}
    \label{eq:solution-xyxxy-amplitude-old-names}
    \begin{aligned}
      Q_1 &= \frac{\mu_1 }{\lambda_1 \lambda_2 } \left( \frac{\detJ_{10}^{11}}{ \detJ_{11}^{10}} + C \right)
      ,
      &
      P_1 &= \frac{\detJ_{10}^{11} \detJ_{11}^{10}}{\detJ_{10}^{01} \detJ_{21}^{01}}
      ,
      \\[1ex]
      Q_2 &= \frac{\sigma_1 \detJ_{10}^{01} \left( \detJ_{10}^{11}+ \tau_1 \right)}{\detJ_{11}^{10} \left( \detJ_{11}^{10} + \sigma_1 \left( \detJ_{10}^{10} + \tau_1 \right) \right)}
      ,
      \qquad
      &
      P_2 &= 0
      ,
      \\[1ex]
      Q_3 &= \frac{\detJ_{10}^{01}}{\detJ_{11}^{10} + \sigma_1 \left( \detJ_{10}^{10} + \tau_1 \right)}
      ,
      &
      P_3 &= \frac{1}{\detJ_{10}^{00}}
      .
    \end{aligned}
  \end{equation}
  Note in particular that $P_2=0$, which implies that $n_2=0$, as desired.
  Note also that $Y_2$ (and hence~$y_2$) is the only quantity which depends on~$\theta$.
  Ignoring the ghostpeakon at~$y_2$, the remaining positions are
  \begin{equation}
    x_1 < y_1 < \underbrace{x_2 < x_3} < y_3
    .
  \end{equation}
  Renaming them to
  \begin{equation*}
    x_1 < y_1 < \underbrace{x_{2,1} < x_{2,2}}_{\text{second $X$-group}} < y_2
    ,
  \end{equation*}
  and similarly for the corresponding amplitudes,
  we find that the non-interlacing peakon solution
  \begin{subequations}
    \label{eq:solution-xyxxy-joint}
    \begin{equation}
      \label{eq:peakon-ansatz-xyxxy}
      \begin{split}
        u(x,t) &= m_1(t) \, e^{-\abs{x-x_1(t)}} + m_{2,1}(t) \, e^{-\abs{x-x_{2,1}(t)}} + m_{2,2}(t) \, e^{-\abs{x-x_{2,2}(t)}}
        ,
        \\[1ex]
        v(x,t) &= n_1(t) \, e^{-\abs{x-y_1(t)}} + n_2(t) \, e^{-\abs{x-y_2(t)}}
      \end{split}
    \end{equation}
    is given by the explicit formulas
    \begin{equation}
      \label{eq:solution-xyxxy-position}
      \begin{aligned}
        \tfrac12 e^{2x_1} = X_1
        &
        = \frac{\detJ_{21}^{00}}{\detJ_{10}^{11} + C \, \detJ_{11}^{10}}
        ,\qquad
        &
        \tfrac12 e^{2y_1} = Y_1
        &
        = \frac{\detJ_{21}^{00}}{\detJ_{10}^{11}}
        ,
        \\[1ex]
        \tfrac12 e^{2x_{2,1}} = X_{2,1}
        &
        = \frac{ \detJ_{21}^{00} + \tau_1 \detJ_{11}^{00}}{\detJ_{10}^{11} + \tau_1}
        ,
        \\[1ex]
        \tfrac12 e^{2x_{2,1}} = X_{2,2}
        &
        = \detJ_{11}^{00} + \sigma_1 \detJ_{10}^{00}
        ,
        &
        \tfrac12 e^{2y_2} = Y_2
        &
        = \detJ_{11}^{00} + D \detJ_{10}^{00}
      \end{aligned}
    \end{equation}
    and
    \begin{equation}
      \label{eq:solution-xyxxy-amplitude}
      \begin{aligned}
        2 m_1 \, e^{-x_1} = Q_1
        &
        = \frac{\mu_1 }{\lambda_1 \lambda_2}
        \left( \frac{\detJ_{10}^{11}}{\detJ_{11}^{10}} + C \right)
        ,
        &
        2 n_2 \, e^{-y_2} = P_1
        &
        = \frac{\detJ_{10}^{11} \detJ_{11}^{10}}{\detJ_{10}^{01} \detJ_{21}^{01}}
        ,
        \\[1ex]
        2 m_{2,1} \, e^{-x_{2,1}} = Q_{2,1}
        &= \frac{\sigma_1 \detJ_{10}^{01} \left( \detJ_{10}^{11} + \tau_1 \right)}{\detJ_{11}^{10} \left( \detJ_{11}^{10} + \sigma_1 \left( \detJ_{10}^{10} + \tau_1 \right) \right)}
        ,
        \quad
        \\[1ex]
        2 m_{2,2} \, e^{-x_{2,2}} = Q_{2,2}
        &
        = \frac{\detJ_{10}^{01}}{\detJ_{11}^{10} + \sigma_1 \left( \detJ_{10}^{10} + \tau_1 \right)}
        ,
        &
        2 n_2 \, e^{-y_2} = P_2
        &
        = \frac{1}{\detJ_{10}^{00}}
        .
      \end{aligned}
    \end{equation}
  \end{subequations}
  Here, $\detJ_{rs}^{ij}=\detJ[2,1,r,s,i,j]$,
  so these formulas depend on the spectral parameters
  $\lambda_1$, $\lambda_2$, $\mu_1$,
  $a_1$, $a_2$,~$b_1$,
  as well as on the positive parameters
  $\tau_1 = \tau_{2,1}^X$ and~$\sigma_1 = \sigma_{2,1}^X$
  which are internal to the second $X$-group.
  The substitution $\hat{D} = D - \sigma_1$
  is necessary in order to get a formula for $Y_2$ free from
  the parameter~$\sigma_1$ which we want to keep confined to the $X_2$-group.
  As already pointed out just after~\eqref{eq:replace-spectral-data-3+3},
  this gives rise to the constraint $\sigma_1<D$,
  and we can also see in the formulas for $X_{2,2}$ and~$Y_2$
  in~\eqref{eq:solution-xyxxy-position}
  that this constraint is necessary in order for $x_{2,2} < y_2$ to hold.
  
  Figure~\ref{fig:proof-example-positions} shows the curves
  $x=x_1(t)$, $x=y_1(t)$, $x=x_{2,i}(t)$ and $x=y_2(t)$
  describing the positions of the peakons,
  as given by the explicit solution formulas~\eqref{eq:solution-xyxxy-position}.
  We have taken the spectral data~\eqref{eq:GX-3+3-interlacing-spectral-data}
  from Example~\ref{ex:GX-3+3-interlacing} as our starting point, and then
  applied~\eqref{eq:replace-spectral-data-3+3}
  with $\tau_1 = 10^{10}$ and $\sigma_1 = 10^{5}$,
  so that the parameter values in the figure are
  \begin{equation}
    \label{eq:proof-example-parameters}
    \begin{gathered}
      \lambda_1 = \frac{1}{5}
      ,\quad
      \lambda_2 = 1
      ,\qquad
      \mu_1 = \frac{1}{3}
      ,
      \\
      a_1(0) = 10^{-4}
      ,\quad
      a_2(0) = 10^{1}
      ,\qquad
      b_1(0) = 10^{-6}
      ,
      \\
      C = 10^{20}
      ,\quad
      D = 10^{18} + 10^{5}
      ,
      \\
      \tau_1 = 10^{10}
      ,\quad
      \sigma_1 = 10^{5}
      .
    \end{gathered}
  \end{equation}
  We have also included the curve
  \begin{equation}
    \label{eq:proof-example-ghost}
    x = \xi(t;\theta)
    = \frac12 \ln \frac{2 \bigl( \detJ_{21}^{00} + (\theta + \tau_1) \detJ_{11}^{00} + \sigma_1 \theta \detJ_{10}^{00} \bigr)}{\detJ_{10}^{11} + \theta + \tau_1}
    ,
  \end{equation}
  i.e., the curve obtained from what was called $Y_2$
  in~\eqref{eq:solution-xyxxy-position-old-names},
  before we renamed the variables.
  In other words, it is the position of the ghostpeakon
  whose amplitude became zero in the limit as $\epsilon \to 0$,
  and which is therefore now absent
  from $v(x,t)$ in~\eqref{eq:peakon-ansatz-xyxxy}.
  This is the only quantity which depends on the value of the parameter~$\theta$,
  and in the figure we have used $\theta = 10^{-15}$.
  In fact, as $\theta$ varies in the range $0 < \theta < \infty$,
  the family of curves $x = \xi(t; \theta)$ fills out the region of the $(x,t)$ plane between
  the curves $x = x_{2,1}(t)$ and $x = x_{2,2}(t)$, as illustrated in
  Figure~\ref{fig:proof-example-characteristics};
  note from the formula~\eqref{eq:proof-example-ghost} that
  \begin{equation}
    \lim_{\theta \to 0^+} \xi(t;\theta) = x_{2,1}(t)
    ,\qquad
    \lim_{\theta \to \infty} \xi(t;\theta) = x_{2,2}(t)
    .
  \end{equation}
  These curves are in fact \emph{characteristic curves} associated with this particular
  non-interlacing peakon solution~\eqref{eq:solution-xyxxy-joint},
  i.e., solutions of the ODE $\dot \xi(t) = u(\xi(t),t) \, v(\xi(t),t)$.
  So we obtain the characteristic curves between $x_{2,1}(t)$ and $x_{2,2}(t)$
  as a byproduct of the proof,
  and in fact we can derive formulas for \emph{all} characteristic curves;
  see Section~\ref{sec:characteristic-curves}.
  
  Finally, in Figure~\ref{fig:proof-example-amplitudes} we have plotted the curves
  $s = \ln m_1(t)$, $s = -\ln n_1(t)$, $x = \ln m_{2,i}(t)$ and $s = -\ln n_2(t)$,
  obtained from the solution formulas~\eqref{eq:solution-xyxxy-amplitude}
  together with~\eqref{eq:solution-xyxxy-position},
  for the same parameter values~\eqref{eq:proof-example-parameters}.
  
  These pictures give a first taste of some of the new features appearing
  when the solutions are non-interlacing.
  We will comment upon these features in more detail in the examples that follow below.

  \begin{figure}[H]
    \centering
    \includegraphics[width=13cm]{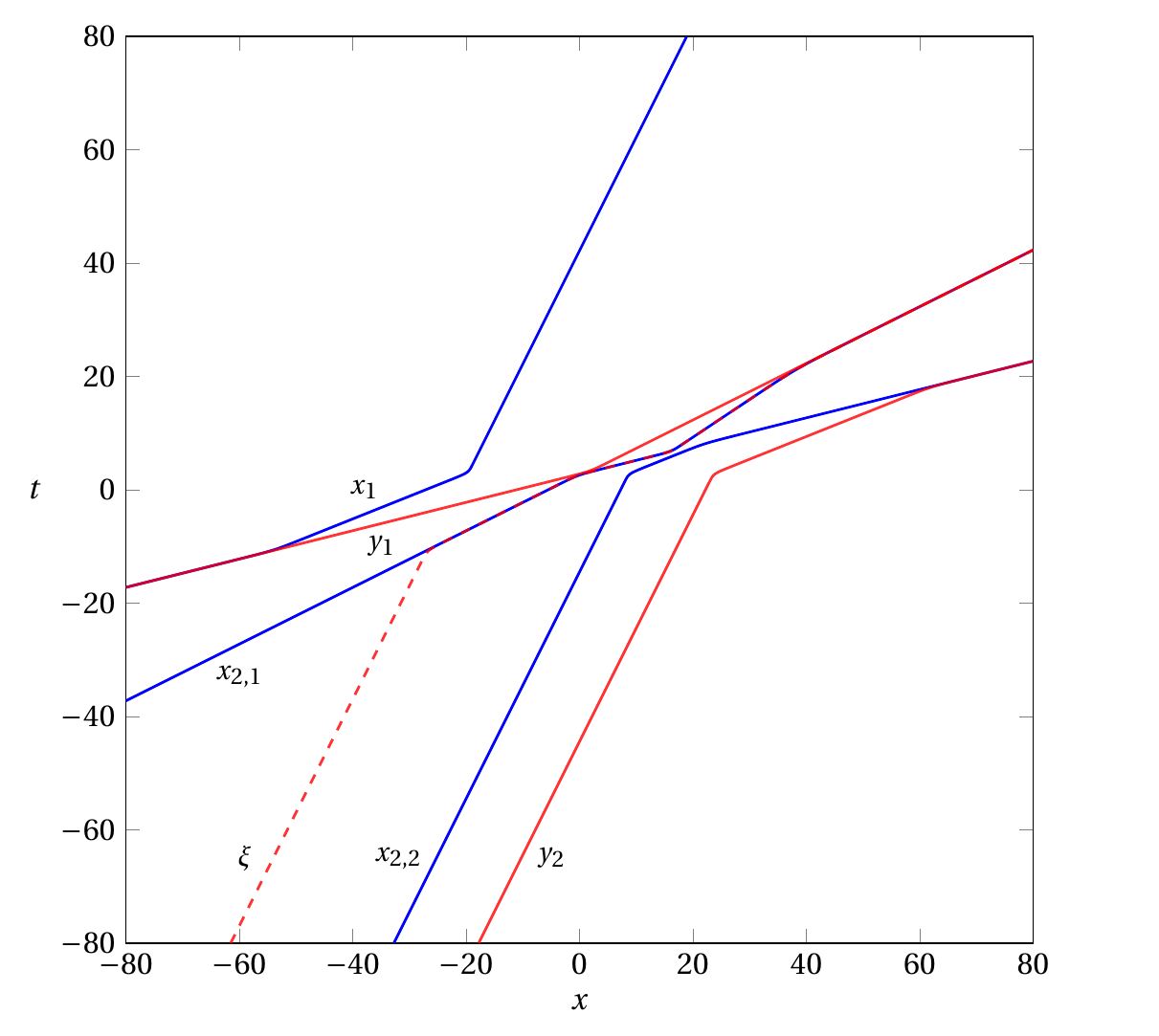}
    \caption{\textbf{Positions in a simple non-interlacing case.}
      Positions of the peakons in the solution~\eqref{eq:solution-xyxxy-joint}
      from Example~\ref{ex:proof-technique},
      with the parameter values~\eqref{eq:proof-example-parameters}.
      The solid blue curves show $X$-peakons and the solid red curves $Y$-peakons.
      The dashed red curve $x = \xi(t)$ is the position of a
      zero-amplitude ``ghostpeakon'',
      the remnant of the peakon that was ``killed'' in order to derive
      this non-interlacing solution with $2+2$ groups from the interlacing $3+3$ solution.
      The ghostpeakon curve must lie between the curves for $x_{2,1}$ and~$x_{2,2}$;
      its exact position depends on the parameter~$\theta$,
      which in this picture has the value $10^{-15}$
      (cf. Figure~\ref{fig:proof-example-characteristics}).
      The strict ordering
      $x_1 < y_1 < x_{2,1} < \xi < x_{2,2} < y_2$
      is preserved for all~$t$
      (see Theorem~\ref{thm:no-collisions}),
      so the curves actually never cross or touch.
      But, for example, the curve $x=y_1(t)$
      approaches the same straight line as the curve $x=x_1(t)$
      as $t \to -\infty$,
      and the same line as the curves $x=x_{2,1}(t)$ and $x = \xi(t)$
      as $t \to +\infty$.
    }
    \label{fig:proof-example-positions}
  \end{figure}

  \begin{figure}[H]
    \centering
    \includegraphics[width=13cm]{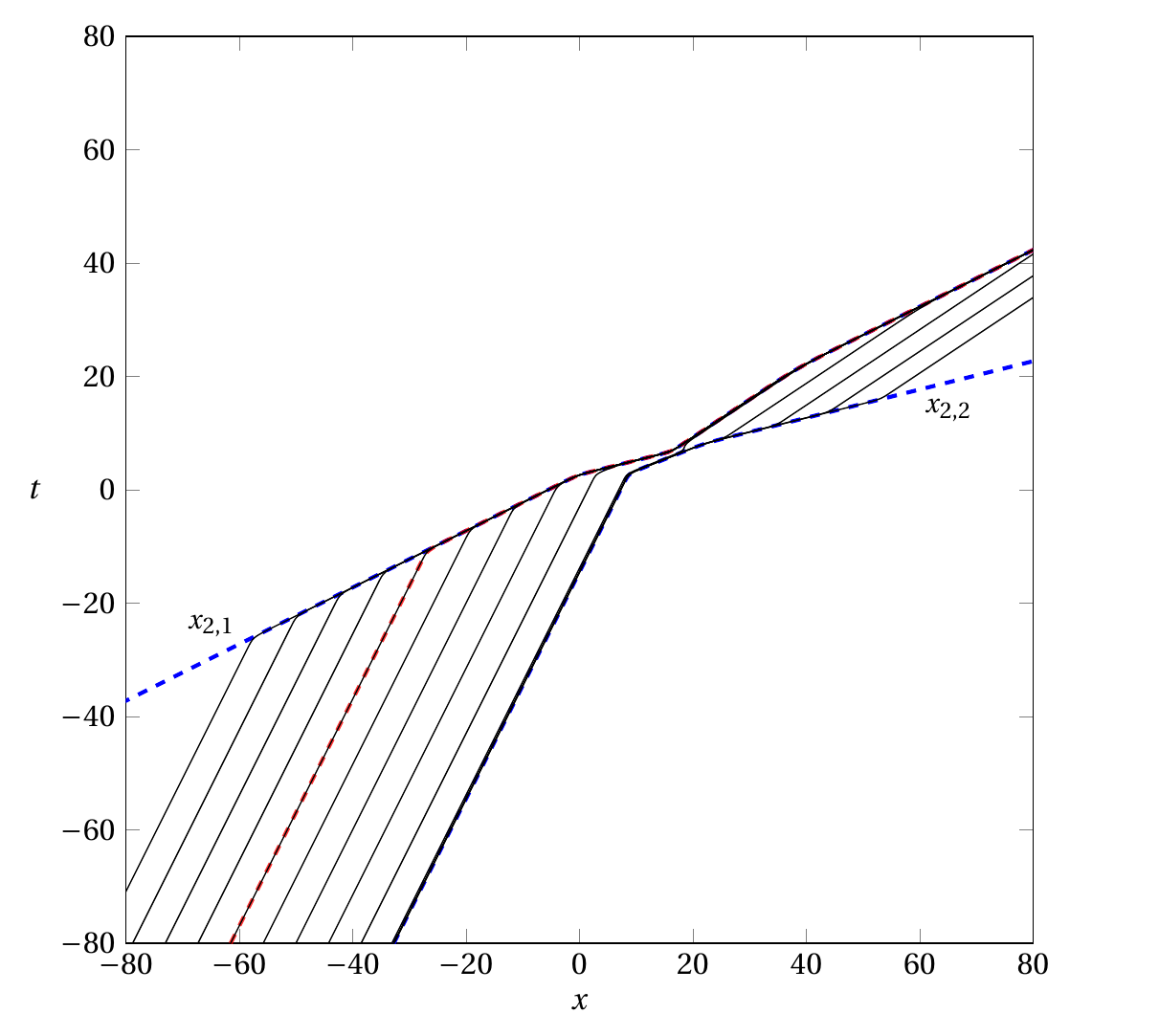}
    \caption{\textbf{Characteristic curves.}
      A selection of characteristic curves $x = \xi(t)$,
      i.e., solutions of the ODE $\dot \xi = u(\xi) \, v(\xi)$,
      for the peakon solution~\eqref{eq:solution-xyxxy-joint}
      in Example~\ref{ex:proof-technique}.
      The dashed blue curves are $x = x_{2,1}(t)$ and $x = x_{2,2}(t)$,
      and the region between them is filled by a
      family of characteristic curves $x = \xi(t; \theta)$
      given by equation~\eqref{eq:proof-example-ghost},
      where $\theta$ varies in the range $0 < \theta < \infty$.
      The formula~\eqref{eq:proof-example-ghost} was obtained as a byproduct of the proof,
      in the form of a ``ghostpeakon''.
      Shown here in solid black are the characteristic curves
      $x = \xi(t; \theta)$
      for $\theta = 10^{-35 + 5k}$, $k = 0,1,\dots,13$,
      so the fifth one from the left ($\theta = 10^{-15}$)
      is the same as the dashed red curve
      in Figure~\ref{fig:proof-example-positions},
      and it is marked here with dashed red as well.
    }
    \label{fig:proof-example-characteristics}
  \end{figure}

  \begin{figure}[H]
    \centering
    \includegraphics[width=13cm]{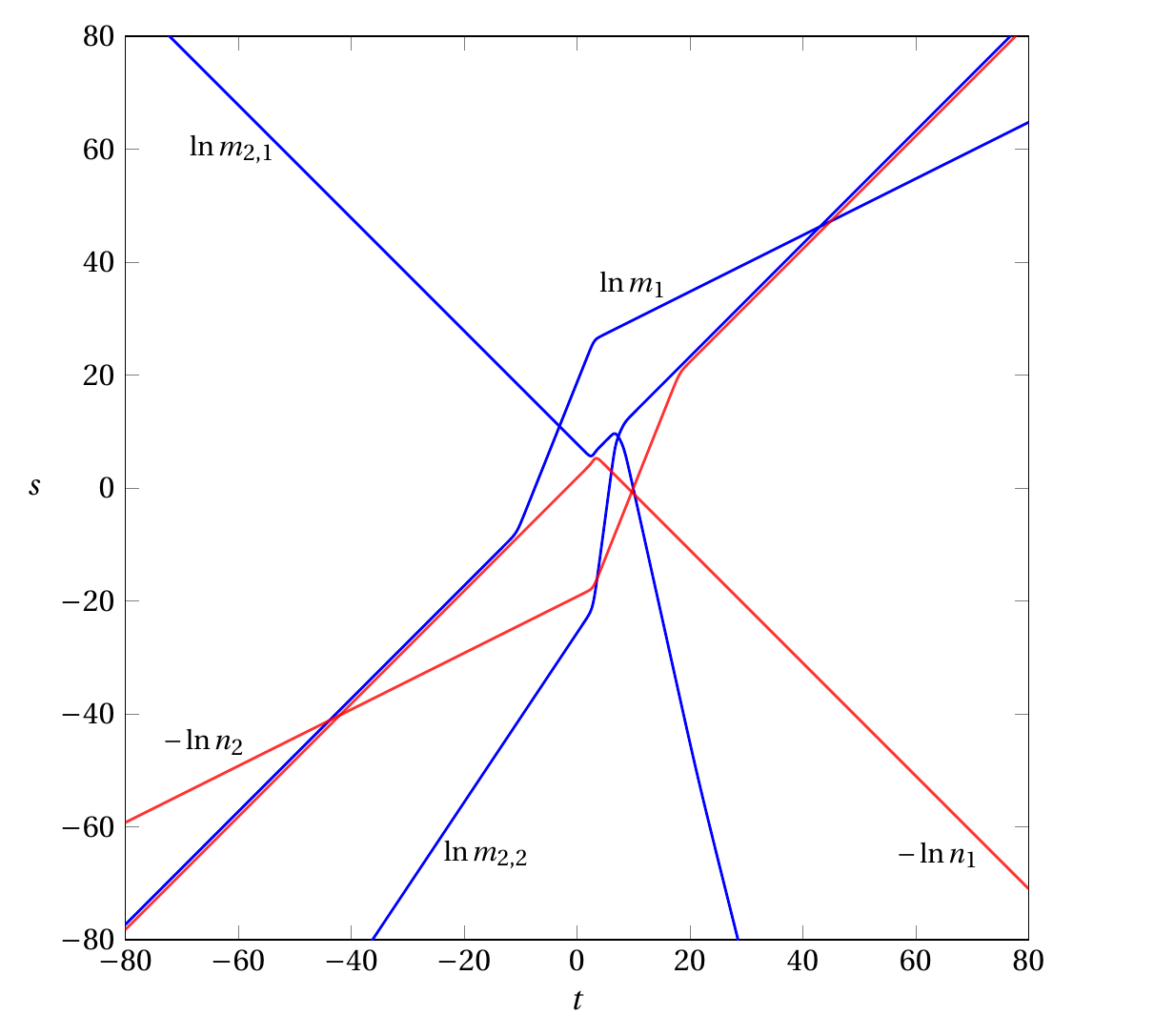}
    \caption{\textbf{Amplitudes in a simple non-interlacing case.}
      Logarithms of the amplitudes
      ($s = \ln m_1(t)$, $s = -\ln n_1(t)$, etc.)
      for the solution~\eqref{eq:solution-xyxxy-joint}
      in example~\ref{ex:proof-technique}
      with the parameters~\eqref{eq:proof-example-parameters},
      i.e., the same solution whose positions were shown in
      Figure~\ref{fig:proof-example-positions}.
    }
    \label{fig:proof-example-amplitudes}
  \end{figure}
\end{example}

\begin{remark}
  The reason that the formulas obtained in the limit $\epsilon \to 0$
  still satisfy the peakon ODEs is the following: for $\epsilon \neq 0$,
  the ODEs are identically satisfied with respect to~$t$ and to all
  the parameters, and the reparametrization is such that all functions
  involved will have a removable singularity at $\epsilon=0$.
\end{remark}

\begin{remark}
  It is perhaps somewhat surprising that the formulas for the singletons
  $x_1$, $y_1$ and~$y_2$ in the non-interlacing solution~\eqref{eq:solution-xyxxy-joint}
  in Example~\ref{ex:proof-technique} are identical to the formulas for $x_1$, $y_1$ and~$y_2$ in
  the $2+2$ interlacing peakon solution with the same spectral parameters
  (which are given by Theorem~\ref{thm:interlacing-solution} with $K=2$,
  and are also written out in detail by Lundmark and
  Szmigielski~\cite[formula~(7.3)]{lundmark-szmigielski:2017:GX-dynamics-interlacing}).
  This is a general phenomenon,
  the reason for which will be explained in Section~\ref{sec:effective}.
\end{remark}

\begin{example}
  \label{ex:proof-technique2}
  Suppose next that we seek the solution formulas for a peakon configuration
  \begin{multline*}
    x_1 <
    \underbrace{y_{1,1} < y_{1,2} < y_{1,3}}_{\text{$Y$-group}} <
    x_2 < y_2 <
    \underbrace{x_{3,1} < x_{3,2} < x_{3,3}}_{\text{$X$-group}} <
    \underbrace{y_{3,1} < y_{3,2} < y_{3,3} < y_{3,4}}_{\text{$Y$-group}}
    ,
  \end{multline*}
  which we can schematically represent with a diagram
  \begin{equation*}
    \X
    \Y \Y \Y
    \X
    \Y
    \X \X \X
    \Y \Y \Y \Y
    ,
  \end{equation*}
  where black dots are $X$-peakons and white dots are $Y$-peakons.
  Clearly there are $3+3$ groups, three of which are singletons.
  Our starting point is the $10+10$ interlacing configuration obtained
  by inserting auxiliary $X$-peakons between adjacent $Y$-peakons,
  and vice versa.
  With braces indicating the original non-singleton groups,
  the diagram then becomes
  \begin{equation*}
    \X
    \underbrace{\Y \X \Y \X \Y}
    \X \Y
    \underbrace{\X \Y \X \Y \X}
    \underbrace{\Y \X \Y \X \Y \X \Y}
    .
  \end{equation*}
  For such a configuration
  $x_1 < y_1 < \dots < x_{10} < y_{10}$,
  the solution formulas are known,
  assuming that all amplitudes $m_k$ and $n_k$
  are nonzero.
  We now ``kill'' the rightmost $X$-peakon,
  i.e., we perform a similar limiting procedure
  as in Example~\ref{ex:proof-technique}
  in order to drive the amplitude $m_{10}$ to zero.
  The details of how to do this will be explained in
  Section~\ref{sec:proofs-even},
  but the result is that we find out the (previously unknown)
  solution formulas for the configuration
  \begin{equation*}
    \X \Y \X \Y \X \Y \X \Y \X \Y \X \Y \X \Y \X \Y \X
    \Y \Z \Y
    .
  \end{equation*}
  Next, we repeat the procedure twice, to kill the next two $X$-peakons
  from the right,
  yielding the formulas for the configuration
  \begin{equation*}
    \X \Y \X \Y \X \Y \X \Y \X \Y \X \Y \X
    \Y \Z \Y \Z \Y \Z \Y
    .
  \end{equation*}
  At this stage, the rightmost $Y$-group with four peakons
  is completed, and the next auxiliary peakon to be killed
  is the $Y$-peakon between the two rightmost remaining $X$-peakons.
  Killing $Y$-peakons is done in a similar way as killing $X$-peakons,
  and we continue in this manner until we are back at the
  original configuration,
  at which point we have the solution formulas that we seek:
  \begin{equation*}
    \begin{aligned}
      & \text{Start} &&
      \X \Y \X \Y \X \Y \X \Y \X \Y \X \Y \X \Y \X \Y \X \Y \X \Y
      ,
      \\
      & \text{Step 1a} &&
      \X \Y \X \Y \X \Y \X \Y \X \Y \X \Y \X \Y \X \Y \X \Y \Z \Y
      ,
      \\
      & \text{Step 1b} &&
      \X \Y \X \Y \X \Y \X \Y \X \Y \X \Y \X \Y \X \Y \Z \Y \Z \Y
      ,
      \\
      & \text{Step 1c} &&
      \X \Y \X \Y \X \Y \X \Y \X \Y \X \Y \X \Y \Z \Y \Z \Y \Z \Y
      ,
      \\
      & \text{Step 2a} &&
      \X \Y \X \Y \X \Y \X \Y \X \Y \X \Z \X \Y \Z \Y \Z \Y \Z \Y
      ,
      \\
      & \text{Step 2b} &&
      \X \Y \X \Y \X \Y \X \Y \X \Z \X \Z \X \Y \Z \Y \Z \Y \Z \Y
      ,
      \\
      & \text{Step 3a} &&
      \X \Y \X \Y \Z \Y \X \Y \X \Z \X \Z \X \Y \Z \Y \Z \Y \Z \Y
      ,
      \\
      & \text{Step 3b (finish)} &&
      \X \Y \Z \Y \Z \Y \X \Y \X \Z \X \Z \X \Y \Z \Y \Z \Y \Z \Y
      .
    \end{aligned}
  \end{equation*}
  It should be clear that what we need to control is how
  the solution formulas change when we kill one peakon,
  i.e., when we perform a typical step in the sequence above.
  We will always be in the situation that we have
  a ``half-finished'' group that we are working on,
  with ``finished'' groups on its right,
  and interlacing singletons on its left.
  When we kill a peakon, we increase the number of peakons in the
  current group by one, since the singleton just to the left of the
  killed peakon is joined to the group.
  This also means that the number of groups decreases by two: a
  singleton of one type ($X$ or~$Y$) disappears, and two groups of the
  other type
  are merged into one.
  
  We will formulate the proof as a kind of induction or descent:
  we assume that the solution for a pre-killing configuration
  with $K+K$ groups
  is given by our general solution formulas (to be formulated in
  Section~\ref{sec:solutions-even}),
  and show that the solution for the post-killing configuration
  is still given by the same general formulas,
  but with $(K-1)+(K-1)$ groups and with $N+1$ peakons
  in the current group instead of~$N$.
  The base case consists of the known solution formulas for the even interlacing case.
\end{example}

\subsection{Examples with an even number of groups}
\label{sec:examples-groups-even}

In this section we present solution formulas, graphics and asymptotics
for several different examples with an even number of groups,
mainly for the $3+3$ case, where we successively add more features
to the
$3+3$ interlacing solution from Example~\ref{ex:GX-3+3-interlacing}.

Examples~\ref{ex:GX-3+3-typicalY2},
\ref{ex:GX-3+3-secondrightmostX3}
and~\ref{ex:GX-3+3-rightmostY3}
illustrate solutions where all groups but one are singletons;
the non-singleton group in these examples (with five peakons in each case)
is $Y_2$, $X_3$ and $Y_3$, respectively.
Example~\ref{ex:GX-3+3-X2Y2} shows a configuration with two adjacent non-singleton groups
(two $X_2$-peakons and five $Y_2$-peakons), the remaining groups being singletons.
In Example~\ref{ex:GX-3+3-allgroups} all six groups are non-singletons,
and in this example we also consider the asymptotic behaviour of the amplitudes in detail;
the previous examples focus on the positions.
Finally, in Example~\ref{ex:GX-1+1-allgroups} we illustrate the somewhat special $1+1$
case with just one $X$-group and one $Y$-group.

\begin{figure}
  \centering
  \includegraphics[width=13cm]{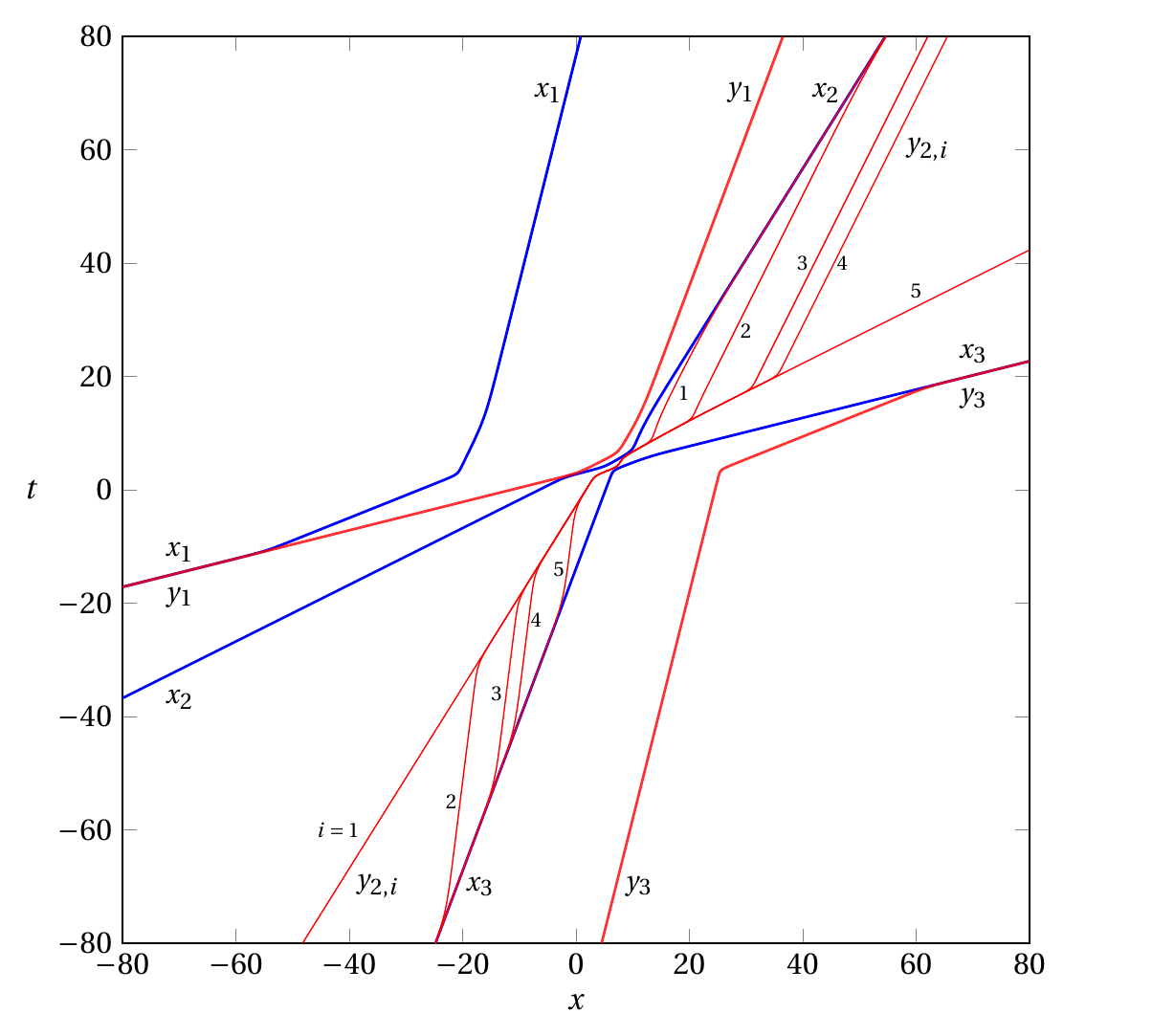}
  \caption{\textbf{Typical group in the middle.}
    Positions of the peakons for the non-interlacing peakon
    solution~\eqref{eq:GX-3+3-typicalY2-solution-joint}
    described in Example~\ref{ex:GX-3+3-typicalY2},
    with the parameter values
    \eqref{eq:GX-3+3-interlacing-spectral-data}
    and~\eqref{eq:GX-3+3-typicalY2-tau-sigma}.
    Compared to the interlacing solution in Figure~\ref{fig:GX-3+3-interlacing-positions-all},
    the only difference is that the singleton $y_2$ is
    replaced by a group of five peakons $y_{2,i}$, $1 \le i \le 5$.
    As $t \to -\infty$, the leftmost curve in the group, $x = y_{2,1}(t)$,
    approaches the same line as the $y_2$ singleton
    in Figure~\ref{fig:GX-3+3-interlacing-positions-all},
    while the other trajectories $x = y_{2,i}(t)$, $2 \le i \le 5$,
    approach the same line as the curve $x = x_3(t)$.
    As $t \to +\infty$, it is instead the rightmost curve in the group,
    $x = y_{2,5}(t)$, that behaves like the $y_2$ singleton,
    while the other peakons $x = y_{2,i}(t)$, $1 \le i \le 4$,
    asymptotically approach the same line as $x = x_2(t)$.
    At $t=80$, the maximal time shown in the picture,
    the peakon at~$x_2$ has not yet caught up with the slower peakons
    at $y_{2,3}$ and~$y_{2,4}$,
    but it will, eventually.
  }
  \label{fig:GX-3+3-typicalY2-positions}
\end{figure}

\begin{example}[A typical $Y$-group]
  \label{ex:GX-3+3-typicalY2}
  Consider the case of $3+3$ groups, where all groups are singletons
  except the second $Y$-group which contains five peakons:
  \begin{equation}
    x_1 < y_1 < x_2 <
    \underbrace{y_{2,1} < y_{2,2} < y_{2,3} < y_{2,4} < y_{2,5}}_{\text{Second $Y$-group}} <
    x_3 < y_3
    .
  \end{equation}
  We have chosen a typical ``middle'' group for our first example,
  since (non-singleton) groups on the far left and right are described by special formulas.
  We state the solution formulas for this case below; these formulas
  are obtained in later sections by the method of ghostpeakons, where
  we start from the $7+7$ interlacing peakon solution,
  kill off $m_3$, $m_4$, $m_5$, $m_6$ by performing appropriate limiting
  procedures with the spectral data, and rename the remaining (nonzero) amplitudes
  \begin{equation*}
    m_1, n_1, m_2,
    n_2, \quad n_3, \quad n_4, \quad n_5, \quad n_6,
    m_7, n_7
  \end{equation*}
  to
  \begin{equation*}
    m_1, n_1, m_2,
    \underbrace{n_{2,1}, n_{2,2}, n_{2,3}, n_{2,4}, n_{2,5}},
    m_3, n_3
    ,
  \end{equation*}
  and similarly for the positions.
  There are $12$ spectral variables $\{a_k, \lambda_k\}^3_{k=1}$, $\{b_k, \mu_k\}^2_{k=1}$,
  $C$, $D$, whose time dependence is given by~\eqref{eq:residues-time-dependence}.
  The solution depends also on $8$ additional constants
  \begin{equation*}
    \{ \tau_{2,i}^Y , \, \sigma_{2,i}^Y \}_{i=1}^4
    ,
  \end{equation*}
  which only appear in the formulas for the variables $y_{2,i}$
  and~$n_{2,i}$ in the second $Y$-group,
  and which will be written as
  \begin{equation*}
    \{ \tau_i, \, \sigma_i \}_{i=1}^4
  \end{equation*}
  for simplicity.
  These parameters are positive, and must also satisfy the constraint
  $\sigma_1 < \sigma_2 < \sigma_3 < \sigma_4$.
  Thus, the solution formulas depend on $20$ parameters in total,
  which is as it should be,
  since there are 20 degrees of freedom in the system
  (position and amplitude for each of the 10 peakons).
  
  With~$\detJ_{ij}^{rs} = \detJ[3, 2, r,s,i,j]$,
  and using the abbreviations $X_k$, $Y_k$, $Q_k$ and $P_k$ defined by~\eqref{eq:XYQP},
  and $Y_{k,i}$ and~$P_{k,i}$ from Definition~\ref{def:XYQP-groups},
  the solution formulas for the positions are
  \begin{subequations}
    \label{eq:GX-3+3-typicalY2-solution-joint}
    \begin{equation}
      \label{eq:GX-3+3-typicalY2-solution-positions}
      \begin{aligned}
        X_1 &= \frac{\detJ_{32}^{00}}{\detJ_{21}^{11} + C \, \detJ_{22}^{10}}
        ,
        \qquad
        Y_1 = \frac{\detJ_{32}^{00}}{\detJ_{21}^{11}}
        ,
        \qquad
        X_2 = \frac{\detJ_{22}^{00}}{\detJ_{11}^{11}}
        ,
        \\[1em]
        Y_{2,1} &= \frac{\detJ_{22}^{00} + \tau_1\detJ_{21}^{00}}{\detJ_{11}^{11} + \tau_1\detJ_{10}^{11}}
        ,
        \\[1ex]
        Y_{2,2} &= \frac{\detJ_{22}^{00} + (\tau_1 + \tau_2) \detJ_{21}^{00} + \tau_2 \sigma_1 \detJ_{11}^{00}}{\detJ_{11}^{11} + (\tau_1 + \tau_2) \detJ_{10}^{11} + \tau_2 \sigma_1}
        ,
        \\[1ex]
        Y_{2,3} &= \frac{\detJ_{22}^{00} + (\tau_1 + \tau_2 + \tau_3) \detJ_{21}^{00} + (\tau_2 \sigma_1 + \tau_3 \sigma_2) \detJ_{11}^{00}}{\detJ_{11}^{11} + (\tau_1 + \tau_2 + \tau_3) \detJ_{10}^{11} + \tau_2 \sigma_1 + \tau_3 \sigma_2}
        ,
        \\[1ex]
        Y_{2,4} &= \frac{\detJ_{22}^{00} + (\tau_1 + \tau_2 + \tau_3 + \tau_4) \detJ_{21}^{00} + (\tau_2 \sigma_1 + \tau_3 \sigma_2 + \tau_4 \sigma_3) \detJ_{11}^{00}}{\detJ_{11}^{11} + (\tau_1 + \tau_2 + \tau_3 + \tau_4)\detJ_{10}^{11} + \tau_2 \sigma_1 + \tau_3 \sigma_2 + \tau_4 \sigma_3}
        ,
        \\[1ex]
        Y_{2,5} &= \frac{{\detJ_{21}^{00}} + \sigma_4 \detJ_{11}^{00}} {{\detJ_{10}^{11}} + \sigma_4}
        ,
        \\[1ex]
        X_3 &= \detJ_{11}^{00}
        ,\qquad
        Y_3 = \detJ_{11}^{00} + D \detJ_{10}^{00}
        ,
      \end{aligned}
    \end{equation}
    and the formulas for the amplitudes are
    \begin{equation}
      \label{eq:GX-3+3-typicalY2-solution-amplitudes}
      \begin{aligned}
        Q_1 &= \frac{\mu_1 \mu_2}{\lambda_1 \lambda_2 \lambda_3} \left( \frac{\detJ_{21}^{11}}{\detJ_{22}^{10}}+ C \right)
        ,
        \qquad
        P_1 = \frac{\detJ_{22}^{10} \detJ_{21}^{11}}{\detJ_{21}^{01} \detJ_{32}^{01}}
        ,
        \qquad
        Q_2 = \frac{\detJ_{21}^{01} \detJ_{11}^{11}}{\detJ_{11}^{10} \detJ_{22}^{10}}
        ,
        \\[1em]
        P_{2,1} &= \sigma_1 \detJ_{11}^{10} \bigl( \detJ_{11}^{11} +  \tau_1 \detJ_{10}^{11} \bigr)
        \\ & \quad
        \times \bigl( \detJ_{21}^{01}(\detJ_{21}^{01} + \sigma_1  \detJ_{11}^{01} +  \tau_1  \sigma_1 \detJ_{10}^{01}) \bigr)^{-1}
        ,
        \\[1ex]
        P_{2,2 } &= (\sigma_2 - \sigma_1) \detJ_{11}^{10} \bigl( \detJ_{11}^{11} + (\tau_1 + \tau_2)  \detJ_{10}^{11} +  \tau_2 \sigma_1 \bigr)
        \\ & \quad
        \times \bigl( \detJ_{21}^{01} + \sigma_2 \detJ_{11}^{01} + \bigl( \sigma_2 (\tau_1 + \tau_2)  - \tau_2 \sigma_1 \bigr) \detJ_{10}^{01} \bigr)^{-1}
        \\ & \quad
        \times \bigl( \detJ_{21}^{01} + \sigma_1 \detJ_{11}^{01} + \tau_1 \sigma_1 \detJ_{10}^{01} \bigr)^{-1}
        ,
        \\[1ex]
        P_{2,3} &= (\sigma_3 - \sigma_2) \detJ_{11}^{10} \bigl( \detJ_{11}^{11} + (\tau_1 + \tau_2 + \tau_3)  \detJ_{10}^{11} + \tau_2 \sigma_1 + \tau_3 \sigma_2 \bigr)
        \\
        & \quad
        \times \bigl( \detJ_{21}^{01} + \sigma_3 \detJ_{11}^{01} + \bigl( \sigma_3 (\tau_1 + \tau_2 + \tau_3) - (\tau_2 \sigma_1 + \tau_3 \sigma_2) \bigr) \detJ_{10}^{01} \bigr)^{-1}
        \\ & \quad
        \times \bigl( \detJ_{21}^{01}+ \sigma_2 \detJ_{11}^{01} + \bigl( \sigma_2 (\tau_1 + \tau_2) - \tau_2 \sigma_1 \bigr) \detJ_{10}^{01} \bigr)^{-1}
        ,
        \\[1ex]
        P_{2,4} &= (\sigma_4 - \sigma_3) \detJ_{11}^{10} \bigl( \detJ_{11}^{11} + (\tau_1 + \tau_2 + \tau_3 + \tau_4) \detJ_{10}^{11} + \tau_2 \sigma_1 + \tau_3 \sigma_2 + \tau_4 \sigma_3 \bigr)
        \\ &\quad
        \times \bigl( \detJ_{21}^{01} + \sigma_4 \detJ_{11}^{01} + \bigl( \sigma_4 (\tau_1 + \tau_2 + \tau_3 + \tau_4 ) - (\tau_2 \sigma_1 + \tau_3 \sigma_2 + \tau_4 \sigma_3) \bigr) \detJ_{10}^{01} \bigr)^{-1}
        \\ &\quad
        \times \bigl( \detJ_{21}^{01} + \sigma_3 \detJ_{11}^{01} + \bigl( \sigma_3 (\tau_1 + \tau_2 + \tau_3) - (\tau_2 \sigma_1 + \tau_3 \sigma_2 ) \bigr) \detJ_{10}^{01} \bigr)^{-1}
        ,
        \\[1ex]
        P_{2,5} &= \frac{\detJ_{11}^{10}\bigl( \detJ_{10}^{11} + \sigma_4  \bigr)}{\detJ_{10}^{01} \bigl( \detJ_{21}^{01} + \sigma_4 \detJ_{11}^{01} + \bigl( \sigma_4 (\tau_1 + \tau_2 + \tau_3 + \tau_4) - (\tau_4 \sigma_3 + \tau_3 \sigma_2 + \tau_2 \sigma_1) \bigr) \detJ_{10}^{01} \bigr)}
        ,
        \\[1em]
        Q_3 &= \frac{\detJ_{10}^{01}}{\detJ_{11}^{10}}
        ,
        \qquad
        P_3 = \frac{1}{\detJ_{10}^{00}}
        .
      \end{aligned}
    \end{equation}
  \end{subequations}
  Figure~\ref{fig:GX-3+3-typicalY2-positions} shows a plot of the peakon trajectories
  \begin{equation*}
    \begin{aligned}
      x &= x_1(t)
      ,\quad
      x = y_1(t)
      ,
      \\[1ex]
      x &= x_2(t)
      ,\quad
      x = y_{2,1}(t)
      ,\quad
      x = y_{2,2}(t)
      ,\quad
      x = y_{2,3}(t)
      ,\quad
      x = y_{2,4}(t)
      ,\quad
      x = y_{2,5}(t)
      ,
      \\[1ex]
      x &= x_3(t)
      ,\quad
      x = y_3(t)
      ,
    \end{aligned}
  \end{equation*}
  with the same values~\eqref{eq:GX-3+3-interlacing-spectral-data}
  for the spectral parameters as in the previous Example~\ref{ex:GX-3+3-interlacing}
  (the $3+3$ interlacing case),
  and with the additional parameters
  $\tau_i = \tau_{2,i}^Y$ and $\sigma_i = \sigma_{2,i}^Y$
  chosen to be
  \begin{equation}
    \label{eq:GX-3+3-typicalY2-tau-sigma}
    \begin{aligned}
      \tau_1 &= 10^5
      ,&
      \sigma_1 &= 10^{-13}
      ,\\
      \tau_2 &= 10^{10}
      ,&
      \sigma_2 &= 10^{-8}
      ,\\
      \tau_3 &= 10^{17}
      ,&
      \sigma_3 &= 10^{-6}
      ,\\
      \tau_4 &= 10^{20}
      ,&
      \sigma_4 &= 10^{-1}
      .
    \end{aligned}
  \end{equation}
  It is visible that the trajectories approach certain straight lines as $t \to \pm \infty$,
  and this will be proved in Section~\ref{sec:asymptotics-even}.
  Moreover, the curves for the singletons $x_1$, $y_1$, $x_2$, $x_3$ and~$y_3$
  are exactly the same as for the corresponding peakons in the $3+3$ interlacing
  peakon solution shown earlier in Figure~\ref{fig:GX-3+3-interlacing-positions-all}.
  This is because the formulas for the singletons here are the same as in the interlacing case,
  and the spectral parameters here have the same values as in our previous example;
  the additional parameters $\{ \tau_i, \sigma_i \}$ appearing here only occur in the formulas for
  the non-singleton $Y_2$-group.
  
  The interesting new thing is what happens inside the group $y_{2,i}$, $1 \le i \le 5$.
  We see that as $t \to +\infty$, the four leftmost peakons in the
  group approach the curve $x=x_2(t)$, while the last peakon $y_{2,5}$
  approaches the curve $x=y_2(t)$ from Figure~\ref{fig:GX-3+3-interlacing-positions-all}
  (the interlacing case).
  In particular, the rightmost peakon in the group separates from the
  other ones in that group as $t \to +\infty$.
  
  In more detail, what happens to each peakon is the following.
  As $t \to +\infty$, the leftmost curve $x=x_1(t)$ approaches the line
  \begin{equation}
    x = \frac{t}{2} \left( \frac{1}{\lambda_3} \right)
    + \frac12 \ln \left( \frac{2 a_3(0) \, \Psi_{[1,3]\{1,2\}}}{C \lambda_1 \lambda_2 \, \Psi_{\{1,2\}\{1,2\}} } \right)
    ,
  \end{equation}
  the curve $x=y_1(t)$ approaches the line
  \begin{equation}
    x = \frac{t}{2} \left( \frac{1}{\lambda_3} + \frac{1}{\mu_2} \right)
    + \frac12 \ln \left( \frac{2 a_3(0) \, b_2(0) \, \Psi_{[1,3]\{1,2\}}}{\lambda_1 \lambda_2 \mu_1 \, \Psi_{\{1,2\}\{1\}}} \right)
    ,
  \end{equation}
  while all the curves $x=x_2(t)$ and $x=y_{2,i}(t)$
  for $1 \le i \le 4$ approach the line
  \begin{equation}
    x = \frac{t}{2} \left( \frac{1}{\lambda_2} + \frac{1}{\mu_2} \right)
    + \frac12 \ln \left( \frac{2 a_2(0) \, b_2(0) \, \Psi_{\{1,2\}\{1,2\}}}{\lambda_1 \mu_1 \, \Psi_{\{1\}\{1\}}} \right)
    ,
  \end{equation}
  the curve $x=y_{2,5}(t)$ approaches
  \begin{equation}
    \label{eq:GX-3+3-asymp-posinf-singleton-y2}
    x = \frac{t}{2} \left( \frac{1}{\lambda_2} + \frac{1}{\mu_1} \right)
    + \frac12 \ln \left( \frac{2 a_2(0) \, b_1(0) \, \Psi_{\{1,2\}\{1\}}}{\lambda_1 } \right)
    ,
  \end{equation}
  and the curves $x=x_3(t)$ and $y_3(t)$ approach
  \begin{equation}
    \label{eq:GX-3+3-asymp-posinf-singleton-x3-y3}
    x = \frac{t}{2} \left( \frac{1}{\lambda_1} + \frac{1}{\mu_1} \right)
    + \frac12 \ln \left( 2 a_1(0) \, b_1(0) \, \Psi_{\{1\}\{1\}} \right)
    .
  \end{equation}

  The asymptotics as $t \to -\infty$ are similar.
  All singletons follow the same curves as the $3+3$ interlacing peakon case.
  For the $Y_2$-group, the first peakon $y_{2,1}$ approaches the same line as the
  peakon~$y_2$ in the $3+3$ interlacing case (see Figure~\ref{fig:GX-3+3-interlacing-positions-all}),
  and the other four peakons approach the line for~$x_3$.

  Here are the details:
  As $t \to -\infty$, the two leftmost curves $x=x_1(t)$ and $x=y_1(t)$ both approach the line
  \begin{equation}
    x = \frac{t}{2} \left( \frac{1}{\lambda_1} + \frac{1}{\mu_1} \right)
    + \frac12 \ln \left( \frac{2 a_1(0) \, b_1(0) \, \Psi_{[1,3]\{1,2\}}}{\lambda_2 \lambda_3 \mu_2 \, \Psi_{\{2,3\}\{2\}}} \right)
    ,
  \end{equation}
  the curve $x=x_2(t)$ approaches the line
  \begin{equation}
    x = \frac{t}{2} \left( \frac{1}{\lambda_2} + \frac{1}{\mu_1} \right)
    + \frac12 \ln \left( \frac{2 a_2(0) \, b_1(0) \, \Psi_{\{2,3\}\{1,2\}}}{\lambda_3  \mu_2 \, \Psi_{\{3\}\{2\}}} \right)
    ,
  \end{equation}
  and the curve $x=y_{2,1}(t)$ approaches the line
  \begin{equation}
    x = \frac{t}{2} \left( \frac{1}{\lambda_2} + \frac{1}{\mu_2} \right)
    + \frac12 \ln \left( \frac{2 a_2(0) \, b_2(0) \, \Psi_{\{2,3\}\{2\}}}{\lambda_3} \right)
    ,
  \end{equation}
  while the curves $x=x_3(t)$ and $x=y_{2,i}(t)$, $2 \le i \le 5$,
  all approach the line
  \begin{equation}
    \label{eq:GX-3+3-asymp-neginf-singleton-x3}
    x = \frac{t}{2} \left( \frac{1}{\lambda_3} + \frac{1}{\mu_2} \right)
    + \frac12 \ln \left( 2 a_3(0) \, b_2(0) \, \Psi_{\{3\}\{2\}} \right)
    ,
  \end{equation}
  and the curve $x=y_3(t)$ approaches the line
  \begin{equation}
    \label{eq:GX-3+3-asymp-neginf-singleton-y3}
    x = \frac{t}{2} \left( \frac{1}{\lambda_3} \right)
    + \frac12 \ln \bigl( 2 D \, a_3(0) \bigr)
    .
  \end{equation}
  As long as the $Y$-group is not the rightmost $Y$-group or the
  leftmost $Y$-group, we get a similar asymptotic behaviour,
  and likewise for $X$-groups ``in the middle'',
  i.e., for all groups except $X_1$, $Y_1$, $X_K$ and~$Y_K$.
  The next two examples illustrate what happens to the groups $X_K$ and~$Y_K$ near the right edge;
  $X_1$ and~$Y_1$ near the left edge are similar.

  We will not discuss the behaviour of the amplitudes in non-singleton groups
  until Example~\ref{ex:GX-3+3-allgroups};
  however, the reader who is eager so see plots of the amplitudes~$n_{2,i}$ in the present example
  may already now look ahead at the red curves in
  Figure~\ref{fig:GX-3+3-allgroups-amplitudes-X2-Y2} in that example.
\end{example}

\begin{figure}
  \centering
  \includegraphics[width=13cm]{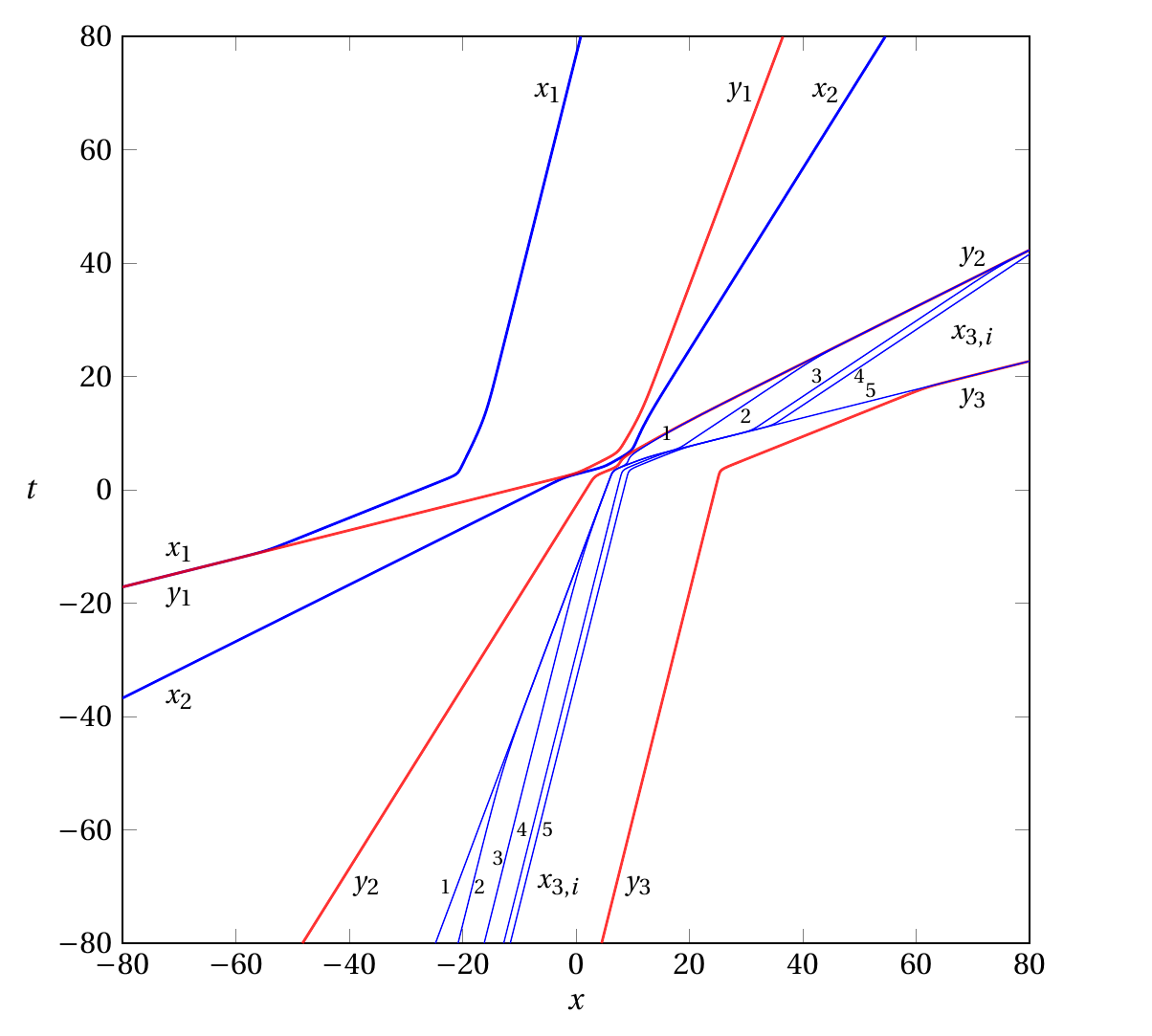}
  \caption{\textbf{Second group from the right.}
    Positions of the peakons for the non-interlacing peakon
    solution~\eqref{eq:GX-3+3-secondrightmostX3-solution-joint} described in
    Example~\ref{ex:GX-3+3-secondrightmostX3},
    with the parameter values
    \eqref{eq:GX-3+3-interlacing-spectral-data}
    and~\eqref{eq:GX-3+3-secondrightmostX3-tau-sigma}.
    Compared to Figure~\ref{fig:GX-3+3-interlacing-positions-all}, the only difference is that
    the singleton~$x_3$ is replaced by a group of five peakons~$x_{3,i}$, $1 \le i \le 5$.
    As $t \to +\infty$, the asymptotics for this group (the rightmost $X$-group,
    second from the right overall)
    are the same as for a ``middle'' group
    (i.e., not $X_1$, $Y_1$, $X_K$ or~$Y_K$),
    like the $Y_2$-group in Example~\ref{ex:GX-3+3-typicalY2} (Figure~\ref{fig:GX-3+3-typicalY2-positions});
    namely, the rightmost peakon in the group, $x = x_{3,5}(t)$, approaches the same line as a singleton
    in the corresponding position would do
    (cf. the curve $x = x_3(t)$ in Figures \ref{fig:GX-3+3-interlacing-positions-all}
    and~\ref{fig:GX-3+3-typicalY2-positions}),
    while the other peakons in the group approach the same line as the
    neighbouring singleton peakon to the left, $x = y_2(t)$.
    Likewise, as $t \to -\infty$, the leftmost peakon in the group, $x = x_{3,1}(t)$,
    behaves like a singleton in the same position would
    (the $x_3$ curve in Figures \ref{fig:GX-3+3-interlacing-positions-all}
    and~\ref{fig:GX-3+3-typicalY2-positions}).
    However, the other peakons in the group are exceptional in that they
    do \emph{not} join the line approached by the singleton peakon $x = y_3(t)$ to the right;
    instead each one approaches its own line \emph{parallel} to the $y_3$ line
    as $t \to -\infty$.}
  \label{fig:GX-3+3-secondrightmostX3-positions}
\end{figure}

\begin{example}[Second group from the right]
  \label{ex:GX-3+3-secondrightmostX3}
  Again, we consider a case with $3+3$ groups where all groups but one are singletons, but
  now it is the $X_3$-group which contains five peakons:
  \begin{equation}
    x_1<y_1<x_2<y_2<\underbrace{x_{3,1}< x_{3,2}<x_{3,3}< x_{3,4}<x_{3,5}}_{\text{Rightmost $X$-group}}<y_3.
  \end{equation}
  The solution formulas for the positions are
  \begin{subequations}
    \label{eq:GX-3+3-secondrightmostX3-solution-joint}
    \begin{equation}
      \label{eq:GX-3+3-secondrightmostX3-solution-positions}
      \begin{aligned}
        X_1 &= \frac{\detJ_{32}^{00}}{\detJ_{21}^{11} + C \detJ_{22}^{10}}
        ,
        \qquad
        Y_1 = \frac{\detJ_{32}^{00}}{\detJ_{21}^{11}}
        ,
        \qquad
        X_2 = \frac{\detJ_{22}^{00}}{\detJ_{11}^{11}}
        ,
        \qquad
        Y_2 = \frac{\detJ_{21}^{00}}{\detJ_{10}^{11}}
        ,
        \\[1ex]
        X_{3,1} &= \frac{\detJ_{21}^{00} + \tau_1 \detJ_{11}^{00}}{\detJ_{10}^{11} + \tau_1}
        ,
        \\[1ex]
        X_{3,2} &= \frac{\detJ_{21}^{00} + (\tau_1 + \tau_2) \detJ_{11}^{00} + \tau_2 \sigma_1 \detJ_{10}^{00} }{\detJ_{10}^{11}+   \tau_1+\tau_2}
        ,
        \\[1ex]
        X_{3,3} &= \frac{\detJ_{21}^{00} + (\tau_1 + \tau_2 + \tau_3) \detJ_{11}^{00} + (\tau_2 \sigma_1 + \tau_3 \sigma_2) \detJ_{10}^{00}}{\detJ_{10}^{11} + \tau_1 + \tau_2 + \tau_3}
        ,
        \\[1ex]
        X_{3,4} &= \frac{\detJ_{21}^{00} + (\tau_1 + \tau_2 + \tau_3 + \tau_4) \detJ_{11}^{00} + (\tau_2 \sigma_1 + \tau_3 \sigma_2 + \tau_4 \sigma_3) \detJ_{10}^{00} }{\detJ_{10}^{11} + \tau_1 + \tau_2 + \tau_3 + \tau_4}
        ,
        \\[1ex]
        X_{3,5} &= \detJ_{11}^{00} + \sigma_4 \detJ_{10}^{00}
        ,
        \\[1ex]
        Y_3 &= \detJ_{11}^{00} + D \detJ_{10}^{00}
        ,
      \end{aligned}
    \end{equation}
    and the amplitudes are obtained from the formulas
    \begin{equation}
      \label{eq:GX-3+3-secondrightmostX3-solution-amplitudes}
      \begin{aligned}
        &
        \sbox0{$Q_2$}\hspace{-\the\wd0}  
        \begin{aligned}
          Q_1 &= \frac{\mu_1 \mu_2}{\lambda_1 \lambda_2 \lambda_3} \left( \frac{\detJ_{21}^{11}}{\detJ_{22}^{10}} + C \right)
          ,
          \qquad
          &
          P_1 &= \frac{\detJ_{22}^{10} \detJ_{21}^{11}}{\detJ_{21}^{01} \detJ_{32}^{01}}
          ,
          \\[1ex]
          Q_2 &= \frac{\detJ_{21}^{01} \detJ_{11}^{11}}{\detJ_{11}^{10} \detJ_{22}^{10}}
          ,
          &
          P_2 &= \frac{\detJ_{11}^{10} \detJ_{10}^{11}}{\detJ_{10}^{01} \detJ_{21}^{01}}
          ,
        \end{aligned}
        \\[1ex]
        Q_{3,1}&= \frac{\sigma_1 \detJ_{10}^{01} \left( \detJ_{10}^{11} + \tau_1 \detJ_{00}^{11} \right) }
        {\detJ_{11}^{10} \left( \detJ_{11}^{10}+ \sigma_1 \detJ_{10}^{10}+\tau_1 \right)}
        ,
        \\[1ex]
        Q_{3,2}&= (\sigma_2 - \sigma_1) \detJ_{10}^{01} \left( \detJ_{10}^{11}+\tau_1+ \tau_2 \right)
        \\[1ex]
        & \times
        \left( \detJ_{11}^{10} + \sigma_2 \detJ_{10}^{10} + \sigma_2 (\tau_1 + \tau_2) - \tau_2 \sigma_1 \right)^{-1}
        \\[1ex]
        & \times
        \left( \detJ_{11}^{10} + \sigma_1 \detJ_{10}^{10} + \sigma_2 \tau_1 \right)^{-1}
        ,
        \\[1ex]
        Q_{3,3} &= (\sigma_3 - \sigma_2) \detJ_{10}^{01} \left( \detJ_{10}^{11} + \tau_1 + \tau_2 + \tau_3 \right)
        \\[1ex]
        & \times
        \left( \detJ_{11}^{10}+ \sigma_3 \detJ_{10}^{10} + \sigma_3 (\tau_1 + \tau_2 + \tau_3) - (\tau_2 \sigma_1 + \tau_2 \sigma_3) \right)^{-1}
        \\[1ex]
        & \times
        \left( \detJ_{11}^{10}+ \sigma_2 \detJ_{10}^{10} + \sigma_2 \left( \tau_1 +\tau_2 \right) - \tau_2 \sigma_1 \right)^{-1}
        ,
        \\[1ex]
        Q_{3,4} &= (\sigma_4 - \sigma_3) \detJ_{10}^{01} \left( \detJ_{10}^{11} + \tau_1 + \tau_2 + \tau_3 + \tau_4 \right)
        \\[1ex]
        & \times
        \left( \detJ_{11}^{10} + \sigma_4 \detJ_{10}^{10} + \sigma_4 (\tau_1 + \tau_2 + \tau_3 + \tau_4) - (\tau_2 \sigma_1 + \tau_2 \sigma_3 + \tau_4 \sigma_3) \right)^{-1}
        \\[1ex]
        & \times
        \left( \detJ_{11}^{10}+ \sigma_3 \detJ_{10}^{10} + \sigma_3 (\tau_1 + \tau_2 + \tau_3) - (\tau_2 \sigma_1 + \tau_2 \sigma_3) \right)^{-1}
        ,
        \\[1ex]
        Q_{3,5} &= \frac{\detJ_{10}^{01}}{\detJ_{11}^{10} + \sigma_4 \detJ_{10}^{10} + \sigma_4 (\tau_1 + \tau_2 + \tau_3 + \tau_4) - (\tau_2 \sigma_1 + \tau_2 \sigma_3 + \tau_4 \sigma_3)}
        ,
        \\[1ex]
        P_3 &= \frac{1}{\detJ_{10}^{00}}
        ,
      \end{aligned}
    \end{equation}
  \end{subequations}
  where the group parameters
  $\tau_i = \tau_{3,i}^X$ and $\sigma_i = \sigma_{3,i}^X$
  must be positive and satisfy
  \begin{equation*}
    \sigma_1 < \sigma_2 < \sigma_3 < \sigma_4 < D
    .
  \end{equation*}
  From the formulas for $X_{3,5}$ and~$Y_3$
  in~\eqref{eq:GX-3+3-secondrightmostX3-solution-positions}
  one can see that the constraint $\sigma_4 < D$ is necessary in order to have
  $x_{3,5} < y_3$.
  One can think of this constraint as coming from the general requirement
  that the last $\sigma$ for each non-singleton group must be less than
  the first~$\tau$ for the next group, whenever that next group is not a singleton.
  In this case, the next group (the $Y_3$-group) \emph{is} in fact a singleton,
  but it is the \emph{rightmost} group, to which special rules apply:
  if it is a singleton,
  the spectral parameter $D$ will step in and play the role of~$\tau_{3,1}^Y$;
  see Remark~\ref{rem:D-is-extra-tau}.
  
  The formulas~\eqref{eq:GX-3+3-secondrightmostX3-solution-positions}
  are obtained by letting $j=1$ in the general
  solution formula~\eqref{eq:even-X-typical-group-pos}
  for the positions in the $X_{K+1-j}$-group, where $1 \le j \le K-1$
  (i.e., any $X$-group except~$X_1$, the leftmost one).
  As $t \to -\infty$,
  the dominant term in the denominator of~\eqref{eq:even-X-typical-group-pos} is
  usually~$\detJ_{j-1,j-2}^{11}$;
  however, this term is absent in the case $j=1$,
  since $\detJ_{0,-1}^{11}=0$.
  This will make the $t \to -\infty$ asymptotics for the positions in the
  rightmost $X$-group different
  than for the other $X$-groups given by the same
  formula~\eqref{eq:even-X-typical-group-pos}.
  
  The positions
  \begin{equation*}
    \begin{aligned}
      x &= x_1(t)
      ,&
      x &= y_1(t)
      ,
      \\[1ex]
      x &= x_2(t)
      ,&
      x &= y_2(t)
      ,
      \\[1ex]
      x &= x_{3,1}(t)
      ,\quad
      x = x_{3,2}(t)
      ,\quad
      x = x_{3,3}(t)
      ,\quad
      x = x_{3,4}(t)
      ,\quad
      x = x_{3,5}(t)
      ,&
      x &= y_3(t)
    \end{aligned}
  \end{equation*}
  are plotted in Figure~\ref{fig:GX-3+3-secondrightmostX3-positions},
  with the same spectral parameters~\eqref{eq:GX-3+3-interlacing-spectral-data}
  as in Example~\ref{ex:GX-3+3-interlacing},
  and with the group parameters
  $\tau_i = \tau_{3,i}^X$ and $\sigma_i = \sigma_{3,i}^X$
  equal to
  \begin{equation}
    \label{eq:GX-3+3-secondrightmostX3-tau-sigma}
    \begin{aligned}
      \tau_1 &= 10^5
      ,&
      \sigma_1 &= 10^{-4}
      ,\\
      \tau_2 &= 10^{11}
      ,&
      \sigma_2 &= 1
      ,\\
      \tau_3 &= 10^{18}
      ,&
      \sigma_3 &= 10^{3}
      ,\\
      \tau_4 &= 10^{20}
      ,&
      \sigma_4 &= 10^{4}
      .
    \end{aligned}
  \end{equation}
  As $t \to +\infty$,
  all peakons in the group except $x_{3,5}$ approach the same
  line~\eqref{eq:GX-3+3-asymp-posinf-singleton-y2} as~$y_2$,
  \begin{equation}
    x = \frac{t}{2} \left( \frac{1}{\lambda_2} + \frac{1}{\mu_1} \right)
    + \frac12 \ln \left( \frac{2 a_2(0) \, b_1(0) \, \Psi_{[1,2]\{1\}}}{\lambda_1} \right)
    ,
  \end{equation}
  while $x_{3,5}$, the rightmost peakon in the group,
  approaches the same line~\eqref{eq:GX-3+3-asymp-posinf-singleton-x3-y3}
  as a singleton~$x_3$
  (cf. Figures \ref{fig:GX-3+3-interlacing-positions-all},
  \ref{fig:GX-3+3-typicalY2-positions}
  and~\ref{fig:GX-3+3-rightmostY3-positions}), namely
  \begin{equation}
    x = \frac{t}{2} \left( \frac{1}{\lambda_1} + \frac{1}{\mu_1} \right)
    + \frac12 \ln \left( 2 a_1(0) \, b_1(0) \, \Psi_{\{1\}\{1\}} \right)
    .
  \end{equation}
  As $t \to -\infty$, the peakons approach separate lines;
  the leftmost curve $x=x_{3,1}(t)$ approaches the same
  line~\eqref{eq:GX-3+3-asymp-neginf-singleton-x3}
  as a singleton~$x_3$, namely
  \begin{equation}
    x = \frac{t}{2} \left( \frac{1}{\lambda_3} + \frac{1}{\mu_2} \right)
    + \frac12 \ln \left( 2 a_3(0) \, b_2(0) \, \Psi_{\{3\}\{2\}} \right)
    ,
  \end{equation}
  while the other curves $x=x_{3,i}(t)$ approach the lines
  \begin{equation}
    \begin{cases}
      \displaystyle
      x = \frac{t}{2} \left( \frac{1}{\lambda_3} \right)
      + \frac12 \ln \left( \frac{2 S_i \, a_3(0)}{T_i} \right)
      ,
      &
      2 \le i \le 4
      ,
      \\[2ex]
      \displaystyle
      x = \frac{t}{2} \left( \frac{1}{\lambda_3} \right)
      + \frac12 \ln \bigl( 2 \sigma_4 \, a_3(0) \bigr)
      ,
      &
      i = 5
      ,
    \end{cases}
  \end{equation}
  which are all parallel to the $y_3$ singleton line~\eqref{eq:GX-3+3-asymp-neginf-singleton-y3}.

  For plots of the amplitudes~$m_{3,i}$, see the blue curves in
  Figure~\ref{fig:GX-3+3-allgroups-amplitudes-Y1-X3} in
  Example~\ref{ex:GX-3+3-allgroups}.
\end{example}

\begin{figure}
  \centering
  \includegraphics[width=13cm]{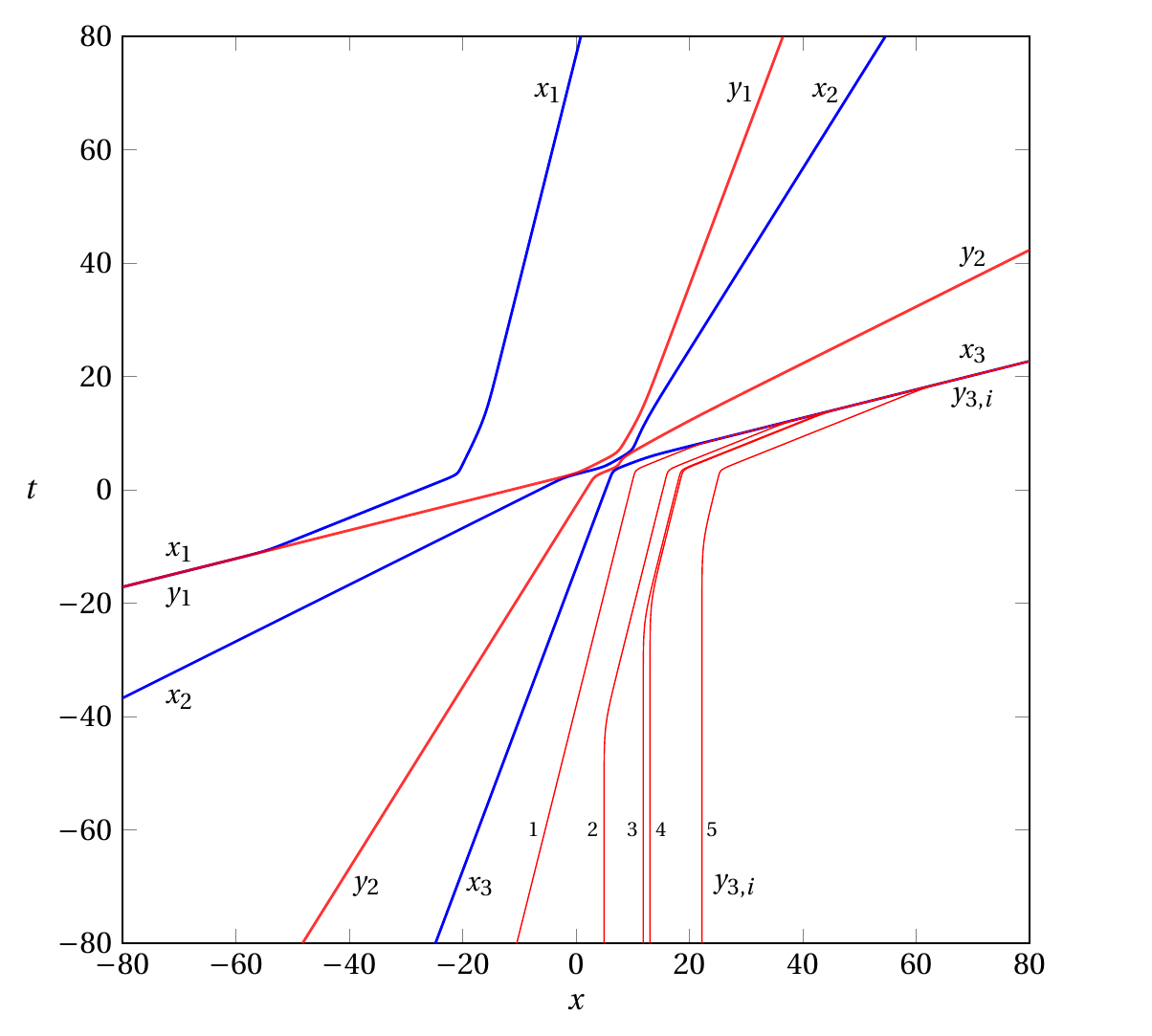}
  \caption{\textbf{Rightmost group.}
    Positions of the peakons for the non-interlacing peakon
    solution~\eqref{eq:GX-3+3-rightmostY3-solution-joint}
    described in Example~\ref{ex:GX-3+3-rightmostY3},
    with the parameter values
    \eqref{eq:GX-3+3-interlacing-spectral-data}
    and~\eqref{eq:GX-3+3-rightmostY3-tau-sigma}.
    Compared to the interlacing solution in Figure~\ref{fig:GX-3+3-interlacing-positions-all},
    the only difference is that the rightmost singleton~$y_3$ is
    replaced by a group of five peakons $y_{3,i}$, $1 \le i\le 5$.
    As $t \to +\infty$, all of the peakons in this rightmost group
    approach the same line as the curve $x = x_3(t)$.
    As $t \to -\infty$, each one approaches a separate line;
    the leftmost peakon~$y_{3,1}$ runs parallel to the line approached by the singleton~$y_3$
    in Figures
    \ref{fig:GX-3+3-interlacing-positions-all}
    and~\ref{fig:GX-3+3-typicalY2-positions},
    while the other $y_{3,i}$ tend to constant values.}
  \label{fig:GX-3+3-rightmostY3-positions}
\end{figure}

\begin{example}[Rightmost group]
  \label{ex:GX-3+3-rightmostY3}
  Now consider the following configuration of $3+3$ groups,
  where the $Y_3$-group consists of five peakons, while all other groups are singletons:
  \begin{equation}
    x_1 < y_1 < x_2 < y_2 < x_3 <
    \underbrace{ y_{3,1} < y_{3,2} < y_{3,3} < y_{3,4} < y_{3,5}}_{\text{Rightmost $Y$-group}}
    .
  \end{equation}
  The solution formulas for the positions are
  \begin{subequations}
    \label{eq:GX-3+3-rightmostY3-solution-joint}
    \begin{equation}
      \label{eq:GX-3+3-rightmostY3-solution-positions}
      \begin{aligned}
        X_1 &= \frac{\detJ_{32}^{00}}{\detJ_{21}^{11} + C \detJ_{22}^{10}}
        ,
        \qquad
        Y_1 = \frac{\detJ_{32}^{00}}{\detJ_{21}^{11}}
        ,
        \qquad
        X_2 = \frac{\detJ_{22}^{00}}{\detJ_{11}^{11}}
        ,
        \qquad
        Y_2 = \frac{\detJ_{21}^{00}}{\detJ_{10}^{11}}
        ,
        \\[1ex]
        X_3 &= \detJ_{11}^{00}
        ,
        \\[1ex]
        Y_{3,1} &= \detJ_{11}^{00} + \tau_1 \detJ_{10}^{00}
        ,
        \\[1ex]
        Y_{3,2} &= \detJ_{11}^{00} + (\tau_1 + \tau_2) \detJ_{10}^{00} +  \tau_2 \sigma_1
        ,
        \\[1ex]
        Y_{3,3} &= \detJ_{11}^{00} + (\tau_1 + \tau_2 + \tau_3) \detJ_{10}^{00} + \tau_2 \sigma_1+ \tau_3 \sigma_2
        ,
        \\[1ex]
        Y_{3,4} &= \detJ_{11}^{00} + (\tau_1 + \tau_2 + \tau_3 + \tau_4) \detJ_{10}^{00} + \tau_2 \sigma_1 + \tau_3 \sigma_2 + \tau_4 \sigma_3
        ,
        \\[1ex]
        Y_{3,5} &= \detJ_{11}^{00} + (\tau_1 + \tau_2 + \tau_3 + \tau_4 + D) \detJ_{10}^{00} + \tau_2 \sigma_1+ \tau_3 \sigma_2 + \tau_4 \sigma_3 + D \, \sigma_4
        ,
      \end{aligned}
    \end{equation}
    and the amplitudes are given by
    \begin{equation}
      \begin{aligned}
        \label{eq:GX-3+3-rightmostY3-solution-amplitudes}
        Q_1 &= \frac{\mu_1 \mu_2}{ \lambda_1 \lambda_2 \lambda_3 } \left( \frac{\detJ_{21}^{11}}{  \detJ_{22}^{10}} + C \right)
        ,
        \qquad
        &
        P_1 &= \frac{\detJ_{22}^{10} \detJ_{21}^{11}}{\detJ_{21}^{01} \detJ_{32}^{01}}
        ,
        \\[1ex]
        Q_2 &= \frac{\detJ_{21}^{01} \detJ_{11}^{11}}{\detJ_{11}^{10} \detJ_{22}^{10}}
        ,
        &
        P_2 &= \frac{\detJ_{11}^{10} \detJ_{10}^{11}}{\detJ_{10}^{01} \detJ_{21}^{01}}
        ,
        \\[1ex]
        Q_3 &= \frac{\detJ_{10}^{01}}{\detJ_{11}^{10}}
        ,
        \\[1ex]
        P_{3,1} &= \frac{\sigma_1}{ \left( \detJ_{10}^{01} + \sigma_1 \right) \detJ_{10}^{01}}
        ,
        \\[1ex]
        P_{3,2} &= \frac{\sigma_2 - \sigma_1}{\left( \detJ_{10}^{01} + \sigma_2 \right) \left( \detJ_{10}^{01} + \sigma_1 \right)}
        ,
        \\[1ex]
        P_{3,3} &= \frac{\sigma_3 - \sigma_2}{\left( \detJ_{10}^{01} + \sigma_3 \right) \left( \detJ_{10}^{01} + \sigma_2 \right)}
        ,
        \\[1ex]
        P_{3,4} &= \frac{\sigma_4 - \sigma_3}{\left( \detJ_{10}^{01} + \sigma_4 \right) \left( \detJ_{10}^{01} + \sigma_3 \right)}
        ,
        \\[1ex]
        P_{3,5} &= \frac{1}{\detJ_{10}^{01} + \sigma_4}
        .
      \end{aligned}
    \end{equation}
  \end{subequations}
  Here the parameters
  $\tau_i = \tau_{3,i}^Y$ and $\sigma_i = \sigma_{3,i}^Y$
  are all positive and satisfy
  $\sigma_1< \sigma_2<\sigma_3<\sigma_4$.

  Figure~\ref{fig:GX-3+3-rightmostY3-positions} shows a plot of the peakon trajectories
  \begin{equation*}
    \begin{aligned}
      x &= x_1(t)
      ,\quad
      x = y_1(t)
      ,
      \\[1ex]
      x &= x_2(t)
      ,\quad
      x = y_2(t)
      ,
      \\[1ex]
      x &= x_3(t)
      ,\quad
      x = y_{3,1}(t)
      ,\quad
      x = y_{3,2}(t)
      ,\quad
      x = y_{3,3}(t)
      ,\quad
      x = y_{3,4}(t)
      ,\quad
      x = y_{3,5}(t)
      ,
    \end{aligned}
  \end{equation*}
  with the same spectral parameters~\eqref{eq:GX-3+3-interlacing-spectral-data}
  as in Example~\ref{ex:GX-3+3-interlacing},
  and with the group parameters
  $\tau_i = \tau_{3,i}^Y$ and $\sigma_i = \sigma_{3,i}^Y$
  equal to
  \begin{equation}
    \label{eq:GX-3+3-rightmostY3-tau-sigma}
    \begin{aligned}
      \tau_1 &= 10^5
      ,&
      \sigma_1 &= 10^{-6}
      ,\\
      \tau_2 &= 10^{10}
      ,&
      \sigma_2 &= 10^{-2}
      ,\\
      \tau_3 &= 10^{12}
      ,&
      \sigma_3 &= 10^{-1}
      ,\\
      \tau_4 &= 10^{12}
      ,&
      \sigma_4 &= 10^{1}
      .
    \end{aligned}
  \end{equation}
  We see that as $t \to +\infty$,
  all peakons in the $Y_3$-group asymptotically behave the same;
  the curves $x=y_{3,i}(t)$, $1 \le i \le 5$, all approach the line
  \begin{equation}
    x =
    \frac{t}{2} \left( \frac{1}{\lambda_1} + \frac{1}{\mu_1} \right)
    + \frac12 \ln \left( 2 a_1(0) \, b_1(0) \, \Psi_{\{1\}\{1\}} \right)
    ,
  \end{equation}
  which is the same line~\eqref{eq:GX-3+3-asymp-posinf-singleton-x3-y3}
  that is approached by the curve $x = x_3(t)$,
  and also by the singleton $x = y_3(t)$ in Examples
  \ref{ex:GX-3+3-interlacing} and~\ref{ex:GX-3+3-typicalY2}.

  As $t \to -\infty$, on the other hand,
  each peakon in the group approaches a separate line,
  where the formulas for the asymptotics depend on~$i$;
  the leftmost curve $x=y_{3,1}(t)$ approaches the line
  \begin{equation}
    x =
    \frac{t}{2} \left( \frac{1}{\lambda_3} \right)
    + \frac12 \ln \bigl( 2 \tau_1 \, a_3(0) \bigr)
    ,
  \end{equation}
  which is parallel to the line~\eqref{eq:GX-3+3-asymp-neginf-singleton-y3}
  approached by a singleton~$y_3$,
  while the other $y_{3,i}(t)$ approach the constant values (vertical lines)
  \begin{equation}
    \begin{cases}
      x = \frac12 \ln (2 S_i)
      , &
      2 \le i \le 4
      , \\
      x = \frac12 \ln \bigl( 2 (S_4 + D \, \sigma_4) \bigr)
      , &
      i = 5,
    \end{cases}
  \end{equation}
  with $S_i = S_{3,i}^Y$ as in Definition~\ref{def:T-S-R}.

  For plots of the amplitudes~$n_{3,i}$, see the red curves in
  Figure~\ref{fig:GX-3+3-allgroups-amplitudes-X1-Y3} in
  Example~\ref{ex:GX-3+3-allgroups}.
\end{example}

\begin{figure}
  \centering
  \includegraphics[width=13cm]{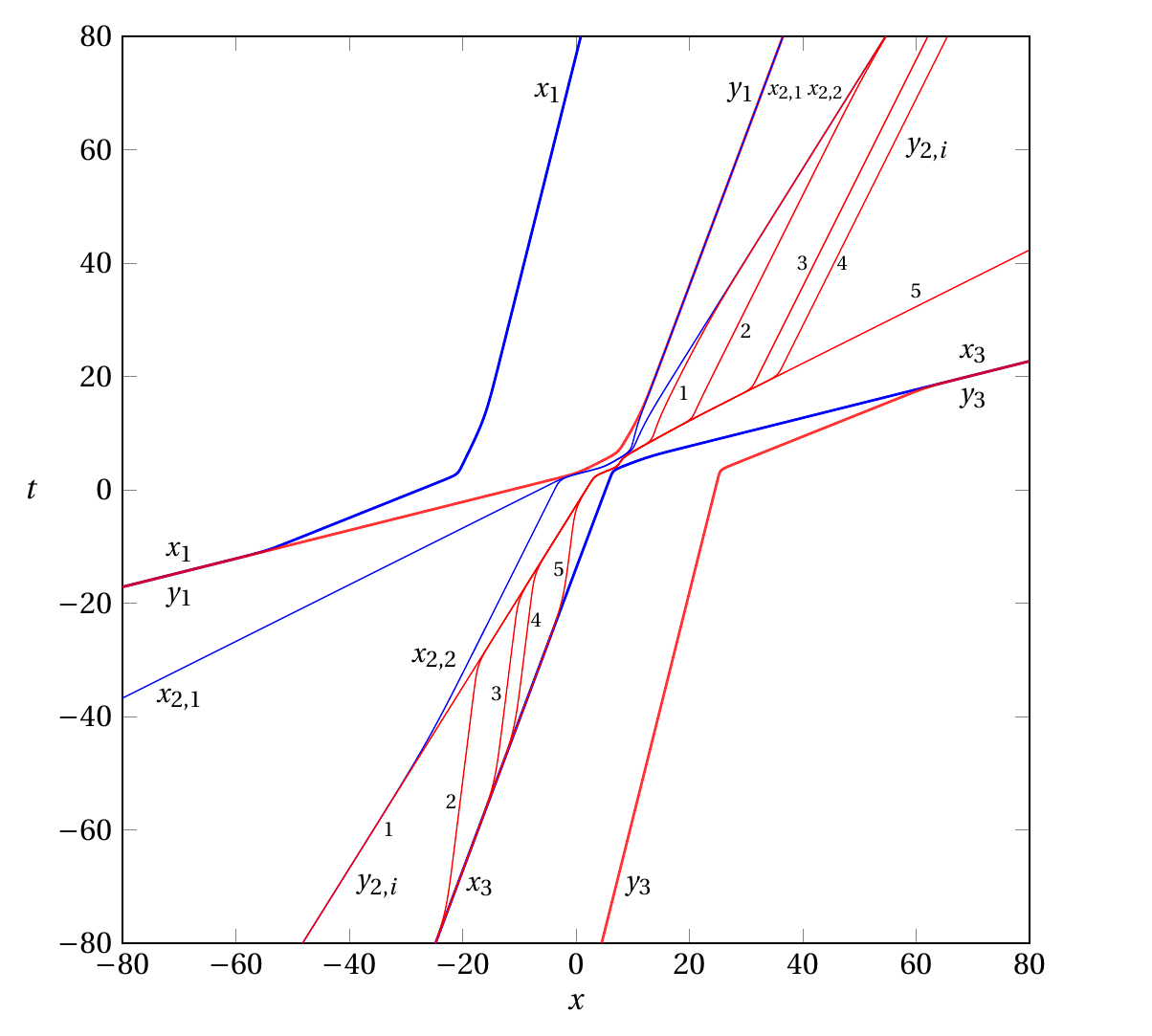}
  \caption{\textbf{Two adjacent non-singleton groups.}
    Positions of the peakons for the non-interlacing
    solution~\eqref{eq:GX-3+3-X2Y2-solution-joint}
    described in Example~\ref{ex:GX-3+3-X2Y2},
    with the parameter values
    \eqref{eq:GX-3+3-interlacing-spectral-data},
    \eqref{eq:GX-3+3-typicalY2-tau-sigma},
    and~\eqref{eq:GX-3+3-X2Y2-tau-sigma-X2}.
    Compared to Figure~\ref{fig:GX-3+3-typicalY2-positions},
    the singleton~$x_2$ has been
    replaced by a group consisting of two peakons, $x_{2,1}$ and $x_{2,2}$.
    The $X_2$-group and the $Y_2$-group both exhibit the asymptotic behaviour
    of typical ``middle'' groups (not $X_1$, $Y_1$, $X_K$ or~$Y_K$):
    as $t \to \pm \infty$, one of the peakons in the group follows the path that a
    singleton in the corresponding position would take,
    while the others approach a neighbouring peakon of the other type.}
  \label{fig:GX-3+3-X2Y2-positions}
\end{figure}

\begin{example}[Two adjacent non-singleton groups]
  \label{ex:GX-3+3-X2Y2}
  We consider next the following configuration of $3+3$ groups, where the second $X$-group
  contains two peakons, the second $Y$-group contains five peakons, and
  the remaining groups are singletons:
  \begin{equation}
    x_1<y_1<\underbrace{x_{2,1}<x_{2,2}}_{\text{Second $X$-group}}<\underbrace{y_{2,1}<y_{2,2}<y_{2,3}<y_{2,4}<y_{2,5}}_{\text{Second $Y$-group}}<x_3<y_3
    .
  \end{equation}
  The solution formulas for the positions in this case are
  \begin{subequations}
    \label{eq:GX-3+3-X2Y2-solution-joint}
    \begin{equation}
      \label{eq:GX-3+3-X2Y2-solution-positions}
      \begin{aligned}
        X_1 &= \frac{\detJ_{32}^{00}}{\detJ_{21}^{11} + C \detJ_{22}^{10}}
        ,
        \qquad
        Y_1 = \frac{\detJ_{32}^{00}}{\detJ_{21}^{11}}
        ,
        \\[1ex]
        X_{2,1} &= \frac{ \detJ_{32}^{00} + \tau_1^X \detJ_{22}^{00} }{\detJ_{21}^{11} + \tau_1^X \detJ_{11}^{11}}
        ,
        \\[1ex]
        X_{2,2}&=\frac{ \detJ_{22}^{00} + \sigma_1^X \detJ_{21}^{00} }{\detJ_{11}^{11} + \sigma_1^X \detJ_{10}^{11}}
        ,
        \\[1em]
        Y_{2,1} &= \frac{\detJ_{22}^{00} + \tau_1^Y \detJ_{21}^{00}}{\detJ_{11}^{11} + \tau_1^Y\detJ_{10}^{11}}
        ,
        \\[1ex]
        Y_{2,2} &= \frac{\detJ_{22}^{00} + (\tau_1^Y + \tau_2^Y) \detJ_{21}^{00} + (\tau_2^Y \sigma_1^Y) \detJ_{11}^{00}}{\detJ_{11}^{11} + (\tau_1^Y + \tau_2^Y) \detJ_{10}^{11} + \tau_2^Y \sigma_1^Y}
        ,
        \\[1ex]
        Y_{2,3} &= \frac{\detJ_{22}^{00} + (\tau_1^Y + \tau_2^Y + \tau_3^Y) \detJ_{21}^{00} + (\tau_2^Y \sigma_1^Y + \tau_3^Y \sigma_2^Y) \detJ_{11}^{00}}{\detJ_{11}^{11} + (\tau_1^Y + \tau_2^Y + \tau_3^Y) \detJ_{10}^{11} + \tau_2^Y \sigma_1^Y + \tau_3^Y \sigma_2^Y}
        ,
        \\[1ex]
        Y_{2,4} &= \frac{\detJ_{22}^{00} + (\tau_1^Y + \tau_2^Y + \tau_3^Y + \tau_4^Y) \detJ_{21}^{00} + (\tau_2^Y \sigma_1^Y  + \tau_3^Y \sigma_2^Y + \tau_4^Y \sigma_3^Y) \detJ_{11}^{00}}{\detJ_{11}^{11} + (\tau_1^Y + \tau_2^Y + \tau_3^Y + \tau_4^Y) \detJ_{10}^{11} + \tau_2^Y \sigma_1^Y + \tau_3^Y \sigma_2^Y + \tau_4^Y \sigma_3^Y}
        ,
        \\[1ex]
        Y_{2,5} &= \frac{{\detJ_{21}^{00}} + \sigma_4^Y \detJ_{11}^{00}} {{\detJ_{10}^{11}} + \sigma_4^Y}
        ,
        \\[1ex]
        X_3 &= \detJ_{11}^{00}
        ,
        \qquad
        Y_3= \detJ_{11}^{00} + D \, \detJ_{10}^{00}
        ,
      \end{aligned}
    \end{equation}
    and the amplitudes are given by
    \begin{equation}
      \label{eq:GX-3+3-X2Y2-solution-amplitudes}
      \begin{aligned}
        Q_1 &= \frac{\mu_1 \mu_2}{\lambda_1 \lambda_2 \lambda_3 } \left( \frac{\detJ_{21}^{11}}{\detJ_{22}^{10}} + C \right)
        ,
        \qquad
        P_1 = \frac{\detJ_{21}^{11} \, \detJ_{22}^{10}}{\detJ_{21}^{01} \, \detJ_{32}^{01}}
        ,
        \\[1ex]
        Q_{2,1} &= \frac{ \sigma_1^X \, \detJ_{21}^{01} \, \left( \detJ_{21}^{11} + \tau_1^X \, \detJ_{11}^{11} \right)}{\detJ_{22}^{10} \, \left( \detJ_{22}^{10} + \sigma_1^X \left( \detJ_{21}^{10} + \tau_1^X \detJ_{11}^{10} \right) \right)}
        ,
        \\[1ex]
        Q_{2,2} &=\frac{\detJ_{21}^{01} \, \left( \detJ_{11}^{11} + \sigma_1^X \detJ_{10}^{11} \right)}{\detJ_{11}^{10} \, \left( \detJ_{22}^{10} + \sigma_1^X \left( \detJ_{21}^{10} + \tau_1^X \detJ_{11}^{10} \right) \right)}
        ,
        \\[1em]
        P_{2,1} &= \sigma_1^Y \detJ_{11}^{10} \bigl( \detJ_{11}^{11} + \tau_1^Y \detJ_{10}^{11} \bigr)
        \\ & \quad
        \times \bigl( \detJ_{21}^{01}(\detJ_{21}^{01} + \sigma_1^Y \detJ_{11}^{01} + \tau_1^Y \sigma_1^Y \detJ_{10}^{01}) \bigr)^{-1},
        \\[1ex]
        P_{2,2} &= (\sigma_2^Y - \sigma_1^Y) \detJ_{11}^{10} \bigl( \detJ_{11}^{11} + (\tau_1^Y + \tau_2^Y) \detJ_{10}^{11} + \tau_2^Y \sigma_1^Y \bigr)
        \\ & \quad
        \times \bigl( \detJ_{21}^{01} + \sigma_2^Y \detJ_{11}^{01} + \bigl( \sigma_2^Y (\tau_1^Y + \tau_2^Y) - \tau_2^Y \sigma_1^Y \bigr) \detJ_{10}^{01} \bigr)^{-1}
        \\ & \quad
        \times \bigl( \detJ_{21}^{01} + \sigma_1^Y \detJ_{11}^{01} +  \tau_1^Y  \sigma_1^Y \detJ_{10}^{01} \bigr)^{-1}
        ,
        \\[1ex]
        P_{2,3} &= (\sigma_3^Y - \sigma_2^Y) \detJ_{11}^{10} \bigl( \detJ_{11}^{11} + (\tau_1^Y + \tau_2^Y + \tau_3^Y) \detJ_{10}^{11} + \tau_2^Y \sigma_1^Y + \tau_3^Y \sigma_2^Y \bigr)
        \\ & \quad
        \times \bigl( \detJ_{21}^{01} + \sigma_3^Y \detJ_{11}^{01} + \bigl( \sigma_3^Y (\tau_1^Y + \tau_2^Y + \tau_3^Y) - (\tau_2^Y \sigma_1^Y + \tau_3^Y \sigma_2^Y) \bigr) \detJ_{10}^{01} \bigr)^{-1}
        \\ & \quad
        \times \bigl( \detJ_{21}^{01} + \sigma_2^Y \detJ_{11}^{01} + \bigl( \sigma_2^Y (\tau_1^Y + \tau_2^Y) - \tau_2^Y \sigma_1^Y \bigr) \detJ_{10}^{01} \bigr)^{-1}
        ,
        \\[1ex]
        P_{2,4} &= (\sigma_4^Y - \sigma_3^Y) \detJ_{11}^{10} \bigl( \detJ_{11}^{11} + (\tau_1^Y + \tau_2^Y + \tau_3^Y + \tau_4^Y)  \detJ_{10}^{11} + \tau_2^Y \sigma_1^Y + \tau_3^Y \sigma_2^Y + \tau_4^Y \sigma_3^Y \bigr)
        \\ & \quad
        \times \bigl( \detJ_{21}^{01} + \sigma_4^Y \detJ_{11}^{01} + \bigl( \sigma_4^Y (\tau_1^Y + \tau_2^Y + \tau_3^Y + \tau_4^Y) - (\tau_2^Y \sigma_1^Y + \tau_3^Y \sigma_2^Y + \tau_4^Y \sigma_3^Y) \bigr) \detJ_{10}^{01} \bigr)^{-1}
        \\ & \quad
        \times \bigl( \detJ_{21}^{01} + \sigma_3^Y \detJ_{11}^{01} + \bigl( \sigma_3^Y (\tau_1^Y + \tau_2^Y + \tau_3^Y) - (\tau_2^Y \sigma_1^Y + \tau_3^Y \sigma_2^Y) \bigr) \detJ_{10}^{01} \bigr)^{-1}
        ,
        \\[1ex]
        P_{2,5} &= \frac{\detJ_{11}^{10}\bigl( \detJ_{10}^{11} + \sigma_4^Y  \bigr)}{\detJ_{10}^{01} \bigl( \detJ_{21}^{01} + \sigma_4^Y \detJ_{11}^{01} + \bigl( \sigma_4^Y (\tau_1^Y + \tau_2^Y + \tau_3^Y + \tau_4^Y) - (\tau_2^Y \sigma_1^Y + \tau_3^Y \sigma_2^Y + \tau_4^Y \sigma_3^Y) \bigr) \detJ_{10}^{01} \bigr)}
        ,
        \\[1ex]
        Q_3 &=\frac{\detJ_{10}^{01}}{\detJ_{11}^{10}}
        ,
        \qquad
        P_3 = \frac{1}{\detJ_{10}^{00}}
        .
      \end{aligned}
    \end{equation}
  \end{subequations}
  These equations are the same as~\eqref{eq:GX-3+3-typicalY2-solution-joint}
  in Example~\ref{ex:GX-3+3-typicalY2},
  except that the formulas for $X_2$ and~$Q_2$ are replaced by
  the formulas for $X_{2,i}$ and~$Q_{2,i}$ above,
  and that we have written superscripts $X$ and~$Y$
  on the parameters $\tau$ and $\sigma$,
  in order to see which group they belong to.
  There are two such parameters for the $X_2$-group
  ($\tau_1^X = \tau_{2,1}^X$ and $\sigma_1^X = \sigma_{2,1}^X$)
  and eight for the $Y_2$-group
  ($\tau_i^Y = \tau_{2,i}^Y$ and $\sigma_i^Y = \sigma_{2,i}^Y$ for $1 \le i \le 4$).
  These parameters must be positive and satisfy
  $\sigma_1^Y < \sigma_2^Y < \sigma_3^Y < \sigma_4^Y$,
  and also $\sigma_1^X < \tau_1^Y$ since we have two non-singleton groups next to each other;
  cf.~\eqref{eq:constraint-last-sigma-first-tau}.
  From the expressions for $X_{2,2}$ and~$Y_{2,1}$ in~\eqref{eq:GX-3+3-X2Y2-solution-positions},
  it can be seen that this last constraint is necessary in order to have $x_{2,2} < y_{2,1}$.

  For the plot of the positions in Figure~\ref{fig:GX-3+3-X2Y2-positions}
  we have taken the usual spectral parameters~\eqref{eq:GX-3+3-interlacing-spectral-data},
  and the same group parameters~\eqref{eq:GX-3+3-typicalY2-tau-sigma} for the $Y_2$-group
  as in Example~\ref{ex:GX-3+3-typicalY2}.
  The parameters for the $X_2$-group are
  \begin{equation}
    \label{eq:GX-3+3-X2Y2-tau-sigma-X2}
    \begin{aligned}
      \tau_1^X = \tau_{2,1}^X &= 10^5
      ,&
      \sigma_1^X = \sigma_{2,1}^X &= 10^{-3}
      .
    \end{aligned}
  \end{equation}
\end{example}

\begin{example}[All groups non-singletons]
  \label{ex:GX-3+3-allgroups}
  Consider next the following configuration of $3+3$ groups,
  represented schematically as in Example~\ref{ex:proof-technique2}:
  \begin{equation*}
    \underbrace{\X \X \X}_{X_1} \underbrace{\Y \Y \Y \Y}_{Y_1}
    \underbrace{\X \X}_{X_2} \underbrace{\Y \Y \Y \Y \Y}_{Y_2}
    \underbrace{\X \X \X \X \X}_{X_3} \underbrace{\Y \Y \Y \Y \Y}_{Y_3}
    .
  \end{equation*}
  Here all the groups are non-singletons,
  and the solution formulas for the positions and amplitudes are obtained from
  the general results in Section~\ref{sec:solutions-even} by taking $K=3$.
  We will not write out the formulas here,
  but note that the formulas for the four rightmost groups ($X_2$, $Y_2$, $X_3$ and~$Y_3$)
  have already been given in
  Examples
  \ref{ex:GX-3+3-X2Y2},
  \ref{ex:GX-3+3-typicalY2},
  \ref{ex:GX-3+3-secondrightmostX3}
  and~\ref{ex:GX-3+3-rightmostY3}
  above.
  For the plots we will use the same
  spectral parameters~\eqref{eq:GX-3+3-interlacing-spectral-data}
  and group parameters
  \eqref{eq:GX-3+3-X2Y2-tau-sigma-X2},
  \eqref{eq:GX-3+3-typicalY2-tau-sigma},
  \eqref{eq:GX-3+3-secondrightmostX3-tau-sigma}
  and~\eqref{eq:GX-3+3-rightmostY3-tau-sigma}
  as in those examples, and in addition for the two leftmost groups we take
  \begin{equation}
    \label{eq:GX-3+3-allgroups-tau-sigma-X1}
    \begin{aligned}
      \tau_{1,1}^X &= 10^{-10}
      ,&
      \sigma_{1,1}^X &= 10^{-15}
      ,\\
      \tau_{1,2}^X &= 10^{1}
      ,&
      \sigma_{1,2}^X &= 10^{-3}
    \end{aligned}
  \end{equation}
  and
  \begin{equation}
    \label{eq:GX-3+3-allgroups-tau-sigma-Y1}
    \begin{aligned}
      \tau_{1,1}^Y &= 10^2
      ,&
      \sigma_{1,1}^Y &= 10^{-15}
      ,\\
      \tau_{1,2}^Y &= 10^{4}
      ,&
      \sigma_{1,2}^Y &= 10^{-10}
      ,\\
      \tau_{1,3}^Y &= 10^{8}
      ,&
      \sigma_{1,3}^Y &= 10^{-3}
      .
    \end{aligned}
  \end{equation}
  Note that the last~$\sigma$ for each group is less than the first~$\tau$
  for the next group; cf.~\eqref{eq:constraint-last-sigma-first-tau}.

  A plot of the positions of all 24 peakons at the same time is shown in
  Figure~\ref{fig:GX-3+3-allgroups-positions}.
  The paths of the four rightmost groups are the same as in the previous examples.
  The $X_1$-group (leftmost) behaves asymptotically like the $Y_3$-group (rightmost),
  but with left and right interchanged and with $t \to +\infty$ and $t \to -\infty$ interchanged.
  Similarly for the $Y_1$-group (second leftmost) and the $X_3$-group (second rightmost).

  Now it is finally time to illustrate the asymptotics of the amplitudes for non-interlacing groups,
  as given by Theorem~\ref{thm:asymptotics-amplitudes-even}.
  In order to avoid information overload, we have made three separate plots:
  the amplitudes for the central groups $X_2$ and~$Y_2$ are shown in
  Figure~\ref{fig:GX-3+3-allgroups-amplitudes-X2-Y2},
  the second outermost groups $Y_1$ and~$X_3$ in
  Figure~\ref{fig:GX-3+3-allgroups-amplitudes-Y1-X3},
  and the outermost groups $X_1$ and~$Y_3$ in
  Figure~\ref{fig:GX-3+3-allgroups-amplitudes-X1-Y3}.
  As in Figure~\ref{fig:GX-3+3-interlacing-amplitudes-all} for the interlacing case,
  we plot the logarithms $s = \ln m_{k,i}(t)$ and $s = -\ln n_{k,i}(t)$.
  The curves for the corresponding singleton amplitudes
  (Figure~\ref{fig:GX-3+3-interlacing-amplitudes-all})
  are included as dashed curves in the background.
  
  As can be seen in the pictures, one amplitude in each group will asymptotically follow the same line as
  the singleton (or a line parallel to it, in the case of the outermost groups),
  whereas all the other amplitudes follow lines with a different slope.
  These new slopes, which do not occur in the interlacing case, are (in order):
  \begin{equation}
    \label{eq:asymptotic-group-slopes-example}
    \begin{aligned}
      \frac12 \left( \frac{3}{\lambda_1} - \frac{1}{\mu_1} \right) &= 6
      ,\\
      \frac12 \left( \frac{1}{\lambda_2} - \frac{3}{\mu_1} \right) &= -4
      ,\\
      \frac12 \left( \frac{3}{\lambda_2} - \frac{1}{\mu_2} \right) &= \frac{11}{8}
      ,\\
      \frac12 \left( \frac{1}{\lambda_3} - \frac{3}{\mu_2} \right) &= -\frac{1}{8}
      ,\\
      \frac12 \frac{3}{\lambda_3} &= \frac{3}{4}
      ,\\
      &
      \phantom{=}
      \,\,\,\,
      0
      .
    \end{aligned}
  \end{equation}
  More precisely: as $t \to -\infty$,
  \begin{itemize}
  \item all but the first $s = \ln m_{1,i}(t)$ approach parallel lines $s = 6t + \text{constant}$,
  \item all but the first $s = -\ln n_{1,i}(t)$ approach parallel lines $s = -4t + \text{constant}$,
  \item all but the first $s = \ln m_{2,i}(t)$ approach parallel lines $s = \tfrac{11}{8} t + \text{constant}$,
  \item all but the first $s = -\ln n_{2,i}(t)$ approach parallel lines $s = - \tfrac{1}{8} t + \text{constant}$,
  \item all but the first $s = \ln m_{3,i}(t)$ approach parallel lines $s = \tfrac{3}{4} t + \text{constant}$,
  \item all but the first $s = -\ln n_{3,i}(t)$ approach parallel lines $s = 0t + \text{constant}$,
  \end{itemize}
  and as $t \to +\infty$,
  \begin{itemize}
  \item all but the last $s = \ln m_{1,i}(t)$ approach parallel lines $s = 0t + \text{constant}$,
  \item all but the last $s = -\ln n_{1,i}(t)$ approach parallel lines $s = \tfrac{3}{4} t + \text{constant}$,
  \item all but the last $s = \ln m_{2,i}(t)$ approach parallel lines $s = - \tfrac{1}{8} t + \text{constant}$,
  \item all but the last $s = -\ln n_{2,i}(t)$ approach parallel lines $s = \tfrac{11}{8} t + \text{constant}$,
  \item all but the last $s = \ln m_{3,i}(t)$ approach parallel lines $s = -4t + \text{constant}$,
  \item all but the last $s = -\ln n_{3,i}(t)$ approach parallel lines $s = 6t + \text{constant}$.
  \end{itemize}
\end{example}

\begin{figure}[H]
  \centering
  \includegraphics[width=13cm]{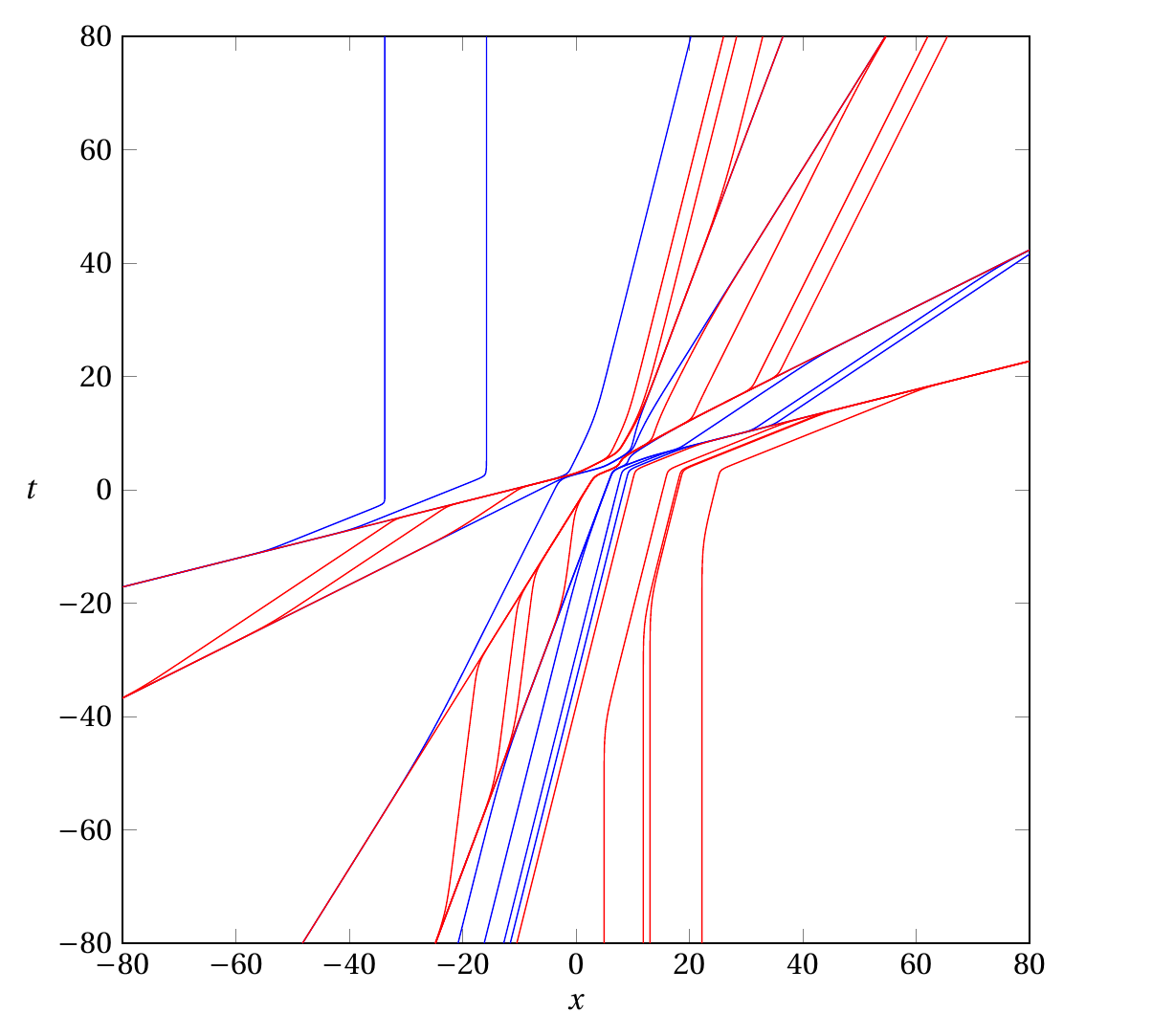}
  \caption{\textbf{Positions in a case with $3+3$ non-singleton groups.}
    Positions of the peakons, $x=x_{k,i}(t)$ and $x=y_{k,i}(t)$,
    for the solution described in Example~\ref{ex:GX-3+3-allgroups}.
    There are $3+3$ groups, all non-singletons,
    with $3+4+2+5+5+5$ peakons in total.
    The curves for the three rightmost groups ($Y_2$, $X_3$ and~$Y_3$,
    with five peakons each)
    are the same as for the corresponding non-singleton groups in
    Figures
    \ref{fig:GX-3+3-typicalY2-positions},
    \ref{fig:GX-3+3-secondrightmostX3-positions}
    and~\ref{fig:GX-3+3-rightmostY3-positions}
    above,
    and also the $X_2$ group with two peakons is like in
    Figure~\ref{fig:GX-3+3-X2Y2-positions}.
    The asymptotic behaviour of the leftmost group ($X_1$) as $t \to \pm\infty$
    is analogous to that of the rightmost group ($Y_3$) as $t \to \mp\infty$,
    and similarly for the other ``mirror-image'' pairs $(Y_1,X_3)$ and $(X_2,Y_2)$.
  }
  \label{fig:GX-3+3-allgroups-positions}
\end{figure}

\begin{figure}[H]
  \centering
  \includegraphics[width=13cm]{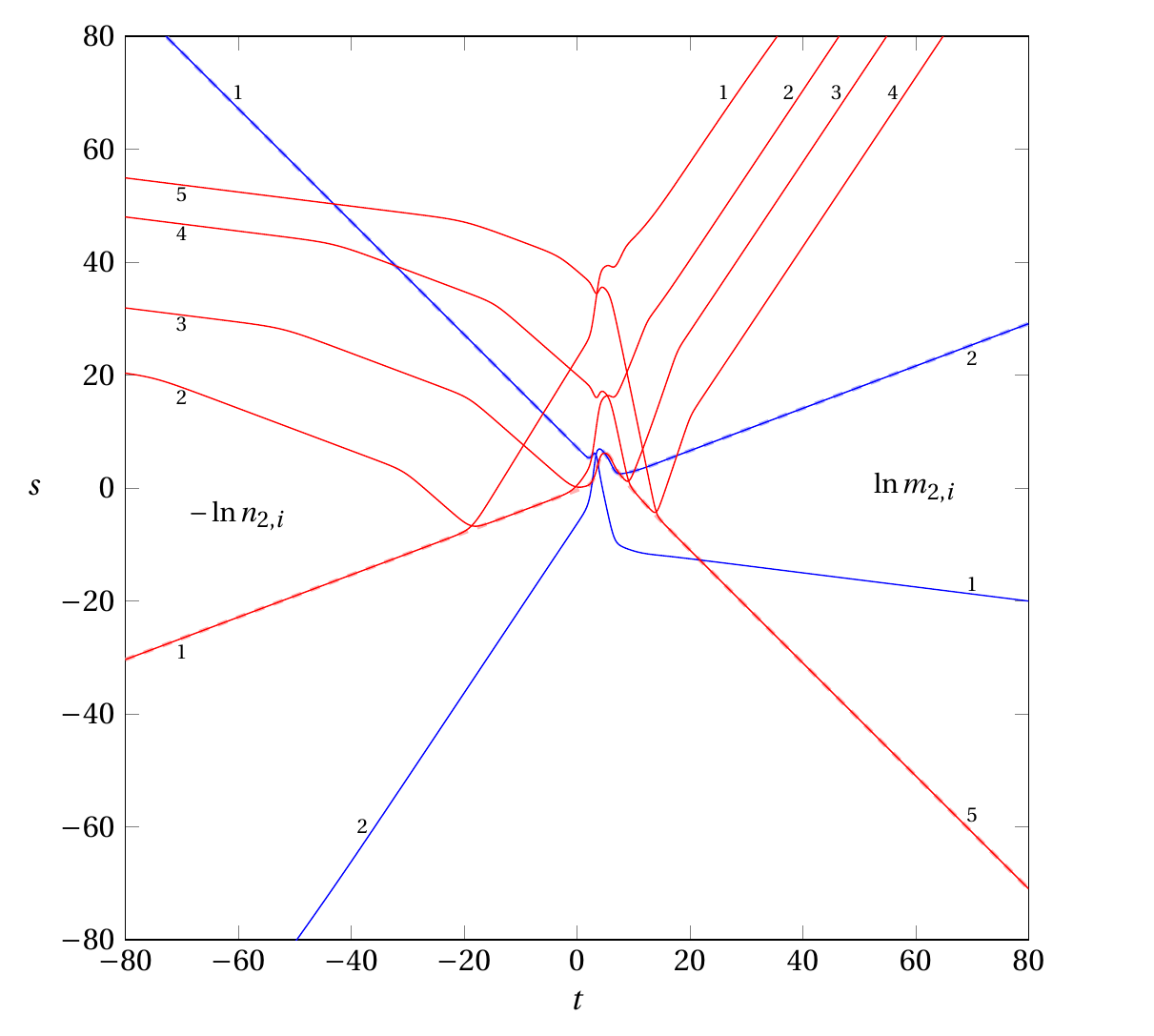}
  \caption{\textbf{Amplitudes for the middle groups ($X_2$ and~$Y_2$).}
    Amplitudes of the peakons for the solution described in Example~\ref{ex:GX-3+3-allgroups}.
    The solid blue curves are $s = \ln m_{2,i}(t)$ and the solid red curves are $s = -\ln n_{2,i}(t)$.
    The dashed curves in the background are the corresponding singleton curves
    $s = \ln m_2(t)$ and $s = -\ln n_2(t)$
    from Figure~\ref{fig:GX-3+3-interlacing-amplitudes-all}.
    As $t \to \pm \infty$, one peakon in each group follows the singleton curve asymptotically,
    while the others approach parallel lines with other slopes, not seen in the interlacing case,
    and given by~\eqref{eq:asymptotic-group-slopes-example}.
  }
  \label{fig:GX-3+3-allgroups-amplitudes-X2-Y2}
\end{figure}

\begin{figure}[H]
  \centering
  \includegraphics[width=13cm]{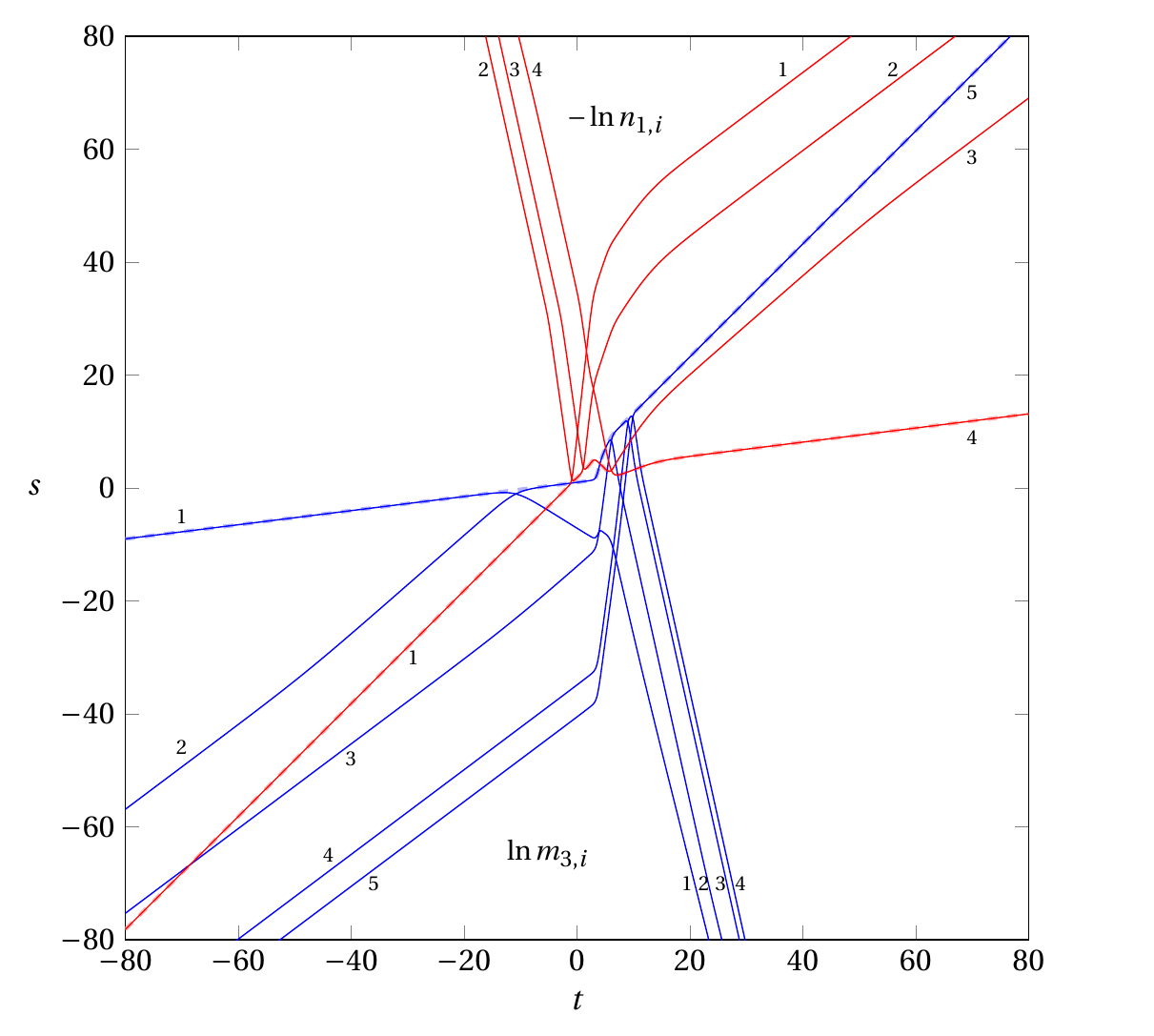}
  \caption{\textbf{Amplitudes for the second leftmost and second rightmost groups ($Y_1$ and~$X_3$).}
    The same as Figure~\ref{fig:GX-3+3-allgroups-amplitudes-X2-Y2},
    but for
    the second leftmost group ($s = -\ln n_{1,i}(t)$, red)
    and
    the second rightmost group ($s = \ln m_{3,i}(t)$, blue).
    The asymptotics for the amplitudes in these groups are exactly analogous to
    the middle groups in
    Figure~\ref{fig:GX-3+3-allgroups-amplitudes-X2-Y2}.
    This is in contrast to the positions, which behave somewhat exceptionally as $t \to -\infty$
    for the second rightmost group ($X_3$, cf. Figures
    \ref{fig:GX-3+3-secondrightmostX3-positions}
    and~\ref{fig:GX-3+3-allgroups-positions}),
    and as $t \to +\infty$ for the second leftmost
    group~($Y_1$, cf. Figure~\ref{fig:GX-3+3-allgroups-positions}).
  }
  \label{fig:GX-3+3-allgroups-amplitudes-Y1-X3}
\end{figure}

\begin{figure}[H]
  \centering
  \includegraphics[width=13cm]{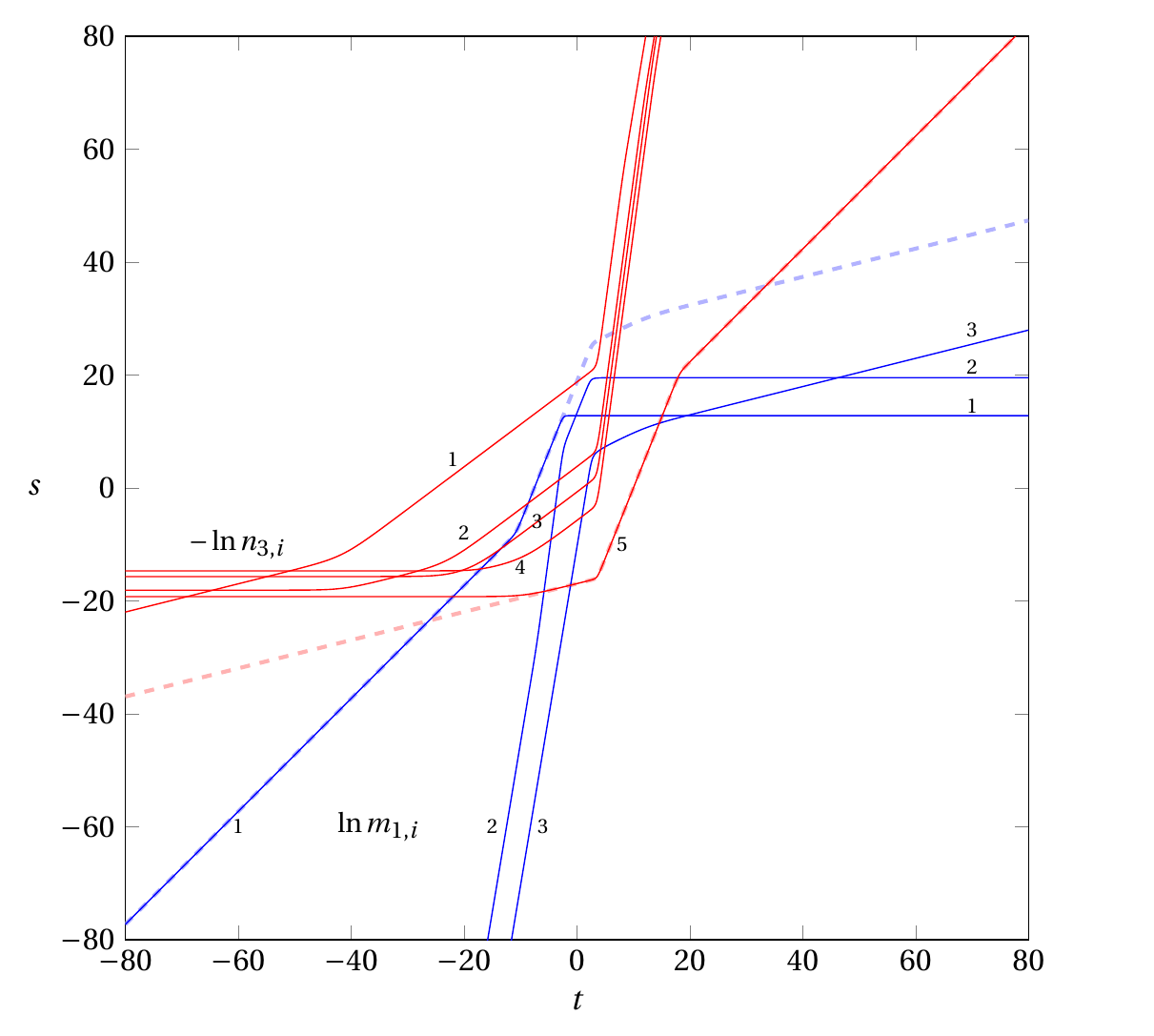}
  \caption{\textbf{Amplitudes for the leftmost and rightmost groups ($X_1$ and~$Y_3$).}
    The same as Figures \ref{fig:GX-3+3-allgroups-amplitudes-X2-Y2}
    and~\ref{fig:GX-3+3-allgroups-amplitudes-Y1-X3},
    but for the leftmost group ($s = \ln m_{1,i}(t)$, blue)
    and the rightmost group ($s = -\ln n_{3,i}(t)$, red).
    The amplitudes in these outermost groups have somewhat exceptional asymptotics,
    since the one peakon in each group that deviates from the others
    only approaches the corresponding singleton curve in one time direction;
    in the other time direction it instead approaches a line \emph{parallel} to the asymptote
    of the singleton curve (in the picture,
    the blue curve $s = \ln m_{1,3}(t)$ becomes parallel to the dashed blue curve as $t \to +\infty$,
    and the red curve $s = -\ln n_{3,1}(t)$ becomes parallel to the dashed red curve as $t \to -\infty$).
  }
  \label{fig:GX-3+3-allgroups-amplitudes-X1-Y3}
\end{figure}

\begin{example}[Two groups only]
  \label{ex:GX-1+1-allgroups}
  If there are just two groups ($X_1$ and~$Y_1$),
  we have $A=1$ and $B=0$ in the determinants $\detJ_{ij}^{rs}$,
  which means that there is only one remaining eigenvalue $\lambda_1$ and no $\mu_j$ at all.
  With four peakons in each group, for example,
  the solution formulas for the $X$-group are
  \begin{equation}
    \begin{aligned}
      X_{1,1} &= \frac{\tau_1 \sigma_1 a_1}{\tau_1 + C \, \sigma_1 (\lambda_1 a_1 + \tau_1)}
      ,
      \\[1ex]
      X_{1,2} &= \frac{\tau_2 \sigma_1 a_1}{\lambda_1 a_1 + \tau_1 + \tau_2}
      ,
      \\[1ex]
      X_{1,3} &= \frac{(\tau_2 \sigma_1 + \tau_3 \sigma_2) a_1}{\lambda_1 a_1 + \tau_1 + \tau_2 + \tau_3}
      ,
      \\[1ex]
      X_{1,4} &= \sigma_3 a_1
    \end{aligned}
  \end{equation}
  and
  \begin{equation}
    \begin{aligned}
      Q_{1,1} &= \frac{1}{\lambda_1} \left( \frac{\tau_1}{\sigma_1 \lambda_1 a_1 + \sigma_1 \tau_1} + C \right)
      ,
      \\[1ex]
      Q_{1,2} &= \frac{(\sigma_2-\sigma_1) a_1 (\lambda_1 a_1 + \tau_1+\tau_2)}{\sigma_1 (\lambda_1 a_1 + \tau_1) \left(   \sigma_2 \lambda_1 a_1+ \sigma_2 (\tau_1 + \tau_2) - \tau_2 \sigma_1 \right)}
      ,
      \\[1ex]
      Q_{1,3} &= (\sigma_3-\sigma_2) a_1 (\lambda_1 a_1 + \tau_1 + \tau_2 + \tau_3)
      \\ & \quad
      \times \left( \sigma_3 \lambda_1 a_1 + \sigma_3 (\tau_1+\tau_2+\tau_3) -(\tau_2 \sigma_1 + \tau_3 \sigma_2) \right)^{-1}
      \\
      &
      \quad \times \left( \sigma_2 \lambda_1 a_1+ \sigma_2 (\tau_1 + \tau_2) - \tau_2 \sigma_1 \right)^{-1}
      ,
      \\[1ex]
      Q_{1,4} &= \frac{a_1}{\sigma_3 \lambda_1 a_1 + \sigma_3 (\tau_1 + \tau_2 + \tau_3) - (\tau_2 \sigma_1 + \tau_3 \sigma_2)}
      ,
    \end{aligned}
  \end{equation}
  where we have written $\tau_k$ and~$\sigma_k$ instead of $\tau_{1,k}^X$ and~$\sigma_{1,k}^X$.
  As usual, $\tau_k > 0$ and $0 < \sigma_1 < \sigma_2 < \sigma_3$.
  The parameter~$C$ is not only positive, but must actually satisfy the stronger
  constraint~\eqref{eq:constraint-C-general}
  (where $M=1$ is the empty product),
  namely
  \begin{equation}
    \tau_1 < C \, \sigma_1 \, \tau_2
    ,
  \end{equation}
  to ensure that
  \begin{equation*}
    X_{1,1} =
    \frac{\sigma_1 a_1}{1 + \frac{C \, \sigma_1}{\tau_1} \, (\lambda_1 a_1 + \tau_1)}
    <
    \frac{\sigma_1 a_1}{1 + \frac{1}{\tau_2} \, (\lambda_1 a_1 + \tau_1)}
    = X_{1,2}
    .
  \end{equation*}
  The formulas for the $Y$-group read as follows,
  where now $\tau_k$ and~$\sigma_k$ stand for $\tau_{1,k}^Y$ and~$\sigma_{1,k}^Y$
  (again with $\tau_k > 0$ and $0 < \sigma_1 < \sigma_2 < \sigma_3$):
  \begin{equation}
    \begin{aligned}
      Y_{1,1} &= \tau_1 a_1
      ,
      \\[1ex]
      Y_{1,2} &= (\tau_1 + \tau_2) a_1 + \tau_2 \sigma_1
      ,
      \\[1ex]
      Y_{1,3} &= (\tau_1 + \tau_2 + \tau_3) a_1 + \tau_2 \sigma_1 + \tau_3 \sigma_2
      ,
      \\[1ex]
      Y_{1,4} &=  (\tau_1+\tau_2+\tau_3+D) a_1 + \tau_2 \sigma_1 + \tau_3 \sigma_2 + D \sigma_3
    \end{aligned}
  \end{equation}
  and
  \begin{equation}
    \begin{aligned}
      P_{1,1} &= \frac{\sigma_1}{(a_1+\sigma_1)a_1}
      ,
      \\[1ex]
      P_{1,2} &= \frac{\sigma_2 - \sigma_1}{(a_1 + \sigma_2)(a_1 + \sigma_1)}
      ,
      \\[1ex]
      P_{1,3} &=\frac{\sigma_3 - \sigma_2}{(a_1 + \sigma_3)(a_1 + \sigma_2)}
      ,
      \\[1ex]
      P_{1,4} &= \frac{1}{a_1+\sigma_3}
      .
    \end{aligned}
  \end{equation}
  The constraint~\eqref{eq:constraint-last-sigma-first-tau} says that
  $\sigma_{1,3}^X < \tau_{1,1}^Y$,
  which ensures that
  \begin{equation*}
    X_{1,4} = \sigma_{1,3}^X \, a_1 < \tau_{1,1}^Y \, a_1 = Y_{1,1}
    ,
  \end{equation*}
  and $D$ must be positive.

  The positions are plotted in Figure~\ref{fig:GX-1+1-allgroups-positions}
  and the amplitudes in Figure~\ref{fig:GX-1+1-allgroups-amplitudes},
  together with the corresponding singleton curves (dashed),
  which are given by
  \begin{equation}
    \label{eq:GX-1+1-interlacing-solution-abbrev}
    X_1 = \frac{a_1}{C}
    ,\qquad
    Y_1 = D a_1
    ,\qquad
    Q_1 = \frac{C}{\lambda_1}
    ,\qquad
    P_1 = \frac{1}{a_1}
    .
  \end{equation}
  More explicitly, \eqref{eq:GX-1+1-interlacing-solution-abbrev}
  means that the $1+1$ interlacing solution is simply
  \begin{equation}
    \label{eq:GX-1+1-interlacing-solution}
    x_1(t) = x_1(0) + ct
    ,\quad
    y_1(t) = y_1(0) + ct
    ,\quad
    m_1(t) = m_1(0) \, e^{ct}
    ,\quad
    n_1(t) = n_1(0) \, e^{-ct}
    ,
  \end{equation}
  where $c = m_1(0) \, n_1(0) \, e^{x_1(0)-y_1(0)} = \frac{1}{2\lambda_1} > 0$.
  This can also easily be derived directly from the governing ODEs~\eqref{eq:GX-peakon-ode},
  which in this case are
  \begin{equation*}
    \dot x_1 = \dot x_2 = \frac{\dot m_1}{m_1} = - \frac{\dot n_1}{n_1} = m_1 n_2 e^{x_1-x_2}
    ,
  \end{equation*}
  where $m_1 n_2 e^{x_1-x_2}$ is a constant of motion.
  
  The spectral data used in the plots are
  \begin{equation}
    \label{eq:GX-1+1-allgroups-parameters}
    \lambda_1 = 1
    ,\qquad
    C = 10^{-6}
    ,\qquad
    D = 10^{26}
    ,
  \end{equation}
  while the group parameters are
  \begin{equation}
    \label{eq:GX-1+1-allgroups-tau-sigma-X1}
    \begin{aligned}
      \tau_{1,1}^X &= 10^{-10}
      ,&
      \sigma_{1,1}^X &= 10^{-8}
      ,\\
      \tau_{1,2}^X &= 10^{10}
      ,&
      \sigma_{1,2}^X &= 10^{-3}
      ,\\
      \tau_{1,3}^X &= 10^{10}
      ,&
      \sigma_{1,3}^X &= 10^{-1}
    \end{aligned}
  \end{equation}
  and
  \begin{equation}
    \label{eq:GX-1+1-allgroups-tau-sigma-Y1}
    \begin{aligned}
      \tau_{1,1}^Y &= 10^2
      ,&
      \sigma_{1,1}^Y &= 10^{-15}
      ,\\
      \tau_{1,2}^Y &= 10^{4}
      ,&
      \sigma_{1,2}^Y &= 10^{-10}
      ,\\
      \tau_{1,3}^Y &= 10^{25}
      ,&
      \sigma_{1,3}^Y &= 10^{-3}
      .
    \end{aligned}
  \end{equation}
  Note that these numbers do \emph{not} satisfy the constraint~\eqref{eq:constraint-C-simpler},
  $1 < C \, \tau_{1,1}^Y$,
  which is a requirement for the \emph{singleton}~$X_1$ to form a valid solution
  together with this $Y_1$-group:
  \begin{equation*}
    X_1 = \tfrac{1}{C} \, a_1 \not< \tau_{1,1}^Y \, a_1 = Y_{1,1}
    .
  \end{equation*}
  Indeed, in Figure~\ref{fig:GX-1+1-allgroups-positions}
  the dashed blue line does not lie to the left of
  all the red curves.
  Changing $\tau_{1,1}^Y$ to a value greater than $1/C$ would remedy this,
  causing all the red curves to move to the right of the dashed blue line.
  The constraint~\eqref{eq:constraint-last-sigma-D-even},
  $\sigma_{1,3}^X < D$,
  is however satisfied,
  which means that the singleton~$Y_1$ would form a valid solution
  together with the $X_1$-group:
  \begin{equation*}
    X_{1,4} = \sigma_{1,3}^X \, a_1 < D a_1 = Y_1
    .
  \end{equation*}
  As can be seen in Figure~\ref{fig:GX-1+1-allgroups-positions},
  the dashed red line indeed lies to the right of all the blue curves.
\end{example}

\begin{figure}[H]
  \centering
  \includegraphics[width=13cm]{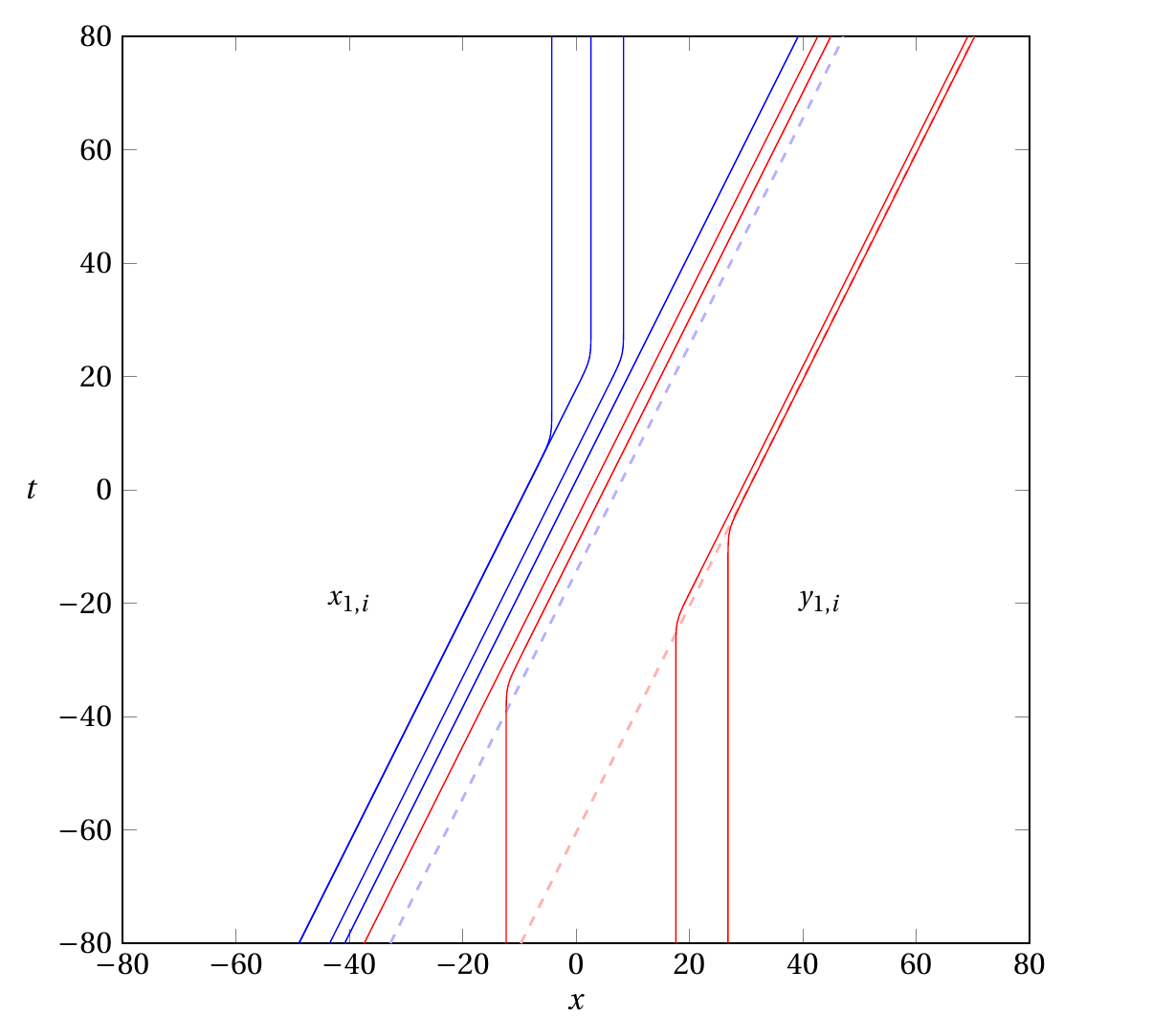}
  \caption{\textbf{Positions in the case of $1+1$ groups.}
    Positions of the peakons, $x = x_{1,i}(t)$ (blue) and $x = y_{1,i}(t)$ (red)
    for the solution described in Example~\ref{ex:GX-1+1-allgroups},
    where the is one $X$-group and one $Y$-group with four peakons each.
    The dashed curves in the background are the corresponding singleton curves
    $x = x_1(t)$ and $x = y_1(t)$ with the same spectral data.
    These singletons curves, as well as the curves for $x_{1,4}$ and $y_{1,1}$,
    are straight lines
    $x = \frac{t}{2\lambda_1} + \text{constant}$
    (where $\lambda_1=1$ in this example).
    The other peakons asymptotically run parallel to these lines in one time direction,
    and tend to constant values in the other time direction.
  }
  \label{fig:GX-1+1-allgroups-positions}
\end{figure}

\begin{figure}[H]
  \centering
  \includegraphics[width=13cm]{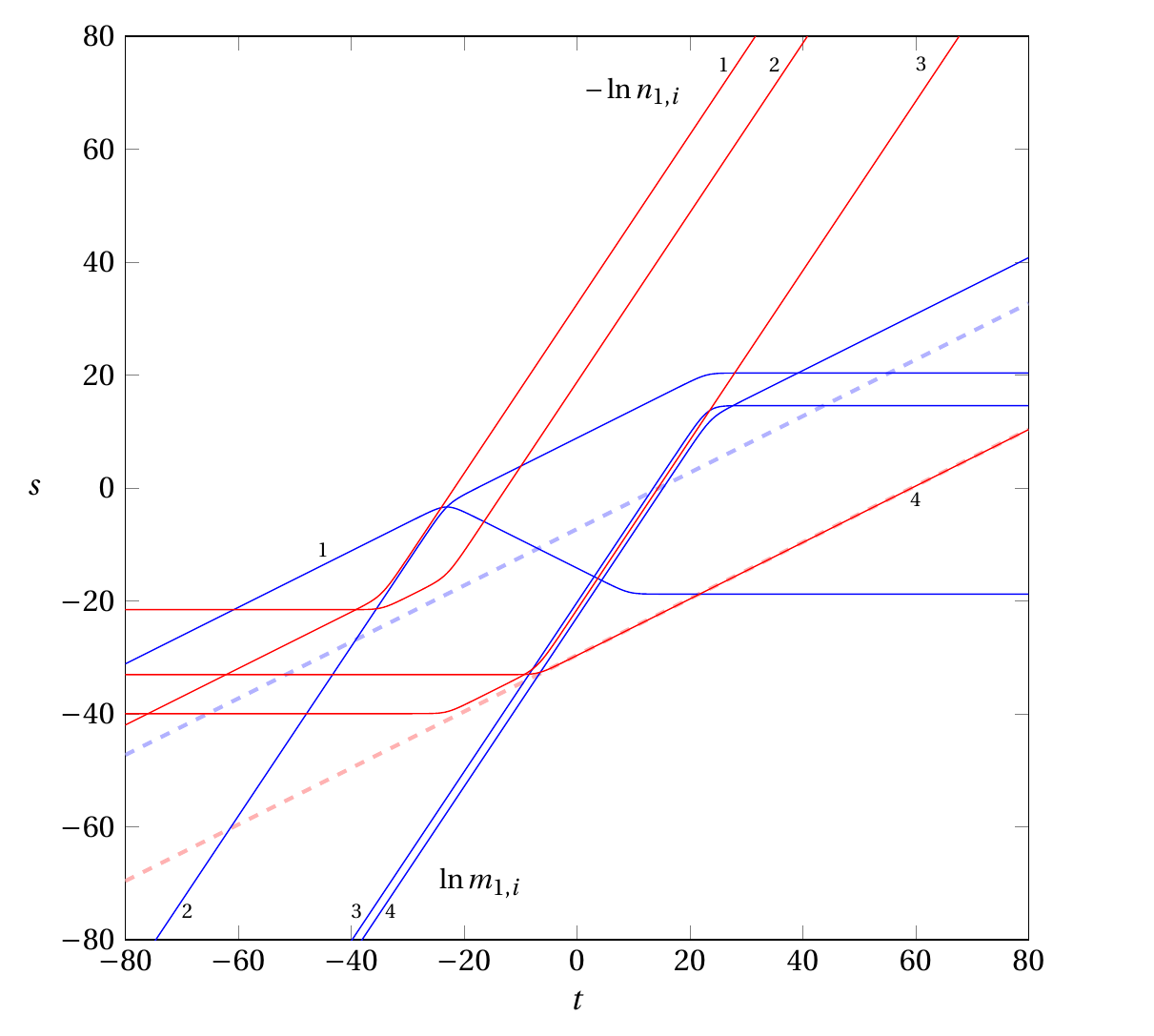}
  \caption{\textbf{Amplitudes in the case of $1+1$ groups.}
    Amplitudes of the peakons for the same solution as in
    Figure~\ref{fig:GX-1+1-allgroups-positions} (Example~\ref{ex:GX-1+1-allgroups}).
    The solid blue curves are $s = \ln m_{1,i}(t)$ and the solid red curves are $s = -\ln n_{1,i}(t)$.
    The dashed curves in the background are the corresponding singleton curves
    $s = \ln m_1(t)$ and $s = -\ln n_1(t)$
    with the same spectral data;
    these are straight lines with slope $\frac{1}{2\lambda_1}$.
    For the amplitudes in the non-singleton groups,
    as $t \to +\infty$ or as $t \to -\infty$,
    one curve in each group will asymptotically have this slope as well,
    whereas the others approach lines with slope $0$ or $\frac{3}{2\lambda_1}$.
  }
  \label{fig:GX-1+1-allgroups-amplitudes}
\end{figure}

\subsection{Examples with an odd number of groups}
\label{sec:examples-groups-odd}

In the odd case with $(K+1)+K$ groups,
the solution formulas (given in Section~\ref{sec:solutions-odd})
are slightly different; there is an offset in one of the lower indices
of the determinants~$\detJ_{ij}^{rs}$ compared to the even case,
and these determinants are computed with spectral data containing $K+K$ eigenvalues
and residues instead of $K+(K-1)$.
The asymptotic behaviour of the solutions in the odd case is similar to the even case
in many respects, but also displays some peculiar differences, as we shall see.

\begin{example}[The $2+1$ interlacing case]
  \label{ex:GX-2+1-interlacing}

  The simplest example of a solution with an odd number of groups
  is the $2+1$ interlacing configuration
  \begin{equation*}
    x_1 < y_1 < x_2
    ,
  \end{equation*}
  for which the governing ODEs~\eqref{eq:GX-peakon-ode} are
  \begin{equation}
    \label{eq:GX-2+1-interlacing-ode}
    \begin{aligned}
      \dot x_1 &= \bigl( m_1 + m_2 e^{x_1-x_2} \bigr) \, n_1 e^{x_1-y_1}
      ,\\
      \dot y_1 &= \bigl( m_1 e^{x_1-y_1} + m_2 e^{y_1-x_2} \bigr) \, n_1
      ,\\
      \dot x_2 &= \bigl( m_1 e^{x_1-x_2} + m_2 \bigr) \, n_1 e^{y_1-x_2}
      ,\\
      \dot m_1 / m_1 &= \bigl( m_1 - m_2 e^{x_1-x_2} \bigr) \, n_1 e^{x_1-y_1}
      ,\\
      \dot n_1 / n_1 &= \bigl( -m_1 e^{x_1-y_1} + m_2 e^{y_1-x_2} \bigr) \, n_1
      ,\\
      \dot m_2 / m_2 &= \bigl( m_1 e^{x_1-x_2} - m_2 \bigr) \, n_1 e^{y_1-x_2}
      ,
    \end{aligned}
  \end{equation}
  and the solution formulas,
  with $\detJ_{ij}^{rs} =\detJ[1,1,r,s,i,j]$,
  are
  \begin{subequations}
    \label{eq:GX-2+1-interlacing-joint}
    \begin{equation}
      \label{eq:GX-2+1-interlacing-positions}
      \begin{aligned}
        X_1 = \tfrac12 e^{2x_1}
        &= \frac{\detJ_{11}^{00}}{\detJ_{00}^{11}+C\detJ_{01}^{10}}
        = \frac{\frac{a_1 b_1}{\lambda_1 + \mu_1}}{1 + C b_1}
        ,\\
        Y_1 = \tfrac12 e^{2y_1}
        &= \frac{\detJ_{11}^{00}}{\detJ_{00}^{11}}
        = \frac{a_1 b_1}{\lambda_1 + \mu_1}
        ,\\
        X_2 = \tfrac12 e^{2x_2}
        &= \detJ_{11}^{00} + D \detJ_{01}^{00}
        = \frac{a_1 b_1}{\lambda_1 + \mu_1} + D b_1
      \end{aligned}
    \end{equation}
    and
    \begin{equation}
      \label{eq:GX-2+1-interlacing-amplitudes}
      \begin{aligned}
        Q_1 = 2 m_1 e^{-x_1}
        &= \frac{\mu_1}{\lambda_1} \, \left( \frac{\detJ_{00}^{11}}{\detJ_{01}^{10}} + C \right)
        = \frac{\mu_1}{\lambda_1} \, \left( \frac{1}{b_1} + C \right)
        ,\\
        P_1 = 2 n_1 e^{-y_1}
        &= \frac{\detJ_{01}^{10}}{\detJ_{11}^{01}}
        = \frac{b_1}{\frac{\mu_1 a_1 b_1}{\lambda_1 + \mu_1}}
        = \frac{\lambda_1 + \mu_1}{\mu_1 a_1}
        ,\\
        Q_2 = 2 m_2 e^{-x_2}
        &= \frac{1}{\detJ_{01}^{10}}
        = \frac{1}{b_1}
        ,
      \end{aligned}
    \end{equation}
  \end{subequations}
  or, more explicitly,
  \begin{subequations}
    \label{eq:GX-2+1-interlacing-explicit-joint}
    \begin{equation}
      \label{eq:GX-2+1-interlacing-explicit-positions}
      \begin{aligned}
        x_1(t)
        &=
        \frac{t}{2} \left( \frac{1}{\lambda_1} + \frac{1}{\mu_1} \right)
        + \frac12 \ln \left( \frac{1}{1 + C b_1(0) \, e^{t/\mu_1}} \right)
        + \frac12 \ln \left( \frac{2 a_1(0) \, b_1(0)}{\lambda_1 + \mu_1}  \right)
        ,\\
        y_1(t)
        &=
        \frac{t}{2} \left( \frac{1}{\lambda_1} + \frac{1}{\mu_1} \right)
        + \frac12 \ln \left( \frac{2 a_1(0) \, b_1(0)}{\lambda_1 + \mu_1} \right)
        ,\\
        x_2(t)
        &=
        \frac{t}{2} \left( \frac{1}{\lambda_1} + \frac{1}{\mu_1} \right)
        + \frac12 \ln \left( \frac{2 a_1(0) \, b_1(0)}{\lambda_1 + \mu_1} + 2D b_1(0) \, e^{-t/\lambda_1} \right)
      \end{aligned}
    \end{equation}
    and
    \begin{equation}
      \label{eq:GX-2+1-interlacing-explicit-amplitudes}
      \begin{aligned}
        \ln m_1(t)
        &=
        x_1(t)
        + \ln \left( \frac{1}{b_1(0) \, e^{t/\mu_1}} + C \right)
        + \ln \left( \frac{\mu_1}{2 \lambda_1} \right)
        \\ &=
        \frac{t}{2} \left( \frac{1}{\lambda_1} - \frac{1}{\mu_1} \right)
        + \frac12 \ln \left( 1 + C b_1(0) \, e^{t/\mu_1} \right)
        \\ & \quad
        + \frac12 \ln \left( \frac{2 a_1(0)}{(\lambda_1 + \mu_1) \, b_1(0)} \right)
        + \ln \left( \frac{\mu_1}{2 \lambda_1} \right)
        ,\\
        -\ln n_1(t)
        &=
        - y_1(t)
        + \frac{t}{\lambda_1}
        + \ln \left( \frac{2\mu_1 \, a_1(0)}{\lambda_1 + \mu_1} \right)
        \\ &=
        \frac{t}{2} \left( \frac{1}{\lambda_1} - \frac{1}{\mu_1} \right)
        + \frac12 \ln \left( \frac{2 a_1(0)}{ (\lambda_1 + \mu_1) b_1(0)} \right)
        + \ln \left( \mu_1 \right)
        ,\\[1ex]
        \ln m_2(t)
        &=
        x_2(t)
        - \frac{t}{\mu_1}
        - \ln \bigl( 2 b_1(0) \bigr)
        \\ &=
        \frac{t}{2} \left( \frac{1}{\lambda_1} - \frac{1}{\mu_1} \right)
        + \frac12 \ln \left( 1 + \frac{D (\lambda_1 + \mu_1)}{a_1(0)}  \, e^{-t/\lambda_1} \right)
        \\ & \quad
        + \frac12 \ln \left( \frac{2 a_1(0)}{ (\lambda_1 + \mu_1) b_1(0)} \right)
        - \ln 2
        .
      \end{aligned}
    \end{equation}
  \end{subequations}
  This imples that the positions behave as follows:
  \begin{itemize}
  \item the curve
    $x = y_1(t) = \frac12 \left( \frac{1}{\lambda_1} + \frac{1}{\mu_1} \right) t + \text{constant}$
    is a straight line,
  \item $x = x_1(t)$ approaches the line $x = y_1(t)$ as $t \to -\infty$,
  \item $x = x_1(t)$ approaches a line $x = \frac{1}{2\lambda_1} \, t + \text{constant}$ as $t \to +\infty$,
  \item $x = x_2(t)$ approaches a line $x = \frac{1}{2 \mu_1} \, t + \text{constant}$ as $t \to -\infty$,
  \item $x = x_2(t)$ approaches the line $x = y_1(t)$ as $t \to +\infty$.
  \end{itemize}
  This is illustrated in Figure~\ref{fig:GX-2+1-interlacing-positions-all},
  with
  \begin{equation}
    \label{eq:GX-2+1-interlacing-parameters}
    \lambda_1 = 1
    ,\quad
    \mu_1 = 3
    ,\quad
    a_1(0) = b_1(0) = 1
    ,\quad
    C = 10^6
    ,\quad
    D = 10^{20}
    .
  \end{equation}
  Note that the incoming velocity
  $(2 \mu_1)^{-1}$
  of the rightmost peakon
  is \emph{different} from the outgoing velocity
  $(2 \lambda_1)^{-1}$
  of the leftmost peakon!
  In particular, it does not seem meaningful to compare those two curves and talk about a ``phase shift''.
  This is in contrast to the even case, where the incoming velocities always appear as outgoing velocities
  in the opposite order.
  
  For the amplitudes
  (Figure~\ref{fig:GX-2+1-interlacing-amplitudes-all})
  we similarly find:
  \begin{itemize}
  \item the curve
    $s = -\ln n_1(t) = \frac12 \left( \frac{1}{\lambda_1} - \frac{1}{\mu_1} \right) t + \text{constant}$
    is a straight line,
  \item $s = \ln m_1(t)$ approaches a line parallel to $x = -\ln n_1(t)$ as $t \to -\infty$,
  \item $s = \ln m_1(t)$ approaches a line $x = \frac{1}{2 \lambda_1} \, t + \text{constant}$ as $ \to +\infty$,
  \item $s = \ln m_2(t)$ approaches a line $x = - \frac{1}{2 \mu_1} \, t + \text{constant}$ as $ \to -\infty$,
  \item $s = \ln m_2(t)$ approaches a line parallel to $x = -\ln n_1(t)$ as $t \to +\infty$.
  \end{itemize}
\end{example}

\begin{figure}[H]
  \centering
  \includegraphics[width=13cm]{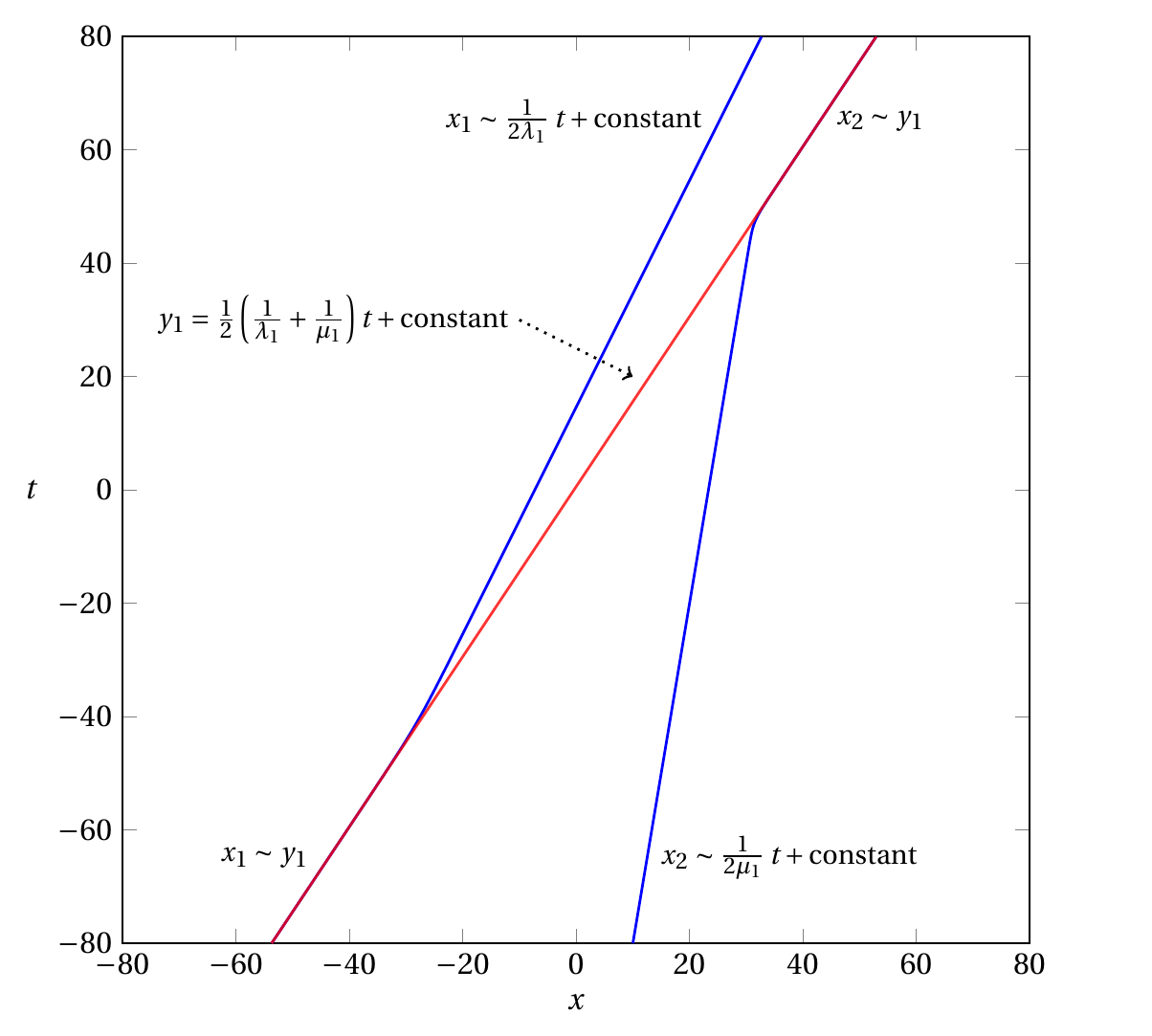}
  \caption{\textbf{Positions for a $2+1$ interlacing solution.}
    Positions of the peakons in a solution with the $2+1$ interlacing configuration
    $x_1 < y_1 < x_2$,
    as described in Example~\ref{ex:GX-2+1-interlacing}.
    The parameters are given in~\eqref{eq:GX-2+1-interlacing-parameters};
    in particular, the eigenvalues are $\lambda_1 = 1$ and $\mu_1 = 3$.
    The middle peakon at~$y_1$ (red curve) travels with constant velocity
    $\frac12 \bigl( \frac{1}{\lambda_1} + \frac{1}{\mu_1} \bigr)$,
    and is approached by the $x_1$-peakon as $t \to -\infty$
    and by the $x_2$-peakon as $t \to +\infty$.
    More remarkably, the asymptotic velocity of the $x_2$-peakon as $t \to -\infty$,
    namely $\frac{1}{2 \mu_1}$,
    is \emph{not} the same as that of the $x_1$-peakon as $t \to +\infty$,
    which is~$\frac{1}{2 \lambda_1}$.
  }
  \label{fig:GX-2+1-interlacing-positions-all}
\end{figure}

\begin{figure}[H]
  \centering
  \includegraphics[width=13cm]{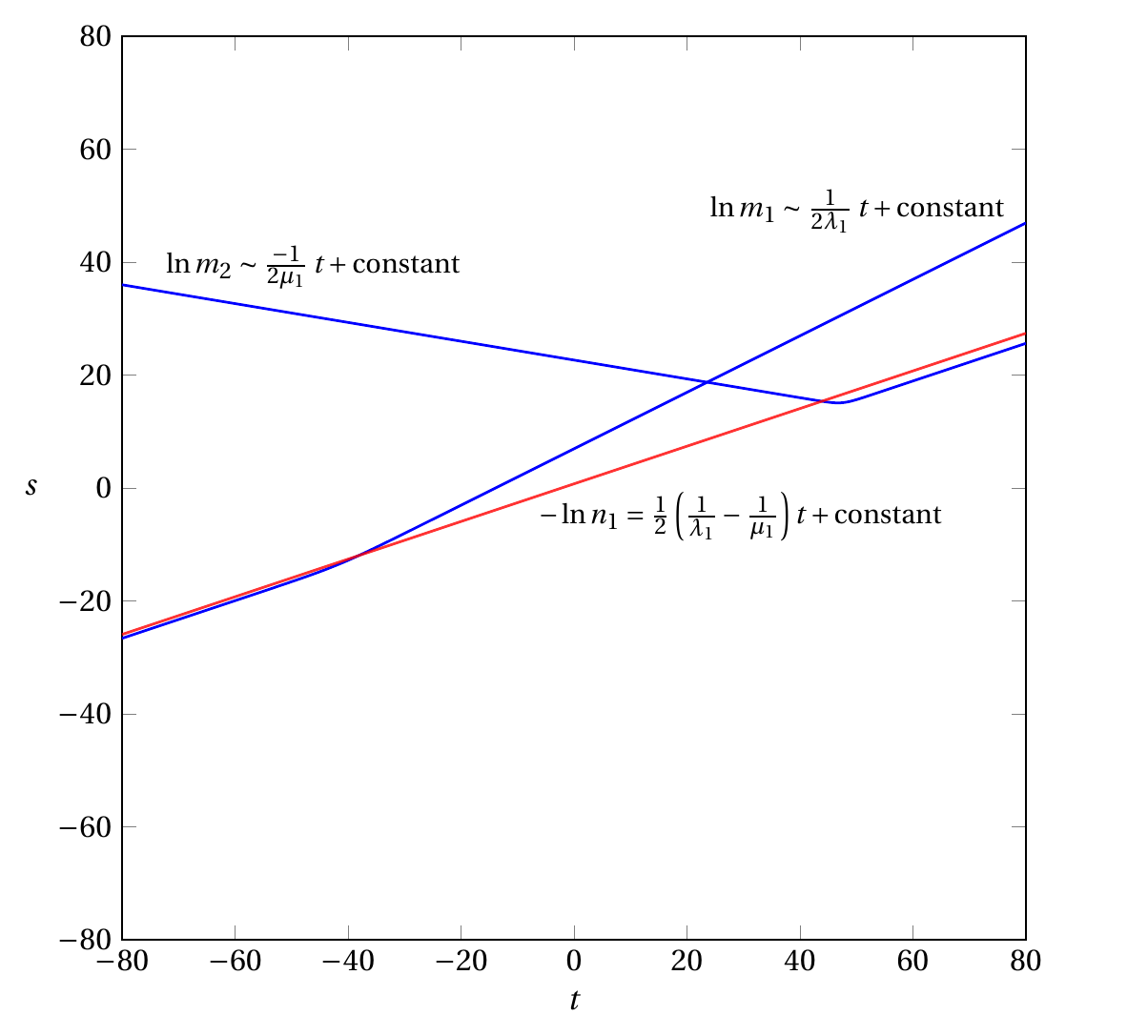}
  \caption{\textbf{Amplitudes for a $2+1$ interlacing solution.}
    Amplitudes of the peakons for the same solution as in
    Figure~\ref{fig:GX-2+1-interlacing-positions-all}
    (Example~\ref{ex:GX-2+1-interlacing}).
    The curve $s = -\ln n_1(t)$ is a straight line (red)
    with slope $\frac12 \bigl( \frac{1}{\lambda_1} - \frac{1}{\mu_1} \bigr)$.
    As $t \to -\infty$, the curve $s = \ln m_1(t)$ approaches a line parallel to that line,
    and similarly for $s = \ln m_2(t)$ as $t \to +\infty$.
    However, in the other time direction they approach lines with \emph{different} slopes,
    $\frac{1}{2 \lambda_1}$
    and~$\frac{-1}{2 \mu_1}$.
  }
  \label{fig:GX-2+1-interlacing-amplitudes-all}
\end{figure}

\begin{example}[The $4+3$ interlacing case]
  \label{ex:GX-4+3-interlacing}

  The formulas for the $4+3$ interlacing solution are,
  with $\detJ_{ij}^{rs}=\detJ[3,3,r,s,i,j]$:
  \begin{subequations}
    \label{eq:GX-4+3-interlacing-joint}
    \begin{equation}
      \label{eq:GX-4+3-interlacing-positions}
      \begin{aligned}
        X_1 = \tfrac12 e^{2x_1}
        &= \frac{\detJ_{33}^{00}}{\detJ_{22}^{11} + C \detJ_{23}^{10}}
        ,
        \qquad
        &
        Y_1 = \tfrac12 e^{2y_1}
        &= \frac{\detJ_{33}^{00}}{\detJ_{22}^{11}}
        ,
        \\[1ex]
        X_2 = \tfrac12 e^{2x_2}
        &= \frac{\detJ_{23}^{00}}{\detJ_{12}^{11}}
        ,
        &
        Y_2 = \tfrac12 e^{2y_2}
        &= \frac{\detJ_{22}^{00}}{\detJ_{11}^{11}}
        ,
        \\[1ex]
        X_3 = \tfrac12 e^{2x_3}
        &= \frac{\detJ_{12}^{00}}{\detJ_{01}^{11}}
        ,
        &
        Y_3 = \tfrac12 e^{2y_3}
        &= \detJ_{11}^{00}
        ,
        \\[1ex]
        X_4 = \tfrac12 e^{2x_4}
        &= \detJ_{11}^{00} + D \detJ_{01}^{00}
        &&
      \end{aligned}
    \end{equation}
    and
    \begin{equation}
      \label{eq:GX-4+3-interlacing-amplitudes}
      \begin{aligned}
        Q_1 = 2 m_1 e^{-x_1}
        &=  \frac{\mu_1 \mu_2 \mu_3}{\lambda_1 \lambda_2 \lambda_3 }
        \, \left( \frac{\detJ_{22}^{11}}{\detJ_{23}^{10}} + C \right)
        ,
        \qquad
        &
        P_1 = 2 n_1 e^{-y_1}
        &= \frac{\detJ_{22}^{11} \detJ_{23}^{10}}{\detJ_{22}^{01} \detJ_{33}^{01}}
        ,
        \\[1ex]
        Q_2 = 2 m_2 e^{-x_2}
        &= \frac{\detJ_{12}^{11} \detJ_{22}^{01}}{\detJ_{12}^{10} \detJ_{23}^{10}}
        ,
        &
        P_2 = 2 n_2 e^{-y_2}
        &= \frac{\detJ_{11}^{11} \detJ_{12}^{10}}{\detJ_{11}^{01} \detJ_{22}^{01}}
        ,
        \\[1ex]
        Q_3 = 2 m_3 e^{-x_3}
        &= \frac{\detJ_{01}^{11} \detJ_{11}^{01}}{\detJ_{01}^{10} \detJ_{12}^{10}}
        ,
        &
        P_3 = 2 n_3 e^{-y_3}
        &= \frac{\detJ_{01}^{10}}{\detJ_{11}^{01}}
        ,
        \\[1ex]
        Q_4 = 2 m_4 e^{-x_4}
        &= \frac{1}{\detJ_{01}^{10}}
        .
        &&
      \end{aligned}
    \end{equation}
  \end{subequations}
  (Compare with the $3+3$ interlacing solution~\eqref{eq:GX-3+3-interlacing-joint}
  in Example~\ref{ex:GX-3+3-interlacing}.)
  
  In Figure~\ref{fig:GX-4+3-interlacing-positions-all}
  we have plotted the positions
  obtained from~\eqref{eq:GX-4+3-interlacing-positions}
  with the spectral data
  \begin{equation}
    \label{eq:GX-4+3-interlacing-spectral-data}
    \begin{gathered}
      \lambda_1 = \frac{1}{5}
      ,\quad
      \lambda_2 = 1
      ,\quad
      \lambda_3 = 2
      ,\qquad
      \mu_1 = \frac{1}{3}
      ,\quad
      \mu_2 = 4
      ,\quad
      \mu_3 = 8
      ,
      \\
      a_1(0) = 10^{-4}
      ,\quad
      a_2(0) = 10^{1}
      ,\quad
      a_3(0) = 10^{3}
      ,\qquad
      b_1(0) = 10^{-6}
      ,\quad
      b_2(0) = 10^{2}
      ,\quad
      b_3(0) = 1
      ,
      \\
      C = 10^{20}
      ,\quad
      D = 10^{18}
      ,
    \end{gathered}
  \end{equation}
  i.e., the same values as for the $3+3$ interlacing
  solution in Example~\ref{ex:GX-3+3-interlacing}, plus the new parameters
  $\mu_3$ and~$a_3(0)$.
  Here we see again the same phenomenon as in
  Example~\ref{ex:GX-2+1-interlacing},
  namely that in the odd case, the asymptotic velocities for the blue curves $x = x_k(t)$
  as $t \to -\infty$
  are not the same as when $t \to +\infty$
  (although they \emph{are} for the red curves $x = y_k(t)$).
  In fact, by Theorem~\ref{thm:asymptotics-singletons-odd}
  the incoming velocites ($t \to -\infty$) are, from left to right,
  \begin{equation}
    \label{eq:asymptotic-velocities-odd-example-neginf}
    \begin{aligned}
      \text{$x_1$ and $y_1$} &:&
      \frac12 \left( \frac{1}{\lambda_1} + \frac{1}{\mu_1} \right) &= 4
      ,\\
      x_2 &:&
      \frac12 \left( \frac{1}{\lambda_2} + \frac{1}{\mu_1} \right) &= 2
      ,\\
      y_2 &:&
      \frac12 \left( \frac{1}{\lambda_2} + \frac{1}{\mu_2} \right) &= \frac{5}{8}
      ,\\
      x_3 &:&
      \frac12 \left( \frac{1}{\lambda_3} + \frac{1}{\mu_2} \right) &= \frac{3}{8}
      ,\\
      y_3 &:&
      \frac12 \left( \frac{1}{\lambda_3} + \frac{1}{\mu_3} \right) &= \frac{5}{16}
      ,\\
      x_4 &:&
      \frac12 \frac{1}{\mu_3} &= \frac{1}{16}
      .
    \end{aligned}
  \end{equation}
  (where the five leftmost peakons agree with~\eqref{eq:asymptotic-velocities-example} from the $3+3$ case),
  whereas the outgoing velocites ($t \to +\infty$) are, from right to left,
  \begin{equation}
    \label{eq:asymptotic-velocities-odd-example-posinf}
    \begin{aligned}
      \text{$x_4$ and $y_3$} &:&
      \frac12 \left( \frac{1}{\lambda_1} + \frac{1}{\mu_1} \right) &= 4
      ,\\
      x_3 &:&
      \frac12 \left( \frac{1}{\lambda_1} + \frac{1}{\mu_2} \right) &= \frac{21}{8}
      ,\\
      y_2 &:&
      \frac12 \left( \frac{1}{\lambda_2} + \frac{1}{\mu_2} \right) &= \frac{5}{8}
      ,\\
      x_2 &:&
      \frac12 \left( \frac{1}{\lambda_2} + \frac{1}{\mu_3} \right) &= \frac{9}{16}
      ,\\
      y_1 &:&
      \frac12 \left( \frac{1}{\lambda_3} + \frac{1}{\mu_3} \right) &= \frac{5}{16}
      ,\\
      x_1 &:&
      \frac12 \frac{1}{\lambda_3} &= \frac{1}{4}
      .
    \end{aligned}
  \end{equation}
  Figure~\ref{fig:GX-4+3-interlacing-amplitudes-all} shows the amplitudes for the same
  interlacing $4+3$ solution.
  As usual, the curves $s = \ln m_k(t)$ and $s = -\ln n_k(t)$ asymptotically
  approach straight lines.
  As $t \to -\infty$, the asymptotic slopes are
  (in order):
  \begin{equation}
    \label{eq:asymptotic-slopes-odd-example-neginf}
    \begin{aligned}
      \text{$\ln m_1$ and $-\ln n_1$} &:&
      \frac12 \left( \frac{1}{\lambda_1} - \frac{1}{\mu_1} \right) &= 1
      ,\\
      \ln m_2 &:&
      \frac12 \left( \frac{1}{\lambda_2} - \frac{1}{\mu_1} \right) &= -1
      ,\\
      -\ln n_2 &:&
      \frac12 \left( \frac{1}{\lambda_2} - \frac{1}{\mu_2} \right) &= \frac{3}{8}
      ,\\
      \ln m_3 &:&
      \frac12 \left( \frac{1}{\lambda_3} - \frac{1}{\mu_2} \right) &= \frac{1}{8}
      ,\\
      -\ln n_3 &:&
      \frac12 \left( \frac{1}{\lambda_3} - \frac{1}{\mu_3} \right) &= \frac{3}{16}
      ,\\
      \ln m_4 &:&
      - \frac{1}{2 \mu_3} &= -\frac{1}{8}
      ,
    \end{aligned}
  \end{equation}
  and as $t \to +\infty$, they are
  (in reverse order):
  \begin{equation}
    \label{eq:asymptotic-slopes-odd-example-posinf}
    \begin{aligned}
      \text{$\ln m_4$ and $-\ln n_3$} &:&
      \frac12 \left( \frac{1}{\lambda_1} - \frac{1}{\mu_1} \right) &= 1
      ,\\
      \ln m_3 &:&
      \frac12 \left( \frac{1}{\lambda_1} - \frac{1}{\mu_2} \right) &= \frac{19}{8}
      ,\\
      -\ln n_2 &:&
      \frac12 \left( \frac{1}{\lambda_2} - \frac{1}{\mu_2} \right) &= \frac{3}{8}
      ,\\
      \ln m_2 &:&
      \frac12 \left( \frac{1}{\lambda_2} - \frac{1}{\mu_3} \right) &= \frac{7}{16}
      ,\\
      -\ln n_1 &:&
      \frac12 \left( \frac{1}{\lambda_3} - \frac{1}{\mu_3} \right) &= \frac{3}{16}
      ,\\
      \ln m_1 &:&
      \frac{1}{2 \lambda_3} &= \frac{1}{4}
      .
    \end{aligned}
  \end{equation}
\end{example}

\begin{figure}[H]
  \centering
  \includegraphics[width=13cm]{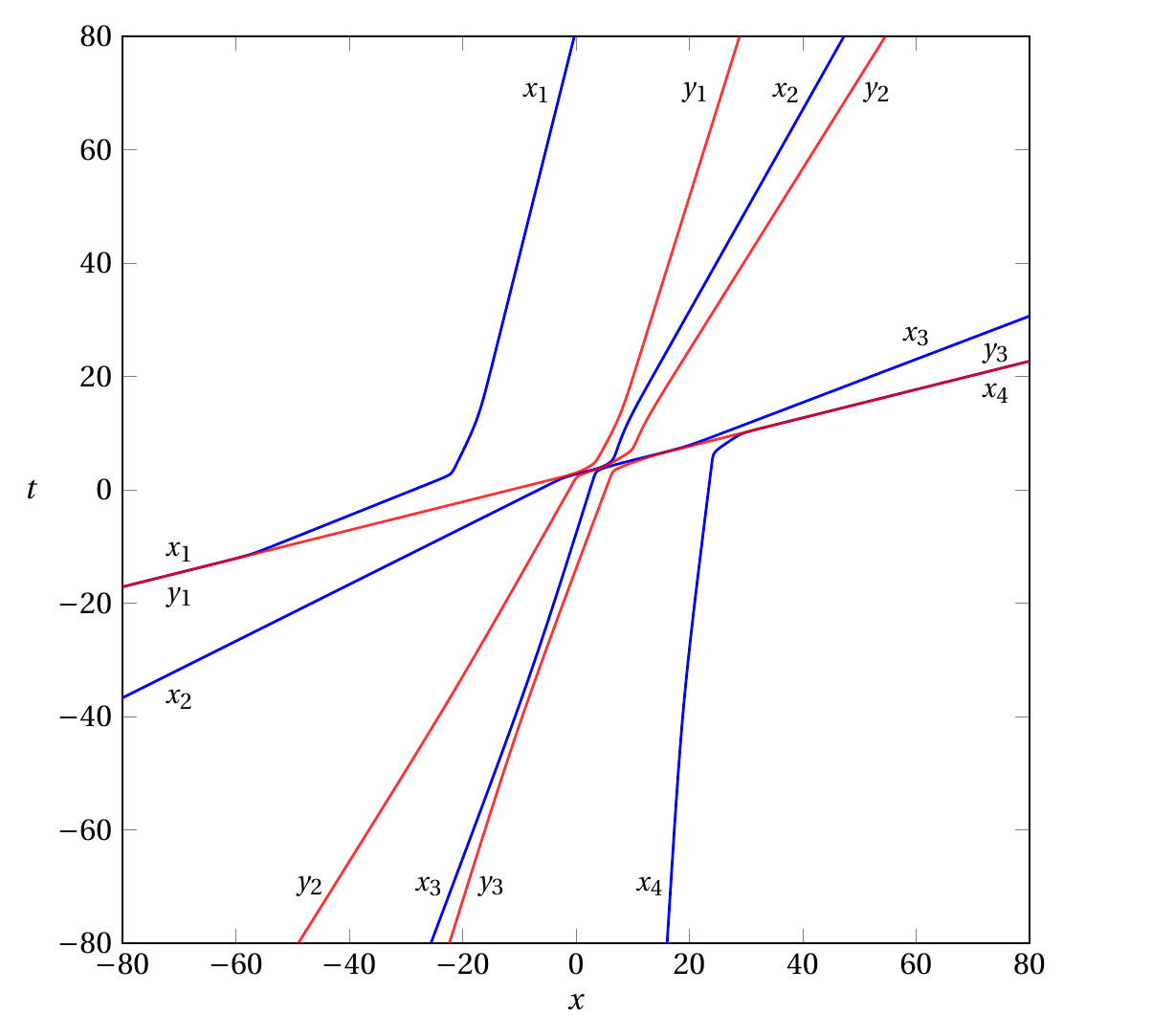}
  \caption{\textbf{Positions in the $4+3$ interlacing case.}
    Positions of the peakons in the $4+3$ interlacing
    solution~\eqref{eq:GX-4+3-interlacing-joint}
    in Example~\ref{ex:GX-4+3-interlacing},
    with the parameter values~\eqref{eq:GX-4+3-interlacing-spectral-data}.
    For the red curves $x = y_k(t)$, the outgoing velocities are the
    same as the incoming velocities, namely $4$, $\frac{5}{8}$ and~$\frac{5}{16}$.
    But the blue curves $x = x_k(t)$ have incoming velocities
    $4$, $2$, $\frac{3}{8}$, $\frac{1}{16}$ that (except for the fastest one)
    are different from the outgoing velocities
    $4$, $\frac{21}{8}$, $\frac{9}{16}$ and~$\frac{1}{4}$;
    see~\eqref{eq:asymptotic-velocities-odd-example-neginf} and~\eqref{eq:asymptotic-velocities-odd-example-posinf}.
  }
  \label{fig:GX-4+3-interlacing-positions-all}
\end{figure}

\begin{figure}[H]
  \centering
  \includegraphics[width=13cm]{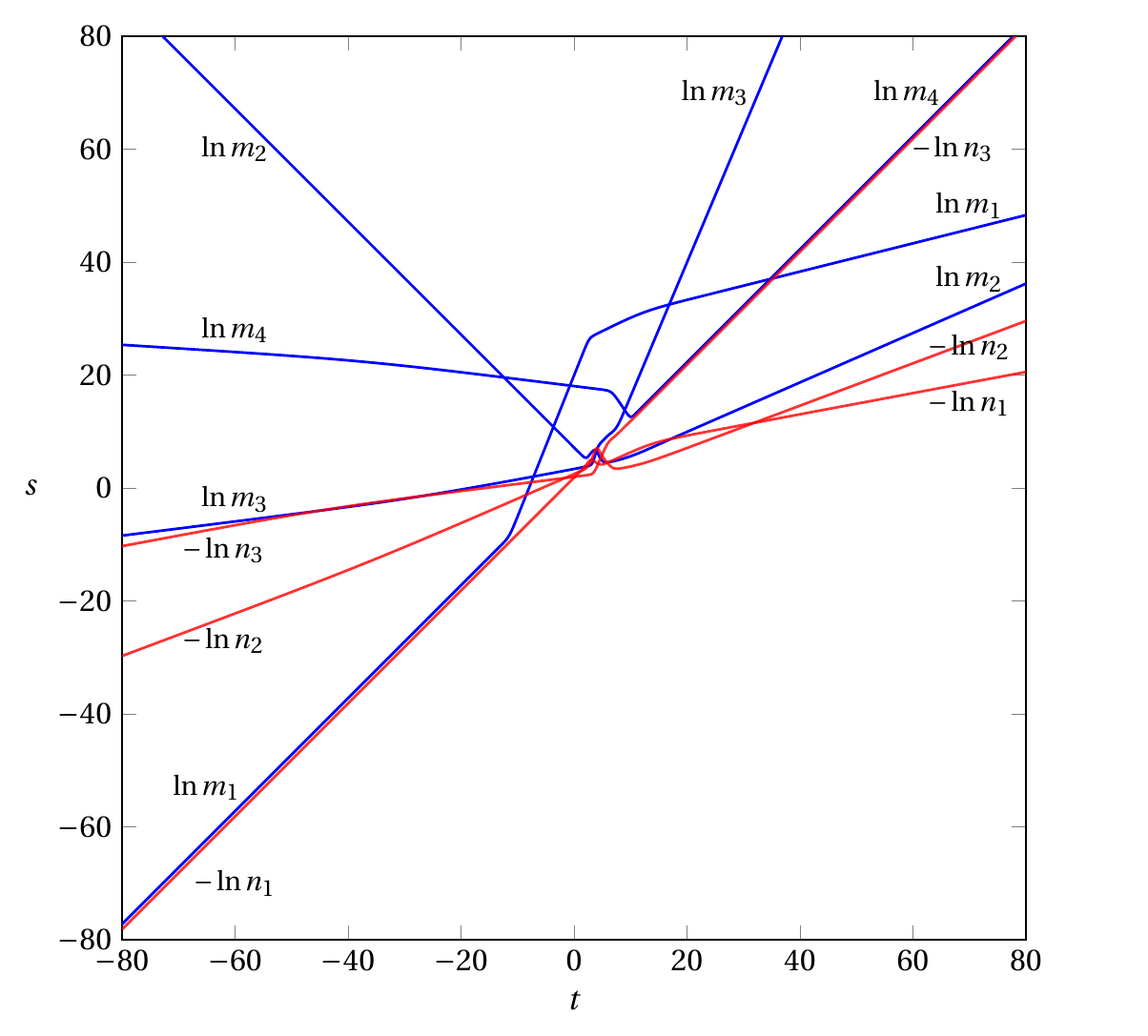}
  \caption{\textbf{Amplitudes in the $4+3$ interlacing case.}
    Amplitudes of the peakons for the same solution as in
    Figure~\ref{fig:GX-4+3-interlacing-positions-all}
    (Example~\ref{ex:GX-4+3-interlacing}).
    For the red curves $s = -\ln n_k(t)$, the asymptotic slopes as $t \to -\infty$
    are equal to those as $t \to +\infty$, namely $1$, $\frac{3}{8}$ and~$\frac{3}{16}$.
    But for the blue curves $s = \ln m_k(t)$, they are
    $1$, $-1$, $\frac{1}{8}$ and~$-\frac{1}{8}$ as $t \to -\infty$
    and
    $1$, $\frac{19}{8}$, $\frac{7}{16}$ and~$\frac{1}{4}$ as $t \to +\infty$;
    see~\eqref{eq:asymptotic-slopes-odd-example-neginf} and~\eqref{eq:asymptotic-slopes-odd-example-posinf}.
  }
  \label{fig:GX-4+3-interlacing-amplitudes-all}
\end{figure}

\begin{example}[Non-singleton groups in the odd case]
  \label{ex:GX-4+3-allgroups}

  When it comes to groups with more than one peakon,
  the odd case is very similar to the even case in terms of formulas.
  For example, if we have $4+3$ groups where the $Y_2$-group contains five peakons,
  then the singleton solution formulas from Example~\ref{ex:GX-4+3-interlacing},
  \begin{equation*}
    Y_2 = \frac{ \detJ_{22}^{00}}{\detJ_{11}^{11}}
    ,\qquad
    P_2 = \frac{\detJ_{11}^{11} \detJ_{12}^{10}}{\detJ_{11}^{01} \detJ_{22}^{01}}
    ,
  \end{equation*}
  are replaced by
  \begin{subequations}
    \label{eq:GX-4+3-typicalY2-solution-joint}
    \begin{equation}
      \label{eq:GX-4+3-typicalY2-solution-positions}
      \begin{aligned}
        Y_{2,1} &= \frac{\detJ_{23}^{00} + \tau_1 \detJ_{22}^{00}}{\detJ_{12}^{11} + \tau_1 \detJ_{11}^{11}}
        ,
        \\[1ex]
        Y_{2,2} &= \frac{\detJ_{23}^{00} + (\tau_1 + \tau_2) \detJ_{22}^{00} + (\tau_2 \sigma_1) \detJ_{12}^{00}}{\detJ_{12}^{11} + (\tau_1 + \tau_2) \detJ_{11}^{11} + \tau_2 \sigma_1 \detJ_{01}^{11}}
        ,
        \\[1ex]
        Y_{2,3} &= \frac{\detJ_{23}^{00} + (\tau_1 + \tau_2 + \tau_3) \detJ_{22}^{00} + (\tau_2 \sigma_1 + \tau_3 \sigma_2) \detJ_{12}^{00}}{\detJ_{12}^{11} + (\tau_1 + \tau_2 + \tau_3) \detJ_{11}^{11} + (\tau_2 \sigma_1 + \tau_3 \sigma_2) \detJ_{01}^{11}}
        ,
        \\[1ex]
        Y_{2,4} &= \frac{\detJ_{23}^{00} + (\tau_1 + \tau_2 + \tau_3 + \tau_4) \detJ_{22}^{00} + (\tau_2 \sigma_1 + \tau_3 \sigma_2 + \tau_4 \sigma_3) \detJ_{12}^{00}}{\detJ_{12}^{11} + (\tau_1 + \tau_2 + \tau_3 + \tau_4) \detJ_{11}^{11} + (\tau_2 \sigma_1 +\tau_3 \sigma_2 + \tau_4 \sigma_3) \detJ_{01}^{11}}
        ,
        \\[1ex]
        Y_{2,5} &= \frac{{\detJ_{22}^{00}} + \sigma_4 \detJ_{12}^{00}} {{\detJ_{11}^{11}} + \sigma_4 \detJ_{01}^{11}}
      \end{aligned}
    \end{equation}
    and
    \begin{equation}
      \label{eq:GX-4+3-typicalY2-solution-amplitudes}
      \begin{aligned}
        P_{2,1} &= \sigma_1 \detJ_{12}^{10} \bigl( \detJ_{12}^{11} +  \tau_1  \detJ_{11}^{11}  \bigr)
        \\ & \quad
        \times \bigl( \detJ_{22}^{01} (\detJ_{22}^{01} + \sigma_1 \detJ_{12}^{01} +  \tau_1 \sigma_1 \detJ_{11}^{01}) \bigr)^{-1}
        ,
        \\[1ex]
        P_{2,2} &= (\sigma_2 - \sigma_1) \detJ_{12}^{10}  \bigl( \detJ_{12}^{11} + (\tau_1 + \tau_2) \detJ_{11}^{11} +  \tau_2 \sigma_1 \detJ_{01}^{11} \bigr)
        \\ & \quad
        \times \bigl( \detJ_{22}^{01} + \sigma_2 \detJ_{12}^{01} + \bigl( \sigma_2 (\tau_1 + \tau_2) - \tau_2 \sigma_1 \bigr) \detJ_{11}^{01} \bigr)^{-1}
        \\ & \quad
        \times \bigl( \detJ_{22}^{01} + \sigma_1 \detJ_{12}^{01} +  \tau_1 \sigma_1 \detJ_{11}^{01} \bigr)^{-1}
        ,
        \\[1ex]
        P_{2,3} &= (\sigma_3 - \sigma_2) \detJ_{12}^{10} \bigl( \detJ_{12}^{11} + (\tau_1 + \tau_2 + \tau_3)  \detJ_{11}^{11} + (\tau_2 \sigma_1 + \tau_3 \sigma_2) \detJ_{01}^{11} \bigr)
        \\ & \quad
        \times \bigl( \detJ_{22}^{01} + \sigma_3 \detJ_{12}^{01} + \bigl( \sigma_3 (\tau_1 + \tau_2 + \tau_3) - (\tau_2 \sigma_1 + \tau_3 \sigma_2) \bigr) \detJ_{11}^{01} \bigr)^{-1}
        \\ & \quad
        \times \bigl( \detJ_{22}^{01} + \sigma_2 \detJ_{12}^{01} + \bigl( \sigma_2 (\tau_1 + \tau_2) - \tau_2 \sigma_1 \bigr) \detJ_{11}^{01} \bigr)^{-1}
        ,
        \\[1ex]
        P_{2,4} &= (\sigma_4 - \sigma_3) \detJ_{12}^{10} \bigl( \detJ_{12}^{11} + (\tau_1 + \tau_2 + \tau_3 + \tau_4) \detJ_{11}^{11} + (\tau_2 \sigma_1 + \tau_3 \sigma_2 + \tau_4 \sigma_3) \detJ_{01}^{11} \bigr)
        \\ & \quad
        \times \bigl( \detJ_{22}^{01} + \sigma_4 \detJ_{12}^{01} + \bigl( \sigma_4 (\tau_1 + \tau_2 + \tau_3 + \tau_4) - (\tau_2 \sigma_1 + \tau_3 \sigma_2 + \tau_4 \sigma_3) \bigr) \detJ_{11}^{01} \bigr)^{-1}
        \\ & \quad
        \times \bigl( \detJ_{22}^{01} + \sigma_3 \detJ_{12}^{01} + \bigl( \sigma_3 (\tau_1 + \tau_2 + \tau_3) - (\tau_2 \sigma_1 + \tau_3 \sigma_2) \bigr) \detJ_{11}^{01} \bigr)^{-1}
        ,
        \\[1ex]
        P_{2,5} &= \frac{\detJ_{12}^{10} \bigl( \detJ_{11}^{11} + \sigma_4 \detJ_{01}^{11} \bigr)}{\detJ_{11}^{01} \bigl( \detJ_{22}^{01} + \sigma_4 \detJ_{12}^{01} + \bigl( \sigma_4 (\tau_1 + \tau_2 + \tau_3 + \tau_4) - (\tau_2 \sigma_1 + \tau_3 \sigma_2 + \tau_4 \sigma_3) \bigr) \detJ_{11}^{01} \bigr)}
        ,
      \end{aligned}
    \end{equation}
  \end{subequations}
  where $\tau_i = \tau^Y_{2,i}$
  and $\sigma_i = \sigma^Y_{2,i}$.
  (Compare with~\eqref{eq:GX-3+3-typicalY2-solution-joint}
  in Example~\ref{ex:GX-3+3-typicalY2}.)

  This means that as $t \to -\infty$, when terms $\detJ_{ij}^{rs}$
  with small $i$ and~$j$ are dominant,
  the $y_{2,1}$-peakon will behave like the singleton~$y_2$,
  while the other four peakons in the $Y_2$-group
  will approach the neighbouring $x_3$-singleton to the right,
  given by $X_3 = \detJ_{12}^{00}/\detJ_{01}^{11}$.
  Likewise, as $t \to +\infty$, when terms $\detJ_{ij}^{rs}$
  with large $i$ and~$j$ are dominant,
  the $y_{2,5}$-peakon will behave like the singleton~$y_2$,
  while the other four will approach the neighbouring $x_2$-singleton to the left,
  given by $X_2 = \detJ_{23}^{00}/\detJ_{12}^{11}$.
  So this aspect of the asymptotics is just like for the even case,
  although this of course also means that the differences in singleton behaviour
  between the even and odd cases will carry over and affect the groups as well.

  As an illustration of all this,
  Figure~\ref{fig:GX-4+3-allgroups-positions} shows the positions $x = x_{k,i}(t)$
  and $x = y_{k_i}(t)$ for a case with $4+3$ groups,
  where we have used the same parameters as
  for the case with $3+3$ groups in Example~\ref{ex:GX-3+3-allgroups},
  with the spectral data $\mu_3 = 8$ and $b_3(0) = 1$
  added as in the $4+3$ interlacing case in Example~\ref{ex:GX-4+3-interlacing},
  together with the group parameters
  \begin{equation}
    \label{eq:GX-4+3-X4-parameters}
    \tau_{4,1}^X = \sigma_{4,1}^X = 10^5
  \end{equation}
  for an $X_4$-group with two peakons.
  
  Regarding the amplitudes,
  as $x \to -\infty$ or $x \to \infty$
  one peakon in each group (the first one or the last one, respectively)
  will follow the corresponding singleton
  (or a parallel line, for the outermost groups $X_1$ and~$X_{K+1}$),
  while the other curves $s = \ln m_{k,i}$ and $s = -\ln n_{k,i}$ follow parallel lines with
  other slopes;
  cf. Example~\ref{ex:GX-3+3-allgroups} for the even case.
  The slopes that occur for these other lines are, as $t \to -\infty$ (in order,
  and with $2 \le i \le N$, where $N$ is the number of peakons in the group in question):
  \begin{equation}
    \label{eq:asymptotic-group-slopes-odd-example-neginf}
    \begin{aligned}
      \ln m_{1,i} &:&
      \frac12 \left( \frac{3}{\lambda_1} - \frac{1}{\mu_1} \right) &= 6
      ,\\
      -\ln n_{1,i} &:&
      \frac12 \left( \frac{1}{\lambda_2} - \frac{3}{\mu_1} \right) &= -4
      ,\\
      \ln m_{2,i} &:&
      \frac12 \left( \frac{3}{\lambda_2} - \frac{1}{\mu_2} \right) &= \frac{11}{8}
      ,\\
      -\ln n_{2,i} &:&
      \frac12 \left( \frac{1}{\lambda_3} - \frac{3}{\mu_2} \right) &= -\frac{1}{8}
      ,\\
      \ln m_{3,i} &:&
      \frac12 \left( \frac{3}{\lambda_3} - \frac{1}{\mu_3} \right) &= \frac{11}{16}
      ,\\
      -\ln n_{3,i} &:&
      -\frac12 \frac{3}{\mu_3} &= -\frac{3}{16}
      ,\\
      \ln m_{4,i} &:&
      &
      \phantom{=}
      \,\,\,\,
      0
      .
    \end{aligned}
  \end{equation}
  And as $t \to +\infty$, they are (in reverse order,
  and with $1 \le i \le N-1$):
  \begin{equation}
    \label{eq:asymptotic-group-slopes-odd-example-posinf}
    \begin{aligned}
      \ln m_{4,i} &:&
      \frac12 \left( \frac{1}{\lambda_1} - \frac{3}{\mu_1} \right) &= -2
      ,\\
      -\ln n_{3,i} &:&
      \frac12 \left( \frac{3}{\lambda_1} - \frac{1}{\mu_2} \right) &= \frac{59}{8}
      ,\\
      \ln m_{3,i} &:&
      \frac12 \left( \frac{1}{\lambda_2} - \frac{3}{\mu_2} \right) &= \frac{1}{8}
      ,\\
      -\ln n_{2,i} &:&
      \frac12 \left( \frac{1}{\lambda_2} - \frac{3}{\mu_3} \right) &= \frac{23}{16}
      ,\\
      \ln m_{2,i} &:&
      \frac12 \left( \frac{1}{\lambda_3} - \frac{3}{\mu_3} \right) &= \frac{1}{16}
      ,\\
      -\ln n_{1,i} &:&
      \frac12 \frac{3}{\lambda_3} &= \frac{3}{4}
      ,\\
      \ln m_{1,i} &:&
      &
      \phantom{=}
      \,\,\,\,
      0
      .
    \end{aligned}
  \end{equation}
  So for both $X$-groups and $Y$-groups (not just $X$-groups),
  these amplitude curves have different asymptotic
  slopes in the two time directions.
  For illustrations,
  showing the amplitudes of the $j$th group from the left
  together with the $j$th group from the right,
  see
  Figures~\ref{fig:GX-4+3-allgroups-amplitudes-Y2},
  \ref{fig:GX-4+3-allgroups-amplitudes-X2-X3},
  \ref{fig:GX-4+3-allgroups-amplitudes-Y1-Y3}
  and~\ref{fig:GX-4+3-allgroups-amplitudes-X1-X4},
  with $Y_2$, $\{ X_2, X_3 \}$, $\{ Y_1, Y_3 \}$ and $\{ X_1, X_4 \}$,
  respectively.
\end{example}

\begin{remark}
  As should be clear from the examples in this section,
  the solutions in the odd case display a kind of symmetry breaking
  which is not present in the even case.
  On the other hand, since there are equally many eigenvalues of each kind
  in the odd case,
  there is the possibility of obtaining a ``symmetric partner solution''
  to a given solution by swapping $\lambda_i \leftrightarrow \mu_i$
  for each~$i \in [1,K]$.
\end{remark}

\begin{figure}[H]
  \centering
  \includegraphics[width=13cm]{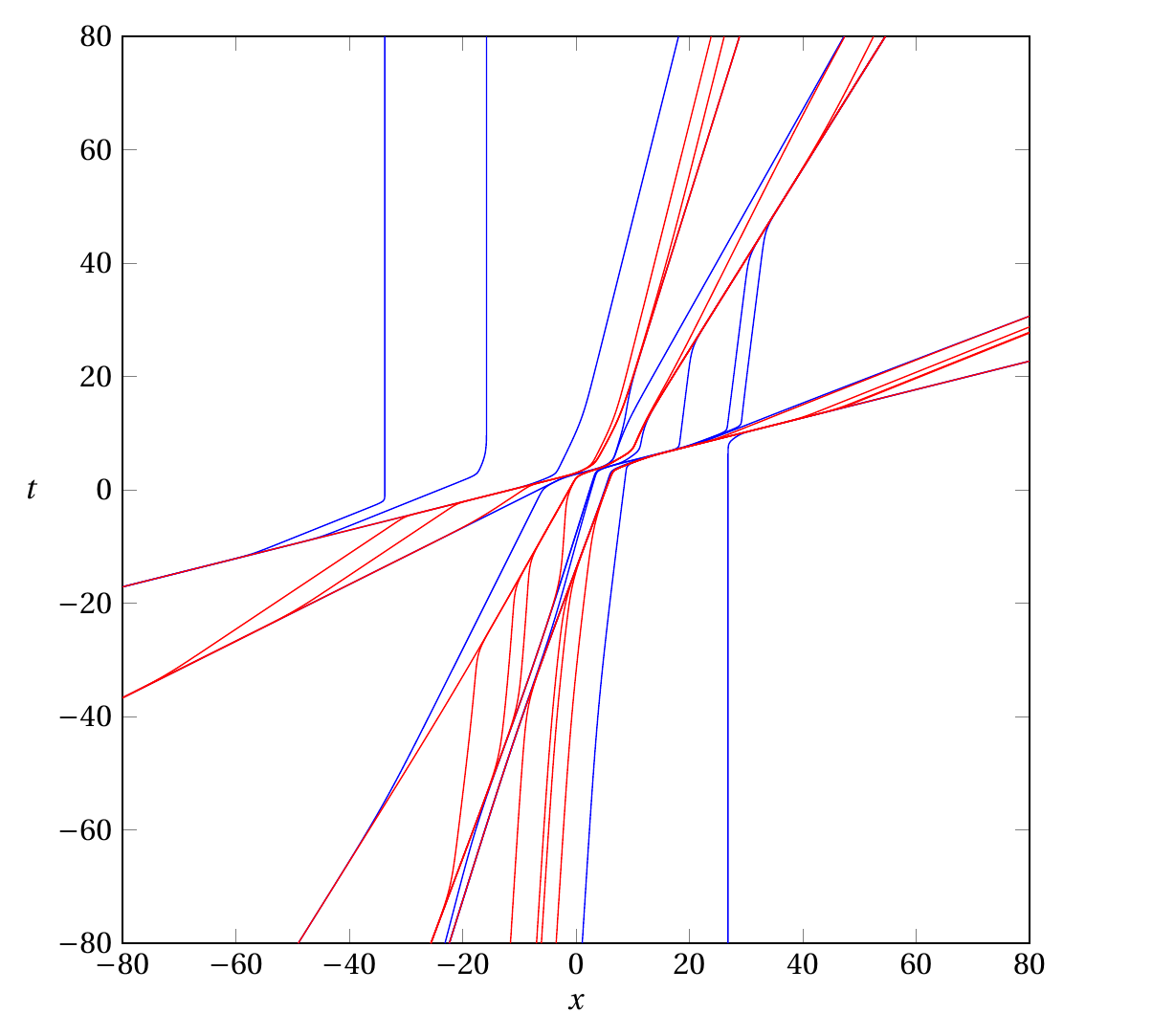}
  \caption{\textbf{Positions in a case with $4+3$ non-singleton groups.}
    Positions of the peakons, $x=x_{k,i}(t)$ and $x=y_{k,i}(t)$,
    for the solution described in Example~\ref{ex:GX-4+3-allgroups}.
    There are $4+3$ groups, all non-singletons,
    with $3+4+2+5+5+5+2$ peakons in total.
    This odd case is similar to the even case (Example~\ref{ex:GX-3+3-allgroups})
    in the way that one peakon in each group
    asymptotically approaches the corresponding singleton curve
    (Figure~\ref{fig:GX-4+3-interlacing-positions-all}, Example~\ref{ex:GX-4+3-interlacing}),
    while the other peakons in the group instead approach a neighbouring singleton curve,
    with the same exceptions near the edges as in the even case.
    But since these asymptotic lines are inherited from the odd interlacing case,
    the odd case with groups differs from the even case in the same way that
    the odd interlacing case differs from the even interlacing case
    (cf. Example~\ref{ex:GX-4+3-interlacing}).
  }
  \label{fig:GX-4+3-allgroups-positions}
\end{figure}

\begin{figure}[H]
  \centering
  \includegraphics[width=13cm]{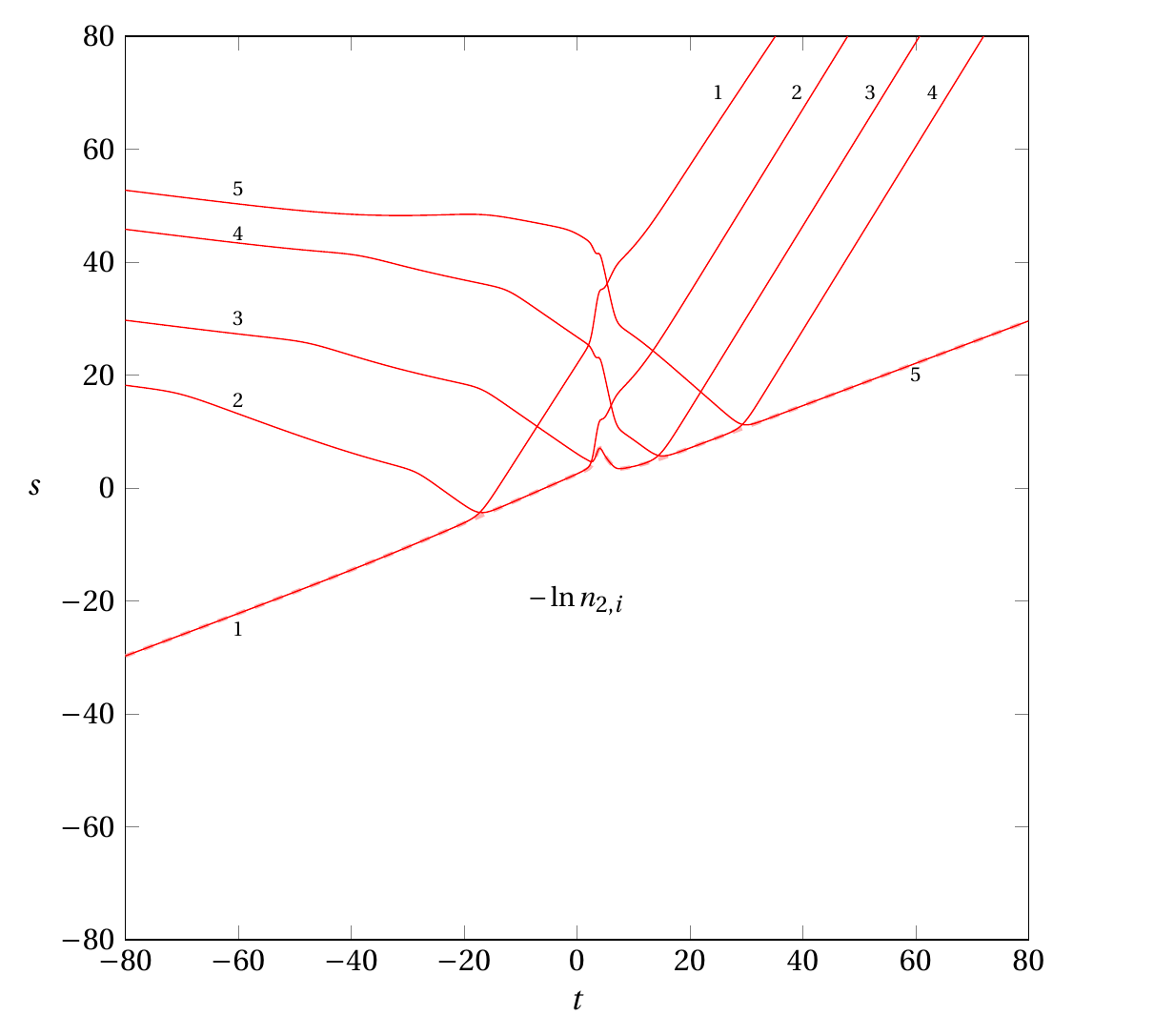}
  \caption{\textbf{Amplitudes for the middle group ($Y_2$).}
    Amplitudes of the five peakons in the $Y_2$-group for the solution with $4+3$ groups
    described in Example~\ref{ex:GX-4+3-allgroups},
    whose positions where shown in Figure~\ref{fig:GX-4+3-allgroups-positions}.
    The solid red curves are $s = -\ln n_{2,i}(t)$,
    and the dashed curve in the background is the corresponding singleton curve
    $s = -\ln n_2(t)$
    from Figure~\ref{fig:GX-4+3-interlacing-amplitudes-all}.
    As $t \to \pm \infty$, one peakon in the group follows the singleton curve asymptotically,
    while the others approach parallel lines with other slopes
    given by
    \eqref{eq:asymptotic-group-slopes-odd-example-neginf}
    and~\eqref{eq:asymptotic-group-slopes-odd-example-posinf},
    namely $-\frac{1}{8}$ as $t \to -\infty$
    and $\frac{23}{16}$ as $t \to +\infty$.
  }
  \label{fig:GX-4+3-allgroups-amplitudes-Y2}
\end{figure}

\begin{figure}[H]
  \centering
  \includegraphics[width=13cm]{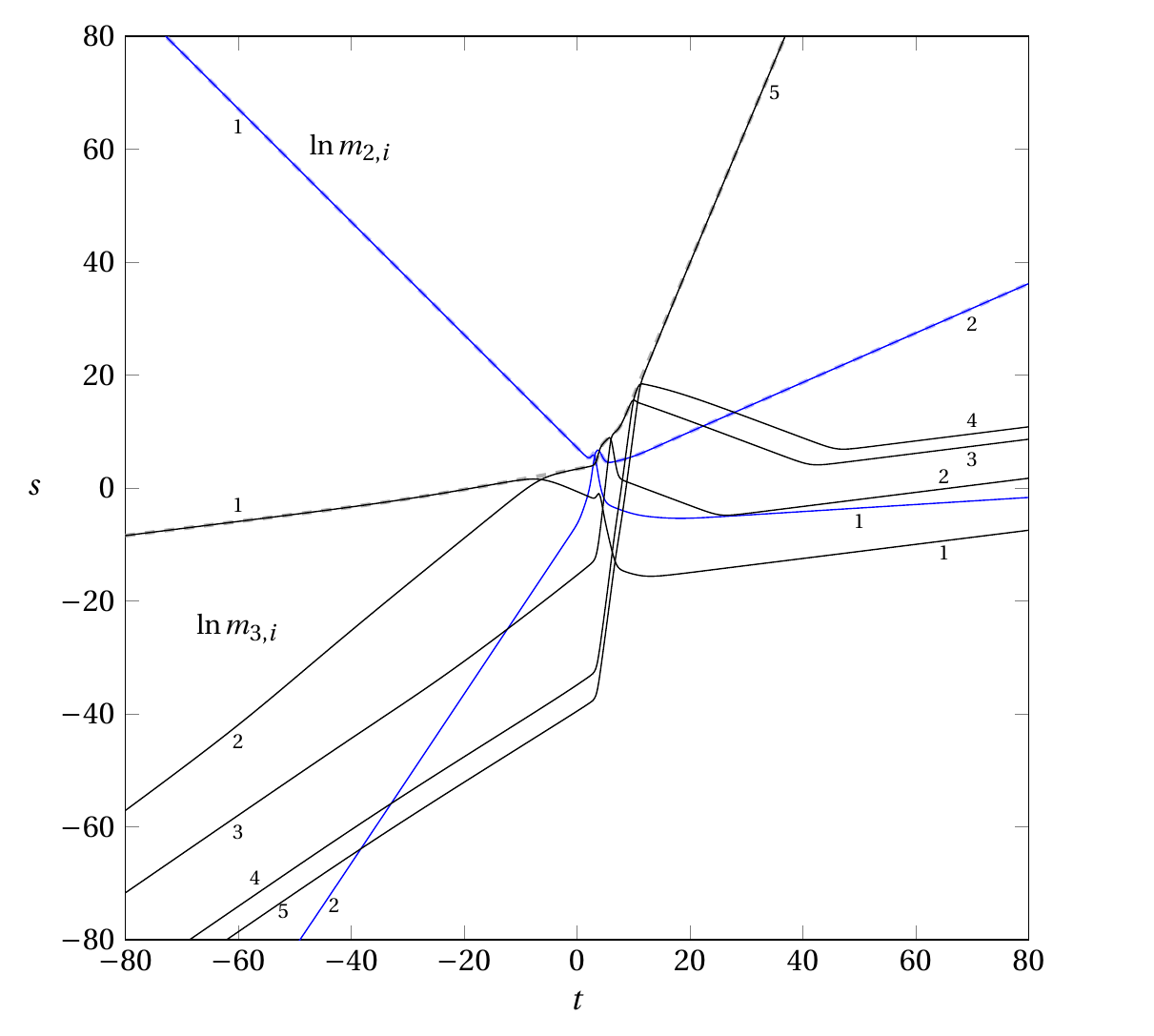}
  \caption{\textbf{Amplitudes for the middle $X$-groups ($X_2$ and~$X_3$).}
    The same as Figure~\ref{fig:GX-4+3-allgroups-amplitudes-Y2}
    but for the typical $X$-groups in the middle: $s = \ln m_{2,i}(t)$ (blue)
    and $s = \ln m_{3,i}(t)$ (black).
    The dashed lines are the corresponding singleton curve
    $s = \ln m_2(t)$ and $s = \ln m_3(t)$
    from Figure~\ref{fig:GX-4+3-interlacing-amplitudes-all}.
    The asymptotic slopes of the curves which do not approach the singleton curves are given by
    \eqref{eq:asymptotic-group-slopes-odd-example-neginf}
    and~\eqref{eq:asymptotic-group-slopes-odd-example-posinf},
    namely $\frac{11}{8}$ (blue) and $\frac{11}{16}$ (black) as $t \to -\infty$,
    and $\frac{1}{16}$ (blue) and $\frac{1}{8}$ (black) as $t \to +\infty$.
  }
  \label{fig:GX-4+3-allgroups-amplitudes-X2-X3}
\end{figure}

\begin{figure}[H]
  \centering
  \includegraphics[width=13cm]{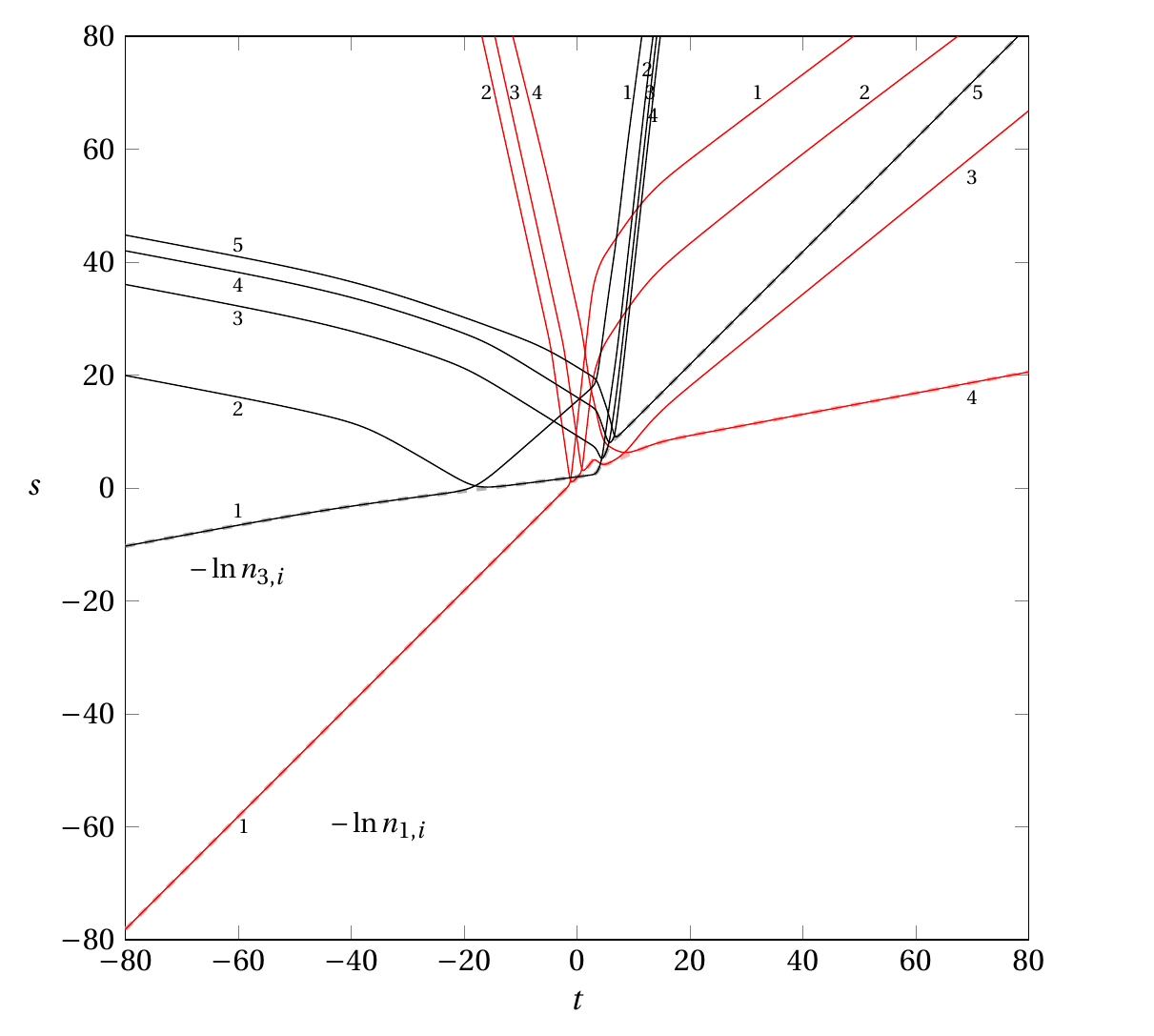}
  \caption{\textbf{Amplitudes for the second leftmost and second rightmost groups ($Y_1$ and~$Y_3$).}
    The same as Figures~\ref{fig:GX-4+3-allgroups-amplitudes-Y2}
    and~\ref{fig:GX-4+3-allgroups-amplitudes-X2-X3},
    but for the second leftmost group ($s = -\ln n_{1,i}(t)$, red)
    and the second rightmost group ($s = -\ln n_{3,i}(t)$, black).
    The asymptotic slopes of the curves which do not approach the (dashed) singleton curves
    are $-4$ (red) and $-\frac{3}{16}$ (black) as $t \to -\infty$,
    and $\frac{3}{4}$ (red) and $\frac{59}{8}$ (black) as $t \to +\infty$.
  }
  \label{fig:GX-4+3-allgroups-amplitudes-Y1-Y3}
\end{figure}

\begin{figure}[H]
  \centering
  \includegraphics[width=13cm]{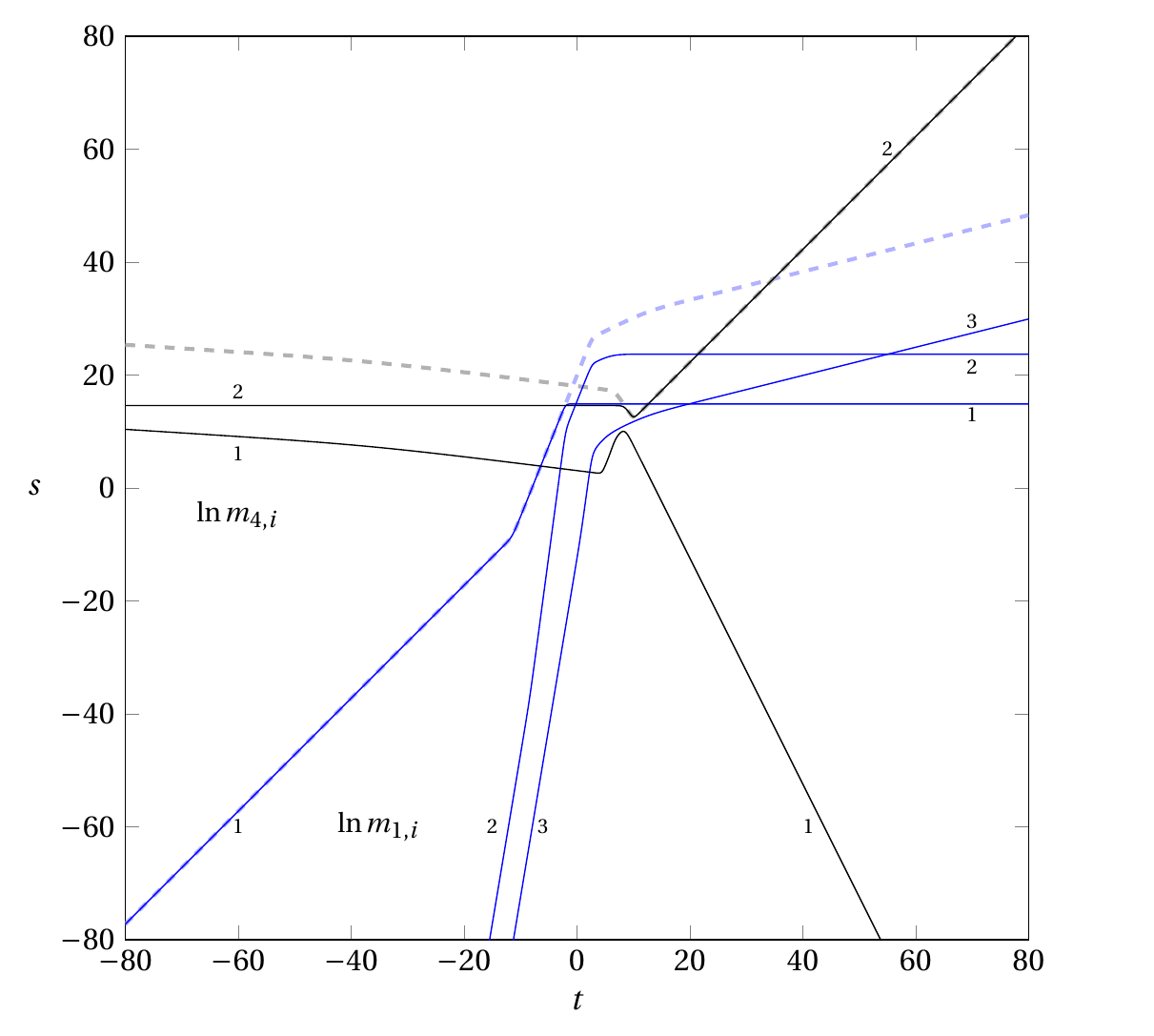}
  \caption{\textbf{Amplitudes for the leftmost and rightmost groups ($X_1$ and~$X_4$).}
    The same as Figures~\ref{fig:GX-4+3-allgroups-amplitudes-Y2},
    \ref{fig:GX-4+3-allgroups-amplitudes-X2-X3}
    and~\ref{fig:GX-4+3-allgroups-amplitudes-Y1-Y3},
    but for the leftmost group ($s = \ln m_{1,i}(t)$, blue)
    and the rightmost group ($s = \ln m_{4,i}(t)$, black).
    As for the outermost groups in the even case,
    one peakon in each group approaches the (dashed) singleton curve
    in one time direction, but only a line \emph{parallel} to the asymptote
    of the singleton curve in the other time direction.
    The asymptotic slopes of the remaining curves
    are $6$ (blue) and $0$ (black) as $t \to -\infty$,
    and $0$ (blue) and $-2$ (black) as $t \to +\infty$.
  }
  \label{fig:GX-4+3-allgroups-amplitudes-X1-X4}
\end{figure}

\section{Solution formulas for the even case  ($2K$ groups)}
\label{sec:solutions-even}

After all these examples, it is now time to turn to the general results.
In this section, we will state all the formulas for
the peakon solutions of the Geng--Xue equation~\eqref{eq:GX} in the even case,
where we have $K$ groups of each kind,
and we assume without loss of generality that the first group is an $X$-group (so that the last group is a $Y$-group).
The proofs will be given in Section~\ref{sec:proofs-even},
and the asymptotics of the solutions as $t \to \pm\infty$
will be studied in Section~\ref{sec:asymptotics-even}.
After that, we will do the same things for the odd case
in Sections~\ref{sec:solutions-odd}, \ref{sec:proofs-odd} and~\ref{sec:asymptotics-odd}, respectively.

To write down the solution formulas for an arbitrary even peakon
configuration, just use the $K+K$ interlacing formulas in
Sections~\ref{sec:solutions-even-X-singleton} and~\ref{sec:solutions-even-Y-singleton} for any
group consisting of a single peakon.
For $X$-groups with more than one peakon,
the formulas are in Section~\ref{sec:solutions-even-X-typical-group},
except that the leftmost group is special, with formulas given in Section~\ref{sec:solutions-even-X-leftmost-group}.
Similarly, for $Y$-groups with more than one peakon, the typical group
is described in Section~\ref{sec:solutions-even-Y-typical-group}, and the exceptional
rightmost group in Section~\ref{sec:solutions-even-Y-rightmost-group}.

We remind the reader about the notation defined in
Section~\ref{sec:more-notation}. For simplicity, when considering a
particular group, for example $X$-group number~$j$,
we will write just $\tau_i$ and $\sigma_i$ instead of $\tau_{j,i}^X$ and $\sigma_{j,i}^X$,
and similarly for the corresponding sums $T_i$, $S_i$ and~$R_i$.
We will also omit the superscript $X$ and $Y$ on $N_j^X$ and $N_j^Y$
when it is clear from the context whether we are talking about an $X$-group or a $Y$-group.
Also remember that all the parameters satisfy the constraints in Section~\ref{sec:more-notation};
we will not state this explicitly in connection with the solution formulas,
but it is understood throughout the paper.

In Sections \ref{sec:solutions-even}, \ref{sec:proofs-even}
and~\ref{sec:asymptotics-even} which deal with the even case,
all the determinants $\detJ_{ij}^{rs}$ defined by~\eqref{eq:heine-integral-as-sum}
will have $A=K$ and $B=K-1$, i.e.,
\begin{equation}
  \detJ_{ij}^{rs} = \detJ[K,K-1,r,s,i,j]
  ,
\end{equation}
and we will make heavy use of the abbreviation
\begin{equation}
  j'= K+1-j
  .
\end{equation}
We will also use the abbreviations
$X_{j,i} = \tfrac12 \exp 2 x_{j,i}$,
etc., from Definition~\ref{def:XYQP-groups}.

\begin{remark}
  Whenever the conditions for some formulas cannot be satisfied,
  such as the inequality $2 \le j \le K$ in case $K=1$,
  or $2 \le i \le N-1$ in case $N=2$,
  then those formulas are simply disregarded as not relevant in that situation.
\end{remark}

\subsection{Solutions for $X$-groups}
\label{sec:solutions-even-X}

The solution formulas for peakons in an $X$-group depend on whether
the group is a singleton with just one peakon, or a true group
consisting of $N_j^X \ge 2$ peakons. Moreover, the solution formulas
for the leftmost $X$-group are different from those for the other
$X$-groups. Also note that the rightmost peakon $X_{j,N_j^X}$ in
each group is given by a separate formula; see for
example~\eqref{eq:even-X-typical-group-pos}.

\subsubsection{$X$-singletons}
\label{sec:solutions-even-X-singleton}

For those groups which consist of a single peakon, the solution
formulas coincide with the formulas derived for the interlacing
peakons by Lundmark and
Szmigielski~\cite{lundmark-szmigielski:2017:GX-dynamics-interlacing},
already given in Theorem~\ref{thm:interlacing-solution} but repeated
here for convenience:
\begin{equation}
  \label{eq:even-X-typical-singleton}
  X_{j'} = \frac{\detJ_{jj}^{00}}{\detJ_{j-1,j-1}^{11}}
  ,\qquad
  Q_{j'} = \frac{\detJ_{j-1,j-1}^{11} \detJ_{j,j-1}^{01}}{\detJ_{jj}^{10} \detJ_{j-1,j-1}^{10}}
  ,
\end{equation}
where $j'=K+1-j$ and $1 \le j \le K-1$,
together with the leftmost $X$-peakon,
\begin{equation}
  \label{eq:even-X-leftmost-singleton}
  X_1 = \frac{\detJ_{K,K-1}^{00}}{\detJ_{K-1,K-2}^{11} + C \, \detJ_{K-1,K-1}^{10}}
  ,\qquad
  Q_1 =
  \frac{M}{L}
  \left( \frac{\detJ_{K-1,K-2}^{11}}{\detJ_{K-1,K-1}^{10}} + C \right)
  ,
\end{equation}
where $L = \prod_{i=1}^K \lambda_i$ and $M = \prod_{j=1}^{K-1} \mu_j$.

\subsubsection{All $X$-groups except the leftmost one (typical $X$-groups)}
\label{sec:solutions-even-X-typical-group}

Next, we give the formulas for $X$-group number $j'=K+1-j$, i.e., the
$j$th $X$-group from the right, in case it contains $N_{j'}=N_{j'}^X
\ge 2$ peakons. Here $1 \le j \le K-1$; the leftmost group $j=K$ is
treated separately below. The solutions for the positions are
\begin{equation}
  \label{eq:even-X-typical-group-pos}
  \begin{aligned}
    X_{j',i} &
    = \frac{\detJ_{j+1,j}^{00} + T_i \detJ_{jj}^{00} + S_i \detJ_{j,j-1}^{00}}{\detJ_{j,j-1}^{11} + T_i \detJ_{j-1,j-1}^{11} + S_i \detJ_{j-1,j-2}^{11}}
    ,\\[1ex]
    X_{j',N_{j'}} &
    = \frac{\detJ_{jj}^{00} + \sigma_{N_{j'}-1} \detJ_{j,j-1}^{00}}{\detJ_{j-1,j-1}^{11} + \sigma_{N_{j'}-1} \detJ_{j-1,j-2}^{11}}
    ,
  \end{aligned}
\end{equation}
and for the amplitudes
\begin{equation}
  \label{eq:even-X-typical-group-amp}
  \begin{split}
    Q_{j',i} &
    = (\sigma_i - \sigma_{i-1}) \, 
    \detJ_{j,j-1}^{01}
    \left( \detJ_{j,j-1}^{11} + T_i \detJ_{j-1,j-1}^{11} + S_i \detJ_{j-1,j-2}^{11} \right)
    \\
    &\quad
    \times
    \left( \detJ_{jj}^{10} + \sigma_i \detJ_{j,j-1}^{10} + R_{i} \detJ_{j-1,j-1}^{10} \right)^{-1}
    \\
    &\quad
    \times
    \left( \detJ_{jj}^{10} + \sigma_{i-1} \detJ_{j,j-1}^{10} + R_{i-1} \detJ_{j-1,j-1}^{10} \right)^{-1}
    ,
    \\[1ex]
    Q_{j',N_{j'}} &
    = \frac{\detJ_{j,j-1}^{01} \left( \detJ_{j-1,j-1}^{11} + \sigma_{N_{j'}-1} \detJ_{j-1,j-2}^{11} \right)}{\detJ_{j-1,j-1}^{10} \left( \detJ_{jj}^{10} + \sigma_{N_{j'}-1} \detJ_{j,j-1}^{10} + R_{N_{j'}-1} \detJ_{j-1,j-1}^{10} \right)}
    ,
  \end{split}
\end{equation}
where $1 \le i \le N_{j'}-1$, and the sums $T_i$, $S_i$, $R_i$ are as in Definition~\ref{def:T-S-R}.

\subsubsection{The leftmost $X$-group}
\label{sec:solutions-even-X-leftmost-group}

If the leftmost $X$-group contains more than one peakon,
then the positions are given by
\begin{equation}
  \label{eq:even-X-leftmost-group-pos}
  \begin{split}
    X_{1,1} &
    = \frac{\detJ_{K,K-1}^{00}}{\detJ_{K-1,K-2}^{11} + \frac{1}{\sigma_1} \detJ_{K-1,K-1}^{11} + C \left( \detJ_{K-1,K-1}^{10} + \frac{1}{\tau_1} \detJ_{K,K-1}^{10} \right)}
    ,
    \\[1ex]
    X_{1,i} &
    = \frac{S_i \detJ_{K,K-1}^{00}}{\detJ_{K,K-1}^{11} + T_i \detJ_{K-1,K-1}^{11} + S_i \detJ_{K-1,K-2}^{11}}
    ,
    \\[1ex]
    X_{1,N_1} &
    = \frac{\sigma_{N_1-1} \detJ_{K,K-1}^{00}}{\detJ_{K-1,K-1}^{11} + \sigma_{N_1-1} \detJ_{K-1,K-2}^{11}}
    ,
  \end{split}
\end{equation}
and the amplitudes by
\begin{equation}
  \label{eq:even-X-leftmost-group-amp}
  \begin{split}
    Q_{1,1} &
    =
    \frac{M}{L}
    \left( \frac{\detJ_{K-1,K-2}^{11} + \frac{1}{\sigma_1} \detJ_{K-1,K-1}^{11}}{\detJ_{K-1,K-1}^{10} + \frac{1}{\tau_1} \detJ_{K,K-1}^{10}} + C \right)
    ,
    \\[1ex]
    Q_{1,i} &
    = (\sigma_i - \sigma_{i-1}) \, \detJ_{K,K-1}^{01} \left( \detJ_{K,K-1}^{11} + T_i \detJ_{K-1,K-1}^{11} + S_i \detJ_{K-1,K-2}^{11} \right)
    \\ & \quad
    \times \left( \sigma_i \detJ_{K,K-1}^{10} + R_{i} \detJ_{K-1,K-1}^{10} \right)^{-1}
    \\ & \quad
    \times \left( \sigma_{i-1} \detJ_{K,K-1}^{10} + R_{i-1} \detJ_{K-1,K-1}^{10} \right)^{-1}
    ,
    \\[1ex]
    Q_{1,N_1} &
    = \frac{\detJ_{K,K-1}^{01} \left( \detJ_{K-1,K-1}^{11} + \sigma_{N_1-1} \detJ_{K-1,K-2}^{11} \right)}{\detJ_{K-1,K-1}^{10} \left( \sigma_{N_1-1} \detJ_{K,K-1}^{10} + R_{N_1-1} \detJ_{K-1,K-1}^{10} \right)}
    ,
  \end{split}
\end{equation}
where $2 \le i \le N_1-1$.
Note that in this group, both the rightmost and leftmost peakons are given by separate formulas.

\subsection{Solutions  for $Y$-groups}
\label{sec:solutions-even-Y}

Like for $X$-groups, the formulas for a $Y$-group depend on whether
the group contains a single peakon or more than one peakon. Moreover,
the rightmost group has special formulas.

\subsubsection{$Y$-singletons}
\label{sec:solutions-even-Y-singleton}

In case the group is a singleton, the solution formulas
coincide with the formulas for interlacing peakons.
With $j'=K+1-j$ as usual, the formulas	for $2 \le j \le K$ are
\begin{equation}
  \label{eq:even-Y-typical-singleton}
  Y_{j'} =
  \frac{\detJ_{j,j-1}^{00}}{\detJ_{j-1,j-2}^{11}}
  ,\qquad
  P_{j'} =
  \frac{\detJ_{j-1,j-1}^{10} \detJ_{j-1,j-2}^{11}}{\detJ_{j,j-1}^{01} \detJ_{j-1,j-2}^{01}}
  ,
\end{equation}
while the formulas for the rightmost peakon are
\begin{equation}
  \label{eq:even-Y-rightmost-singleton}
  Y_{K} = \detJ_{11}^{00} + D \detJ_{10}^{00}
  ,\qquad
  P_{K} = \frac{1}{\detJ_{10}^{00}}
  .
\end{equation}

\subsubsection{All $Y$-groups except the rightmost one (typical $Y$-groups)}
\label{sec:solutions-even-Y-typical-group}

If a $Y$-group contains $N_{j'} = N_{j'}^Y \ge 2$ peakons, where
$2 \le j \le K$, then the formulas for the positions are
\begin{equation}
  \label{eq:even-Y-typical-group-pos}
  \begin{aligned}
    Y_{j',i} &= \frac{\detJ_{jj}^{00} + T_{i} \detJ_{j,j-1}^{00} + S_i \detJ_{j-1,j-1}^{00}}{\detJ_{j-1,j-1}^{11} + T_{i} \detJ_{j-1,j-2}^{11} + S_i \detJ_{j-2,j-2}^{11}}
    ,
    \\[1ex]
    Y_{j',N_{j'}} &=  \frac{\detJ_{j,j-1}^{00} + \sigma_{N_{j'}-1} \detJ_{j-1,j-1}^{00}}{\detJ_{j-1,j-2}^{11} + \sigma_{N_{j'}-1} \detJ_{j-2,j-2}^{11}}
    ,
  \end{aligned}
\end{equation}
and the formulas for the amplitudes are
\begin{equation}
  \label{eq:even-Y-typical-group-amp}
  \begin{split}
    P_{j',i} &
    = (\sigma_i - \sigma_{i-1}) \, \detJ_{j-1,j-1}^{10}
    \\ & \quad
    \times \left( \detJ_{j-1,j-1}^{11} + T_i \detJ_{j-1,j-2}^{11} + S_i \detJ_{j-2,j-2}^{11} \right)
    \\ & \quad
    \times \left( \detJ_{j,j-1}^{01} + \sigma_i \detJ_{j-1,j-1}^{01} + R_{i} \detJ_{j-1,j-2}^{01} \right)^{-1}
    \\ & \quad
    \times \left( \detJ_{j,j-1}^{01} + \sigma_{i-1} \detJ_{j-1,j-1}^{01} + R_{i-1} \detJ_{j-1,j-2}^{01} \right)^{-1},
    \\[1ex]
    P_{j',N_{j'}} &
    =\frac{\detJ_{j-1,j-1}^{10} \left( \detJ_{j-1,j-2}^{11} + \sigma_{N_{j'}-1} \detJ_{j-2,j-2}^{11} \right)}{\detJ_{j-1,j-2}^{01} \left( \detJ_{j,j-1}^{01} + \sigma_{N_{j'}-1} \detJ_{j-1,j-1}^{01} + R_{N_{j'}-1} \detJ_{j-1,j-2}^{01} \right)}
    ,
  \end{split}
\end{equation}
for $1 \le i \le N_{j'}-1$.

\subsubsection{The rightmost $Y$-group}
\label{sec:solutions-even-Y-rightmost-group}

If the rightmost $Y$-group contains $N_K \ge 2$ peakons, the formulas
for the positions are
\begin{equation}
  \label{eq:even-Y-rightmost-group-pos}
  \begin{aligned}
    Y_{K,i} &
    = \detJ_{11}^{00} +  T_{i} \detJ_{10}^{00} + S_i
    ,
    \\[1ex]
    Y_{K,N_K} &
    = \detJ_{11}^{00} + \left( T_{N_K-1} + D \right) \detJ_{10}^{00}
    + \left( S_{N_K-1} + D \, \sigma_{N_K-1} \right)
    ,
  \end{aligned}
\end{equation}
while the amplitudes are given by
\begin{equation}
  \label{eq:even-Y-rightmost-group-amp}
  \begin{aligned}
    P_{K,i} &
    = \frac{\sigma_i - \sigma_{i-1}}{\left( \detJ_{10}^{01} + \sigma_i \right) \left( \detJ_{10}^{01} + \sigma_{i-1} \right)}
    ,
    \\[1ex]
    P_{K,N_K} &
    = \frac{1}{\detJ_{10}^{01} + \sigma_{N_K-1}}
    ,
  \end{aligned}
\end{equation}
for $1 \le i \le N_K-1$.

\section{Proofs  for the even  case}
\label{sec:proofs-even}

In this section we will prove the solution formulas stated in Section~\ref{sec:solutions-even}.
First we will explain the method of the proof, which was already
illustrated in Examples~\ref{ex:proof-technique}
and~\ref{ex:proof-technique2} for some special cases.

\subsection{Method of the proof: ``killing'' peakons}
\label{sec:killing}

To obtain the solution formulas for a specific peakon configuration
with $K_1+K_1$ groups, our starting point is the $K_2+K_2$ interlacing
peakon configuration obtained by adding extra $Y$-peakons between
adjacent $X$-peakons, and vice versa.
(In both these configurations, we are as usual assuming that we begin
with an $X$-group on the left and finish with a $Y$-group on the right.)
Thus, we are inserting $N-1$ new peakons into each group of $N$
peakons, so that the total number of peakons in the interlacing
$K_2+K_2$ configuration will be
\begin{equation*}
  K_2+K_2 = \sum_{j=1}^{K_1} (2N_j^X-1) + \sum_{j=1}^{K_1} (2N_j^Y-1)
  .
\end{equation*}
We know the solution formulas \eqref{eq:interlacing-solution-positions}
and~\eqref{eq:interlacing-solution-amplitudes}
for this interlacing configuration, with the determinants
$\detJ_{ij}^{rs}= \detJ[K_2,K_2-1,r,s,i,j]$.
Next, we successively kill off the extra peakons that we inserted,
one by one, from the right, until we reach the desired configuration again.
By \emph{killing} a peakon, we mean that we turn it into a ``ghostpeakon''
whose amplitude $m_{j,i}(t)$ or $n_{j,i}(t)$ is identically zero,
by making a substitution of the form~\eqref{eq:substitution-kill-peakon-even} below,
and letting $\epsilon \to 0$.
Such a peakon does not influence the dynamics of the other peakons, so
by simply disregarding all ghostpeakons at the end of the process,
whatever remains of the solution formulas will be the solution that we
seek.

In each step during the procedure, if we forget about the peakons we
have already killed, we have groups (with one or more peakons each) to
the right of the peakon we are about to kill, but only interlacing
peakons (singletons) on its left.

As explained in Example~\ref{ex:proof-technique2}, it suffices to show
that our general solution formulas ``survive'' in each such step,
i.e., if the solution before killing a peakon is given by the solution
formulas for $K+K$ groups, with $N$ peakons in the group just to the
right of the peakon being killed, then the solution after the killing
is given by the solution formulas for $(K-1)+(K-1)$ groups, with $N+1$
peakons in the newly formed group.

\begin{remark}
  \label{rem:hats}
  We will use \textbf{symbols with hats} for quantities related to
  the original configuration with $K+K$ groups before
  killing the peakon,
  while \textbf{symbols without hats} refer to the configuration
  with $(K-1)+(K-1)$ groups after the killing.
  For example, $\hat{y}_{K,i}(t)$ is the position of the $i$th peakon
  in the rightmost group in the original configuration.
  After killing another peakon to the left of it,
  this group will be $Y$-group number $K-1$,
  so the new notation becomes
  \begin{equation*}
    y_{K-1,i}(t) = \lim_{\epsilon \to 0} \hat{y}_{K,i}(t)
    .
  \end{equation*}
  Some care is needed if using the notation~$j'$ for ``the $j$th object from the right'',
  since this will be number $K+1-j$ from the left before killing, but number $(K-1)+1-j$ afterwards.
  Thus, for example,
  \begin{equation}
    \label{eq:j-prime-warning}
    \hat{X}_{j'} = \hat{X}_{K+1-j}
    \qquad\text{but}\qquad
    X_{j'} = X_{K-j}
    .
  \end{equation}
\end{remark}

\begin{remark}
  The inspiration for this method comes from our previous
  paper~\cite{lundmark-shuaib:2018p:ghostpeakons}, where we used
  ghostpeakons to compute the characteristic curves for peakon
  solutions of the Camassa--Holm, Degasperis--Procesi and Novikov
  equations.
\end{remark}

\subsection{Preparations}

In the proof, we will need to know what becomes of the determinants
\begin{equation*}
  \hat{\detJ}_{ij}^{rs} = \detJ[K,K-1,r,s,i,j]
\end{equation*}
(cf. Definition~\ref{def:heineintegral} and Remark~\ref{rem:hats})
when we make a substitution of the form
\begin{equation}
  \label{eq:substitution-kill-peakon-even}
  \begin{aligned}
    \lambda_K &= \frac{\alpha}{\epsilon}
    ,
    &
    \mu_{K-1} &= \frac{\beta}{\epsilon}
    ,
    \\
    a_K &= Z \, \epsilon^{k_1}
    ,
    &
    b_{K-1} &= W \, \epsilon^{k_2}
    ,
  \end{aligned}
\end{equation}
and in particular what happens as we let $\epsilon \to 0$ afterwards.
This is the content of Lemma~\ref{lem:heineintegral-epsilon} below,
which is our primary technical tool.

\begin{remark}
  When we apply Lemma~\ref{lem:heineintegral-epsilon} later,
  the numbers $k_1$ and~$k_2$
  in the substitution~\eqref{eq:substitution-kill-peakon-even}
  will just be some integers,
  while $\alpha$, $\beta$, $Z$ and~$W$
  will be expressions depending on three new parameters $\tau_1$, $\sigma_1$ and~$\theta$,
  which together with the parameter~$\epsilon$
  are replacing the four original parameters
  $\lambda_K$, $\mu_{K-1}$, $a_K$ and~$b_{K-1}$.
\end{remark}

\begin{remark}
  As we already explained in Example~\ref{ex:proof-technique},
  the substitutions are actually
  $a_K(0) = Z \, \epsilon^{k_1}$
  and
  $b_{K-1}(0) = W \, \epsilon^{k_2}$,
  but for simplicity of notation we will
  write just $a_K$ and~$b_{K-1}$.
  The quantities $a_K(t)$ and $a_K(0)$
  only differ by a factor $e^{t \epsilon/\alpha}$
  which tends to~$1$ as $\epsilon \to 0$,
  and similarly for $b_{K-1}(t)$ and~$b_{K-1}(0)$.
\end{remark}

\begin{lemma}
  \label{lem:heineintegral-epsilon}
  Let
  \begin{equation}
    \hat{\detJ}_{ij}^{rs}=\detJ[K,K-1,r,s,i,j]
    \qquad\text{and}\qquad
    \detJ_{ij}^{rs}=\detJ[K-1,K-2,r,s,i,j]
    .
  \end{equation}
  If we reparametrize the spectral data according to~\eqref{eq:substitution-kill-peakon-even},
  then, as $\epsilon \to 0$, the determinant $\hat{\detJ}_{ij}^{rs}$ satisfies
  \begin{equation}
    \begin{split}
      \label{eq:heineintegral-epsilon}
      \hat{\detJ}_{ij}^{rs}
      &
      = \detJ_{ij}^{ rs}
      \\ &
      + \detJ_{i-1,j}^{ rs} \, \bigl( Z \alpha^{q_1} \bigr) \, \epsilon^{k_1-q_1} \, \bigl( 1 + \mathcal{O}(\epsilon) \bigr)
      \\ &
      + \detJ_{i,j-1}^{ rs} \, \bigl( W \beta^{q_2} \bigr) \, \epsilon^{k_2-q_2} \, \bigl( 1 + \mathcal{O}(\epsilon) \bigr)
      \\ &
      + \detJ_{i-1,j-1}^{rs} \, \bigl( Z \alpha^{q_1} \bigr) \, \bigl( W \beta^{q_2} \bigr) \,
      \left( \frac{\alpha \, \beta}{\alpha +\beta} \right) \, \epsilon^{(k_1-q_1) + (k_2- q_2) - 1} \,
      \bigl( 1 + \mathcal{O}(\epsilon) \bigr)
      ,
    \end{split}
  \end{equation}
  where
  \begin{equation}
    q_1 = 2(i-1) - j + r
    ,\qquad
    q_2 = 2(j-1) - i + s
    .
  \end{equation}
\end{lemma}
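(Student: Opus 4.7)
The plan is a direct expansion of the double sum, splitting off the contributions from the largest eigenvalues $\lambda_K$ and $\mu_{K-1}$. Recall that
\[
\hat{\detJ}_{ij}^{rs}
= \sum_{I \in \binom{[1,K]}{i}} \sum_{J \in \binom{[1,K-1]}{j}}
\Psi_{IJ}\,\lambda_I^r a_I\,\mu_J^s b_J .
\]
I would split the sum into the four disjoint cases according to whether $K \in I$ and whether $K-1 \in J$, writing $I = I' \cup \{K\}$ (with $I' \in \binom{[1,K-1]}{i-1}$) when $K \in I$, and $J = J' \cup \{K-1\}$ (with $J' \in \binom{[1,K-2]}{j-1}$) when $K-1 \in J$. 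The case $K \notin I$ and $K-1 \notin J$ reproduces exactly the smaller determinant $\detJ_{ij}^{rs} = \detJ[K-1,K-2,r,s,i,j]$; this is the first term in~\eqref{eq:heineintegral-epsilon}.

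For the remaining three cases I would peel the extra factors out of $\Psi_{IJ}$ using
\[
\Delta_{I' \cup \{K\}}^2 = \Delta_{I'}^2 \prod_{a \in I'} (\lambda_a-\lambda_K)^2,
\qquad
\Gamma_{(I' \cup \{K\})\, J} = \Gamma_{I'J}\prod_{m \in J}(\lambda_K+\mu_m),
\]
and the analogous identities for $J = J' \cup \{K-1\}$, and then expand each product in powers of $1/\lambda_K$ and $1/\mu_{K-1}$. The key observation is that, under~\eqref{eq:substitution-kill-peakon-even}, all ratios $\lambda_a/\lambda_K$, $\mu_m/\lambda_K$, etc., are of order $\epsilon$, so each product collapses to its leading monomial times $1+\mathcal{O}(\epsilon)$. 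In the case $K\in I$, $K-1\notin J$ this gives
\[
\Psi_{IJ}\,\lambda_I^r a_I\,\mu_J^s b_J
= \Psi_{I'J}\,\lambda_{I'}^r a_{I'}\,\mu_J^s b_J \cdot \lambda_K^{2(i-1)-j+r} a_K \cdot \bigl(1 + \mathcal{O}(\epsilon)\bigr),
\]
and after summing over $I',J$ and substituting $\lambda_K=\alpha/\epsilon$, $a_K = Z\epsilon^{k_1}$ one recognises $\detJ_{i-1,j}^{rs}$ times $Z\alpha^{q_1}\epsilon^{k_1-q_1}(1+\mathcal{O}(\epsilon))$, which is the second term of~\eqref{eq:heineintegral-epsilon}. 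The case $K\notin I$, $K-1\in J$ is completely symmetric and yields the third term.

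The only genuinely nontrivial step is the ``cross'' case where both $K \in I$ and $K-1 \in J$. Here $\Gamma_{IJ}$ carries the extra factor $(\lambda_K+\mu_{K-1})$, which \emph{cannot} be replaced by either $\lambda_K$ or $\mu_{K-1}$ alone, since both tend to infinity at the same rate. Under the substitution we get $\lambda_K+\mu_{K-1} = (\alpha+\beta)/\epsilon$, and the remaining large products expand cleanly as
\[
\prod_{a\in I'}(\lambda_a-\lambda_K)^2 \sim \lambda_K^{2(i-1)},\quad
\prod_{c\in J'}(\mu_c-\mu_{K-1})^2 \sim \mu_{K-1}^{2(j-1)},
\]
\[
\prod_{m\in J'}(\lambda_K+\mu_m)\sim \lambda_K^{j-1},\quad
\prod_{n\in I'}(\lambda_n+\mu_{K-1})\sim \mu_{K-1}^{i-1}.
\]
Collecting exponents gives an overall prefactor $\lambda_K^{q_1+1}\mu_{K-1}^{q_2+1}/(\lambda_K+\mu_{K-1})$, and substituting~\eqref{eq:substitution-kill-peakon-even} rewrites this as $(Z\alpha^{q_1})(W\beta^{q_2})\,\frac{\alpha\beta}{\alpha+\beta}\,\epsilon^{(k_1-q_1)+(k_2-q_2)-1}(1+\mathcal{O}(\epsilon))$, multiplying what is left of the double sum, namely $\detJ_{i-1,j-1}^{rs}$. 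This produces the fourth term.

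The main obstacle is bookkeeping: one must verify that the $\epsilon$-independent tails $\prod(1-\lambda_a/\lambda_K)^2$, $\prod(1+\mu_m/\lambda_K)^{-1}$, etc., and the factor $e^{t\epsilon/\alpha}$ hidden inside $a_K(t)$ (and likewise for $b_{K-1}(t)$; cf. the parenthetical remark after \eqref{eq:replace-spectral-data-3+3}) all contribute only multiplicative $1+\mathcal{O}(\epsilon)$ corrections, so that the four terms combine into~\eqref{eq:heineintegral-epsilon} with the stated error. No issue arises when $i=0$, $j=0$, $i=K$ or $j=K-1$, since in those degenerate situations the empty binomial coefficient convention in Definition~\ref{def:heineintegral} kills the corresponding term automatically.
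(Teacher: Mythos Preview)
Your proposal is correct and follows essentially the same approach as the paper: split the double sum into four cases according to whether $K\in I$ and $K-1\in J$, peel off the factors involving $\lambda_K$ and $\mu_{K-1}$, expand, and substitute. Your write-up is in fact more detailed than the paper's own proof, which simply displays the four-case decomposition and then says ``using the substitution~\eqref{eq:substitution-kill-peakon-even} in this equation, and factoring out the lowest power of~$\epsilon$ in each sum, we get~\eqref{eq:heineintegral-epsilon}.''
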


\begin{proof}
  In the definition of $\hat{\detJ}_{ij}^{rs}$ from
  equation~\eqref{eq:heine-integral-as-sum}, we split the sums into four cases,
  according to whether or not $K \in I$ or $K-1 \in J$:
  \begin{equation}
    \begin{split}
      \hat{\detJ}_{ij}^{rs} &=
      \sum_{I \in \binom{[K]}{i}} \sum_{J \in \binom{[K-1]}{j}}
      \Psi_{IJ} \, \lambda_I^r a_I \, \mu_J^s \, b_J
      \\
      &= \sum_{I' \in \binom{[K-1]}{i}} \sum_{J' \in \binom{[K-2]}{j}}
      \Psi_{I' J'} \, \lambda_{I'}^r \, a_{I'} \, \mu_{J'}^s \, b_{J'}
      \\ & \quad
      + \sum_{{I'} \in \binom{[K-1]}{i-1}} \sum_{J' \in \binom{[K-2]}{j}}
      \Psi_{I' J'} \, \lambda_{I'}^r a_{I'} \, \mu_{J'}^s b_{J'}
      \left( \lambda_K^r \, a_K \, \right)
      \left( \frac{\lambda_K^{2(i-1)}}{\lambda_K^j} \right)
      \left( \frac{\prod^{i-1}_{a=1} \bigl( \lambda_K^{-1} \lambda _a - 1 \bigr)^2}
        {\prod^{j}_{b=1} \bigl( \frac{\mu_b}{\lambda_K} + 1 \bigr)} \right)
      \\ & \quad
      + \sum_{I' \in \binom{[K-1]}{i}} \sum_{J' \in \binom{[K-2]}{j-1}}
      \Psi_{I' J'} \, \lambda_{I'}^r a_{I'} \, \mu_{J'}^s b_{J'} \,
      \left( \mu_{K-1}^s \, b_{K-1} \right)
      \left( \frac{\mu_{K-1}^{2(j-1)}}{\mu_{K-1}^i} \right)
      \left( \frac{\prod^{j-1}_{b=1} \bigl( \mu_{K-1}^{-1} \mu_b - 1 \bigr)^2}
        {\prod^{i}_{a=1} \bigl( \frac{\lambda_a}{\mu_{K-1}} + 1 \bigr)} \right)
      \\ & \quad
      + \sum_{I' \in \binom{[K-1]}{i-1}} \sum_{J' \in \binom{[K-2]}{j-1}}
      \Psi_{I' J'} \, \lambda_{I'}^r a_{I'} \, \mu_{J'}^s b_{J'} \,
      \left( \lambda_K^r \, a_K \right) \left( \mu_{K-1}^s \,  b_{K-1} \right)
      \left( \frac{\lambda_K^{2(i-1)}}{\lambda_K^{j-1}} \right)
      \left( \frac{\mu_{K-1}^{2(j-1)}}{\mu_{K-1}^{i-1}} \right)
      \\ & \qquad \qquad
      \times \frac{1}{\lambda_K + \mu_{K-1}}
      \left( \frac{\prod_{a=1}^{i-1} \bigl( \lambda_K^{-1} \lambda _a - 1 \bigr)^2}
        {\prod_{b=1}^{j-1} \bigl( \frac{\mu_b}{\lambda_K} + 1 \bigr)} \right)
      \left( \frac{\prod_{b=1}^{j-1} \bigl( \mu_{K-1}^{-1} \mu_b - 1 \bigr)^2}
        {\prod_{a=1}^{i-1} \bigl( \frac{\lambda_a}{\mu_{K-1}} + 1 \bigr)} \right)
      .
    \end{split}
  \end{equation}
  Using the substitution~\eqref{eq:substitution-kill-peakon-even} in this equation,
  and factoring out the lowest power of~$\epsilon$ in each sum,
  we get~\eqref{eq:heineintegral-epsilon}.
\end{proof}

\begin{remark}
  \label{rem:dominant-contribution}
  According to Lemma~\ref{lem:heineintegral-epsilon},
  the dominant contribution as $\epsilon \to 0$ will come from the term (or terms) with
  the lowest power of~$\epsilon$:
  \begin{equation}
    \min\bigl( 0, k_1-q_1, k_2-q_2, (k_1-q_1)+(k_2-q_2)-1 \bigr)
    .
  \end{equation}
\end{remark}

\begin{remark}
  \label{rem:q1-q2-invariant}
  The numbers $q_1$ and~$q_2$ in Lemma~\ref{lem:heineintegral-epsilon}
  do not change if the upper indices are both increased by one
  and simultaneously the lower indices are both decreased by one.
  That is, the matrices
  \begin{equation*}
    \hat{\detJ}_{ij}^{rs}
    \qquad\text{and}\qquad
    \hat{\detJ}_{i-1,j-1}^{r+1,s+1}
  \end{equation*}
  have the same $q_1$ and~$q_2$.
  This will be used frequently in the proof to save work,
  since in many of the solution formulas a term $\hat{\detJ}_{ij}^{rs}$
  in the numerator is accompanied by a term $\hat{\detJ}_{i-1,j-1}^{r+1,s+1}$
  at the corresponding position in the denominator.
\end{remark}

\subsection{How to kill a $Y$-singleton}
\label{sec:kill-Y-even}

Assume that we are at a specific stage of the procedure described
in Section~\ref{sec:killing},
where we currently have $K+K$ groups, where $K \ge 2$,
and we want to kill the singleton $\hat{Y}_{K-j}$-peakon which
constitutes $Y$-group number~$j+1$ from the right.
The $\hat{Y}_K$-peakon on the far right ($j=0$) is only ever killed in order
to enter the odd case treated in Section~\ref{sec:proofs-odd},
so here in the even case we assume
\begin{equation}
  \label{eq:assumption-j-while-killing-Y}
  1 \le j \le K-1
  .
\end{equation}
The effect of killing this $Y$-peakon is that the $\hat{X}_{K+1-j}$-group,
consisting of
\begin{equation*}
  N = \hat{N}_{K+1-j}^X
\end{equation*}
peakons,
and having
(if $N \ge 2$)
the internal parameters
\begin{equation}
  \hat{\tau}_i = \hat{\tau}_{K+1-j,i}^X
  ,\qquad
  \hat{\sigma}_i = \hat{\sigma}_{K+1-j,i}^X
  ,\qquad
  1 \le i \le N-1
  ,
\end{equation}
will be joined with the singleton $\hat{X}_{K-j}$ to form a new group,
the $X_{K-j}$-group, containing $N_{K-j}^X = N+1$ peakons,
and having
(for any $N \ge 1$)
the internal parameters
\begin{equation}
  \tau_i=\tau_{K-j,i}^X
  ,\qquad
  \sigma_i=\sigma_{K-j,i}^X
  ,\qquad
  1 \le i \le N
  .
\end{equation}
For example, using the schematic notation from Example~\ref{ex:proof-technique2},
the configuration may be
\begin{equation*}
  \X \Y \X \Y \X \Y \X \Y
  \underbrace{\X \Y \overbrace{\X \X \X \X}^N}
  \Y \Y \Y \Y
  \X \X \X
  \Y
  \X \X \X
  \Y
  \X
  \Y \Y
  \X \X \X \X
  \Y
  \X \X
  \Y \Y \Y
  ,
\end{equation*}
which after the killing becomes
\begin{equation*}
  \X \Y \X \Y \X \Y \X \Y
  \underbrace{\X \Z \X \X \X \X}_{N+1}
  \Y \Y \Y \Y
  \X \X \X
  \Y
  \X \X \X
  \Y
  \X
  \Y \Y
  \X \X \X \X
  \Y \X \X
  \Y \Y \Y
  .
\end{equation*}
To achieve this,
we use a substitution of the form~\eqref{eq:substitution-kill-peakon-even} with
\begin{equation}
  \label{eq:substitution-kill-Y-even}
  \begin{gathered}
    k_1 = j
    ,\qquad
    k_2 = j-2
    ,
    \\[1ex]
    \alpha = \frac{\theta}{\tau_1}
    ,\qquad
    \beta = 1
    ,\qquad
    Z = \frac{(\alpha + 1) \, \tau_1}{\alpha^{j}}
    = (\tau_1 + \theta) (\tau_1 / \theta)^j
    ,\qquad
    W = \sigma_1
    ,
  \end{gathered}
\end{equation}
and if $N \ge 2$ we also redefine the internal parameters,
\begin{equation}
  \label{eq:redefine-parameters-sigma-tau}
  \begin{aligned}
    \hat{\tau}_1 &= \tau_2 - \theta
    ,
    \\
    \hat{\tau}_i &= \tau_{i+1}
    ,&& 2 \le i \le N-1
    ,
    \\
    \hat{\sigma}_i &= \sigma_{i+1} - \sigma_1
    ,&& 1 \le i \le N-1
    .
  \end{aligned}
\end{equation}
We apply these substitutions in the solution formulas for $K+K$
groups that describe the current configuration.
The effect is to replace the $2N+2$ parameters
\begin{equation*}
  \lambda_K
  ,\quad
  \mu_{K-1}
  ,\quad
  a_K
  ,\quad
  b_{K-1}
  ,\quad
  \{ \hat{\sigma}_i, \hat{\tau}_i \}_1^{N-1}
\end{equation*}
with the equivalent set of $2N+2$ parameters
\begin{equation*}
  \epsilon
  ,\quad
  \theta
  ,\quad
  \{ \sigma_i, \tau_i \}_1^{N}
\end{equation*}
It is clear that the old parameters are positive
and the $\hat{\sigma}_i$ are increasing
if and only if the new parameters are positive and
the $\sigma_i$ are increasing.
(In particular, when $N=2$, the requirement $\hat{\sigma}_1 > 0$
gives rise to $\sigma_2 > \sigma_1$ to begin with.)

We then let ${\epsilon \to 0}$.
As we will show, this turns the $\hat{Y}_{K-j}$-peakon into a ghostpeakon,
which we discard.
After relabeling the other peakons,
we obtain the new configuration with $(K-1)+(K-1)$ groups,
whose $X_{K-j}$-group is made up of
\begin{equation}
  \begin{aligned}
    X_{K-j,1} &= \lim_{\epsilon \to 0} \hat{X}_{K-j}
    ,\\
    X_{K-j,i} &= \lim_{\epsilon \to 0} \hat{X}_{K+1-j,i-1}
    ,\qquad 2 \le i \le N+1
    .
  \end{aligned}
\end{equation}

In fact there are some further substitutions that must also be made in certain cases.
They will be explained at the relevant places in the proof,
but are summarized here, for convenience.
It is understood that whenever a parameter is not affected by these rules,
it is left unchanged
(just remove the hat, and relabel if necessary to take into account that the number
of groups has changed:
$\hat{D} = D$, $\hat{\tau}_{a,i}^X = \tau_{a,i}^X$ for $a < K-j$
but $\hat{\tau}_{a+1,i}^X = \tau_{a,i}^X$ for $a > K-j$, etc.).
\begin{itemize}
\item If $\hat{N}_{K+1-j}^Y \ge 2$, let
  \begin{equation}
    \hat{\tau}_{K+1-j,1}^Y = \tau_{K-j,1}^Y - \sigma_{K-j,1}^X
    .
  \end{equation}
  See~\eqref{eq:funny-redefinition} for the case $2 \le j \le K-1$
  and~\eqref{eq:funny-redefinition-rightmost} for the case $j = 1$.

\item If $j=1$ and $\hat{N}_{K}^Y = 1$, let
  \begin{equation}
    \hat{D} = D - \sigma_{K-1,1}^X
    .
  \end{equation}
  See~\eqref{eq:redefine-D}.

\item If $1 \le j \le K-2$, let
  \begin{equation}
    \hat{C} = \frac{\theta C}{\tau_1}
    .
  \end{equation}
  See~\eqref{eq:redefine-C}.
  
\item If $j = K-1$, let
  \begin{equation}
    \hat{C} = \frac{\theta C}{\tau_1} - \frac{M}{\sigma_1}
    ,\qquad
    \text{where}
    \quad
    M = \prod_{j=1}^{K-2} \mu_j
    .
  \end{equation}
  See~\eqref{eq:kill-Y1-redefine-C}.
  The interpretation when $K=2$ is that $M=1$ (empty product).

\end{itemize}
These substitutions (together with similar ones when killing an $X$-peakon,
as described in Section~\ref{sec:kill-X-even})
are what gives rise to the constraints listed in
Section~\ref{sec:more-notation},
as will be explained in the sections below.
They have been designed in such a way that the
``ghost parameter'' $\theta$ will only appear in the formula for the
position of the ghostpeakon,
\begin{equation*}
  y_{\text{ghost}}(t) = \lim_{\epsilon \to 0} \hat{y}_{K-j}(t)
  ,
\end{equation*}
and such that all internal parameters will be confined to their own groups
rather than ``leaking'' into the formulas for other groups.

We are assuming as an induction hypothesis that the original $K+K$ configuration
is described by our solution formulas,
and what we need to show is that the formulas obtained by making the
substitutions above in these formulas,
and then letting $\epsilon \to 0$, agree with the
claimed solution formulas for the new $(K-1)+(K-1)$ configuration.

Recall from~\eqref{eq:interlacing-solution-positions}
or~\eqref{eq:even-Y-typical-singleton}
the formula for the position of $Y$-peakon number~$p'=K+1-p$, where $p=j+1 \in [2,K]$,
which is the one that we aim to kill:
\begin{equation}
  \hat{Y}_{K-j}
  = \hat{Y}_{(j+1)'}
  = \hat{Y}_{p'}
  = \frac{\hat{\detJ}_{p,p-1}^{00}}{\hat{\detJ}_{p-1,p-2}^{11}}
  = \frac{\hat{\detJ}_{j+1,j}^{00}}{\hat{\detJ}_{j,j-1}^{11}}
  .
\end{equation}
Under the substitutions above, we now expand
the numerator and the denominator
according to Lemma~\ref{lem:heineintegral-epsilon}.
To determine the dominant terms as $\epsilon\to 0$,
we compute the numbers listed in Remark~\ref{rem:dominant-contribution}.
For the numerator~$\hat{\detJ}_{j+1,j}^{00}$ we find that
\begin{equation}
  \begin{aligned}
    q_1 &= 2 \bigl( (j+1)-1 \bigr) - j + 0 = j
    ,\\
    q_2 &= 2(j-1) - (j+1 ) + 0 =  j-3
    ,
  \end{aligned}
\end{equation}
and since $k_1=j$ and $k_2=j-2$, this implies that
\begin{equation}
  \begin{aligned}
    k_1-q_1 &= 0
    ,\\
    k_2-q_2 &= 1
    ,\\
    (k_1-q_1) + (k_2-q_2) - 1 &= 0
    .
  \end{aligned}
\end{equation}
According to Remark~\ref{rem:q1-q2-invariant}
we obtain the same numbers for the denominator~$\hat{\detJ}_{j,j-1}^{11}$.
Thus, for both the numerator and the denominator,
the first, second and fourth term in~\eqref{eq:heineintegral-epsilon}
will give contributions of order $\epsilon^0$,
while the third term is of order $\epsilon^1$ and can be included
in the remainder~$\mathcal{O}(\epsilon)$:
\begin{equation}
  \begin{split}
    \hat{Y}_{K-j}
    = \frac{\hat{\detJ}_{j+1,j}^{00}}{\hat{\detJ}_{j,j-1}^{11}}
    &
    = \frac{\detJ_{j+1,j}^{00}
      + Z \alpha^{q_1} \detJ_{jj}^{00}
      + \dfrac{Z W \alpha^{q_1+1} \beta^{q_2+1}}{\alpha + \beta} \, \detJ_{j,j-1}^{00}
      + \mathcal{O}(\epsilon)}
    {\detJ_{j,j-1}^{11}
      + Z \alpha^{q_1} \detJ_{j-1,j-1}^{11}
      + \dfrac{Z W \alpha^{q_1+1} \beta^{q_2+1}}{\alpha + \beta} \, \detJ_{j-1,j-2}^{11}
      + \mathcal{O} (\epsilon)}
    \\ &
    = \frac{\detJ_{j+1,j}^{00}
      + (\tau_1 + \theta) \detJ_{jj}^{00}
      + \theta \sigma_1 \detJ_{j,j-1}^{00}
      + \mathcal{O}(\epsilon)}
    {\detJ_{j,j-1}^{11}
      + (\tau_1 + \theta) \detJ_{j-1,j-1}^{11}
      + \theta \sigma_1 \detJ_{j-1,j-2}^{11}
      + \mathcal{O} (\epsilon)}
    .
  \end{split}
\end{equation}
Letting $\epsilon \to 0$, we get the formula for $y_{\text{ghost}}(t)$\,:
\begin{equation}
  \label{eq:proof-Y-ghost}
  Y_{\text{ghost}}
  = \frac12 \exp \bigl( 2 y_{\text{ghost}} \bigr)
  = \lim_{\epsilon \to 0} \hat{Y}_{K-j}
  =
  \frac{\detJ_{j+1,j}^{00}
    + (\tau_1 + \theta) \detJ_{jj}^{00}
    + \theta \sigma_1 \detJ_{j,j-1}^{00}
  }{\detJ_{j,j-1}^{11}
    + (\tau_1 + \theta) \detJ_{j-1,j-1}^{11}
    + \theta \sigma_1 \detJ_{j-1,j-2}^{11}}
  ,
\end{equation}
where $\theta$ is the positive  ghost parameter.
This formula will actually now be discarded, but we record it here anyway for
use in Section~\ref{sec:characteristic-curves} about characteristic curves.
And it was also necessary to check that the limit was \emph{finite},
as we will see in a moment.

Next, we verify that the amplitude of that peakon becomes zero, under the same procedure,
so that we really obtain a ghostpeakon as claimed.
Writing $p=j+1$ as above, the formula for the amplitude
in~\eqref{eq:interlacing-solution-amplitudes}
or~\eqref{eq:even-Y-typical-singleton} is
\begin{equation}
  \hat{P}_{K-j}
  = \hat{P}_{p'}
  = \frac{\hat{\detJ}_{p-1,p-1}^{10} \hat{\detJ}_{p-1,p-2}^{11}}{\hat{\detJ}_{p,p-1}^{01} \hat{\detJ}_{p-1,p-2}^{01}}
  = \frac{\hat{\detJ}_{jj}^{10} \hat{\detJ}_{j,j-1}^{11}}{\hat{\detJ}_{j+1,j}^{01} \hat{\detJ}_{j,j-1}^{01}}
  .
\end{equation}
We need to compare the powers of $\epsilon$ after expanding each
$\hat{\detJ}_{ij}^{rs}$ in this formula with similar
computations as for the positions above. We find that the smallest
power of $\epsilon$ is $-1$, which appears (only) in the last term in
Lemma~\ref{lem:heineintegral-epsilon} for
$\hat{\detJ}_{j+1,j}^{01}$,
since for this factor we have $q_1 = j$ and $q_2 = j-2$,
so that
\begin{equation}
  \begin{aligned}
    k_1-q_1 &= 0
    ,\\
    k_2-q_2 &= 0
    ,\\
    (k_1-q_1) + (k_2-q_2) - 1 &= -1
    .
  \end{aligned}
\end{equation}
We thus have $\epsilon^{-1}$ in the denominator,
so after multiplying numerator and denominator by~$\epsilon$
we find that
\begin{equation*}
  \hat{P}_{K-j}
  =
  \mathcal{O}(\epsilon)
  \to 0
  ,\qquad
  \text{as $\epsilon \to 0$}
  .
\end{equation*}
Now recall that $\hat{P}_{K-j} = 2 \hat{n}_{K-j} \exp(- \hat{y}_{K-j})$,
and that the limit $\hat{y}_{K-j} \to y_{\text{ghost}}$ is finite, as we saw above.
This implies that the amplitude $\hat{n}_{K-j}$ indeed becomes zero in the limit
as $\epsilon \to 0$.

\subsubsection{What happens to $X$-groups}
\label{sec:what-happens-X-groups}

We have now seen that making the right substitution and letting
$\epsilon \to 0$ will turn the selected $Y$-singleton
(namely $\hat{Y}_{(j+1)'} = \hat{Y}_{K-j}$)
into a ghostpeakon.
Now let us see what happens to all non-singleton groups
to the right of the killed $Y$-peakon when we perform the same
operations.
We begin with $X$-groups here, and consider $Y$-groups
in Section~\ref{sec:what-happens-Y-groups}.
For singletons (to the left or to the right), see
Section~\ref{sec:what-happens-X-singletons} and~\ref{sec:what-happens-Y-singletons}.

Consider first the  solution formulas for the $j$th $X$-group from the right
(i.e., the $\hat{X}_{j'}$-group where $j'=K+1-j$, with $1 \le j \le K-1$),
\emph{the one next to  the  killed $Y$-singleton},
in case this group consists of
$N = \hat{N}_{j'}^X \ge 2$ peakons before the killing:
\begin{equation*}
  \X \Y \X \Y \X \Y \X \Y
  \underbrace{\X \Y \overbrace{\X \X \X \X}^{N}}
  \Y \Y \Y \Y
  \X \X \X
  \Y
  \X \X \X
  \Y
  \X
  \Y \Y
  \X \X \X \X
  \Y
  \X \X
  \Y \Y \Y
  .
\end{equation*}
(The underbrace is the same as before, indicating the location of the
new group formed when the $Y$-peakon is killed, while the overbrace
highlights the group whose solution formulas we are currently
studying.)
We would like to show that the form of the surviving formulas
are preserved by the procedure, but with the new parameters $\{ \sigma_i,\tau_i\}_{i=1}^{N}$
instead of the original ones
$\{ \hat{\sigma}_i, \hat{\tau}_i \}_{i=1}^{N-1}$,
and also that the new formula for the old singleton $X$-group number
$j+1$ from the right becomes the first peakon in the new $j$th $X$-group from the right:
\begin{equation}
  \label{eq:proof-X-group-verify-pos}
  \begin{aligned}
    \lim_{\epsilon \to 0} \hat{X}_{(j+1)'}
    &=
    X_{j',1}
    ,\\
    \lim_{\epsilon \to 0} \hat{X}_{j',i}
    &=
    X_{j',i+1}
    ,\qquad
    1 \le i \le N
    .
  \end{aligned}
\end{equation}
To avoid any misunderstanding,
we remind the reader of the warning~\eqref{eq:j-prime-warning}:
after the killing, the notation~$j'$ means $K-j$.
So if we count from the left, the formulas look as follows:
\begin{equation}
  \label{eq:proof-X-group-verify-pos-fromleft}
  \begin{aligned}
    \lim_{\epsilon \to 0} \hat{X}_{K-j}
    &=
    X_{K-j,1}
    ,\\
    \lim_{\epsilon \to 0} \hat{X}_{K+1-j,i}
    &=
    X_{K-j,i+1}
    ,\qquad
    1 \le i \le N
    .
  \end{aligned}
\end{equation}
The first of these equations will be proved in Section~\ref{sec:what-happens-X-singletons}
where we investigate what happens to $X$-singletons;
see~\eqref{eq:first-peakon-new-X-group} for the case
$1 \le j \le K-2$ and \eqref{eq:first-peakon-new-leftmost-X-group} for the case $j = K-1$.
Here we will show the second one.
Recall the solution formulas~\eqref{eq:even-X-typical-group-pos},
for $1 \le j \le K-1$:
\begin{equation}
  \begin{split}
    \label{eq:proof-Xij}
    \hat{X}_{j',i} &= \frac{\hat{\detJ}_{j+1,j}^{00} + \hat{T}_i \hat{\detJ}_{jj}^{00} + \hat{S}_i \hat{\detJ}_{j,j-1}^{00}}{\hat{\detJ}_{j,j-1}^{11} + \hat{T}_i \hat{\detJ}_{j-1,j-1}^{11} + \hat{S}_i \hat{\detJ}_{j-1,j-2}^{11}}
    ,
    \qquad
    1 \le i \le N-1
    ,
    \\[1em]
    \hat{X}_{j',N} &= \frac{\hat{\detJ}_{jj}^{00} + \hat{\sigma}_{N-1} \hat{\detJ}_{j,j-1}^{00}}{\hat{\detJ}_{j-1,j-1}^{11} + \hat{\sigma}_{N-1} \hat{\detJ}_{j-1,j-2}^{11}}
    .
  \end{split}
\end{equation}
We are going to use Lemma~\ref{lem:heineintegral-epsilon} to
replace each $\hat{\detJ}_{ij}^{rs}$ in equation~\eqref{eq:proof-Xij},
and then use Remark~\ref{rem:dominant-contribution} to calculate the dominant
contribution of~$\epsilon$.

Let us start with the first equation in~\eqref{eq:proof-Xij},
and consider the terms $\hat{\detJ}_{j+1,j}^{00}$ and $\hat{\detJ}_{j,j-1}^{11}$,
which by Remark~\ref{rem:q1-q2-invariant} contain the same powers of~$\epsilon$.
We compute
\begin{equation}
  \begin{aligned}
    q_1 &= 2((j+1)-1)-j+0 = j
    ,\\ 
    q_2 &= 2(j-1)-(j+1)+0 = j-3
    .
  \end{aligned}
\end{equation}
Since $k_1=j$ and $k_2=j-2$, this implies that
\begin{equation}
  \begin{aligned}
    k_1-q_1 &= 0
    ,\\
    k_2-q_2 &= 1
    ,\\
    (k_1-q_1)+( k_2-q_2)-1 &= 0
    ,
  \end{aligned}
\end{equation}
so the dominant contribution with the smallest exponent is $\epsilon^0$,
coming from the first, second and fourth term in~\eqref{eq:heineintegral-epsilon}.
Similarly, for $\hat{\detJ}_{jj}^{00}$ and $\hat{\detJ}_{j-1,j-1}^{11}$,
the power $\epsilon^0$ dominates,
appearing in the first and third term in~\eqref{eq:heineintegral-epsilon},
since $q_1 = q_2 = j-2$ so that
\begin{equation}
  \begin{aligned}
    k_1-q_1 &= 2
    ,\\
    k_2-q_2 &= 0
    ,\\
    (k_1-q_1)+( k_2-q_2)-1 &= 1
    .
  \end{aligned}
\end{equation}
And for $\hat{\detJ}_{j,j-1}^{00}$ and $\hat{\detJ}_{j-1,j-2}^{11}$,
the lowest-order contribution is also~$\epsilon^0$,
but only for the first term in~\eqref{eq:heineintegral-epsilon},
since $q_1 = j-1$ and $q_2 = j-4$ so that
the other exponents become greater than zero:
\begin{equation}
  \begin{aligned}
    k_1-q_1 &= 1
    ,\\
    k_2-q_2 &= 2
    ,\\
    (k_1-q_1)+( k_2-q_2)-1 &= 2
    .
  \end{aligned}
\end{equation}
Substituting first in the numerator, we have (for $1 \le i \le N-1$)
\begin{equation}
  \begin{split}
    &
    \hat{\detJ}_{j+1,j}^{00}
    + \hat{T}_i \hat{\detJ}_{jj}^{00} + \hat{S}_i \hat{\detJ}_{j,j-1}^{00}
    \\[1ex]
    &
    =
    \detJ_{j+1,j}^{00}
    + \detJ_{jj}^{00} Z  \alpha^{j} \bigl( 1 + \mathcal{O}(\epsilon) \bigr)
    + \detJ_{j,j-1}^{00} Z W \alpha^{j+1} \beta^{(j-3)+1} (\alpha + \beta)^{-1} \bigl( 1 + \mathcal{O}(\epsilon) \bigr)
    \\ & \quad
    + \hat{T}_i \left( \detJ_{jj}^{00} + \detJ_{j,j-1}^{00} W \beta^{j-2} \bigl( 1 + \mathcal{O}(\epsilon) \bigr) \right)
    \\ & \quad
     + \hat{S}_i \detJ_{j,j-1}^{00} \, \bigl( 1 + \mathcal{O}(\epsilon) \bigr)
    \\[1ex]
    &
    =
    \detJ_{j+1,j}^{00} + \detJ_{jj}^{00} \, (\theta + \tau_1)
    + \detJ_{j,j-1}^{00} \, \sigma_1 \theta
    \\ & \quad
    + \hat{T}_i \left( \detJ_{jj}^{00} + \detJ_{j,j-1}^{00} \, \sigma_1 \right)
    + \hat{S}_i \detJ_{j,j-1}^{00}
    + \mathcal{O}(\epsilon)
    \\[1ex]
    &
    \to
    \detJ_{j+1,j}^{00}
    + \bigl( \theta + \tau_1 + \hat{T}_i \bigr)  \detJ_{jj}^{00}
    + \bigl( \sigma_1 \theta + \sigma_1 \hat{T}_i + \hat{S}_i \bigr) \detJ_{j,j-1}^{00}
    \\[1ex]
    &
    =
    \detJ_{j+1,j}^{00} + T_{i+1} \detJ_{jj}^{00} + S_{i+1} \detJ_{j,j-1}^{00}
    , \qquad
    \text{as $\epsilon \to 0$}
    ,
  \end{split}
\end{equation}
where we get the last equation by redefining the parameters as
in~\eqref{eq:redefine-parameters-sigma-tau}, namely
\begin{equation*}
  \begin{aligned}
    \hat{\tau}_1 &= \tau_2 - \theta
    ,
    \\
    \hat{\tau}_i &= \tau_{i+1}
    ,&&
    2 \le i \le N-1
    ,
    \\
    \hat{\sigma}_i &= \sigma_{i+1} - \sigma_1
    ,&&
    1 \le i \le N-1
    .
  \end{aligned}
\end{equation*}
Indeed, this gives (for $1 \le i \le N-1$, with $\sum_2^i = 0$ if $i=1$)
\begin{equation}
  \label{eq:redefine-T-hat}
  \theta + \tau_1 + \hat{T}_i
  = \theta + \tau_1 + \left( \hat{\tau}_1 + \sum_{a=2}^{i} \hat{\tau}_a \right)
  = \theta + \tau_1 + \left( \tau_2 - \theta + \sum_{a=2}^{i} \tau_{a+1} \right)
  = \sum_{a=1}^{i+1} \tau_{a}
  = T_{i+1}
\end{equation}
and (with $\sum_1^{i-1} = 0$ if $i=1$)
\begin{equation}
  \label{eq:redefine-S-hat}
  \begin{split}
    \sigma_1 \theta + \sigma_1 \hat{T}_i + \hat{S}_i
    &
    = \sigma_1 \bigl( \theta + \hat{T}_i \bigr) + \sum_{a=1}^{i-1} \hat{\sigma}_a \hat{\tau}_{a+1}
    \\ &
    = \sigma_1 (T_{i+1} - \tau_1) + \sum_{a=1}^{i-1} (\sigma_{a+1} - \sigma_1) \tau_{a+2}
    \\ &
    = \sigma_1 \sum_{a=1}^{i} \tau_{a+1} + \sum_{a=2}^{i} (\sigma_a - \sigma_1) \tau_{a+1}
    = \sum_{a=1}^{i} \sigma_a \tau_{a+1}
    = S_{i+1}
    .
  \end{split}
\end{equation}
The denominator is similar to the numerator (Remark~\ref{rem:q1-q2-invariant}),
so in the limit $\epsilon \to 0$ we obtain,
in agreement with the solution formulas~\eqref{eq:even-X-typical-group-pos},
\begin{equation}
  X_{j',i+1}
  = \lim_{\epsilon \to 0} \hat{X}_{j',i}
  = \frac{\detJ_{j+1,j}^{00} + T_{i+1} \detJ_{jj}^{00} + S_{i+1} \detJ_{j,j-1}^{00}}{\detJ_{j,j-1}^{11} + T_{i+1} \detJ_{j-1,j-1}^{11} + S_{i+1} \detJ_{j-1,j-2}^{11}}
  ,\qquad
  1 \le i \le N-1
  .
\end{equation}
Similarly, for the rightmost peakon in the group,
the position is given by the second equation in~\eqref{eq:proof-Xij} above,
and using $\hat{\sigma}_{N-1} = \sigma_{N}-\sigma_1$
we find, as desired, that
\begin{equation}
  \begin{split}
    X_{j',N+1}
    = \lim_{\epsilon \to 0} \hat{X}_{j',N}
    &
    = \lim_{\epsilon \to 0}
    \frac{\hat{\detJ}_{jj}^{00} + \hat{\sigma}_{N-1} \hat{\detJ}_{j,j-1}^{00}}{\hat{\detJ}_{j-1,j-1}^{11} + \hat{\sigma}_{N-1} \hat{\detJ}_{j-1,j-2}^{11}}
    \\ &
    = \lim_{\epsilon \to 0} \frac{\left( \detJ_{jj}^{00} + \sigma_1 \detJ_{j,j-1}^{00} \right) + (\sigma_{N} - \sigma_1) \detJ_{j,j-1}^{00} + \mathcal{O}(\epsilon)}{\left( \detJ_{j-1,j-1}^{11} + \sigma_1 \detJ_{j-1,j-2}^{11} \right) + (\sigma_{N} - \sigma_1) \detJ_{j-1,j-2}^{11} + \mathcal{O}(\epsilon)}
    ,
    \\[1em]
    &
    = \frac{\detJ_{jj}^{00} + \sigma_{N} \detJ_{j,j-1}^{00}}{\detJ_{j-1,j-1}^{11} + \sigma_{N} \detJ_{j-1,j-2}^{11}}
    .
  \end{split}
\end{equation}
This concludes the proof that the second equation in~\eqref{eq:proof-X-group-verify-pos} holds.

One also needs to verify that the formulas
for the amplitudes in this group behave in the corresponding way:
\begin{equation}
  \label{eq:proof-X-group-verify-amp}
  \begin{aligned}
    \lim_{\epsilon \to 0} \hat{Q}_{(j+1)'}
    &=
    Q_{j',1}
    ,\\
    \lim_{\epsilon \to 0} \hat{Q}_{j',i}
    &=
    Q_{j',i+1}
    ,\qquad
    1 \le i \le N
    .
  \end{aligned}
\end{equation}
The investigation of the singleton's amplitude $\hat{Q}_{(j+1)'}$ belongs
to Section~\ref{sec:what-happens-X-singletons};
as for $\hat{X}_{(j+1)'}$, one must treat the cases
$1 \le j \le K-2$ and $j = K-1$ separately, since $\hat{Q}_1$ is given by a special formula.
From~\eqref{eq:even-X-typical-group-amp}, the formula for $\hat{Q}_{j',i}$ when $1 \le i \le N-1$ is
\begin{equation*}
  \begin{split}
    \hat{Q}_{j',i} &
    = (\hat{\sigma}_i - \hat{\sigma}_{i-1}) \, 
    \hat{\detJ}_{j,j-1}^{01}
    \left( \hat{\detJ}_{j,j-1}^{11} + \hat{T}_i \hat{\detJ}_{j-1,j-1}^{11} + \hat{S}_i \hat{\detJ}_{j-1,j-2}^{11} \right)
    \\
    &\quad
    \times
    \left( \hat{\detJ}_{jj}^{10} + \hat{\sigma}_i \hat{\detJ}_{j,j-1}^{10} + \hat{R}_i \hat{\detJ}_{j-1,j-1}^{10} \right)^{-1}
    \\
    &\quad
    \times
    \left( \hat{\detJ}_{jj}^{10} + \hat{\sigma}_{i-1} \hat{\detJ}_{j,j-1}^{10} + \hat{R}_{i-1} \hat{\detJ}_{j-1,j-1}^{10} \right)^{-1}
    ,
  \end{split}
\end{equation*}
where the last bracket on the first line has already been studied above.
Just by definition we have
\begin{equation}
  \hat{\sigma}_i - \hat{\sigma}_{i-1}
  = (\sigma_{i+1} - \sigma_1) - (\sigma_i - \sigma_1)
  = \sigma_{i+1} - \sigma_i
  ,
\end{equation}
and for the other factors we compute using Lemma~\ref{lem:heineintegral-epsilon} that
$\hat{\detJ}_{j,j-1}^{01} = \detJ_{j,j-1}^{01} + \mathcal{O}(\epsilon)$
and
\begin{equation}
  \begin{split}
    &
    \hat{\detJ}_{jj}^{10} + \hat{\sigma}_i \hat{\detJ}_{j,j-1}^{10} + \hat{R}_i \hat{\detJ}_{j-1,j-1}^{10}
    \\ &
    =
    \Bigl( \detJ_{jj}^{10} + \sigma_1 \detJ_{j,j-1}^{10} + \sigma_1 \tau_1 \detJ_{j-1,j-1}^{10} \Bigr)
    \\ & \quad
    + \hat{\sigma}_i \Bigl( \detJ_{j,j-1}^{10} + (\tau_1 + \theta) \detJ_{j-1,j-1}^{10} \Bigr)
    + \hat{R}_i \detJ_{j-1,j-1}^{10}
    + \mathcal{O}(\epsilon)
    \\ &
    =
    \detJ_{jj}^{10}
    + \underbrace{(\sigma_1 + \hat{\sigma}_i)}_{= \sigma_{i+1}} \detJ_{j,j-1}^{10}
    + \underbrace{\left( \sigma_1 \tau_1 +   \hat{\sigma}_i (\tau_1 + \theta) + \hat{R}_i \right)}_{= R_{i+1}} \detJ_{j-1,j-1}^{10}
    + \mathcal{O}(\epsilon)
    ,
  \end{split}
\end{equation}
where the identity yielding $R_{i+1}$ is easily verified
as in \eqref{eq:redefine-T-hat} and~\eqref{eq:redefine-S-hat} above.
The conclusion is that in the limit as $\epsilon \to 0$, all expressions in the formula
for~$\hat{Q}_{j',i}$ reduce to their counterparts without hats, except that $i$ is replaced
by $i+1$ everywhere, which proves that
$\lim_{\epsilon \to 0} \hat{Q}_{j',i} = Q_{j',i+1}$
for $1 \le i \le N-1$.
The case~$i=N$ is very similiar; we omit the details.

\bigskip

Next, let us see
\emph{what happens to the non-singleton $X$-groups further right of the killed $Y$-peakon},
indicated by overbraces in the diagram:
\begin{equation*}
  \X \Y \X \Y \X \Y \X \Y
  \underbrace{\X \Y \X \X \X \X}
  \Y \Y \Y \Y
  \overbrace{\X \X \X}
  \Y
  \overbrace{\X \X \X}
  \Y
  \X
  \Y \Y
  \overbrace{\X \X \X \X}
  \Y
  \overbrace{\X \X}
  \Y \Y \Y
  .
\end{equation*}
There are no such groups if $j=1$, so assume $2 \le j \le K-1$.
Writing $N = \hat{N}_{p'}^X$,
the positions~\eqref{eq:even-X-typical-group-pos}
for the $p$th $X$-group from the right, where $1 \le p < j$, are
\begin{equation}
  \label{eq:proof-even-X-typical-group-pos}
  \begin{split}
    \hat{X}_{p',i} &= \frac{\hat{\detJ}_{p+1,p}^{00} + \hat{T}_{i} \hat{\detJ}_{pp}^{00} + \hat{S}_{i} \hat{\detJ}_{p,p-1}^{00}}{\hat{\detJ}_{p,p-1}^{11} + \hat{T}_{i} \hat{\detJ}_{p-1,p-1}^{11} + \hat{S}_{i} \hat{\detJ}_{p-1,p-2}^{11}}
    ,\qquad
    1 \le i \le N-1
    ,
    \\[1ex]
    \hat{X}_{p',N} &= \frac{\hat{\detJ}_{pp}^{00} + \hat{\sigma}_{N-1} \hat{\detJ}_{p,p-1}^{00}}{\hat{\detJ}_{p-1,p-1}^{11} + \hat{\sigma}_{N-1} \hat{\detJ}_{p-1,p-2}^{11}}
    .
  \end{split}
\end{equation}
Here $\hat{\tau}_i=\hat{\tau}_{p',i}^X$, etc.,
and in this case we do not need to redefine the internal parameters,
but simply remove the hats: $\hat{\tau}_i=\tau_i$, etc.

We find the leading contributions for $\hat{\detJ}_{p+1,p}^{00}$
and $\hat{\detJ}_{p,p-1}^{11}$ using Lemma~\ref{lem:heineintegral-epsilon}.
A simple computation gives $q_1 = p$ and $q_2 = p-3$,
and since $k_1=j$ and $k_2=j-2$ as always, we get
\begin{equation}
  \begin{aligned}
    k_1-q_1 &= j-p
    ,\\
    k_2-q_2 &= j-p+1
    ,\\
    (k_1-q_1) + (k_2-q_2) - 1 &= 2(j-p)
    ,
  \end{aligned}
\end{equation}
so the lowest power is $\epsilon^0$ (since $j-p>0$),
and we get $\detJ_{p+1,p}^{00} + \mathcal{O}(\epsilon)$ and
$\detJ_{p,p-1}^{11} + \mathcal{O}(\epsilon)$ instead of the ones with hats.
Similarly, $\hat{\detJ}_{ij}^{rs} = \detJ_{ij}^{rs} + \mathcal{O}(\epsilon)$
for all terms appearing in the formulas~\eqref{eq:proof-even-X-typical-group-pos},
so in the limit as $\epsilon \to 0$ we get the same formulas but without hats,
as desired.
The formulas for the amplitudes are proved just as easily.

\subsubsection{What happens to $Y$-groups}
\label{sec:what-happens-Y-groups}

The same argument as at the end of the previous section
(the dominant power is $\epsilon^0$, coming from the first term in Lemma~\ref{lem:heineintegral-epsilon})
also proves that the
formulas for all non-singleton $Y$-groups to the right of the killed peakon
will be preserved by the procedure,
except that a separate argument is needed for the
$Y$-group nearest to the killed peakon,
namely the $j$th $Y$-group from the right,
if it is a non-singleton:
\begin{equation*}
  \X \Y \X \Y \X \Y \X \Y \underbrace{\X \Y \X \X \X \X} \overbrace{\Y \Y \Y \Y}^{N} \X \X \X \Y \X \X \X \Y \X \Y \Y \X \X \X \X \Y \X \X \Y \Y \Y
  .
\end{equation*}
If this is the rightmost group ($j=1$),
it is given by special formulas, so we consider two cases:
\begin{itemize}
\item
  If $2 \le j \le K-1$, the formulas~\eqref{eq:even-Y-typical-group-pos}
  for the positions in this group (not the rightmost one) are, with $N=\hat{N}_{j'}^Y \ge 2$,
  \begin{equation}
    \label{eq:remove-my-hats-please}
    \begin{split}
      \hat{Y}_{j',i} &=
      \frac{\hat{\detJ}_{jj}^{00} + \hat{T}_{j',i}^Y \hat{\detJ}_{j,j-1}^{00} + \hat{S}_{j',i}^Y \hat{\detJ}_{j-1,j-1}^{00}}{\hat{\detJ}_{j-1,j-1}^{11} + \hat{T}_{j',i}^Y \hat{\detJ}_{j-1,j-2}^{11} + \hat{S}_{j',i}^Y \hat{\detJ}_{j-2,j-2}^{11}}
      ,\qquad
      1 \le i \le N-1
      ,
      \\[1ex]
      \hat{Y}_{j',N} &=
      \frac{\hat{\detJ}_{j,j-1}^{00} + \hat{\sigma}_{j',N-1}^Y \hat{\detJ}_{j-1,j-1}^{00}}{\hat{\detJ}_{j-1,j-2}^{11} + \hat{\sigma}_{j',N-1}^Y \hat{\detJ}_{j-2,j-2}^{11}}
      .
    \end{split}
  \end{equation}
  With the help of Lemma~\ref{lem:heineintegral-epsilon}
  and Remark~\ref{rem:dominant-contribution} as usual,
  we find that the dominant power is~$\epsilon^0$,
  and that all terms $\hat{\detJ}_{ij}^{rs}$ in these formulas simply tend to
  $\detJ_{ij}^{rs}$ as $\epsilon \to 0$, except the two terms
  $\hat{\detJ}_{jj}^{00}$ and~$\hat{\detJ}_{j-1,j-1}^{11}$
  which contain an additional contribution of the order~$\epsilon^0$:
  \begin{equation}
    \begin{split}
      Y_{j',i} &
      = \lim_{\epsilon \to 0} \hat{Y}_{j',i}
      = \frac{\left( \detJ_{jj}^{00} + \sigma_{j',1}^X \detJ_{j,j-1}^{00} \right) + \hat T_{j',i}^Y \detJ_{j,j-1}^{00} + \hat S_{j',i}^Y \detJ_{j-1,j-1}^{00}}{\left( \detJ_{j-1,j-1}^{11} + \sigma_{j',1}^X \detJ_{j-1,j-2}^{11} \right) + \hat T_{j',i}^Y \detJ_{j-1,j-2}^{11} + \hat S_{j',i}^Y \detJ_{j-2,j-2}^{11}}
      ,
      \\[1ex]
      Y_{j',N} &
      = \lim_{\epsilon \to 0} \hat{Y}_{j',N}
      = \frac{\detJ_{j,j-1}^{00} + \hat{\sigma}_{j',N-1}^Y \detJ_{j-1,j-1}^{00}}{\detJ_{j-1,j-2}^{11} + \hat{\sigma}_{j',N-1}^Y \detJ_{j-2,j-2}^{11}}
      .
    \end{split}
  \end{equation}
  In order to get the formulas that we want, namely
  \eqref{eq:remove-my-hats-please} with the hats removed,
  we therefore need to redefine~$\hat{\tau}_{j',1}^Y$ as
  \begin{equation}
    \label{eq:funny-redefinition}
    \hat{\tau}_{j',1}^Y = \tau_{j',1}^Y - \sigma_{j',1}^X
    ,
  \end{equation}
  while keeping all the other internal parameters for this $Y$-group unchanged
  (just remove the hats).
  Then
  \begin{equation}
    \hat T_{j',i}^Y + \sigma_{j',1}^X = T_{j',i}^Y
    ,\qquad
    \text{for $1 \le i \le N-1$}
    ,
  \end{equation}
  so that we get the correct coeffient~$T_{j',i}^Y$
  in front of $\detJ_{j,j-1}^{00}$ and~$\detJ_{j-1,j-2}^{11}$.
  The amplitudes are studied similarly.
  
  Before the killing, we have by induction hypothesis that
  if $\hat{N}_{j'}^X \ge 2$, then the first constraint
  in~\eqref{eq:constraint-last-sigma-first-tau}
  holds,
  ``the last~$\sigma$ in the $X$-group is less than the first~$\tau$
  in the adjacent $Y$-group'':
  \begin{equation}
    \hat{\sigma}_{j', \hat{N}_{j'}^X-1}^X
    <
    \hat{\tau}_{j',1}^Y
    .
  \end{equation}
  With the redefinitions
  \eqref{eq:redefine-parameters-sigma-tau} and~\eqref{eq:funny-redefinition}
  that we have made, this becomes
  \begin{equation}
    \sigma_{j',N_{j'}^X}^X
    - \sigma_{j',1}^X
    <
    \tau_{j',1}^Y
    - \sigma_{j',1}^X
    ,
  \end{equation}
  so
  \begin{equation}
    \sigma_{j',N_{j'}^X}^X
    <
    \tau_{j',1}^Y
    ,
  \end{equation}
  meaning that~\eqref{eq:constraint-last-sigma-first-tau} holds also after the killing.
  If instead $\hat{N}_{j'}^X = 1$, then before the killing we only have the constraint
  $0 < \hat{\tau}_{j',1}^Y$,
  which after the redefinition becomes
  $0 < \tau_{j',1}^Y - \sigma_{j',1}^X$,
  i.e.,
  \begin{equation}
    \sigma_{j',1}^X
    <
    \tau_{j',1}^Y
    .
  \end{equation}
  In this case, the $X_{j'}$-group created by the killing has only two members,
  so $\sigma_{j',1}^X$ is its last (and only) $\sigma$-parameter,
  and we see that if the neighbouring $Y$-group is a non-singleton,
  then its first~$\tau$ must be greater than this newly introduced~$\sigma$.
  This is the mechanism which causes the
  constraint~\eqref{eq:constraint-last-sigma-first-tau}
  to arise in the first place.
  
\item 
  If $j=1$, then we are looking at the rightmost $Y$-group, for which the positions are given
  by the special formulas~\eqref{eq:even-Y-rightmost-group-pos}:
  \begin{equation}
    \begin{aligned}
      \hat{Y}_{K,i} &
      = \hat{\detJ}_{11}^{00} +  \hat{T}_{K,i}^Y \hat{\detJ}_{10}^{00} + \hat{S}_{K,i}^Y
      ,\qquad
      1 \le i \le N-1
      ,
      \\[1ex]
      \hat{Y}_{K,N} &
      = \hat{\detJ}_{11}^{00} + \left( \hat{T}_{K,N-1}^Y + \hat{D} \right) \hat{\detJ}_{10}^{00}
      + \left( \hat{S}_{K,N-1}^Y + \hat{D} \, \hat{\sigma}_{K,N-1}^Y \right)
      ,
    \end{aligned}
  \end{equation}
  where $N = \hat{N}_K^Y \ge 2$.
  As above, we obtain
  \begin{equation}
    \begin{aligned}
      Y_{K-1,i} &
      = \lim_{\epsilon \to 0} \hat{Y}_{K,i}
      = \left( \detJ_{11}^{00} + \sigma_{K-1,1}^X \detJ_{10}^{00} \right)
      + \hat T_{K,i}^Y \detJ_{10}^{00}
      + \hat S_{K,i}^Y \detJ_{00}^{00}
      ,\qquad
      1 \le i \le N-1
      ,
      \\[1ex]
      Y_{K-1,N} &
      = \lim_{\epsilon \to 0} \hat{Y}_{K,N}
      = \left( \detJ_{11}^{00} + \sigma_{K-1,1}^X \detJ_{10}^{00} \right)
      + \left( \hat{T}_{K,N-1}^Y + \hat{D} \right) \detJ_{10}^{00}
      + \left( \hat{S}_{K,N-1}^Y + \hat{D} \, \hat{\sigma}_{K,N-1}^Y \right)
      ,
    \end{aligned}
  \end{equation}
  and we see that the desired formulas are obtained with the same redefinition
  of~$\hat{\tau}_1$ as above,
  with all other internal parameters in the rightmost $Y$-group kept unchanged:
  \begin{equation}
    \label{eq:funny-redefinition-rightmost}
    \begin{aligned}
      \hat{\tau}_{K,1}^Y &= \tau_{K-1,1}^Y - \sigma_{K-1,1}^X
      ,\\
      \hat{\tau}_{K,i}^Y &= \tau_{K-1,i}^Y
      ,\qquad
      2 \le i \le N-1
      ,\\
      \hat{\sigma}_{K,i}^Y &= \sigma_{K-1,i}^Y
      ,\qquad
      1 \le i \le N-1
      ,\\
      \hat{D} &= D
      .
    \end{aligned}
  \end{equation}
  Similarly for the amplitudes.
\end{itemize}

\subsubsection{What happens to $X$-singletons}
\label{sec:what-happens-X-singletons}

In this section we will show what becomes of all $X$-singletons
when we kill the $\hat{Y}_{(j+1)'}$-peakon ($1 \le j \le K-1$) as described above.
For the most parts, we will only write the proofs for the positions;
the arguments for the amplitudes are similar.

We consider first $X$-group number $p$ from the right,
where $1 \le p \le K-1$, assuming it is a singleton,
in which case is given by the formulas~\eqref{eq:even-X-typical-singleton}:
\begin{equation}
  \label{eq:proof-X-singleton}
  \hat{X}_{p'} = \frac{\hat{\detJ}_{pp}^{00}}{\hat{\detJ}_{p-1,p-1}^{11}}
  ,
  \qquad
  \hat{Q}_{p'} = \frac{\hat{\detJ}_{p-1,p-1}^{11} \hat{\detJ}_{p,p-1}^{01}}{\hat{\detJ}_{pp}^{10} \hat{\detJ}_{p-1,p-1}^{10}}
  ,
\end{equation}
where $p' = K+1-p$.
The leftmost $X$-peakon ($p=K$) is given by a separate formula,
and will be investigated separately below.
Thus we are now looking at the $X$-singletons indicated by the arrows in this picture
(we use arrows instead of overbraces here for reasons of space):
\begin{equation*}
  \X \Y
  \arrowX \Y \arrowX \Y \arrowX \Y
  \underbrace{\arrowX \Y \X \X \X \X}
  \Y \Y \Y \Y
  \X \X \X
  \Y
  \X \X \X
  \Y
  \arrowX
  \Y \Y
  \X \X \X \X
  \Y
  \X \X
  \Y \Y \Y
  .
\end{equation*}
With the same substitution~\eqref{eq:substitution-kill-Y-even} as before,
and using Lemma~\ref{lem:heineintegral-epsilon},
the numerator of the positions in~\eqref{eq:proof-X-singleton} becomes
\begin{equation}
  \begin{split}
    \hat{\detJ}^{ 00}_{pp} &= \detJ_{pp}^{ 00}\epsilon^0
    \\ & \quad
    + \detJ_{p-1,p}^{ 00} Z \alpha^{q_1} \epsilon^{k_1-q_1}\bigl( 1 + \mathcal{O}(\epsilon) \bigr)
    \\ & \quad
    + \detJ_{p,p-1}^{ 00} W \beta^{q_2} \epsilon^{k_2-q_2}\bigl( 1 + \mathcal{O}(\epsilon) \bigr)
    \\ & \quad
    + \detJ_{p-1,p-1}^{00} \frac{Z W \alpha^{q_1+1} \beta^{q_2+1} }{\alpha+\beta} \epsilon^{(k_1-q_2) + (k_2-q_2) - 1} \bigl( 1 + \mathcal{O}(\epsilon) \bigr)
    ,
  \end{split}
\end{equation}
and the denominator becomes
\begin{equation}
  \begin{split}
    \hat{\detJ}^{ 11}_{p-1,p-1} &= \detJ_{p-1,p-1}^{ 11} \epsilon^0
    \\ & \quad
    + \detJ_{p-2,p-1}^{ 11} Z \alpha^{q_1} \epsilon^{k_1-q_1} \bigl( 1 + \mathcal{O}(\epsilon) \bigr)
    \\ & \quad
    + \detJ_{p-1,p-2}^{ 11} W \beta^{q_2} \epsilon^{k_2-q_2} \bigl( 1 + \mathcal{O}(\epsilon) \bigr)
    \\ & \quad
    + \detJ_{p-2,p-2}^{11} \frac{Z W \alpha^{q_1+1} \beta^{q_2+1} }{\alpha+\beta} \epsilon^{(k_1-q_2) + (k_2-q_2) - 1} \bigl( 1 + \mathcal{O}(\epsilon) \bigr)
    .
  \end{split}
\end{equation}
In~\eqref{eq:substitution-kill-Y-even} we have $k_1=j$ and $k_2=j-2$,
and we compute $q_1 = q_2 = p-2$ for both the numerator and the denominator, so that
\begin{equation}
  \begin{aligned}
    k_1-q_1 &= j-p+2
    ,\\
    k_2-q_2 &= j-p
    ,\\
    (k_1-q_1) + (k_2-q_2) - 1 &= 2(j-p)+1
    .
  \end{aligned}
\end{equation}
Thus, the powers of~$\epsilon$ which appear are
\begin{equation}
  \epsilon^0
  ,\qquad
  \epsilon^{j-p+2}
  ,\qquad
  \epsilon^{j-p}
  ,\qquad
  \epsilon^{2(j-p)+1}
  ,
\end{equation}
and we have several cases to study, depending on the value of $j-p$:
\begin{itemize}
\item If $1 \le p < j \le K-1$, then the $X$-singleton lies to the right of
  the killed $Y$-peakon
  and also to the right of its adjacent $X$-group:
  \begin{equation*}
    \X \Y \X \Y \X \Y \X \Y
    \underbrace{\X \Y \X \X \X \X}
    \Y \Y \Y \Y
    \X \X \X
    \Y
    \X \X \X
    \Y
    \arrowX
    \Y \Y
    \X \X \X \X
    \Y
    \X \X
    \Y \Y \Y
    .
  \end{equation*}
  Then $j-p > 0$ and the smallest exponent is~$0$, so
  \begin{equation}
    \hat{X}_{p'}
    = \frac{\hat{\detJ}_{pp}^{00}}{\hat{\detJ}_{p-1,p-1}^{11}}
    = \frac{\detJ_{pp}^{00} \bigl( 1 + \mathcal{O}(\epsilon) \bigr)}{\detJ_{p-1,p-1}^{11} \bigl( 1 + \mathcal{O}(\epsilon) \bigr)}
    \to \frac{\detJ_{pp}^{00}}{\detJ_{p-1,p-1}^{11}}
    ,\qquad
    \text{as $\epsilon \to 0$}
    .
  \end{equation}
  Thus, the formulas for the positions of all such $X$-singletons are preserved.
  
\item If $1 \le p = j \le K-1$, then the $X$-peakon that we are looking at
  is a singleton lying immediately to the right of the killed $Y$-peakon,
  and it will become the last (i.e., second) peakon in
  the new $j$th $X$-group from the right.
  This does not occur in the configuration that we have been using as
  an example, but in a configuration like this instead:
  \begin{equation*}
    \X \Y \X \Y \X \Y \X \Y \underbrace{\X \Y \arrowX} \Y \Y \Y \Y \X \X \X \Y \X \X \X \Y \X \Y \Y \X \X \X \X \Y \X \X \Y \Y \Y
    .
  \end{equation*}
  Then $j-p = 0$, and there are two contributions of order~$\epsilon^0$:
  \begin{equation}
    \begin{split}
      \hat{X}_{p'}= \hat{X}_{j'}
      = \frac{\hat{\detJ}_{jj}^{00}}{\hat{\detJ}_{j-1,j-1}^{11}}
      &
      = \frac{\detJ_{jj}^{00}+ \sigma_1 \detJ_{j,j-1}^{00} + \mathcal{O}(\epsilon)}{\detJ_{j-1,j-1}^{11}+ \sigma_1 \detJ_{j-1,j-2}^{00} + \mathcal{O}(\epsilon)}
      \\ &
      \to \frac{ \detJ_{jj}^{00}+ \sigma_1 \detJ_{j,j-1}^{00}}{\detJ_{j-1,j-1}^{11} + \sigma_1 \detJ_{j-1,j-2}^{00}}
      ,\qquad
      \text{as $\epsilon \to 0$}
      .
    \end{split}
  \end{equation}
  After killing the $Y$-peakon, we relabel the variables to include this
  singleton as the second (and last) member in the new $j$th $X$-group from the
  right,
  \begin{equation}
    X_{j',2}
    = \lim_{\epsilon \to 0} \hat{X}_{j'}
    = \frac{ \detJ_{jj}^{00}+ \sigma_1 \detJ_{j,j-1}^{00}}{\detJ_{j-1,j-1}^{11} + \sigma_1 \detJ_{j-1,j-2}^{00}}
    ,
  \end{equation}
  which agrees with the formula for $X_{j',N_{j'}}$ in~\eqref{eq:even-X-typical-group-pos}
  when $N_{j'} = N_{j'}^X = 2$.

\item If $p = j+1$, where $1 \le j \le K-2$, then we are considering the $X$-peakon
  lying immediately to the left of the killed $Y$-peakon;
  it will become the first peakon in the new
  $j$th $X$-group from the right:
  \begin{equation*}
    \X \Y \X \Y \X \Y \X \Y
    \underbrace{\arrowX \Y \X \X \X \X}
    \Y \Y \Y \Y
    \X \X \X
    \Y
    \X \X \X
    \Y
    \X
    \Y \Y
    \X \X \X \X
    \Y
    \X \X
    \Y \Y \Y
    .
  \end{equation*}
  We are assuming $j \le K-2$ to keep this from being the leftmost peakon,
  which is studied separately below.
  
  Since $j-p = -1$, the smallest exponents are both $j-p = -1$
  and $2(j-p)+1 = -1$, and we find
  \begin{equation}
    \begin{split}
      \hat{X}_{(j+1)'}
      =
      \hat{X}_{p'}
      &
      = \frac{\hat{\detJ}_{pp}^{00}}{\hat{\detJ}_{p-1,p-1}^{11}}
      \\[1ex]
      &
      = \frac{\left( \detJ_{p,p-1}^{00} W \beta^{q_2} + \detJ_{p-1,p-1}^{00} \dfrac{Z W \alpha^{q_1+1} \beta^{q_2+1}}{\alpha + \beta} \right) \epsilon^{-1} \bigl( 1 + \mathcal{O}(\epsilon) \bigr)}{\left( \detJ_{p-1,p-2}^{11} W \beta^{q_2} + \detJ_{p-2,p-2}^{11} \dfrac{Z W \alpha^{q_1+1} \beta^{q_2+1}}{\alpha + \beta} \right) \epsilon^{-1} \bigl( 1 + \mathcal{O}(\epsilon) \bigr)}
      \\[1ex]
      &
      = \frac{\left( \detJ_{p,p-1}^{ 00} \sigma_1 + \detJ_{p-1,p-1}^{00} \tau_1 \sigma_1 \right) \epsilon^{-1} (1+ \mathcal{O}(\epsilon))}{\left( \detJ_{p-1,p-2}^{11} \sigma_1 + \detJ_{p-2,p-2}^{11} \tau_1 \sigma_1 \right) \epsilon^{-1} \bigl( 1 + \mathcal{O}(\epsilon) \bigr)}
      \\[1ex]
      &
      \to
      \frac{\detJ_{j+1,j}^{00} + \tau_1 \detJ_{jj}^{00}  }{\detJ_{j,j-1}^{11} + \tau_1 \detJ_{j-1,j-1}^{11}}
      \qquad \text{as $\epsilon \to 0$}
      .
    \end{split}
  \end{equation}
  After the killing, we relabel the variables to include this singleton
  as the first member in the $j$th $X$-group from the right:
  \begin{equation}
    \label{eq:first-peakon-new-X-group}
    X_{j',1}
    = \lim_{\epsilon \to 0} \hat{X}_{(j+1)'}
    = \frac{\detJ_{j+1,j}^{00} + \tau_1 \detJ_{jj}^{00}  }{\detJ_{j,j-1}^{11} + \tau_1 \detJ_{j-1,j-1}^{11}}
    .
  \end{equation}
  This agrees with the desired expression,
  namely the case $i=1$ in~\eqref{eq:even-X-typical-group-pos},
  repeated here for convenience
  (remember that $T_1=\tau_1$ and $S_1=0$):
  \begin{equation*}
    X_{j',i}
    = \frac{\detJ_{j+1,j}^{00} + T_{i} \detJ_{jj}^{00} + S_{i} \detJ_{j,j-1}^{00}}{\detJ_{j,j-1}^{11} + T_{i} \detJ_{j-1,j-1}^{11} + S_{i} \detJ_{j-1,j-2}^{11}}
    .
  \end{equation*}
  
\item If $2 \le j+1 < p < K$, then the $X$-singleton in question
  lies to the left of the killed peakon (but not on the very left):
  \begin{equation*}
    \X \Y
    \arrowX \Y \arrowX \Y \arrowX \Y
    \underbrace{\X \Y \X \X \X \X}
    \Y \Y \Y \Y
    \X \X \X
    \Y
    \X \X \X
    \Y
    \X
    \Y \Y
    \X \X \X \X
    \Y
    \X \X
    \Y \Y \Y
    .
  \end{equation*}
  Then $j-p < -1$, with the smallest power of $\epsilon$ being $2(j-p)+1$,
  and we get the following formula for the $X$-singletons to the left of the killed peakon:
  \begin{equation}
    \begin{split}
      \hat{X}_{p'}
      = \frac{\hat{\detJ}_{pp}^{00}}{\hat{\detJ}_{p-1,p-1}^{11}}
      &
      = \frac{\detJ_{p-1,p-1}^{00} \epsilon^{2(j-p)+1} \tau_1 \sigma_1 \bigl( \frac{\theta_1}{\tau_1} \bigr)^{p-1-j} \bigl( 1 + \mathcal{O}(\epsilon) \bigr)}{\detJ_{p-2,p-2}^{11} \epsilon^{2(j-p)+1} \tau_1 \sigma_1 \bigl( \frac{\theta_1}{\tau_1} \bigr)^{p-1-j} \bigl( 1 + \mathcal{O}(\epsilon) \bigr)}
      \\[1ex]
      &
      \to \frac{\detJ_{p-1,p-1}^{00}}{\detJ_{p-2,p-2}^{11}}
      \qquad \text{as $\epsilon \to 0$}
      .
    \end{split}
  \end{equation}
  So these singletons are given by the same formulas as before,
  except that $p$ has been replaced by $p-1$.
  But this is exactly what we want, since what before the killing was the $p$th
  $X$-group from the right is
  after the killing $X$-group number $p-1$ from the right
  (since $p > j+1$):
  \begin{equation}
    X_{(p-1)'}
    = \lim_{\epsilon \to 0} \hat{X}_{p'}
    = \frac{\detJ_{p-1,p-1}^{00}}{\detJ_{p-2,p-2}^{11}}
    .
  \end{equation}
  
\end{itemize}

Next we consider what happens to the leftmost peakon,
which is given by the special formulas~\eqref{eq:even-X-leftmost-singleton}:
\begin{equation}
  \hat{X}_1 = \frac{\hat{\detJ}_{K,K-1}^{00}}{\hat{\detJ}_{K-1,K-2}^{11} + \hat{C} \, \hat{\detJ}_{K-1,K-1}^{10}}
  ,\qquad
  \hat{Q}_1 =
  \frac{\hat{M}}{\hat{L}}
  \left( \frac{\hat{\detJ}_{K-1,K-2}^{11}}{\hat{\detJ}_{K-1,K-1}^{10}} + \hat{C} \right)
  .
\end{equation}
Here we have put hats on the eigenvalue products before the killing,
\begin{equation}
  \hat{L} = \prod_{i=1}^K \lambda_i
  ,\qquad
  \hat{M} = \prod_{j=1}^{K-1} \mu_j
  ,
\end{equation}
since the corresponding products after the killing will be written without hats,
\begin{equation}
  L = \prod_{i=1}^{K-1} \lambda_i
  ,\qquad
  M = \prod_{j=1}^{K-2} \mu_j
  .
\end{equation}

\begin{itemize}
\item 
  Suppose first that $j < K-1$, so that the $\hat{Y}_{(j+1)'}$-peakon
  that we are killing is not adjacent to
  the $\hat{X}_1$-peakon that we are going to investigate:
  \begin{equation*}
    \arrowX \Y
    \X \Y \X \Y \X \Y
    \underbrace{\X \Y \X \X \X \X}
    \Y \Y \Y \Y
    \X \X \X
    \Y
    \X \X \X
    \Y
    \X
    \Y \Y
    \X \X \X \X
    \Y
    \X \X
    \Y \Y \Y
    .
  \end{equation*}
  This means that $X_1$ after the killing will be a singleton, too.

  For $\hat{\detJ}_{K,K-1}^{00}$ and $\hat{\detJ}_{K-1,K-2}^{11}$ we compute
  $q_1 = K-1$ and $q_2 = K-4$, so
  \begin{equation}
    \label{eq:proof-leftmost-epsilon-powers-1}
    \begin{aligned}
      k_1-q_1 &= j-K+1
      ,\\
      k_2-q_2 &= j-K+2
      ,\\
      (k_1-q_1) + (k_2-q_2) - 1 &= 2(j-K+1)
      ,
    \end{aligned}
  \end{equation}
  which (because of the assumption $2 \le j \le K-2$)
  means that the dominant contribution 
  will come from the fourth term in Lemma~\ref{lem:heineintegral-epsilon},
  of order~$\epsilon^{2(j-K+1)}$.
  For $\hat{\detJ}_{K-1,K-1}^{10}$ we find
  $q_1 = K-2$ and $q_2 = K-3$, so
  \begin{equation}
    \label{eq:proof-leftmost-epsilon-powers-2}
    \begin{aligned}
      k_1-q_1 &= j-K+2
      ,\\
      k_2-q_2 &= j-K+1
      ,\\
      (k_1-q_1) + (k_2-q_2) - 1 &= 2(j-K+1)
      ,
    \end{aligned}
  \end{equation}
  which implies that the dominant power is~$\epsilon^{2(j-K+1)}$ here as well.
  However, in the coefficient
  \begin{equation*}
    \bigl( Z \alpha^{q_1} \bigr) \, \bigl( W \beta^{q_2} \bigr) \,
    \left( \frac{\alpha \, \beta}{\alpha +\beta} \right)
  \end{equation*}
  appearing in the the dominant term,
  the factor $\alpha^{q_1}$
  will be different, since $q_1 = K-1$ in the first case and $q_1 = K-2$
  in the second case.
  (But since $\beta=1$, the factor $\beta^{q_2}$ makes no difference.)
  We can compensate for this discrepancy by redefining
  \begin{equation}
    \label{eq:redefine-C}
    \hat{C} = \alpha C = (\theta / \tau_1) \, C
    ,
  \end{equation}
  which gives
  \begin{equation}
    X_1 = \lim_{\epsilon \to 0} \hat{X}_1
    = \frac{\detJ_{K-1,K-2}^{00}}{\detJ_{K-2,K-3}^{11} + \hat{C} \, \alpha^{-1} \, \detJ_{K-2,K-2}^{10}}
    = \frac{\detJ_{K-1,K-2}^{00}}{\detJ_{K-2,K-3}^{11} + C \, \detJ_{K-2,K-2}^{10}}
    ,
  \end{equation}
  as desired.
  The amplitude formula works out as it should too, since
  $\mu_{K-1} = 1/\epsilon$
  and
  $\lambda_K = \alpha/\epsilon$,
  so that $\hat{M}/\hat{L} = M/(\alpha L)$:
  \begin{equation}
    Q_1 = \lim_{\epsilon \to 0} \hat{Q}_1
    =
    \frac{M}{\alpha L}
    \left( \frac{\detJ_{K-2,K-3}^{11}}{\alpha^{-1} \detJ_{K-2,K-2}^{10}} + \hat{C} \right)
    =
    \frac{M}{L}
    \left( \frac{\detJ_{K-2,K-3}^{11}}{\detJ_{K-2,K-2}^{10}} + C \right)
    .
  \end{equation}
  Clearly $C > 0$ if and only if $\hat{C} = \alpha C > 0$,
  so we get no new constraints from this redefinition.
  
\item
  Finally, if $p = j+1$ where $j = K-1$, then we are looking at what happens to the
  $\hat{X}_1$-peakon in a configuration like this,
  where are killing the $\hat{Y}_1$-peakon:
  \begin{equation*}
    \underbrace{\arrowX \Y \X \X \X \X}
    \Y \Y \Y \Y
    \X \X \X
    \Y
    \X \X \X
    \Y
    \X
    \Y \Y
    \X \X \X \X
    \Y
    \X \X
    \Y \Y \Y
    .
  \end{equation*}
  We get the same numbers as in
  \eqref{eq:proof-leftmost-epsilon-powers-1}
  and~\eqref{eq:proof-leftmost-epsilon-powers-2} above,
  but now with $j = K-1$, so that $j-K+1=0$.
  Therefore there are three dominant contributions of order~$\epsilon^0$ in each term,
  some of which are however zero because the lower right index is outside of the range $[0, K-2]$:
  \begin{equation}
    \begin{aligned}
      \hat{\detJ}_{K,K-1}^{00} &
      =
      \underbrace{\detJ_{K,K-1}^{00}}_{=0}
      + \bigl( Z \alpha^{K-1} \bigr) \underbrace{\detJ_{K-1,K-1}^{00}}_{=0}
      + \bigl( Z \alpha^{K-1} \bigr) \, W \, \frac{\alpha}{\alpha + 1} \, \detJ_{K-1,K-2}^{00}
      + \mathcal{O}(\epsilon)
      \\ &
      = 
      \sigma_1 \theta \, \detJ_{K-1,K-2}^{00}
      + \mathcal{O}(\epsilon)
      ,
      \\[1ex]
      \hat{\detJ}_{K-1,K-2}^{11} &
      =
      \detJ_{K-1,K-2}^{11}
      + \bigl( Z \alpha^{K-1} \bigr) \detJ_{K-2,K-2}^{11}
      + \bigl( Z \alpha^{K-1} \bigr) \, W \, \frac{\alpha}{\alpha + 1} \, \detJ_{K-2,K-3}^{11}
      + \mathcal{O}(\epsilon)
      \\ &
      =
      \detJ_{K-1,K-2}^{11}
      + (\tau_1 + \theta) \detJ_{K-2,K-2}^{11}
      + \sigma_1 \theta \, \detJ_{K-2,K-3}^{11}
      + \mathcal{O}(\epsilon)
      \\ &
      =
      M \detJ_{K-1,K-2}^{10}
      + (\tau_1 + \theta) \, M \detJ_{K-2,K-2}^{10}
      + \sigma_1 \theta \, \detJ_{K-2,K-3}^{11}
      + \mathcal{O}(\epsilon)
      ,
      \\[1ex]
      \hat{\detJ}_{K-1,K-1}^{10} &
      =
      \underbrace{\detJ_{K-1,K-1}^{10}}_{=0}
      + W \detJ_{K-1,K-2}^{10}
      + \bigl( Z \alpha^{K-2} \bigr) \, W \, \frac{\alpha}{\alpha + 1} \, \detJ_{K-2,K-2}^{10}
      + \mathcal{O}(\epsilon)
      \\ &
      = 
      \sigma_1 \detJ_{K-1,K-2}^{10}
      + \sigma_1 \tau_1 \, \detJ_{K-2,K-2}^{10}
      + \mathcal{O}(\epsilon)
      ,
    \end{aligned}
  \end{equation}
  where we have used in the middle equation that
  \begin{equation}
    \detJ_{i,K-2}^{rs}
    = \sum_{I \in \binom{[K-1]}{i}}
    \Psi_{I[K-2]} \, \lambda_I^r \, a_I \, \mu_{[K-2]}^s b_{[K-2]}
    =  M^s \, \detJ_{i,K-2}^{r0}
    ,
  \end{equation}
  with $M = \mu_{[K-2]} = \prod_{j=1}^{K-2} \mu_j$.
  This gives
  \begin{equation}
    \begin{split}
      X_{1,1}
      &
      = \lim_{\epsilon \to 0} \hat{X}_1
      =
      \frac{\sigma_1 \theta \, \detJ_{K-1,K-2}^{00}}
      {\left[
          \begin{aligned}
            & M \detJ_{K-1,K-2}^{10}
            + (\tau_1 + \theta) \, M \detJ_{K-2,K-2}^{10}
            + \sigma_1 \theta \, \detJ_{K-2,K-3}^{11}
            \\ &
            + \hat{C} \sigma_1
            \bigl( \detJ_{K-1,K-2}^{10} + \tau_1 \, \detJ_{K-2,K-2}^{10} \bigr)
          \end{aligned}
        \right]}
      \\ &
      =
      \frac{\detJ_{K-1,K-2}^{00}}
      {\detJ_{K-2,K-3}^{11}
        + \left( \dfrac{M}{\sigma_1} + \dfrac{\tau_1 (M + \hat{C} \sigma_1)}{\sigma_1 \theta} \right) \detJ_{K-2,K-2}^{10}
        + \dfrac{M + \hat{C} \sigma_1}{\sigma_1 \theta} \, \detJ_{K-1,K-2}^{10}}
      .
    \end{split}
  \end{equation}
  If we now let $C = \tau_1 (M + \hat{C} \sigma_1)/(\sigma_1 \theta)$,
  i.e., if we make the redefinition
  \begin{equation}
    \label{eq:kill-Y1-redefine-C}
    \hat{C} = \frac{\theta C}{\tau_1} - \frac{M}{\sigma_1}
    ,
  \end{equation}
  then this formula agrees with the desired result,
  namely~\eqref{eq:even-X-leftmost-group-pos}
  with $K-1$ instead of~$K$:
  \begin{equation}
    \label{eq:first-peakon-new-leftmost-X-group}
    \begin{aligned}
      X_{1,1}
      = \lim_{\epsilon \to 0} \hat{X}_1
      &
      = \frac{\detJ_{K-1,K-2}^{00}}{\detJ_{K-2,K-3}^{11} + \frac{M}{\sigma_1} \detJ_{K-2,K-2}^{10} + C \left( \detJ_{K-2,K-2}^{10} + \frac{1}{\tau_1} \detJ_{K-1,K-2}^{10} \right)}
      \\ &
      = \frac{\detJ_{K-1,K-2}^{00}}{\detJ_{K-2,K-3}^{11} + \frac{1}{\sigma_1} \detJ_{K-2,K-2}^{11} + C \left( \detJ_{K-2,K-2}^{10} + \frac{1}{\tau_1} \detJ_{K-1,K-2}^{10} \right)}
    .
    \end{aligned}
  \end{equation}
  Similarly, for the amplitude we obtain~\eqref{eq:even-X-leftmost-group-amp}
  with $K-1$ instead of~$K$:
  \begin{equation}
    \begin{split}
      Q_{1,1}
      &
      = \lim_{\epsilon \to 0} \hat{Q}_1
      = \lim_{\epsilon \to 0}
      \frac{\hat{M}}{\hat{L}}
      \left( \frac{\hat{\detJ}_{K-1,K-2}^{11}}{\hat{\detJ}_{K-1,K-1}^{10}} + \hat{C} \right)
      \\ &
      = \frac{M}{L \, \theta / \tau_1}
      \left(
        \frac{M \detJ_{K-1,K-2}^{10}
          + (\tau_1 + \theta) \, M \detJ_{K-2,K-2}^{10}
          + \sigma_1 \theta \, \detJ_{K-2,K-3}^{11}
        }{\sigma_1 \bigl( \detJ_{K-1,K-2}^{10} + \tau_1 \, \detJ_{K-2,K-2}^{10} \bigr)}
        + \hat{C}
      \right)
      \\ &
      = \frac{M \, \tau_1}{L \, \theta}
      \left(
        \frac{\theta \, \bigl( M \detJ_{K-2,K-2}^{10} + \sigma_1 \, \detJ_{K-2,K-3}^{11} \bigr)
        }{\sigma_1 \bigl( \detJ_{K-1,K-2}^{10} + \tau_1 \, \detJ_{K-2,K-2}^{10} \bigr)}
        + \frac{M}{\sigma_1}
        + \hat{C}
      \right)
      \\ &
      =
      \frac{M}{L}
      \left( \frac{ \tau_1 \bigl( \detJ_{K-2,K-2}^{11} + \sigma_1 \detJ_{K-2,K-3}^{11} \bigr)}{\sigma_1 \bigl( \detJ_{K-1,K-2}^{10} + \tau_1 \detJ_{K-2,K-2}^{10} \bigr)} + C \right)
      \\ &
      =
      \frac{M}{L}
      \left( \frac{\detJ_{K-2,K-3}^{11} + \frac{1}{\sigma_1} \detJ_{K-2,K-2}^{11}}{\detJ_{K-2,K-2}^{10} + \frac{1}{\tau_1} \detJ_{K-1,K-2}^{10}} + C \right)
      .
    \end{split}
  \end{equation}
  The condition $\hat{C} > 0$ clearly implies that we must require $C>0$
  in the redefinition~\eqref{eq:kill-Y1-redefine-C}, to begin with.
  However, we also need the ghost parameter~$\theta$ to satisfy
  $\frac{\tau_1 M}{\sigma_1 C} < \theta$.
  If $\hat{N}_2^X \ge 2$ (i.e., $N_1^X \ge 3$),
  then we also have the condition $\theta < \tau_2$
  coming from the redefinition $\hat{\tau}_1 = \tau_2 - \theta$
  in~\eqref{eq:redefine-parameters-sigma-tau},
  and the combination of these two inequalities for~$\theta$ leads to the stronger
  constraint~\eqref{eq:constraint-C-general} for the parameter~$C$:
  \begin{equation}
    \frac{\tau_1 M}{\sigma_1 C} < \tau_2
    .
  \end{equation}
  But if $\hat{N}_2^X = 1$ (i.e., $N_1^X = 2$),
  then there is no $\tau_2$, so then the constraint is just $C > 0$.
\end{itemize}

\subsubsection{What happens to $Y$-singletons}
\label{sec:what-happens-Y-singletons}

The computations for singleton $Y$-groups are similar, and we omit the details,
except for the rightmost $Y$-peakon which is slightly exceptional when the second rightmost
$Y$-peakon is killed:
if $j=1$ and $\hat{N}_{K}^Y = 1$,
perform the additional substitution
\begin{equation}
  \label{eq:redefine-D}
  \hat{D} = D - \sigma_{K-1,1}^X
  ,
\end{equation}
to get
\begin{equation}
  Y_{K-1}
  = \lim_{\epsilon \to 0} \hat{Y}_{K}
  = \lim_{\epsilon \to 0} \left( \hat{\detJ}^{00}_{11} + \hat{D} \hat{\detJ}^{00}_{10} \right)
  = \left( \detJ_{11}^{00} + \sigma_{K-1,1}^X \detJ_{10}^{00} \right)
  + \hat{D} \detJ_{10}^{00}
  = \detJ_{11}^{00} + D \detJ_{10}^{00}
  .
\end{equation}
Let $N = \hat{N}_{K}^X $,
so that the rightmost $X$-group contains $N+1$ peakons after the killing.
If $N=1$, then the requirement $\hat{D} > 0$ implies that $D > \sigma_{K-1,1}^X$,
whereas if $N \ge 2$,
we find inductively that the condition
$\hat{D} > \hat{\sigma}_{K,N-1}^X$,
together with the redefinition
$\hat{\sigma}_{K,N-1}^X = \sigma_{K-1,N}^X - \sigma_{K-1,1}^X$
from~\eqref{eq:redefine-parameters-sigma-tau},
implies that
$D > \sigma_{K-1,N}^X$.
This is the origin of the constraint~\eqref{eq:constraint-last-sigma-D-even}.

We may remark that Example~\ref{ex:proof-technique} contained an illustration of
this substitution in the case~$N=1$.

The computations above reflect the fact
mentioned in Remark~\ref{rem:D-is-extra-tau},
that the parameter~$D$
appears in the solution formulas as if it were an
``extra $\tau$-parameter'' for the rightmost $Y$-group.
If $\hat{N}_K^Y \ge 2$, then we would redefine
$\hat{\tau}_{K,1}^Y = \tau_{K-1,1}^Y - \sigma_{K-1,1}^X$
as described in~\eqref{eq:funny-redefinition-rightmost}.
But if $\hat{N}_K^Y = 1$, then there is no $\hat{\tau}_{K,1}^Y$ to redefine,
and instead we perform the corresponding redefinition on $\hat{D}$,
which steps in to play the role of that group's first (and only)
$\tau$-parameter.

\subsection{How to kill an $X$-singleton}
\label{sec:kill-X-even}

We have shown in detail above how to kill any $Y$-singleton
(except the rightmost one which is treated in Section~\ref{sec:proofs-odd})
and how to investigate what then becomes of the solution formulas for the other peakons.
We will now outline how to deal with the $X$-singletons.
The verifications are similar to those in Section~\ref{sec:kill-Y-even},
and we only indicate some exceptional cases.

\paragraph{Killing a typical $X$-peakon.}

Since we are always assuming that the configuration starts with an $X$-peakon,
we never need to kill the leftmost $X$-peakon.
Therefore we shall consider the killing of a singleton constituting
the $j$th $X$-group from the right,
i.e., the $\hat{X}_{j'}$-peakon
where $j' = K+1-j$ and $1 \le j \le K-1$.
This will join the singleton $\hat{Y}_{(j+1)'}$-peakon to the
$\hat{Y}_{j'}$-group containing $N = \hat{N}_{j'}^Y$ peakons,
which after renumbering will be the $Y_{K-j}$-group
containing $N+1$ peakons,
i.e.,
the $j$th $Y$-group from the right in a configuration with
$(K-1)+(K-1)$ groups.
The rightmost and the second leftmost $X$-peakons are special,
so we assume first that $2 \le j \le K-2$.

To achieve this, we choose the parameters
in~\eqref{eq:substitution-kill-peakon-even} as follows:
\begin{equation}
  \label{eq:substitution-kill-X-even}
  \begin{gathered}
    k_1 = j-1
    ,\qquad
    k_2 = j-2
    ,
    \\[1ex]
    \alpha = 1
    ,\qquad
    \beta = \frac{\theta}{\tau_1}
    ,\qquad
    Z= \sigma_1
    ,\qquad
    W = \frac{(\beta + 1) \, \tau_1}{\beta^{j-2}}
    = (\tau_1 + \theta) (\tau_1 / \theta)^{j-2}
    .
  \end{gathered}
\end{equation}
If $N \ge 2$, we also redefine the internal parameters of the $Y$-group being enlarged:
\begin{equation}
  \label{eq:redefine-parameters-sigma-tau-2}
  \begin{aligned}
    \hat{\tau}_1 &= \tau_2 - \theta
    \\
    \hat{\tau}_i &= \tau_{i+1}
    ,&& 2 \le i \le N-1
    ,
    \\
    \hat{\sigma}_i &= \sigma_{i+1} - \sigma_1
    ,&& 1 \le i \le N-1
    ,
  \end{aligned}
\end{equation}
where
\begin{equation*}
  \hat{\tau}_i = \hat{\tau}_{K+1-j,i}^Y
  ,\qquad
  \hat{\sigma}_i = \hat{\sigma}_{K+1-j,i}^Y
  ,\qquad
  \tau_i = \tau_{K-j,i}^Y
  ,\qquad
  \sigma_i = \sigma_{K-j,i}^Y
  .
\end{equation*}
Then we substitute in the solution formulas and let $\epsilon \to 0$.

If $j \ge 2$ and the $\hat{X}_{(j-1)'}$-group
(the $X$-group to the right of the $X$-singleton that we are killing)
contains two or more peakons, then like in~\eqref{eq:funny-redefinition} above
we must also redefine its first $\tau$-parameter as
\begin{equation}
  \label{eq:funny-redefinition-2}
  \hat{\tau}_{K+1-(j-1),1}^X = \tau_{(K-1)+1-(j-1),1}^X - \sigma_{(K-1)+1-j,1}^Y
\end{equation}
to prevent the $\sigma_1$-parameter from the intermediate $Y$-group
from contaminating the formulas.
In the same way as for~\eqref{eq:funny-redefinition},
this is the origin of the second constraint
in~\eqref{eq:constraint-last-sigma-first-tau},
``the last~$\sigma$ in a $Y$-group is less than the first~$\tau$
in the next $X$-group''.

\paragraph{Killing the second leftmost $X$-peakon.}

To kill the $\hat{X}_2$-peakon, follow the same steps as for a typical $X$-peakon above,
with $j=K-1$.
This leads to
\begin{equation}
  \begin{split}
    X_1 = \lim_{\epsilon \to 0} \hat{X}_1
    &
    = \lim_{\epsilon \to 0} \frac{\hat{\detJ}^{00}_{K,K-1}}{\hat{\detJ}^{11}_{K-1,K-2} + \hat{C} \, \hat{\detJ}^{10}_{K-1,K-1}}
    \\ &
    = \frac{\sigma_1 \overbrace{\detJ_{K-1,K-1}^{00}}^{=0} + \sigma_1 \tau_1 \detJ_{K-1,K-2}^{00}}{\sigma_1 \underbrace{\detJ_{K-2,K-2}^{11}}_{= M \detJ_{K-2,K-2}^{10} } + \sigma_1 \tau_1 \detJ_{K-2,K-3}^{11} + \hat{C} \, \sigma_1 \theta \detJ_{K-2,K-2}^{10}}
    \\ &
    = \frac{\detJ_{K-1,K-2}^{00}}{\detJ_{K-2,K-3}^{11} + \frac{\hat{C} \, \theta + M}{\tau_1} \detJ_{K-2,K-2}^{10}}
    ,
  \end{split}
\end{equation}
where $M = \mu_1 \dotsm \mu_{K-2}$,
so that we must also let $C = (\hat{C} \, \theta + M)/\tau_1$,
or equivalently
\begin{equation}
  \hat{C} = (C \, \tau_1 - M) / \theta
  ,
\end{equation}
in order to get the correct formula for~$X_1$
(and also for $Q_1$ in a similar way),
free from $\tau_1 = \tau_{1,1}^Y$ and~$\theta$.
This redefinition is the reason for the constraint~\eqref{eq:constraint-C-simpler},
namely $M < C \, \tau_1$.

\paragraph{Killing the rightmost $X$-peakon.}

To kill the $\hat{X}_K$-peakon, follow the same steps as above, with $j=1$,
but if the rightmost group is a singleton ($\hat{N}_K^Y = 1$),
also set
\begin{equation}
  \hat{D} = D - \theta
  ,
\end{equation}
so that $\theta$ will not appear in the formulas for
the rightmost peakon after the killing:
\begin{equation}
  \begin{split}
    Y_{K-1,2} = \lim_{\epsilon \to 0} \hat{Y}_{K}
    &
    = \lim_{\epsilon \to 0} \bigl( \hat{\detJ}^{00}_{11} + \hat{D} \, \hat{\detJ}^{00}_{10} \bigr)
    \\ &
    = \detJ_{11}^{00}
    + (\tau_1 + \theta) \detJ_{10}^{00}
    + \sigma_1 \theta
    + \hat{D} \bigl( \detJ_{10}^{00} + \sigma_1 \bigr)
    \\ &
    = \detJ_{11}^{00}
    + (\tau_1 + \theta + \hat{D}) \detJ_{10}^{00}
    + \sigma_1 (\theta + \hat{D})
    \\ &
    = \detJ_{11}^{00}
    + (\tau_1 + D) \detJ_{10}^{00}
    + \sigma_1 D
    .
  \end{split}
\end{equation}
This redefinition does not lead to any new constraint, only $D > 0$ as usual.
(But $\theta$ must satisfy $0 < \theta < D$, which is relevant when looking at
the characteristic curves; cf.~\eqref{eq:char-within-rightmost}.)

For $\hat{N}_K^Y \ge 2$, the redefinition $\hat{\tau}_{K,1}^Y = \tau_{K-1,2}^Y - \theta$
in~\eqref{eq:redefine-parameters-sigma-tau-2}
already does the job of absorbing the~$\theta$
in the formulas for the rightmost $Y$-group,
so in that case we just let $\hat{D} = D$.
Like in Section~\ref{sec:what-happens-Y-singletons},
we see here how the parameter~$D$ acts as an additional
``last $\tau$-parameter'' for the rightmost $Y$-group:
if $\hat{N}_K^Y = 1$, then there is no $\hat{\tau}_{K,1}^Y$ to redefine,
but we perform the corresponding redefinition on $\hat{D}$ instead.

\begin{remark}
  For $K=2$ the rightmost and the second leftmost $X$-peakons coincide,
  so then if $\hat{N}_2^Y=1$ we must perform both of the special substitutions described above.
  This only occurs when killing~$\hat{X}_2$ in the $2+2$ interlacing configuration:
  \begin{equation*}
    \X \Y \arrowX \Y
  \end{equation*}
\end{remark}

\section{Asymptotics for the even  case}
\label{sec:asymptotics-even}

In this section, we study the limiting behavior as $t \to \pm \infty$
of peakon solutions of the Geng--Xue equation, in the even case ($K+K$ groups).
The explicit formulas describing these solutions were given in Section~\ref{sec:solutions-even},
in terms of exponential functions with different growth rates,
and determining the asymptotics is simply a matter of identifying
the dominant exponential terms in these formulas
as $t \to \pm \infty$.

We will see that asymptotically each peakon trajectory approaches a straight line,
and also the logarithms of the amplitudes are asymptotically straight lines,
as was illustrated in the examples in Section~\ref{sec:examples-groups-even}.

Since singletons are given by the same formulas as in the interlacing case,
the asymptotics will agree with the interlacing case
\cite[Theorems 9.3 and~9.8]{lundmark-szmigielski:2017:GX-dynamics-interlacing};
we first recall these results here for completeness,
in Theorem~\ref{thm:asymptotics-singletons-even}.
Our new result concern the asymptotics for the groups with $N \ge 2$,
given in Theorem~\ref{thm:asymptotics-positions-even} for the positions
and Theorem~\ref{thm:asymptotics-amplitudes-even} for the amplitudes.

As usual, we will assume that the eigenvalues are numbered in increasing order:
\begin{equation}
  \label{eq:eigenvalues-ordered-even}
  0 < \lambda_1 < \lambda_2 < \dots < \lambda_K
  , \qquad
  0 < \mu_1 < \mu_2 < \dots < \mu_{K-1}
  .
\end{equation}
We remind the reader that all notation has been defined in
Sections~\ref{sec:interlacing-review} and~\ref{sec:more-notation}.
In particular, the products of the eigenvalues are denoted by
\begin{equation*}
  L = \lambda_{[1,K]} = \lambda_1 \dotsm \lambda_K
  ,\qquad
  M = \mu_{[1,K-1]} = \mu_1 \dotsm \mu_{K-1}
  .
\end{equation*}

\begin{theorem}
  \label{thm:asymptotics-singletons-even}

  Any singletons in a solution with $K+K$ groups,
  and in particular all peakons in the $K+K$ interlacing solution,
  where $K \ge 1$ and the eigenvalues are ordered as in~\eqref{eq:eigenvalues-ordered-even},
  satisfy the following asymptotic formulas.

  \begin{itemize}
  \item  As $t \to +\infty$, if $K \ge 2$:

    All $X$-singletons except the leftmost one
    ($j'=K+1-j$ with $1 \le j \le K-1$):
    \begin{equation}
      \begin{split}
        x_{j'}(t) &=
        \frac{t}{2} \left( \frac{1}{\lambda_{j}} + \frac{1}{\mu_j} \right)
        + \frac12 \ln \left( \frac{2 a_{j}(0) \, b_j(0) \, \Psi_{[1,j][1,j]}}{\lambda_{[1,j-1]} \, \mu_{[1,j-1]} \, \Psi_{[1,j-1][1,j-1]}} \right)
        + o(1)
        ,
        \\[1em]
	\ln m_{j'}(t) &=
        \frac{t}{2} \left( \frac{1}{\lambda_{j}} - \frac{1}{\mu_j} \right)
        + \frac12 \ln \left( \frac{2 a_{j}(0) \, b_j(0) \, \Psi_{[1,j][1,j]}}{\lambda_{[1,j-1]} \, \mu_{[1,j-1]} \, \Psi_{[1,j-1][1,j-1]}} \right)
        \\
        &\quad
        +\ln \left( \frac{\mu^2_{[1,j-1]} \Psi_{[1,j][1,j-1]}}{2 b_j(0) \, \lambda_{[1,j]} \, \Psi_{[1,j][1,j]}} \right)
        + o(1)
        .
      \end{split}
    \end{equation}
    The leftmost $X$-singleton:
    \begin{equation}
      \begin{split}
        x_1(t) &=
        \frac{t}{2} \left( \frac{1}{\lambda_{K}} \right)
        + \frac12 \ln \left( \frac{2 a_K(0) \, \Psi_{[1,K][1,K-1]}}{C \, \lambda_{[1,K-1]} \, \Psi_{[1,K-1][1,K-1]}} \right)
        + o(1)
        ,
        \\[1em]
	\ln m_1(t) &=
        \frac{t}{2} \left( \frac{1}{\lambda_{K}} \right)
        + \frac12 \ln \left( \frac{2 a_K(0) \, \Psi_{[1,K][1,K-1]}}{C \, \lambda_{[1,K-1]} \, \Psi_{[1,K-1][1,K-1]}} \right)
	+ \ln \left( \frac{M C}{2L} \right)
        + o(1)
        .
      \end{split}
    \end{equation}
    All $Y$-singletons except the rightmost one ($j'=K+1-j$ with $2 \le j \le K$):
    \begin{equation}
      \begin{split}
        y_{j'}(t) &=
        \frac{t}{2} \left( \frac{1}{\mu_{j-1}} + \frac{1}{\lambda_j} \right)
        + \frac12 \ln \left( \frac{2 a_j(0) \, b_{j-1}(0) \, \Psi_{[1,j][1,j-1]}}{\lambda_{[1,j-1]} \, \mu_{[1,j-2]} \, \Psi_{[1,j-1][1,j-2]}} \right)
        + o(1)
        ,
        \\[1ex]
        \ln n_{j'}(t) &=
        \frac{t}{2} \left( \frac{1}{\mu_{j-1}} - \frac{1}{\lambda_j} \right)
        + \frac12 \ln \left( \frac{2 a_j(0) \, b_{j-1}(0) \, \Psi_{[1,j][1,j-1]}}{\lambda_{[1,j-1]} \, \mu_{[1,j-2]} \, \Psi_{[1,j-1][1,j-2]}} \right)
        \\ & \quad
        + \ln \left( \frac{\lambda_{[1,j-1]}^2 \, \Psi_{[1,j-1][1,j-1]}}{2 a_j(0) \, \mu_{[1,j-1]} \, \Psi_{[1,j][1,j-1]}} \right)
        + o(1)
        .
      \end{split}
    \end{equation}
    The rightmost $Y$-singleton:
    \begin{equation}
      \begin{split}
        y_K(t) &=
        \frac{t}{2} \left( \frac{1}{\lambda_1} + \frac{1}{\mu_1} \right)
        + \frac12 \ln \bigl( 2 a_1(0) \, b_1(0) \, \Psi_{\{1\}\{1\}} \bigr)
        + o(1)
        ,
        \\[1ex]
        \ln n_K(t) &=
        \frac{t}{2} \left( \frac{1}{\mu_1} - \frac{1}{\lambda_1} \right)
        + \frac12 \ln \bigl( 2 a_1(0) \, b_1(0) \, \Psi_{\{1\}\{1\}} \bigr)
        + \ln \left( \frac{1}{2 a_1(0) \, \Psi_{\{1\}\emptyset}} \right)
        + o(1)
        .
      \end{split}
    \end{equation}
    
  \item As $t \to -\infty$, if $K \ge 2$:

    All $X$-singletons except the leftmost one ($2 \le j \le K$):
    \begin{equation}
      \begin{split}
        x_j(t) &=
        \frac{t}{2} \left( \frac{1}{\lambda_{j}} + \frac{1}{\mu_{j-1}} \right)
        + \frac12 \ln \left( \frac{2 a_j(0) \, b_{j-1}(0) \, \Psi_{[j,K][j-1,K-1]}}{\lambda_{[j+1,K]} \, \mu_{[j,K-1]} \, \Psi_{[j+1,K][j,K-1]}} \right)
        + o(1)
        ,
     	\\[1ex]
        \ln m_j(t) &=
        \frac{t}{2} \left( \frac{1}{\lambda_{j}} - \frac{1}{\mu_{j-1}} \right)
        + \frac12 \ln \left( \frac{2 a_j(0) \, b_{j-1}(0) \, \Psi_{[j,K][j-1,K-1]}}{\lambda_{[j+1,K]} \, \mu_{[j,K-1]} \, \Psi_{[j+1,K][j,K-1]}} \right)
    	\\ & \quad
    	+ \ln \left( \frac{\mu_{[j,K-1]}^2 \Psi_{[j,K][j,K-1]}}{2 b_{j-1}(0) \, \lambda_{[j,K]} \, \Psi_{[j,K][j-1,K-1]}} \right)
    	+ o(1)
        .
      \end{split}
    \end{equation}
    The leftmost $X$-singleton:
    \begin{equation}
      \begin{split}
        x_1(t) &=
        \frac{t}{2} \left( \frac{1}{\lambda_1} + \frac{1}{\mu_1} \right)
        + \frac12 \ln \left( \frac{2 a_1(0) \, b_1(0) \, \Psi_{[1,K][1,K-1]}}{\lambda_{[2,K]} \, \mu_{[2,K-1]} \, \Psi_{[2,K][2,K-1]}} \right)
        + o(1)
        ,
        \\[1ex]
	\ln m_1(t) &=
        \frac{t}{2} \left( \frac{1}{\lambda_1} - \frac{1}{\mu_1} \right)
        + \frac12 \ln \left( \frac{2 a_1(0) \, b_1(0) \,  \Psi_{[1,K][1,K-1]}}{\lambda_{[2,K]} \, \mu_{[2,K-1]} \, \Psi_{[2,K][2,K-1]}} \right)
        \\ & \quad
        + \ln \left( \frac{\mu_{[2,K-1]} \, M \, \Psi_{[2,K][2,K-1]}}{2 b_1(0) \, L \, \Psi_{[2,K][1,K-1]}} \right)
        + o(1)
        .
      \end{split}
    \end{equation}
    All $Y$-singletons except the rightmost one:
    \begin{equation}
      \begin{split}
     	y_j(t) &=
        \frac{t}{2} \left( \frac{1}{\lambda_{j}} + \frac{1}{\mu_{j}} \right)
        + \frac12 \ln \left( \frac{2 a_j(0) \, b_{j}(0) \, \Psi_{[j,K][j,K-1]}}{\lambda_{[j+1,K]} \, \mu_{[j+1,K-1]} \, \Psi_{[j+1,K][j+1,K-1]}} \right)
        + o(1)
        ,
     	\\[1ex]
        \ln n_j(t) &=
        \frac{t}{2} \left( \frac{1}{\mu_j} - \frac{1}{\lambda_{j}} \right)
        + \frac12 \ln \left( \frac{2 a_j(0) \, b_{j}(0) \, \Psi_{[j,K][j,K-1]}}{\lambda_{[j+1,K]} \, \mu_{[j+1,K-1]} \, \Psi_{[j+1,K][j+1,K-1]}} \right)
     	\\ & \quad
     	+ \ln \left( \frac{\lambda_{[j+1,K]}^2 \Psi_{[j+1,K][j,K-1]}}{2 a_{j}(0) \, \mu_{[j,K-1]} \, \Psi_{[j,K][j,K-1]}} \right)
     	+ o(1)
        .
      \end{split}
    \end{equation}
    The rightmost $Y$-singleton:
    \begin{equation}
      \begin{split}
        y_K(t) &= \frac{t}{2} \left( \frac{1}{\lambda_{K}} \right)
        + \frac12  \ln \bigl( 2 D \, a_K(0) \bigr)
        + o(1)
        ,
        \\[1ex]
	\ln n_K(t) &= \frac{t}{2} \left( \frac{-1}{\lambda_{K}} \right)
        + \frac12  \ln \bigl( 2 D \, a_K(0) \bigr)
        + \ln \left( \frac{1}{2 a_K(0)} \right)
        + o(1)
        .
      \end{split}
    \end{equation}

  \item The special case $K=1$,
    where the formulas are exact for all~$t$, not just asymptotically:

    The only $X$-singleton:
    \begin{equation}
      \begin{split}
        x_1(t) &=
        \frac{t}{2} \left( \frac{1}{\lambda_1} \right)
        + \frac12 \ln \left( \frac{2 a_1(0)}{C} \right)
        ,
        \\
       	\ln m_1(t) &=
        \frac{t}{2} \left( \frac{1}{\lambda_1} \right)
        + \frac12 \ln \left( \frac{2 a_1(0)}{C} \right)
        + \ln \left( \frac{C}{\lambda_1} \right)
       	.
      \end{split}
    \end{equation}
    The only $Y$-singleton:
    \begin{equation}
      \begin{split}
        y_1 &=
        \frac{t}{2} \left( \frac{1}{\lambda_1} \right)
        + \frac12 \ln \bigl( 2 D \, a_1(0) \bigr)
        ,
        \\
	\ln n_1 &=
        \frac{t}{2} \left( \frac{-1}{\lambda_1} \right)
        + \frac12 \ln \bigl( 2 D \, a_1(0) \bigr)
        + \ln \left( \frac{1}{2 a_1(0)} \right)
        .
      \end{split}
    \end{equation}
  \end{itemize}
\end{theorem}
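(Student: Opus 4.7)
My plan is to reduce this theorem to a routine (but careful) dominant-balance analysis of the determinants $\detJ_{ij}^{rs}$ appearing in the singleton formulas from Section~\ref{sec:solutions-even-X-singleton} and~\ref{sec:solutions-even-Y-singleton}. Since every singleton in any $K+K$ configuration is described by exactly the same formulas as in the $K+K$ interlacing case (this is the content of those two subsections, and is proven more conceptually in Section~\ref{sec:effective}), the asymptotics will automatically coincide with those derived by Lundmark and Szmigielski~\cite{lundmark-szmigielski:2017:GX-dynamics-interlacing} for the interlacing case. So the proof amounts to citing those results; but to make the argument self-contained I would redo the dominant-balance computation as follows.

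The time dependence enters only through $a_i(t) = a_i(0)\,e^{t/\lambda_i}$ and $b_j(t)=b_j(0)\,e^{t/\mu_j}$. Consequently, each summand
\begin{equation*}
\Psi_{IJ}\,\lambda_I^r\, a_I\,\mu_J^s\, b_J
\end{equation*}
in $\detJ_{ij}^{rs}=\detJ[K,K-1,r,s,i,j]$ has pure exponential time dependence $\exp\bigl(t\sum_{i\in I}1/\lambda_i + t\sum_{j\in J}1/\mu_j\bigr)$, the prefactor being $t$-independent. Under the ordering \eqref{eq:eigenvalues-ordered-even}, the rates $1/\lambda_i$ and $1/\mu_j$ are strictly decreasing. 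Therefore, as $t\to+\infty$, the unique dominant subsets are $I=[1,i]$ and $J=[1,j]$, while as $t\to-\infty$ they are $I=[K-i+1,K]$ and $J=[K-j,K-1]$. In each case the dominant term is strictly larger (in exponential growth rate) than every other term, so
\begin{equation*}
\detJ_{ij}^{rs}(t)=\Psi_{[1,i][1,j]}\,\lambda_{[1,i]}^r\,a_{[1,i]}(t)\,\mu_{[1,j]}^s\,b_{[1,j]}(t)\,\bigl(1+o(1)\bigr)\quad(t\to+\infty),
\end{equation*}
and symmetrically as $t\to-\infty$ with $[1,i]\to[K-i+1,K]$ and $[1,j]\to[K-j,K-1]$.

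I would then substitute these leading-order expressions into the singleton formulas \eqref{eq:even-X-typical-singleton}, \eqref{eq:even-X-leftmost-singleton}, \eqref{eq:even-Y-typical-singleton} and \eqref{eq:even-Y-rightmost-singleton}, and use Remark~\ref{rem:actual-positions-amplitudes} to extract $x_k$, $y_k$, $\ln m_k$, $\ln n_k$. The $t$-coefficients in the log amount to telescoping sums of $1/\lambda_i$ and $1/\mu_j$ over the relevant index ranges, and each one collapses neatly to a difference of two such rates. For the leftmost $X$-singleton as $t\to+\infty$ and the rightmost $Y$-singleton as $t\to-\infty$ one must additionally observe that the $C\,\detJ_{K-1,K-1}^{10}$ term in~\eqref{eq:even-X-leftmost-singleton} and the $D\,\detJ_{10}^{00}$ term in~\eqref{eq:even-Y-rightmost-singleton} dominate over $\detJ_{K-1,K-2}^{11}$ and $\detJ_{11}^{00}$ respectively (which is what produces the asymptotic velocities $1/(2\lambda_K)$ and $1/(2\lambda_K)$ with the $C$ or $D$ appearing inside the logarithm of the constant term). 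For $K=1$ everything is exact because each determinant reduces to a single summand.

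The only genuine obstacle is bookkeeping of the constant terms: after substituting dominant terms one obtains ratios of the form $\Psi_{IJ}/\Psi_{I'J'}$ with index sets differing by one element, along with products $\lambda_I^r\mu_J^s$ with $r,s\in\{0,1\}$, and one has to rearrange these into the form shown in the theorem. This is a mechanical calculation; the identities used are consequences of the definition~\eqref{eq:DeltaI} together with the simple combinatorial fact $\Psi_{IJ}=\Delta_I^2\,\twin\Delta_J^2/\Gamma_{IJ}$. Once this is done for the two ``typical'' cases (interior $X$-singleton and interior $Y$-singleton), the four edge cases (leftmost $X$, rightmost $Y$, and the $K=1$ base case) follow by the same pattern with the additional $C$- and $D$-adjustments noted above.
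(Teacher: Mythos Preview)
Your proposal is correct and follows essentially the same approach as the paper. The paper does not give an independent proof of this theorem but simply recalls the results from Lundmark and Szmigielski~\cite[Theorems~9.3 and~9.8]{lundmark-szmigielski:2017:GX-dynamics-interlacing}, whose proof is precisely the dominant-balance analysis of the sums $\detJ_{ij}^{rs}$ that you outline; the key asymptotic formula $\detJ_{ij}^{rs}(t)=\Psi_{[i][j]}\lambda_{[i]}^r\mu_{[j]}^s a_{[i]}(t)\,b_{[j]}(t)\,(1+o(1))$ as $t\to+\infty$ is quoted in the paper as Lemma~9.1 of that reference in the proof of Theorem~\ref{thm:asymptotics-positions-even}.
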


\begin{remark}
  In the formulas above, and also in the theorems which follow below,
  one can expand the definition of $\Psi_{IJ}$ and cancel all common
  factors from the ratios involving $\Delta_I^2$, $\twin \Delta_J^2$
  and~$\Gamma_{IJ}$ (see
  \cite{lundmark-szmigielski:2017:GX-dynamics-interlacing}, at the end
  of the proof of Theorem 9.3), for example
  \begin{equation*}
    \begin{split}
      &
      \frac{\Psi_{[1,j][1,j-1]}}{\lambda_{[1,j-1]} \, \mu_{[1,j-2]} \, \Psi_{[1,j-1][1,j-2]}}
      \\[1ex]
      &=
      \left( \prod_{r=1}^{j-1} \frac{(\lambda_r - \lambda_j)^2}{\lambda_r (\lambda_r + \mu_{j-1})} \right)
      \left( \prod_{s=1}^{j-2} \frac{(\mu_s - \mu_{j-1})^2}{(\lambda_j + \mu_s) \mu_s} \right)
      \frac{1}{\lambda_j + \mu_{j-1}}
      .
    \end{split}
  \end{equation*}
  However, in order to save space we have not done that.
\end{remark}

\begin{remark}
  We are using the notation $[1,j]$ instead of the shorter form $[j]$
  for the integer interval $[1,2,\dots,j]$, in order to make the
  formulas for $t \to +\infty$ more similar to the formulas for $t \to -\infty$
  where intervals of the form $[j,K]$ and $[j,K-1]$ appear.
\end{remark}

Now we present the asymptotics for the non-singleton groups.
We begin with the positions $x_{j,i}$ and~$y_{j,i}$.
Remark~\ref{rem:how-to-read-asymptotics-pos} after the theorem explains the most interesting features;
see also the examples in Section~\ref{sec:examples-groups-even}.

\begin{theorem}
  \label{thm:asymptotics-positions-even}
  
  The positions for non-singleton groups in the even case,
  with the eigenvalues ordered as in~\eqref{eq:eigenvalues-ordered-even},
  satisfy the following asymptotic formulas.

  \begin{itemize}
  \item As $t \to +\infty$:

    All $X$-groups except the leftmost one ($j'=K+1-j$ with $1 \le j \le K-1$), if $K \ge 2$:
    \begin{equation}
      \begin{aligned}
        x_{j',i}(t) &=
        \frac{t}{2} \left( \frac{1}{\lambda_{j+1}} + \frac{1}{\mu_j} \right)
        + \frac12 \ln \left( \frac{2 a_{j+1}(0) \, b_j(0) \, \Psi_{[1,j+1][1,j]}}{\lambda_{[1,j]} \, \mu_{[1,j-1]} \, \Psi_{[1,j][1,j-1]}} \right)
        + o(1)
        ,
        \\
        & \quad \text{for $1 \le i \le N_{j'}-1$}
        ,
        \\[1ex]
        x_{j',N_{j'}}(t) &= \frac{t}{2} \left( \frac{1}{\lambda_j} + \frac{1}{\mu_{j}} \right) + \frac12 \ln \left( \frac{ 2 a_j(0) \, b_{j}(0) \Psi_{[1,j][1,j]}}{\Psi_{[1,j-1][1,j-1]} \lambda_{[1,j-1]} \, \mu_{[1,j-1]}} \right)
        + o(1)
        .
      \end{aligned}
    \end{equation}
    The leftmost $X$-group:
    \begin{equation}
      \begin{aligned}
        x_{1,1}(t) &= \frac12 \ln \left( \frac{2 \tau_1}{C L} \right)
        + o(1)
        ,
        \\[1ex]
        x_{1,i}(t) &= \frac12 \ln \left( \frac{2 S_i}{L M} \right)
        + o(1)
        ,\quad \text{for $2 \le i \le N_1-1$}
        ,
        \\[1ex]
        x_{1,N_1}(t) &=\frac{t}{2} \left( \frac{1}{\lambda_K} \right) + \frac12 \ln \left( \frac{2 \sigma_{N_1-1} \, a_K(0) \, \Psi_{[1,K][1,K-1]}}{\lambda_{[1,K-1]} \, M \, \Psi_{[1,K-1][1,K-1]}} \right)
        + o(1)
        .
      \end{aligned}
    \end{equation}
    All $Y$-groups except the leftmost and rightmost ones
    ($j'=K+1-j$ with $2 \le j \le K-1$), if $K \ge 3$:
    \begin{equation}
      \label{eq:asy-even-posinf-Y-typical-group}
      \begin{aligned}
        y_{j',i}(t) &= \frac{t}{2} \left( \frac{1}{\lambda_j} + \frac{1}{\mu_j} \right) + \frac12 \ln \left( \frac{2 a_j(0) \, b_j(0) \, \Psi_{[1,j][1,j]}}{\lambda_{[1,j-1]} \, \mu_{[1,j-1]} \, \Psi_{[1,j-1][1,j-1]}} \right)
        + o(1)
        ,
        \\
        & \quad \text{for $1 \le i \le N_{j'}-1$}
        ,
        \\[1ex]
        y_{j',N_{j'}}(t) &= \frac{t}{2}  \left( \frac{1}{\lambda_j}+ \frac{1}{\mu_{j-1}} \right)+ \frac12 \ln \left( \frac{ 2 a_j(0) \, b_{j-1}(0) \, \Psi_{[1,j][1,j-1]}}{\lambda_{[1,j-1]} \, \mu_{[1,j-2]} \, \Psi_{[1,j-1][1,j-2]}} \right)
        + o(1)
        .
      \end{aligned}
    \end{equation}
    The leftmost $Y$-group, if $K \ge 2$:
    \begin{equation}
      \begin{aligned}
        y_{1,i}(t) &=
        \frac{t}{2} \left( \frac{1}{\lambda_{K}} \right)
        + \frac12 \ln \left( \frac{2 T_i a_{K}(0) \, \Psi_{[1,K][1,K-1]}}{\lambda_{[1,K-1]} \, M \, \Psi_{[1,K-1][1,K-1]}} \right)
        + o(1)
        ,
        \\ &
        \quad \text{for $1 \le i \le N_1-1$}
        ,
        \\[1ex]
        y_{1,N_1}(t) &=
        \frac{t}{2} \left( \frac{1}{\lambda_K} + \frac{1}{\mu_{K-1}} \right)
        + \frac12 \ln \left( \frac{2 a_K(0) \, b_{K-1}(0) \, \Psi_{[1,K][1,K-1]}}{\lambda_{[1,K-1]} \, \mu_{[1,K-2]} \, \Psi_{[1,K-1][1,K-2]}} \right)
        + o(1)
        .
      \end{aligned}
    \end{equation}
    The rightmost $Y$-group, if $K \ge 2$:
    \begin{equation}
      \begin{aligned}
        y_{K,i}(t) &=
        \frac{t}{2} \left( \frac{1}{\lambda_1} + \frac{1}{\mu_1} \right)
        + \frac12 \ln \bigl( 2 a_1(0) \, b_1(0) \, \Psi_{\{1\}\{1\}} \bigr)
        + o(1)
        ,\quad \text{for $1 \le i \le N_{K}$}
        .
      \end{aligned}
    \end{equation}
    The only $Y$-group, if $K=1$:
    \begin{equation}
      \begin{aligned}
        y_{1,i}(t) &=
        \frac{t}{2} \left( \frac{1}{\lambda_1} \right)
        + \frac12 \ln \bigl( 2 T_i \, a_1(0) \bigr)
        + o(1)
        ,\quad \text{for $1 \le i \le N_1-1$}
        ,
        \\
        y_{1,N_1}(t) &=
        \frac{t}{2} \left( \frac{1}{\lambda_1} \right)
        + \frac12 \ln \bigl( 2 (T_{N_1-1} + D) \,  a_1(0) \bigr) + o(1)
        ,
      \end{aligned}
    \end{equation}
    where the formula
    $y_{1,1}(t) = \frac{t}{2\lambda_1} + \frac12 \ln \bigl( 2 \tau_1 a_1(0) \bigr)$
    is exact and does not need the $o(1)$ term.

  \item As $t \to -\infty$:

    All $X$-groups except the leftmost and rightmost ones
    ($2 \le j \le K-1$), if $K \ge 3$:
    \begin{equation}
      \begin{aligned}
        x_{j,1}(t) &=
        \frac{t}{2} \left( \frac{1}{\lambda_{j}} + \frac{1}{\mu_{j-1}} \right)
        + \frac12 \ln \left( \frac{2 a_{j}(0) \, b_{j-1}(0) \, \Psi_{[j,K][j-1,K-1]}}{\lambda_{[j+1,K]} \, \mu_{[j,K-1]} \, \Psi_{[j+1,K][j,K-1]}} \right)
        + o(1)
        ,
        \\[1ex]
        x_{j,i}(t) &=
        \frac{t}{2} \left( \frac{1}{\lambda_{j}} + \frac{1}{\mu_{j}} \right)
        + \frac12 \ln \left( \frac{2 a_{j}(0) \, b_{j}(0) \, \Psi_{[j,K][j,K-1]}}{\lambda_{[j+1,K]} \, \mu_{[j+1,K-1]} \, \Psi_{[j+1,K][j+1,K-1]}} \right)
        + o(1)
        ,
        \\
        & \quad\text{for $2 \le i \le N_{j}$}
        .
      \end{aligned}
    \end{equation}
    The leftmost $X$-group, if $K \ge 2$:
    \begin{equation}
      \begin{aligned}
        x_{1,i}(t) &= \frac{t}{2} \left( \frac{1}{\lambda_1} + \frac{1}{\mu_1} \right) + \frac12 \ln \left( \frac{2 a_1(0) \, b_1(0) \, \Psi_{[1,K][1,K-1]}}{\lambda_{[2,K]} \, \mu_{[2,K-1]} \, \Psi_{[2,K][2,K-1]}} \right)
        + o(1)
        ,
        \\
        &
        \quad\text{for $1 \le i \le N_1$}
	.
      \end{aligned}
    \end{equation}
    The rightmost $X$-group, if $K \ge 2$:
    \begin{equation}
      \begin{aligned}
        x_{K,1}(t) &= \frac{t}{2} \left( \frac{1}{\lambda_{K}} + \frac{1}{\mu_{K-1}} \right)
        + \frac12 \ln \bigl( 2 a_K(0) \, b_{K-1}(0) \, \Psi_{\{K\}\{K-1\}} \bigr)
        + o(1)
        ,
        \\[1ex]
        x_{K,i}(t) &= \frac{t}{2} \left( \frac{1}{\lambda_{K}} \right)
        + \frac12 \ln \left( \frac{2 S_i \, a_K(0)}{T_i} \right)
        + o(1)
        ,
        \quad\text{for $2 \le i \le N_K-1$}
        ,
        \\[1ex]
        x_{K, N_K}(t) &= \frac{t}{2} \left( \frac{1}{\lambda_{K}} \right)
        + \frac12 \ln \bigl( 2 \sigma_{N_K-1} \, a_K(0) \bigr) + o(1)
        .
      \end{aligned}
    \end{equation}
    The only $X$-group, if $K=1$:
    \begin{equation}
      \begin{aligned}
        x_{1,1}(t) &=
        \frac{t}{2} \left( \frac{1}{\lambda_1} \right)
        + \frac12 \ln \left( \frac{2 \sigma_1 \, a_1(0)}{1 + C \, \sigma_1} \right)
        + o(1)
        ,
        \\[1ex]
        x_{1,i}(t) &= \frac{t}{2} \left( \frac{1}{\lambda_1} \right) + \frac12 \ln \left( \frac{2 S_i \, a_1(0)}{T_i} \right)
        + o(1)
        ,
        \quad\text{for $2\le i\le N_1-1$}
        ,
        \\[1ex]
        x_{1,N_1}(t) &= \frac{t}{2} \left( \frac{1}{\lambda_1} \right) + \frac12 \ln \bigl( 2 \sigma_{N_1-1} \, a_1(0) \bigr)
        ,
      \end{aligned}
    \end{equation}
    where the formula for $x_{1,N_1}(t)$ is exact and does not need an $o(1)$ term.

    All $Y$-groups except the rightmost one ($1 \le j \le K-1$), if $K \ge 2$:
    \begin{equation}
      \begin{aligned}
        y_{j,1}(t) &=
        \frac{t}{2} \left( \frac{1}{\lambda_{j}} + \frac{1}{\mu_{j}} \right)
        + \frac12 \ln \left( \frac{2 a_{j}(0) \, b_{j}(0) \, \Psi_{[j,K][j,K-1]}}{\lambda_{[j+1,K]} \, \mu_{[j+1,K-1]} \Psi_{[j+1,K][j+1,K-1]}} \right)
        + o(1)
        ,
        \\[1ex]
        y_{j,i}(t) &=
        \frac{t}{2} \left( \frac{1}{\lambda_{j+1}} + \frac{1}{\mu_{j}} \right)
        + \frac12 \ln \left( \frac{2 a_{j+1}(0) \, b_{j}(0) \, \Psi_{[j+1,K][j,K-1]}}{\lambda_{[j+2,K]} \, \mu_{[j+1,K-1]} \, \Psi_{[j+2,K][j+1,K-1]}} \right)
        + o(1)
        ,
        \\
        &
        \quad \text{for $2 \le i \le N_{j}$}
        .
      \end{aligned}
    \end{equation}
    The rightmost $Y$-group:
    \begin{equation}
      \begin{aligned}
	y_{K,1}(t) &=
        \frac{t}{2} \left( \frac{1}{\lambda_K} \right)
        + \frac12 \ln \bigl( 2 \tau_1 \, a_K(0) \bigr)
        + o(1)
        ,
	\\[1ex]
        y_{K,i}(t) &=
        \frac12 \ln(2 S_i)
        + o(1)
        ,
        \quad \text{for $1 \le i \le N_{K}-1$}
        ,
        \\[1ex]
        y_{K,N_{K}}(t) &=
        \frac12 \ln 2 (S_{N_K-1} \, + D \, \sigma_{N_K-1})
        + o(1)
        .
      \end{aligned}
    \end{equation}
  \end{itemize}
\end{theorem}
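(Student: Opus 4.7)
The plan is to read off the asymptotics directly from the explicit solution formulas in Section~\ref{sec:solutions-even} by identifying, in each sum
\[
  \detJ_{ij}^{rs} = \sum_{I,J}\Psi_{IJ}\,\lambda_I^r\,a_I\,\mu_J^s\,b_J,
\]
the term whose time-dependent factor $\exp\bigl(t\sum_{i\in I}1/\lambda_i+t\sum_{j\in J}1/\mu_j\bigr)$ dominates as $t\to\pm\infty$. Since the eigenvalues are ordered as in~\eqref{eq:eigenvalues-ordered-even}, the reciprocals decrease in the index, so the dominant summand as $t\to+\infty$ corresponds to the leftmost choice $I_*=[1,i]$, $J_*=[1,j]$, while as $t\to-\infty$ it corresponds to the rightmost choice $I_*=[K-i+1,K]$, $J_*=[K-j,K-1]$. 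The gap to the next-largest exponent is a positive combination of eigenvalue reciprocals, so the correction to the leading term is exponentially small; this is what produces the $o(1)$ error claimed in the theorem.

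For a typical inner group such as~\eqref{eq:even-X-typical-group-pos} with $1\le j\le K-1$, I would substitute this one-term asymptotic into every $\detJ$ in the formula, treat $T_i$, $S_i$, $\sigma_{i-1}$ as time-independent, and compare the three additive summands. As $t\to+\infty$ the term $\detJ_{j+1,j}^{00}$ beats both $T_i\detJ_{jj}^{00}$ and $S_i\detJ_{j,j-1}^{00}$ (its exponent contains an extra $1/\lambda_{j+1}$ and $1/\mu_j$), and similarly in the denominator $\detJ_{j,j-1}^{11}$ wins. The ratio of their leading terms simplifies after cancellation of common $\Psi$, $a_I(0)$ and $b_J(0)$ factors, and passing through $x_{k,i}=\tfrac12\ln(2X_{k,i})$ yields the stated formula. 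The last peakon $X_{j',N_{j'}}$ is given by the truncated formula whose leading summand is the singleton ratio $\detJ_{jj}^{00}/\detJ_{j-1,j-1}^{11}$, which explains why its asymptote agrees with that of the corresponding singleton. The analysis for $t\to-\infty$ and for the typical $Y$-group~\eqref{eq:even-Y-typical-group-pos} is parallel.

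The outermost groups require separate attention, because either the ``leading'' $\detJ$ vanishes (the lower index exceeds the allowed range, e.g.\ $\detJ_{K,K}^{00}=0$ so the leftmost $Y$-group formula~\eqref{eq:even-Y-typical-group-pos} with $j=K$ loses its first numerator term) or it is replaced by a constant-parameter contribution involving $C$, $D$, $\tau_1$ or $\sigma_{N-1}$, as in \eqref{eq:even-X-leftmost-group-pos} and~\eqref{eq:even-Y-rightmost-group-pos}. In those cases the dominant balance involves two $\detJ$'s whose index sets differ by one in only one coordinate, so the exponential rate of the ratio is only $1/(2\lambda_K)$ or $1/(2\mu_{K-1})$ rather than the full sum of two reciprocals. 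This is the mechanism producing the exceptional velocities for $y_{1,i}$, $x_{1,i}$, $x_{K,i}$ and the constant limits $\tfrac12\ln(2S_i)$, $\tfrac12\ln(2(S_{N_K-1}+D\sigma_{N_K-1}))$ inside the rightmost $Y$-group as $t\to-\infty$. For the leftmost $X$-group I would additionally use the constraint~\eqref{eq:constraint-C-general} to verify that the $C$-term dominates over the $M/\sigma_1$ term in~\eqref{eq:even-X-leftmost-group-pos}, which is precisely what that constraint is designed to enforce.

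The main obstacle is not any single deep estimate but the combinatorial bookkeeping: in each time direction, for each type of group (typical or outermost, $X$ or $Y$), one must identify which additive term is dominant, track the resulting cancellations, and check that the constraints on the group parameters collected in Section~\ref{sec:more-notation} are exactly what is needed to make the claimed dominant term truly dominant. The calculation for the amplitudes (Theorem~\ref{thm:asymptotics-amplitudes-even}) will follow the same pattern applied to the formulas \eqref{eq:even-X-typical-group-amp}, \eqref{eq:even-X-leftmost-group-amp}, \eqref{eq:even-Y-typical-group-amp} and~\eqref{eq:even-Y-rightmost-group-amp}, but since ratios of the form $\detJ/\detJ\cdot \detJ/\detJ$ occur, one must also keep track of constant prefactors such as $\mu_{[1,j-1]}^2$ and $\lambda_{[1,j]}^2$ that arise from comparing $\Psi_{I_*J_*}$ at nearby index sets.
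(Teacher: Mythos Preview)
Your approach is correct and is exactly the strategy the paper uses: identify the dominant term $I_*=[1,i]$, $J_*=[1,j]$ (resp.\ the top intervals) in each $\detJ_{ij}^{rs}$ as $t\to+\infty$ (resp.\ $-\infty$), substitute, compare the three additive summands in the numerator and denominator of each position formula, and take logarithms. The paper carries this out in detail only for the first line of~\eqref{eq:asy-even-posinf-Y-typical-group} and declares the rest analogous, so your more explicit roadmap of the exceptional cases (vanishing $\detJ$'s at the edges, the $C$-, $D$-, $\tau_1$-, $\sigma_{N-1}$-terms) is if anything more thorough.

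One small correction: the constraint~\eqref{eq:constraint-C-general} is \emph{not} what makes the $C/\tau_1$ term dominate in the denominator of $X_{1,1}$ as $t\to+\infty$. In that denominator the term $\tfrac{C}{\tau_1}\detJ_{K,K-1}^{10}$ carries the extra factor $a_K(t)=a_K(0)e^{t/\lambda_K}$ that all competing terms lack, so it wins on exponential rate alone, regardless of the sizes of $C$, $\sigma_1$, $\tau_1$. The constraints in Section~\ref{sec:more-notation} are there to guarantee the strict ordering of positions for all~$t$ (Section~\ref{sec:no-collisions}), not to decide which term is asymptotically dominant. Once you actually do the bookkeeping you will see that no parameter constraint is ever invoked in the asymptotic analysis.
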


\begin{remark}
  \label{rem:how-to-read-asymptotics-pos}
  Note that, as $t \to +\infty$, the rightmost peakon in each
  non-singleton group has a different asymptotic velocity than all
  other peakons in the group; the velocity of this rightmost peakon is
  the same as it would be if the group were a singleton.
  
  Note also that the formula for  the positions of the other peakons in the group,
  $x_{j,i}(t)$ or $y_{j,i}(t)$,
  in most cases does not depend on~$i$,
  so that all of them actually approach the \emph{same} line.
  Their asymptotic velocity is the same as a singleton group
  \emph{in the neighbouring left site} would have.
  (For example, the curves $x=y_{j,i}(t)$ with $1 \le i \le N_j^Y-1$
  approach the same line as the curve $x=x_j(t)$
  if the $j$th $X$-group is a singleton, or $x=x_{j,N_j^X}(t)$ otherwise.)
  
  The exceptions occur in the leftmost $X$-group and the leftmost $Y$-group,
  where the peakons except the rightmost one don't approach the same line,
  only \emph{parallel} lines.
  It is obvious that the leftmost $X$-group might be special,
  since it is given by separate formulas to begin with.
  The reason that the leftmost $Y$-group also has special asymptotics
  is that $\detJ_{KK}^{00} =0$, so that the term $\detJ_{jj}^{00}$, which
  normally dominates in the numerator of
  the solution formula~\eqref{eq:even-Y-typical-group-pos}
  for $Y_{j',i}$, is absent for~$Y_{1,i}$.

  As $t \to -\infty$, things are analogous but with right and left reversed.
  In particular, the asymptotic velocities occuring as $t \to -\infty$ are the same as
  the ones occuring as $t \to +\infty$.
  However, the behaviour in the odd case is more complicated,
  as will be described in detail in the sections below,
  and as we already saw in the examples in Section~\ref{sec:examples-groups-odd}.
\end{remark}

\begin{proof}[Proof of Theorem~\ref{thm:asymptotics-positions-even}]
  We will prove the theorem in detail only for $y_{j',i} = y_{K+1-j,i}$, as $t \to +\infty $,
  where $2 \le j \le K-1$ and $1 \le i \le N_{j'}^Y - 1$,
  the case described by the first equation in~\eqref{eq:asy-even-posinf-Y-typical-group}.
  The proofs of the other results are analogous and will be omitted.
  
  Recall the formula~\eqref{eq:even-Y-typical-group-pos} for
  $Y_{j',i} = \tfrac12 \exp 2 y_{j',i}$:
  \begin{equation}
    \label{eq:asymp-proof-Y}
    Y_{j',i} = \frac{\detJ_{jj}^{00} + T_{i} \detJ_{j,j-1}^{00} + S_i \detJ_{j-1,j-1}^{00}}{\detJ_{j-1,j-1}^{11} + T_{i} \detJ_{j-1,j-2}^{11} + S_i \detJ_{j-2,j-2}^{11}}
    ,
    \qquad
    1 \le i \le N_{j'}^Y - 1
    .
  \end{equation}
  As $t \to +\infty$,
  \begin{equation}
    a_1 \gg a_2\gg\dots \gg a_K
    ,\qquad
    b_1 \gg b_2\gg\dots \gg b_{K-1}
    ,
  \end{equation}
  since
  $a_i(t) = a_i(0) \, e^{t/\lambda_i}$ and~$b_j(t) = b_j(0) \, e^{t/\mu_j}$
  and we are assuming
  \begin{equation*}
    0 < \lambda_1 < \lambda_2 < \dots < \lambda_K
    , \qquad
    0 < \mu_1 < \mu_2 < \dots < \mu_{K-1}.
  \end{equation*}
  This means that the dominant terms in the sums $\detJ_{ij}^{rs}$
  (see Definition~\ref{def:heineintegral})
  are the ones with the smallest indices.
  Namely, by Lemma~$9.1$ in \cite{lundmark-szmigielski:2017:GX-dynamics-interlacing},
  \begin{equation}
    \detJ_{ij}^{rs}(t) =
    \Psi_{[i][j]} \,
    \lambda_{[i]}^r \, \mu_{[j]}^s \,
    a_{[i]}(t) \, b_{[j]}(t) \,
    (1+o(1))
    ,
  \end{equation}
  where $[k] = [1,k] = \{1,2,\dots,k \}$.
  Comparing the dominant terms in $\detJ_{jj}^{00}$, $\detJ_{j,j-1}^{00}$
  and~$\detJ_{j-1,j-1}^{00}$, we see that the dominant term in the numerator
  of~\eqref{eq:asymp-proof-Y} is
  $a_{[j]}(t) \, b_{[j]}(t)$, since it is growing exponentially faster than the other ones.
  Similarly the leading dominant term in the denominator is $a_{[j-1]}(t) \, b_{[j-1]}(t)$.
  Factoring out these terms from~\eqref{eq:asymp-proof-Y},
  and substituting $a_k(t) = a_k(0) \, e^{t/\lambda_k}$ and~$b_k(t) = b_k(0) \, e^{t/\mu_k}$,
  we get
  \begin{equation*}
    \begin{aligned}
      Y_{j',i}(t) &
      = \frac{a_{[j]}(t) \, b_{[j]}(t)}{a_{[j-1]}(t) \, b_{[j-1]}(t)}
      \\ & \quad
      \times
      \frac{\Psi_{[j][j]} + \frac{T_i}{b_{j}(t)} \, \Psi_{[j][j-1]} + \frac{S_i}{a_{j}(t) \, b_{j}(t)} \, \Psi_{[j-1][j-1]}}
      {\Psi_{[j-1][j-1]} \lambda_{[j-1]} \, \mu_{[j-1]}
          + \frac{T_i \, \lambda_{[j-1]} \, \mu_{[j-2]}}{b_{j-1}(t)} \Psi_{[j-1][j-2]} + \frac{S_i \lambda_{[j-2]} \, \mu_{[j-2]}}{a_{j-1}(t) \, b_{j-1}(t)} \Psi_{[j-2][j-2]}}
      \\ & \quad
      \times (1 + o(1))
      \\ &
      = a_j(0) \, b_j(0) \, e^{t \bigl( \frac{1}{\lambda_j} + \frac{1}{\mu_j} \bigr)}
      \times
      \frac{\Psi_{[j][j]} + o(1)}{\lambda_{[j-1]} \, \mu_{[j-1]} \, \Psi_{[j-1][j-1]} + o(1)}
      \times (1 + o(1))
      .
    \end{aligned}
  \end{equation*}
  Taking the logarithm, we get the desired formula for
  $y_{j',i} = \tfrac12 \ln(2 Y_{j',i})$
  in~\eqref{eq:asy-even-posinf-Y-typical-group}.
\end{proof}

Next we present the asymptotics of the amplitudes $m_{j,i}$ and~$n_{j,i}$.
We omit the proof, since the calculations are very similar to those
in the proof of Theorem~\ref{thm:asymptotics-positions-even}.
When writing the formulas, we are using that
$\ln m_k = x_k + \ln(\tfrac12 Q_k)$ (etc.),
and we have not cancelled or simplified any terms except for joining
the $t$-contributions from the two parts.

\begin{theorem}
  \label{thm:asymptotics-amplitudes-even}

  The amplitudes for non-singleton groups in the even case,
  with the eigenvalues ordered as in~\eqref{eq:eigenvalues-ordered-even},
  satisfy the following asymptotic formulas.
  
  \begin{itemize}
  \item As $t \to +\infty$:

    All $X$-groups except the leftmost one
    ($j'=K+1-j$ with $1 \le j \le K-1$), if $K \ge 2$:
    \begin{equation}
      \begin{aligned}
        \ln m_{j',i}(t) &= \frac{t}{2} \left( \frac{1}{\lambda_{j+1}} - \frac{3}{\mu_j} \right)
        + \frac12 \ln \left( \frac{2 a_{j+1}(0) \, b_j(0) \, \Psi_{[1,j+1][1,j]}}{\lambda_{[1,j]} \, \mu_{[1,j-1]} \Psi_{[1,j][1,j-1]}} \right)
        \\ & \quad
        +\ln \left( \frac{(\sigma_i - \sigma_{i-1}) \, \mu_{[1,j-1]}^2 \, \Psi_{[1,j][1,j-1]}^2}{2 b_j^2(0) \, \lambda_{[1,j]} \, \Psi_{[1,j][1,j]}^2} \right)
        + o(1)
        ,
        \\[1ex]
        \ln m_{j',N_{j'}}(t) &= \frac{t}{2} \left( \frac{1}{\lambda_j} - \frac{1}{\mu_{j}} \right)
        + \frac12 \ln \left( \frac{2 a_j(0) \, b_{j}(0) \, \Psi_{[1,j][1,j]}}{\lambda_{[1,j-1]} \, \mu_{[1,j-1]} \, \Psi_{[1,j-1][1,j-1]}} \right)
        \\ & \quad
        +\ln \left( \frac{\mu_{[1,j-1]}^2 \, \Psi_{[1,j][1,j-1]}}{2 b_j(0) \, \lambda_{[1,j]} \, \Psi_{[1,j][1,j]}} \right)
        + o(1)
        ,
        \\ &
        \quad \text{for $1 \le i \le N_{j'}-1$}
        .
      \end{aligned}
    \end{equation}
    The leftmost $X$-group:
    \begin{equation}
      \begin{aligned}
        \ln m_{1,1}(t) &= \frac12 \ln \left( \frac{2 \tau_1}{C L} \right)
        + \ln \left( \frac{C M}{2L} \right)
        + o(1)
        ,
        \\[1ex]
        \ln m_{1,i}(t) &= \frac12 \ln \left( \frac{2 S_i}{L M} \right)
        + \ln \left( \frac{(\sigma_i - \sigma_{i-1}) \, M^2}{2 \sigma_i \, \sigma_{i-1} \, L} \right)
        + o(1)
        ,
        \\ &
        \quad \text{for $2 \le i \le N_1-1$}
        ,
        \\[1ex]
        \ln m_{1,N_1}(t) &= \frac{t}{2} \left( \frac{1}{\lambda_{K}} \right)
        + \frac12 \ln \left( \frac{2 \sigma_{N_1-1} \, a_K(0) \, \Psi_{[1,K][1,K-1]}}{\lambda_{[1,K-1]} \, M \,  \Psi_{[1,K-1][1,K-1]}} \right)
        \\ &\quad
        + \ln \left( \frac{M^2}{2 \sigma_{N_1-1} \, L} \right)
        + o(1)
        .
      \end{aligned}
    \end{equation}
    All $Y$-groups except the leftmost and rightmost ones
    ($j'=K+1-j$ with $2 \le j \le K-1$), if $K \ge 3$:
    \begin{equation}
      \begin{aligned}
        \ln n_{j',i}(t) &= \frac{t}{2} \left( \frac{1}{\mu_j} - \frac{3}{\lambda_j} \right)
        + \frac12 \ln \left( \frac{2 a_j(0) \, b_{j}(0) \, \Psi_{[1,j][1,j]}}{\lambda_{[1,j-1]} \mu_{[1,j-1]} \, \Psi_{[1,j-1][1,j-1]}} \right)
        \\ & \quad
        + \ln \left( \frac{(\sigma_i - \sigma_{i-1}) \, \lambda_{[1,j-1]}^2 \, \Psi_{[1,j-1][1,j-1]}^2}{2 a_j^2(0) \, \mu_{[1,j-1]} \, \Psi_{[1,j][1,j-1]}^2} \right)
        + o(1)
        ,
        \\[1ex]
        \ln n_{j',N_{j'}}(t) &= \frac{t}{2} \left( \frac{1}{\mu_{j-1}} - \frac{1}{\lambda_j} \right)
        + \frac12 \ln \left( \frac{2 a_j(0) \, b_{j-1}(0) \, \Psi_{[1,j][1,j-1]}}{\lambda_{[1,j-1]} \, \mu_{[1,j-2]} \Psi_{[1,j-1][1,j-2]} } \right)
        \\ & \quad
        + \ln \left( \frac{\lambda_{[1,j-1]}^2 \Psi_{[1,j-1][1,j-1]}}{2 a_j(0) \, \mu_{[1,j-1]} \, \Psi_{[1,j][1,j-1]}} \right)
        + o(1)
        ,
        \\ &
        \quad \text{for $1 \le i \le N_{j'}-1$}
        .
      \end{aligned}
    \end{equation}
    The rightmost $Y$-group, if $K \ge 2$:
    \begin{equation}
      \begin{aligned}
        \ln n_{K,i}(t) &= \frac{t}{2} \left( \frac{1}{\mu_1} - \frac{3}{\lambda_1} \right)
        + \frac12 \ln \bigl( 2 a_1(0) \, b_1(0) \, \Psi_{\{1\}\{1\}} \bigr)
        \\ & \quad
        + \ln \left( \frac{\sigma_i - \sigma_{i-1}}{2 a_1^2(0)} \right)
        + o(1)
        ,
        \quad \text{for $1 \le i \le N_{K}-1$}
        ,
        \\[1ex]
        \ln n_{K,N_{K}}(t) &= \frac{t}{2} \left( \frac{1}{\mu_1} - \frac{1}{\lambda_1} \right)
        + \frac12 \ln \bigl( 2 a_1(0) \, b_1(0) \, \Psi_{\{1\}\{1\}} \bigr)
        \\ & \quad
        +\ln \left( \frac{1}{2 a_1(0)} \right)
        + o(1)
        .
      \end{aligned}
    \end{equation}
    The leftmost $Y$-group:
    \begin{equation}
      \begin{aligned}
        \ln n_{1,i}(t) &= \frac{t}{2} \left( \frac{-3}{\lambda_{K}} \right)
        + \frac12 \ln \left( \frac{ 2 T_i a_{K}(0) \, \Psi_{[1,K][1,K-1]}}{\lambda_{[1,K-1]} \, M \, \Psi_{[1,K-1][1,K-1]}} \right)
        \\ & \quad
        + \ln \left( \frac{(\sigma_i - \sigma_{i-1}) \, \lambda_{[1,j-1]}^2 \, \Psi_{[1,j-1][1,j-1]}^2}{2 a_j^2(0) \, \mu_{[1,j-1]} \, \Psi_{[1,j][1,j-1]}^2} \right)
        + o(1)
        ,
        \\[1ex]
        \ln n_{1,N_1}(t) &= \frac{t}{2} \left( \frac{1}{\mu_{K-1}} - \frac{1}{\lambda_{K}} \right)
        + \frac12 \ln \left( \frac{2 a_K(0) \, b_{K-1}(0) \, \Psi_{[1,K][1,K-1]}}{\lambda_{[1,K-1]} \, \mu_{[1,K-2]} \, \Psi_{[1,K-1][1,K-2]}} \right)
	\\ & \quad
        +\ln \left( \frac{\lambda_{[1,K-1]}^2 \Psi_{[1,K-1][1,K-1]}}{2 a_K(0) \, M \, \Psi_{[1,K][1,K-1]}} \right)
        + o(1)
        ,
        \\ &
        \quad \text{for $1 \le i \le N_1-1$}
        .
      \end{aligned}
    \end{equation}
    The only $Y$-group, if $K=1$:
    \begin{equation}
      \begin{aligned}
        \ln n_{1,i} &= \frac{t}{2} \left( \frac{-3}{\lambda_1} \right)
        + \frac12 \ln \bigl(2 T_i \, a_1(0) \bigr)
        + \ln \left( \frac{\sigma_i - \sigma_{i-1}}{2 a_1^2(0)} \right)
        + o(1)
        ,
        \\
        &\quad \text{for $1 \le i \le N_1-1$}
        ,
        \\[1ex]
        \ln n_{1,N_1}(t) &= \frac{t}{2} \left( \frac{-1}{\lambda_1} \right)
        + \frac12 \ln \bigl( 2 (T_{N_1-1} + D) \, a_1(0) \bigr)
        +\ln \left( \frac{1}{2 a_1(0)} \right)
        + o(1)
        .
      \end{aligned}
    \end{equation}
    
  \item As $t \to -\infty$:

    All $X$-groups except the leftmost and rightmost ones ($2 \le j \le K-1$), if $K \ge 3$:
    \begin{equation}
      \begin{aligned}
        \ln m_{j,1}(t) &= \frac{t}{2} \left( \frac{1}{\lambda_{j}} - \frac{1}{\mu_{j-1}} \right)
        + \frac12 \ln \left( \frac{ 2 a_{j}(0) \, b_{j-1}(0) \, \Psi_{[j,K][j-1,K-1]}}{\lambda_{[j+1,K]} \mu_{[j,K-1]} \, \Psi_{[j+1,K][j,K-1]}} \right)
        \\ & \quad
        + \ln \left( \frac{\mu^2_{[j,K-1]} \, \Psi_{[j,K][j,K-1]}}{2 b_{j-1}(0) \, \lambda_{[j,K]} \, \Psi_{[j,K][j-1,K-1]}} \right)
        + o(1)
        ,
        \\[1ex]
	\ln m_{j,i}(t) &= \frac{t}{2} \left( \frac{3}{\lambda_{j}} - \frac{1}{\mu_j} \right)
        + \frac12 \ln \left( \frac{2 a_j(0) \, b_j(0) \, \Psi_{[j,K][j,K-1]}}{\lambda_{[j+1,K]} \, \mu_{[j+1,K-1]} \, \Psi_{[j+1,K][j+1,K-1]}} \right)
	\\ & \quad
        + \ln \left( \frac{(\sigma_i - \sigma_{i-1}) \, S_i}{R_i \, R_{i-1}} \right)
        + \ln \left( \frac{\mu_{[j,K-1]} \, \mu_{[j+1,K-1]} \, a_j(0)}{b_j(0) \, \lambda_{[j+1,K]}} \right)
	\\ & \quad
	+ \ln \left( \frac{\Psi_{[j,K][j,K-1]} \, \Psi_{[j+1,K][j+1,K-1]}}{2 \Psi_{[j+1,K][j,K-1]}^2} \right)
        + o(1)
        ,
	\\ &
        \quad \text{for $2 \le i \le N_j-1$}
        ,
	\\[1ex]
	\ln m_{j,N_j}(t) &= \frac{t}{2} \left( \frac{3}{\lambda_{j}} - \frac{1}{\mu_j} \right)
        + \frac12 \ln \left( \frac{2 a_j(0) \, b_j(0) \Psi_{[j,K][j,K-1]}}{\lambda_{[j+1,K]} \, \mu_{[j+1,K-1]} \, \Psi_{[j+1,K][j+1,K-1]}} \right)
	\\ & \quad
	+ \ln \left( \frac{\sigma_{N_j-1}}{R_{N_j-1} } \right)
        + \ln \left( \frac{\mu_{[j,K-1]} \, \mu_{[j+1,K-1]} \, a_j(0)}{b_j(0) \, \lambda_{[j+1,K]}} \right)
	\\ & \quad
	+ \ln \left( \frac{\Psi_{[j,K][j,K-1]} \, \Psi_{[j+1,K][j+1,K-1]}}{2 \Psi_{[j+1,K][j,K-1]}^2} \right)
        + o(1)
        .
      \end{aligned}
    \end{equation}
    The leftmost $X$-group, if $K \ge 2$:
    \begin{equation}
      \begin{aligned}
        \ln m_{1,1}(t) &= \frac{t}{2} \left( \frac{1}{\lambda_1} - \frac{1}{\mu_1} \right)
        + \frac12 \ln \left( \frac{2 a_1(0) \, b_1(0) \, \Psi_{[1,K][1,K-1]}}{\lambda_{[2,K]} \, \mu_{[2,K-1]} \, \Psi_{[2,K][2,K-1]}} \right)
        \\ & \quad
        + \ln \left( \frac{\mu_{[2,K-1]} \, M \, \Psi_{[2,K][2,K-1]}}{2 b_1(0) \, L \, \Psi_{[2,K][1,K-1]}} \right)
        + o(1)
        ,
        \\[1ex]
        \ln m_{1,i}(t) &=
        \frac{t}{2} \left( \frac{3}{\lambda_1} - \frac{1}{\mu_1} \right)
        + \frac12 \ln \left( \frac{ 2 a_1(0) \, b_1(0) \, \Psi_{[1,K][1,K-1]}}{\lambda_{[2,K]} \, \mu_{[2,K-1]} \, \Psi_{[2,K][2,K-1]}} \right)
        \\ & \quad
        + \ln \left( \frac{(\sigma_i - \sigma_{i-1}) \, S_i}{R_i \, R_{i-1}} \right)
        + \ln \left( \frac{a_1(0) \, M \, \mu_{[2,K-1]}}{b_1(0) \, \lambda_{[2,K]}} \right)
        \\ & \quad
        + \ln \left( \frac{\Psi_{[1,K][1,K-1]} \, \Psi_{[2,K][2,K-1]}}{2 \Psi^2_{[2,K][1,K-1]}} \right)
        + o(1)
        ,
        \\ &
        \quad \text{for $2 \le i \le N_1-1$}
        ,
        \\[1ex]
        \ln m_{1,N_1}(t) &=
        \frac{t}{2} \left( \frac{3}{\lambda_1} - \frac{1}{\mu_1} \right)
        + \frac12 \ln \left( \frac{2 a_1(0) \, b_1(0) \, \Psi_{[1,K][1,K-1]}}{\lambda_{[2,K]} \, \mu_{[2,K-1]} \, \Psi_{[2,K][2,K-1]}} \right)
        \\ & \quad
        + \ln \left( \frac{\sigma_{N_1-1}}{R_{N_1-1}} \right)
        + \ln \left( \frac{a_1(0) \, M \, \mu_{[2,K-1]}}{b_1(0) \, \lambda_{[2,K]}} \right)
        \\ & \quad
        + \ln \left( \frac{\Psi_{[1,K][1,K-1]} \, \Psi_{[2,K][2,K-1]}}{2 \Psi^2_{[2,K][1,K-1]}} \right)
        + o(1)
        .
      \end{aligned}
    \end{equation}
    The rightmost $X$-group, if $K \ge 2$:
    \begin{equation}
      \begin{aligned}
        \ln m_{K,1}(t) &=
        \frac{t}{2} \left( \frac{1}{\lambda_{K}} - \frac{1}{\mu_{K-1}} \right)
        + \frac12 \ln \bigl( 2 a_K(0) \, b_{K-1}(0) \, \Psi_{\{K\}\{K-1\}} \bigr)
        \\
        & \quad
        + \ln \left( \frac{1}{2 b_{K-1}(0) \, \lambda_{K} \, \Psi_{\{K\}\{K-1\}}} \right)
        + o(1)
        ,
        \\[1ex]
        \ln m_{K,i}(t) &=
        \frac{t}{2} \left( \frac{3}{\lambda_{K}} \right)
        + \frac12 \ln \left( \frac{2 S_i \, a_K(0)}{T_i} \right)
        + \ln \left( \frac{(\sigma_i-\sigma_{i-1}) \, T_i \, a_K(0)}{2 R_i \, R_{i-1}} \right)
        + o(1)
        ,
        \\ &
        \quad \text{for $2 \le i \le N_K-1$}
        ,
        \\[1ex]
        \ln m_{K, N_K}(t) &=
        \frac{t}{2} \left( \frac{3}{\lambda_{K}} \right)
        + \frac12 \ln \bigl( 2 \sigma_{N_K-1} \, a_K(0) \bigr)
        + \ln \left( \frac{a_K(0)}{2 R_{N_{j'}-1}} \right)
        + o(1)
        .
      \end{aligned}
    \end{equation}
    The only $X$-group, if $K=1$:
    \begin{equation}
      \begin{aligned}
        \ln m_{1,1}(t) &=
        \frac{t}{2} \left( \frac{1}{\lambda_1} \right)
        + \frac12 \ln \left( \frac{2 \sigma_1 \, a_1(0)}{1 + \sigma_1 \, C} \right)
        + \ln \left( \frac{1 + \sigma_1 \, C}{2 \lambda_1 \, \sigma_1} \right)
        + o(1)
        ,
        \\[1ex]
        \ln m_{1,i}(t) &=
        \frac{t}{2} \left( \frac{3}{\lambda_1} \right)
        + \frac12 \ln \left( \frac{2 S_i \, a_1(0)}{T_i} \right)
        +\ln \left( \frac{(\sigma_i - \sigma_{i-1}) \, T_i \, a_1(0)}{2 R_i \, R_{i-1}} \right)
        + o(1)
        ,
        \\ &
        \quad \text{for $2 \le i \le N_1-1$}
        ,
        \\[1ex]
        \ln m_{1,N_1}(t) &=
        \frac{t}{2} \left( \frac{1}{\lambda_1} \right)
        + \frac12 \ln \bigl( 2 \sigma_{N_1-1} \, a_1(0) \bigr)
        + \ln \left( \frac{a_1(0)}{2 R_{N_1-1}} \right)
        + o(1)
        .
      \end{aligned}
    \end{equation}
    All $Y$-groups except the rightmost one
    ($1 \le j \le K-1$), if $K \ge 2$:
    \begin{equation}
      \begin{aligned}
	\ln n_{j,1}(t) &=
        \frac{t}{2} \left( \frac{1}{\mu_{j}} - \frac{1}{\lambda_{j}} \right)
        + \frac12 \ln \left( \frac{2 a_{j}(0) \, b_{j}(0) \, \Psi_{[j,K][j,K-1]}}{\lambda_{[j+1,K]} \, \mu_{[j+1,K-1]} \, \Psi_{[j+1,K][j+1,K-1]}} \right)
	\\ & \quad
	+\ln \left( \frac{\lambda_{[j+1,K]}^2 \, \Psi_{[j+1,K][j,K-1]}}{2 a_j(0) \, \mu_{[j,K-1]}  \, \Psi_{[j,K][j,K-1]}} \right)
        + o(1)
        ,
	\\[1ex]
	\ln n_{j,i}(t) &=
        \frac{t}{2} \left( \frac{3}{\mu_{j}} - \frac{1}{\lambda_{j+1}} \right)
	+ \frac12 \ln \left( \frac{ 2 a_{j+1}(0) \, b_{j}(0) \, \Psi_{[j+1,K][j,K-1]}}{\lambda_{[j+2,K]} \, \mu_{[j+1,K-1]} \, \Psi_{[j+2,K][j+1,K-1]}} \right)
	\\ & \quad
	+ \ln \left( \frac{(\sigma_i - \sigma_{i-1}) \, S_i}{R_i \, R_{i-1}} \right)
	+ \ln \left( \frac{b_j(0) \, \lambda_{[j+1,K]} \, \lambda_{[j+2,K]}}{a_{j+1}(0) \, \mu_{[j+1,K-1]}} \right)
        \\ & \quad
        + \ln \left( \frac{\Psi_{[j+1,K][j,K-1]} \, \Psi_{[j+2,K][j+1,K-1]}}{2 \Psi_{[j+1,K][j+1,K-1]}^2} \right)
        + o(1)
        ,
        \\ &
        \quad \text{for $2 \le i \le N_{j}-1$}
        ,
        \\[1ex]
        \ln n_{j,N_j}(t) &= \frac{t}{2} \left( \frac{3}{\mu_{j}} - \frac{1}{\lambda_{j+1}} \right)
        + \frac12 \ln \left( \frac{2 a_{j+1}(0) \, b_{j}(0) \, \Psi_{[j+1,K][j,K-1]}}{\lambda_{[j+2,K]} \, \mu_{[j+1,K-1]} \, \Psi_{[j+2,K][j+1,K-1]}} \right)
        \\ & \quad
        + \ln \left( \frac{\sigma_{N_j-1}}{R_{N_j-1}} \right)
        + \ln \left( \frac{b_j(0) \, \lambda_{[j+1,K]} \, \lambda_{[j+2,K]}}{a_{j+1}(0) \, \mu_{[j+1,K-1]}} \right)
        \\ & \quad
        + \ln \left( \frac{\Psi_{[j+1,K][j,K-1]} \, \Psi_{[j+2,K][j+1,K-1]}}{2 \Psi_{[j+1,K][j+1,K-1]}^2} \right)
        + o(1)
        .
      \end{aligned}
    \end{equation}
    The rightmost $Y$-group:
    \begin{equation}
      \begin{aligned}
        \ln n_{K,1}(t) &=
        \frac{t}{2} \left( \frac{-1}{\lambda_K} \right)
        + \frac12 \ln \bigl( 2 \tau_1 \, a_K(0) \bigr)
        + \ln \left( \frac{1}{2 a_K(0)} \right)  + o(1)
        ,
        \\
        \ln n_{K,i}(t) &=
        \frac12 \ln(2 S_i)
        + \ln \left( \frac{\sigma_{i} - \sigma_{i-1}}{2 \sigma_{i} \, \sigma_{i-1}} \right)
        + o(1)
        ,
        \quad \text{for $2 \le i \le N_{K}-1$}
        ,
        \\[0.5ex]
        \ln n_{K,N_K}(t) &=
        \frac12 \ln \bigl( 2 (S_{N_K-1} + D \, \sigma_{N_K-1}) \bigr)
        + \ln \left( \frac{1}{2 \sigma_{N_{K}-1}} \right)
        + o(1)
        .
      \end{aligned}
    \end{equation}
  \end{itemize}
\end{theorem}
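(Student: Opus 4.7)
The plan is to derive the amplitude asymptotics by combining the position asymptotics already established in Theorem~\ref{thm:asymptotics-positions-even} with asymptotics for the logarithms of the auxiliary quantities $Q_{k,i}$ and $P_{k,i}$, via the identities
\begin{equation*}
  \ln m_{k,i}(t) = x_{k,i}(t) + \ln \tfrac12 Q_{k,i}(t)
  ,\qquad
  -\ln n_{k,i}(t) = -y_{k,i}(t) - \ln \tfrac12 P_{k,i}(t)
  ,
\end{equation*}
which follow directly from Definition~\ref{def:XYQP-groups}. Thus the whole task reduces to finding leading-order asymptotics, with explicit constants, for the rational expressions in $\detJ_{ij}^{rs}$ given by the formulas in Section~\ref{sec:solutions-even}. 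The engine for this is the same as in the proof of Theorem~\ref{thm:asymptotics-positions-even}: one uses Lemma~9.1 of Lundmark--Szmigielski~\cite{lundmark-szmigielski:2017:GX-dynamics-interlacing}, which asserts that as $t\to+\infty$
\begin{equation*}
  \detJ_{ij}^{rs}(t) = \Psi_{[1,i][1,j]}\,\lambda_{[1,i]}^r\,\mu_{[1,j]}^s\,a_{[1,i]}(t)\,b_{[1,j]}(t)\,(1+o(1))
  ,
\end{equation*}
together with the analogous statement with index intervals shifted to $[K{-}i{+}1,K]$ and $[K{-}j,K{-}1]$ as $t\to-\infty$.

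The core of the proof is then a case-by-case analysis of the amplitude formulas in Section~\ref{sec:solutions-even}. For a typical $X$-group, equation~\eqref{eq:even-X-typical-group-amp} factors as $(\sigma_i-\sigma_{i-1})\detJ_{j,j-1}^{01}$ times a three-term combination with coefficients $1,T_i,S_i$ divided by two three-term combinations with coefficients $1,\sigma_i,R_i$ and $1,\sigma_{i-1},R_{i-1}$. First I would compute the exponential growth rate of each single determinant using Lemma~9.1, then in each three-term sum identify the dominant term: as $t\to+\infty$ the bare determinant with the largest lower indices wins (so the asymptotic reduces to $(\sigma_i-\sigma_{i-1})\,\detJ_{j,j-1}^{01}\,\detJ_{j,j-1}^{11}/(\detJ_{jj}^{10})^2$), and the resulting exponential rate combines with that of $x_{j',i}$ to give the claimed slope $\tfrac12(\tfrac1{\lambda_{j+1}}-\tfrac3{\mu_j})$. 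As $t\to-\infty$ the hierarchy inverts: the determinant with the smallest indices dominates in each sum (because it has the slowest exponential decay), and the matching $S_i$, $R_i$, $R_{i-1}$ coefficients then appear explicitly in the leading asymptotic, producing the slope $\tfrac12(\tfrac3{\lambda_j}-\tfrac1{\mu_j})$. The same pattern applies mutatis mutandis to $P_{j',i}$ in a typical $Y$-group.

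The constant terms in the theorem are obtained by substituting the full leading expressions from Lemma~9.1 (including the $\Psi$-products and eigenvalue products) into the formulas and simplifying; the factor $\sigma_i-\sigma_{i-1}$ in the numerator survives intact because it is independent of~$t$, while the two denominators evaluated at $\sigma_i$ and $\sigma_{i-1}$ share the \emph{same} dominant determinant and contribute their $\sigma$- or $R$-coefficient only at subleading order as $t\to+\infty$, respectively at leading order as $t\to-\infty$.

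The main obstacle I anticipate is the careful treatment of boundary and degenerate cases. Specifically, $(i)$ the leftmost $X$-group uses~\eqref{eq:even-X-leftmost-group-amp} with the extra parameter~$C$ and special separate formulas for $i=1$ and $i=N_1$; $(ii)$ the rightmost $Y$-group uses the much simpler~\eqref{eq:even-Y-rightmost-group-amp} in which $\detJ^{11}$-terms are entirely absent, shifting the dominance pattern; $(iii)$ the case $i=1$ as $t\to-\infty$ is exceptional because $S_1=0$ makes the ``expected'' dominant term vanish, so one must go one level further in the $T_i$-term; $(iv)$ the small-$K$ cases ($K=1$, $K=2$) require direct verification since several of the generic formulas become vacuous or collapse. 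Each variant requires redoing the leading-term identification, but the bookkeeping is mechanical once the generic case is in hand, and a cross-check is provided by the fact that the $t$-independent constants in $\ln\tfrac12 Q_{k,i}$ must combine with those of $x_{k,i}$ from Theorem~\ref{thm:asymptotics-positions-even} to reproduce precisely the constant displayed in the theorem.
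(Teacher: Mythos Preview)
Your proposal is correct and follows essentially the same approach as the paper: the paper omits the proof of this theorem, stating only that the calculations are very similar to those for Theorem~\ref{thm:asymptotics-positions-even} and that the formulas are obtained via $\ln m_k = x_k + \ln(\tfrac12 Q_k)$ (and likewise for~$n$), which is precisely the identity and the dominant-term analysis you describe. Your identification of the exceptional subcases (leftmost/rightmost groups, $i=1$ with $S_1=R_0=0$, small~$K$) is also accurate.
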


\begin{remark}
  Note that if we list all the coefficients of $t$ in the asymptotic formulas
  for all $\ln m_{j,i}(t)$
  and all $-\ln n_{j,i}(t)$,
  then they appear in the opposite order when $t \to \infty$
  compared to when $t \to -\infty$.
  For example, for a typical $X$-group as $t \to +\infty$, we have
  \begin{equation*}
    \ln m_{j',i}(t) =
    \frac{t}{2}  \left( \frac{1}{\lambda_{j+1}} - \frac{3}{\mu_{j}} \right)
    + \text{constant} + o(1)
    ,\qquad
    1 \le i \le N^X_{j'}-1
    ,
  \end{equation*}
  which agrees with its mirror partner, a typical $Y$-group, as $t \to -\infty$:
  \begin{equation*}
    -\ln n_{j,i}(t) =
    \frac{t}{2}  \left( \frac{1}{\lambda_{j+1}} - \frac{3}{\mu_{j}} \right)
    + \text{constant} + o(1)
    ,\qquad
    2 \le i \le N^Y_{j}
    .
  \end{equation*}
  This was illustrated in Example~\ref{ex:GX-3+3-allgroups}.
  Again, the behaviour in the odd case (described below) will be a little different.
\end{remark}

\section{Solution formulas for  the odd case ($2K+1$ groups)}
\label{sec:solutions-odd}

In this section we list the solution formulas for the odd case,
where we have $K+1$ groups of $X$-type and $K$ groups of
$Y$-type, since we assume without loss of generality that the first
and last groups are $X$-groups.
We assume that $K \ge 1$, since if there is just a single $X$-groups,
with no $Y$-group, then the dynamics is trivial.
The proofs will be given in Section~\ref{sec:proofs-odd}, and the
asymptotics of these solutions as $t \to \pm \infty$ will be studied
in Section~\ref{sec:asymptotics-odd}.

As we mentioned already in Section~\ref{sec:more-notation}, in the odd case
there are $K+K$ eigenvalues
\begin{equation*}
  0 < \lambda_1 < \lambda_2 < \dots < \lambda_{K}
  , \qquad
  0 < \mu_1 < \mu_2 < \dots < \mu_{K}
  ,
\end{equation*}
which means that all the determinants~$\detJ_{ij}^{rs}$
will have $A=K$ and $B=K$, i.e.,
\begin{equation}
  \detJ_{ij}^{rs} = \detJ[K,K,r,s,i,j]
  .
\end{equation}
The solution formulas look very similar to those for the even case;
except for the number of eigenvalues differing,
there is basically just a shift in the lower indexes of the~$\detJ_{ij}^{rs}$.
However, this has consequences for the asymptotics,
which differ from the even case in some curious ways,
as we saw in the examples in Section~\ref{sec:examples-groups-odd}.

As before, the intent of the abbreviation $j'$ will be to denote
``the $j$th object from the right''.
Therefore, it will be defined slightly differently for $X$-groups
here in the odd case, namely
\begin{equation}
  j' = (K+1)-j+1 = K+2-j
  ,
\end{equation}
while for $Y$-groups we still write
\begin{equation}
  j'=K+1-j
  .
\end{equation}

\subsection{Solutions  for $X$-groups}
\label{sec:solutions-odd-X}

First we state the solution formulas for the $j$th $X$-group from the right
in the odd case,
i.e., group number $j' = K+2-j$.
As usual, the formulas depend on whether the group is a singleton
or a group consisting of $N_{j'}^X \ge 2$ peakons.
Moreover, the solution formulas for the leftmost and rightmost $X$-groups are different
from those for the ``typical'' $X$-groups in the middle.
And as in the even case, the rightmost peakon $X_{j,N_j^X}$ in each group is given by
a separate formula.

\subsubsection{$X$-singletons}
\label{sec:solutions-odd-X-singleton}

The solution formulas for the leftmost $X$-peakon are
\begin{equation}
  \label{eq:odd-X-leftmost-singleton}
  X_1 = \frac{\detJ_{KK}^{00}}{\detJ_{K-1,K-1}^{11} + C \detJ_{K-1,K}^{10}}
  ,\qquad
  Q_1 =
  \frac{M}{L}
  \left( \frac{\detJ_{K-1,K-1}^{11}}{\detJ_{K-1,K}^{10}} + C \right)
  ,
\end{equation}
where $L = \prod_{i=1}^K \lambda_i$ and $M = \prod_{j=1}^{K} \mu_j$.
The rightmost $X$-peakon is given by
\begin{equation}
  \label{eq:odd-X-rightmost-singleton}
  X_{K+1} = \detJ_{11}^{00} + D \detJ_{01}^{00}
  ,\qquad
  Q_{K+1} = \frac{1}{\detJ_{01}^{10}}
  ,
\end{equation}
and the formulas for $X$-peakon number $j'=K+2-j$
where $2 \le j \le K$ are
\begin{equation}
  \label{eq:odd-X-typical-singleton}
  X_{j'} = \frac{\detJ_{j-1,j}^{00}}{\detJ_{j-2,j-1}^{11}}
  ,\qquad
  Q_{j'} = \frac{\detJ_{j-1,j-1}^{01} \detJ_{j-2,j-1}^{11}}{\detJ_{j-1,j}^{10} \detJ_{j-2,j-1}^{10}}
  .
\end{equation}

\subsubsection{All $X$-groups except the leftmost and the rightmost}
\label{sec:solutions-odd-X-typical-group}

Next, we give the formulas for $X$-group number $j'=K+2-j$, in case it
consist of $N_{j'}^X \ge 2$ peakons. Here $2 \le j \le K$; the
leftmost group ($j=K+1$) and the rightmost group ($j=1$) are treated
separately below.
The formulas for the positions are
\begin{equation}
  \label{eq:odd-X-typical-group-pos}
  \begin{aligned}
    X_{j',i} &
    = \frac{\detJ_{jj}^{00} + T_{i} \detJ_{j-1,j}^{00} + S_i \detJ_{j-1,j-1}^{00}}{\detJ_{j-1,j-1}^{11} + T_{i} \detJ_{j-2,j-1}^{11} + S_i \detJ_{j-2,j-2}^{11}}
    ,
    \\[1ex]
    X_{j',N_{j'}} &
    = \frac{\detJ_{j-1,j}^{00} + \sigma_{N_{j'}-1} \detJ_{j-1,j-1}^{00}}{\detJ_{j-2,j-1}^{11} + \sigma_{N_{j'}-1} \detJ_{j-2,j-2}^{11}}
    ,
  \end{aligned}
\end{equation}
and for the amplitudes
\begin{equation}
  \label{eq:odd-X-typical-group-amp}
  \begin{aligned}
    Q_{j',i} &
    = \frac{(\sigma_i - \sigma_{i-1}) \, \detJ_{j-1,j-1}^{01}  \left( \detJ_{j-1,j-1}^{11} + T_i \detJ_{j-2,j-1}^{11} + S_i \detJ_{j-2,j-2}^{11} \right)}
    {\left( \detJ_{j-1,j}^{10} + \sigma_i \detJ_{j-1,j-1}^{10} + R_{i} \detJ_{j-2,j-1}^{10} \right) \left( \detJ_{j-1,j}^{10} + \sigma_{i-1} \detJ_{j-1,j-1}^{10} + R_{i-1} \detJ_{j-2,j-1}^{10} \right)}
    ,
    \\[1ex]
    Q_{j',N_{j'}} &
    = \frac{\detJ_{j-1,j-1}^{01} \left( \detJ_{j-2,j-1}^{11} + \sigma_{N_{j'}-1} \detJ_{j-2,j-2}^{11} \right)}{\detJ_{j-2,j-1}^{10} \left( \detJ_{j-1,j}^{10} + \sigma_{N_{j'}-1} \detJ_{j-1,j-1}^{10} + R_{N_{j'}-1} \detJ_{j-2,j-1}^{10} \right)}
    ,
  \end{aligned}
\end{equation}
where $1 \le i \le N_{j'}-1$.
Recall that the sums $T_i$, $S_i$ and~$R_{i}$ were defined in Definition~\ref{def:T-S-R}.

\subsubsection{The leftmost $X$-group }
\label{sec:solutions-odd-X-leftmost-group}

If the leftmost $X$-group contains more than one peakon,
then the positions are given by
\begin{equation}
  \label{eq:odd-X-leftmost-group-pos}
  \begin{aligned}
    X_{1,1} &
    = \frac{\detJ_{KK}^{00} }{\detJ_{K-1,K-1}^{11}  + \frac{1}{\sigma_1} \detJ_{K-1,K}^{11} + C \, \left( \detJ_{K-1,K}^{10} + \frac{1}{\tau_1} \detJ_{KK}^{10} \right)}
    ,
    \\[0.5ex]
    X_{1,i} &
    = \frac{S_{i} \, \detJ_{KK}^{00}}{\detJ_{KK}^{11} + T_{i} \detJ_{K-1,K}^{11} + S_{i} \detJ_{K-1,K-1}^{11}}
    ,
    \\[0.5ex]
    X_{1,N_1} &
    = \frac{\sigma_{N_1-1} \detJ_{KK}^{00}}{\detJ_{K-1,K}^{11} + \sigma_{N_1-1} \detJ_{K-1,K-1}^{11}}
    ,
  \end{aligned}
\end{equation}
and the amplitudes by
\begin{equation}
  \label{eq:odd-X-leftmost-group-amp}
  \begin{aligned}
    Q_{1,1} &
    =
    \frac{M}{L}
    \left( \frac{\detJ_{K-1,K-1}^{11} + \frac{1}{\sigma_1} \detJ_{K-1,K}^{11}}{\detJ_{K-1,K}^{10} + \frac{1}{\tau_1} \detJ_{KK}^{10}} + C \right)
    ,
    \\[1ex]
    Q_{1,i} &
    = \frac{(\sigma_i - \sigma_{i-1}) \, \detJ_{KK}^{01} \left( \detJ_{KK}^{11} + T_i \detJ_{K-1,K}^{11} + S_i \detJ_{K-1,K-1}^{11} \right)}
    {\left( \sigma_i \detJ_{KK}^{10} + R_{i} \detJ_{K-1,K}^{10} \right)
      \left( \sigma_{i-1} \detJ_{KK}^{10} + R_{i-1} \detJ_{K-1,K}^{10} \right)}
    ,
    \\[1ex]
    Q_{1,N_1} &
    = \frac{\detJ_{KK}^{01} \left( \detJ_{K-1,K}^{11} + \sigma_{N_1-1} \detJ_{K-1,K-1}^{11 } \right)}{\detJ_{K-1,K}^{10} \left( \sigma_{{N_1}-1} \detJ_{KK}^{10} + R_{N_1-1} \detJ_{K-1,K}^{10} \right)}
    ,
  \end{aligned}
\end{equation}
where $2 \le i \le N_1-1$.
As in the even case, not only is the rightmost peakon in this group
given by a separate formula, but also the leftmost peakon.

\subsubsection{The rightmost $X$-group}
\label{sec:solutions-odd-X-rightmost-group}

If the rightmost $X$-group contains more than one peakon,
then the positions are given by
\begin{equation}
  \label{eq:odd-X-rightmost-group-pos}
  \begin{aligned}
    X_{K+1,i} &= \detJ_{11}^{00} + T_{i} \detJ_{01}^{00} + S_i
    ,
    \\[1ex]
    X_{K+1,N_{K+1}} &
    = \detJ_{11}^{00} + \left( T_{N_{K+1}-1} + D \right) \detJ_{01}^{00} + S_{N_{K+1}-1} + D \, \sigma_{N_{K+1}-1}
    ,
  \end{aligned}
\end{equation}
and the amplitudes by
\begin{equation}
  \label{eq:odd-X-rightmost-group-amp}
  \begin{aligned}
    Q_{K+1,i} &
    = \frac{\sigma_i - \sigma_{i-1}}{\left( \detJ_{01}^{10} + \sigma_i \right) \left( \detJ_{01}^{10} + \sigma_{i-1} \right)}
    ,
    \\[1ex]
    Q_{K+1,N_{K+1}} &= \frac{1}{\detJ_{01}^{10} + \sigma_{N_{K+1}-1}}
    ,
  \end{aligned}
\end{equation}
where $1 \le i \le N_{K+1}-1$.

\subsection{Solutions  for  $Y$-groups}
\label{sec:solutions-odd-Y}

In this section we give the solution formulas for $Y$-groups.
In the odd case there are $K$ groups of $Y$-type, between the
$X$-groups, so the $j$th $Y$-group from the right will be number
$j'=K+1-j$, where $1 \le j \le K$.

\subsubsection{$Y$-singletons}
\label{sec:solutions-odd-Y-singleton}

The solution formulas for a $Y$-group which consists of a single peakon are
\begin{equation}
  \label{eq:odd-Y-singleton}
  Y_{j'} = \frac{\detJ_{jj}^{00}}{\detJ_{j-1,j-1}^{11}}
  ,\qquad
  P_{j'} = \frac{\detJ_{j-1,j}^{10} \detJ_{j-1,j-1}^{11}}{\detJ_{jj}^{01} \detJ_{j-1,j-1}^{01}}
  .
\end{equation}

\subsubsection{All $Y$-groups}
\label{sec:solutions-odd-Y-group}

If a $Y$-group contains $N_{j'} \ge 2$ peakons,
where $j'=K+1-j$ with $1 \le j \le K$, then the positions are given by
\begin{equation}
  \label{eq:odd-Y-group-pos}
  \begin{aligned}
    Y_{j',i} &
    = \frac{\detJ_{j,j+1}^{00} + T_{i} \detJ_{jj}^{00} + S_i \detJ_{j-1,j}^{00} }{\detJ_{j-1,j}^{11} + T_{i} \detJ_{j-1,j-1}^{11} + S_i \detJ_{j-2,j-1}^{11}}
    ,
    \\[1ex]
    Y_{j',N_{j'}} &
    = \frac{\detJ_{jj}^{00} + \sigma_{N_{j'}-1} \detJ_{j-1,j}^{00}}{\detJ_{j-1,j-1}^{11} + \sigma_{N_{j'}-1} \detJ_{j-2,j-1}^{11}}
    ,
  \end{aligned}
\end{equation}
and the amplitudes by
\begin{equation}
  \label{eq:odd-Y-group-amp}
  \begin{aligned}
    P_{j',i} &
    = \frac{(\sigma_i - \sigma_{i-1}) \, \detJ_{j-1,j}^{10} \left( \detJ_{j-1,j}^{11} + T_i \detJ_{j-1,j-1}^{11} + S_i \detJ_{j-2,j-1}^{11} \right)}
    {\left( \detJ_{jj}^{01} + \sigma_i \detJ_{j-1,j}^{01} + R_{i} \detJ_{j-1,j-1}^{01} \right)
      \left( \detJ_{jj}^{01} + \sigma_{i-1} \detJ_{j-1,j}^{01} + R_{i-1} \detJ_{j-1,j-1}^{01} \right)}
    ,
    \\[1ex]
    P_{j',N_{j'}} &
    = \frac{\detJ_{j-1,j}^{10} \left( \detJ_{j-1,j-1}^{11} + \sigma_{N_{j'}-1} \detJ_{j-2,j-1}^{11} \right)}
    {\detJ_{j-1,j-1}^{01} \left( \detJ_{jj}^{01} + \sigma_{N_{j'}-1} \detJ_{j-1,j}^{01} + R_{N_{j'}-1} \detJ_{j-1,j-1}^{01} \right)}
    ,
  \end{aligned}
\end{equation}
where $1 \le i \le N_{j'}-1$.

\section{Proofs for the odd case}
\label{sec:proofs-odd}

In this section, we will prove that the solution in the odd
case is given by the formulas in Section~\ref{sec:solutions-odd}.

First we give a lemma which will be used in the proof.
Then we show that killing the rightmost peakon in the interlacing case
with $(K+1)+(K+1)$ peakons, we obtain the solution formulas
for singletons given in Sections~\ref{sec:solutions-odd-X-singleton}
and~\ref{sec:solutions-odd-Y-singleton}.
Then, with a similar argument as in the
proof for the even case, it can be shown that the solution formulas for
the non-singletons groups in the odd case are as stated in
Sections~\ref{sec:solutions-odd-X-typical-group},
\ref{sec:solutions-odd-X-leftmost-group},
\ref{sec:solutions-odd-X-rightmost-group}
and~\ref{sec:solutions-odd-Y-group}.

\begin{lemma}
  \label{lem:heineintegral-killright}
  Let
  \begin{equation}
    \hat{\detJ}_{ij}^{rs} = \detJ[K+1,K,r,s,i,j]
    \quad \text{and} \quad
    \detJ_{ij}^{rs} = \detJ[K,K,r,s,i,j]
    .
  \end{equation}
  If we reparametrize the spectral data as
  \begin{equation}
    \label{eq:substitution-killright}
    \lambda_{K+1} = \frac{\alpha}{\epsilon}
    ,\qquad
    a_{K+1} = Z \, \epsilon^{k}
    ,
  \end{equation}
  then as $\epsilon \to 0$ the determinant $\hat{\detJ}_{ij}^{rs}$ satisfies
  \begin{equation}
    \hat{\detJ}^{rs}_{ij} =
    \detJ_{ij}^{rs}
    + \detJ_{i-1,j}^{rs} \, Z \, \alpha^{q} \, \epsilon^{k-q}
    \, \bigl( 1 + \mathcal{O}(\epsilon) \bigr)
    ,
  \end{equation}
  where
  $q = 2(i-1)-j+r$.
\end{lemma}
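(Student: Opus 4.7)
The plan is to proceed in essentially the same way as the proof of Lemma~\ref{lem:heineintegral-epsilon}, but only one eigenvalue (the rightmost $\lambda$) is being pushed to infinity, so the bookkeeping is significantly shorter. I will expand $\hat{\detJ}_{ij}^{rs}=\detJ[K+1,K,r,s,i,j]$ using Definition~\ref{def:heineintegral}, and split the sum over $I\in\binom{[1,K+1]}{i}$ into two disjoint pieces according to whether the index~$K+1$ belongs to~$I$ or not. The sum over $J\in\binom{[1,K]}{j}$ needs no splitting, since the substitution~\eqref{eq:substitution-killright} does not touch any $\mu_b$ or $b_j$.

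First I would write
\begin{equation*}
  \hat{\detJ}_{ij}^{rs}
  =
  \sum_{I'\in\binom{[1,K]}{i}}\sum_{J\in\binom{[1,K]}{j}}
  \Psi_{I'J}\,\lambda_{I'}^{r}\,a_{I'}\,\mu_{J}^{s}\,b_{J}
  +
  \sum_{I'\in\binom{[1,K]}{i-1}}\sum_{J\in\binom{[1,K]}{j}}
  \Psi_{I'\cup\{K+1\},J}\,\lambda_{I'\cup\{K+1\}}^{r}\,a_{I'\cup\{K+1\}}\,\mu_{J}^{s}\,b_{J}.
\end{equation*}
The first double sum is exactly $\detJ_{ij}^{rs}$. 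For the second, I would extract the dependence on $\lambda_{K+1}$ and $a_{K+1}$ explicitly, as in the proof of Lemma~\ref{lem:heineintegral-epsilon}, by writing
\begin{equation*}
  \Psi_{I'\cup\{K+1\},J}
  = \Psi_{I'J} \cdot
  \frac{\lambda_{K+1}^{2(i-1)}}{\lambda_{K+1}^{j}}
  \cdot
  \frac{\prod_{a\in I'}\bigl(\lambda_{K+1}^{-1}\lambda_{a}-1\bigr)^{2}}
       {\prod_{b\in J}\bigl(\mu_{b}/\lambda_{K+1}+1\bigr)},
\end{equation*}
and noting that $\lambda_{I'\cup\{K+1\}}^{r}\,a_{I'\cup\{K+1\}}=\lambda_{I'}^{r}a_{I'}\cdot\lambda_{K+1}^{r}a_{K+1}$.

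Now I would apply the substitution $\lambda_{K+1}=\alpha/\epsilon$, $a_{K+1}=Z\epsilon^{k}$. The factor $\lambda_{K+1}^{r}a_{K+1}$ contributes $Z\alpha^{r}\epsilon^{k-r}$, the factor $\lambda_{K+1}^{2(i-1)-j}$ contributes $\alpha^{2(i-1)-j}\epsilon^{-(2(i-1)-j)}$, and the remaining product of terms of the form $\lambda_{K+1}^{-1}\lambda_{a}-1$ and $\mu_{b}/\lambda_{K+1}+1$ is $1+\mathcal{O}(\epsilon)$ since every $\lambda_{K+1}^{-1}$ is of order~$\epsilon$. Combining the exponents gives $Z\alpha^{q}\epsilon^{k-q}\bigl(1+\mathcal{O}(\epsilon)\bigr)$ with $q=2(i-1)-j+r$, and the remaining summation over $I'$ and $J$ reassembles into $\detJ_{i-1,j}^{rs}$. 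This yields the claimed formula.

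The only step that requires a bit of care is keeping track of the combined exponent of~$\epsilon$ and the combined power of~$\alpha$, but since this is exactly the same computation that appears in the second term of~\eqref{eq:heineintegral-epsilon} (the one with $Z\alpha^{q_{1}}\epsilon^{k_{1}-q_{1}}$), I expect no real obstacle; the present lemma is in fact a degenerate version of Lemma~\ref{lem:heineintegral-epsilon} in which the parameters $\beta$, $W$, $k_{2}$, $q_{2}$ and the terms they would produce are simply absent.
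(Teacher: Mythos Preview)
Your proposal is correct and follows exactly the approach indicated in the paper: split the sum over $I\in\binom{[1,K+1]}{i}$ according to whether $K+1\in I$, identify the first piece as $\detJ_{ij}^{rs}$, and extract the $\lambda_{K+1}$-dependence from the second piece just as in the second term of Lemma~\ref{lem:heineintegral-epsilon}. The paper's own proof is a two-sentence sketch saying precisely this, so you have simply written out the details that the paper omits.
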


\begin{proof}
  The argument is similar to the proof of Lemma~\ref{lem:heineintegral-epsilon}.
  Since we are only redefining~$\lambda_{K+1}$, not~$\mu_{K}$,
  we only split the sum into two cases, according to whether $K+1 \in I$ or not.
\end{proof}

\subsection{How to kill the rightmost $Y$-peakon in the even interlacing configuration}
\label{sec:kill-rightmost}

First we will kill the rightmost $Y$-peakon.
Starting from the $(K+1)+(K+1)$ interlacing
solutions~\eqref{eq:interlacing-solution-positions}
and~\eqref{eq:interlacing-solution-amplitudes},
we know that the solution for the rightmost singleton $Y$-peakon (before the kill) is
\begin{equation}
  \label{eq:kill-rightmost-Y-1}
  \hat{Y}_{K+1} = \hat{\detJ}_{11}^{00} + \hat{D} \, \hat{\detJ}_{10}^{00}
  ,\qquad
  \hat{P}_{K+1} = \frac{1}{\hat{\detJ}_{10}^{00}}
  .
\end{equation}
We will use
$k = -1$,
$\alpha = 1$
and
$Z = D$
in the substitution~\eqref{eq:substitution-killright},
and we also redefine $\hat{C}$ and~$\hat{D}$, as follows:
\begin{equation}
  \label{eq:kill-rightmost-Y-substitutions}
  \lambda_{K+1} = \frac{1}{\epsilon}
  ,\qquad
  a_{K+1} = \frac{D}{\epsilon}
  ,\qquad
  \hat{D} = \frac{\theta \epsilon}{D}
  ,\qquad
  \hat{C} = \frac{C}{\epsilon}
  .
\end{equation}
Then, by Lemma~\ref{lem:heineintegral-killright},
the determinant $\hat{\detJ}_{11}^{00}$ becomes
\begin{equation}
  \label{eq:kill-rightmost-Y-2}
  \begin{aligned}
    \hat{\detJ}_{11}^{00} &=
    \detJ_{11}^{00} + Z \alpha^{q} \detJ_{01}^{00} \epsilon^{k-q} \, \bigl( 1 + \mathcal{O}(\epsilon) \bigr)
    \\ &
    = \detJ_{11}^{00} + D \detJ_{01}^{00}
    \bigl( 1 + \mathcal{O}(\epsilon) \bigr)
    ,
  \end{aligned}
\end{equation}
since $q=-1$ and $k-q=0$.
The other determinant $\hat{\detJ}_{10}^{00}$, with $q= 0$ and $k-q=-1$, becomes
\begin{equation}
  \label{eq:kill-rightmost-Y-3}
  \hat{\detJ}_{10}^{00}= D \epsilon^{-1} \detJ_{00}^{00} \bigl( 1 + \mathcal{O}(\epsilon) \bigr)
  .
\end{equation}
When we insert this into the formula for $\hat{Y}_{K+1}$
and let $\epsilon \to 0$, we get
\begin{equation}
  \label{eq:kill-rightmost-ghost}
  \begin{aligned}
    Y_{\text{ghost}}
    = \lim_{\epsilon \to 0} \hat{Y}_{K+1}
    &
    = \lim_{\epsilon \to 0} \left( \hat{\detJ}_{11}^{00} + \hat{D} \, \hat{\detJ}_{10}^{00} \right)
    \\ &
    = \detJ_{11}^{00} + D \detJ_{01}^{00} + \frac{\theta}{D} \, D \detJ_{00}^{00}
    \\ &
    = \detJ_{11}^{00} + D \detJ_{01}^{00}+ \theta
    .
  \end{aligned}
\end{equation}
And from \eqref{eq:kill-rightmost-Y-1} and~\eqref{eq:kill-rightmost-Y-3}
we see that~$\hat{P}_{K+1}$ tends to zero as $\epsilon \to 0$,
which shows that this peakon really becomes a ghostpeakon.

\subsubsection{What happens to the rightmost $X$-peakon}

Now let us see what happens to the rightmost $X$-peakon,
given by the case $j=1$ in Theorem~\ref{thm:interlacing-solution}
or in~\eqref{eq:even-X-typical-singleton}:
\begin{equation*}
  \hat{X}_{K+1}
  = \frac{\hat{\detJ}_{11}^{00}}{\hat{\detJ}_{00}^{11}}
  = \frac{\hat{\detJ}_{11}^{00}}{1}
  ,\qquad
  \hat{Q}_{K+1}
  = \frac{\hat{\detJ}_{00}^{11} \hat{\detJ}_{10}^{01}}{\hat{\detJ}_{11}^{10} \hat{\detJ}_{00}^{10}}
  = \frac{1 \cdot \hat{\detJ}_{10}^{01}}{\hat{\detJ}_{11}^{10} \cdot 1}
  .
\end{equation*}
From~\eqref{eq:kill-rightmost-Y-2} we have at once
\begin{equation}
  \hat{X}_{K+1}
  = \hat{\detJ}_{11}^{00}
  = \detJ_{11}^{00} + D \detJ_{01}^{00} \bigl( 1 + \mathcal{O}(\epsilon) \bigr)
  ,
\end{equation}
so letting $\epsilon \to 0$ we obtain
$X_{K+1} = \detJ_{11}^{00} + D \detJ_{01}^{00}$,
which is the formula for the position of the rightmost $X$-peakon
given in~\eqref{eq:odd-X-rightmost-singleton}.

Concerning the amplitude, using Lemma~\ref{lem:heineintegral-killright}
with $k=-1$, we have
$q=0$ and $k-q=-1$ for~$\hat{\detJ}_{10}^{01}$,
and the same for~$\hat{\detJ}_{11}^{10}$,
so
\begin{equation}
  Q_{K+1}
  = \lim_{\epsilon \to 0} \hat{Q}_{K+1}
  = \lim_{\epsilon \to 0} \frac{\hat{\detJ}_{10}^{01}}{\hat{\detJ}_{11}^{10}}
  = \lim_{\epsilon \to 0} \frac{\detJ_{00}^{01} D \epsilon^{-1} \bigl( 1 + \mathcal{O}(\epsilon) \bigr)}{\detJ_{01}^{10} D \epsilon^{-1} \bigl( 1 + \mathcal{O}(\epsilon) \bigr)}
  = \frac{\detJ_{00}^{01}}{\detJ_{01}^{10}}
  = \frac{1}{\detJ_{01}^{10}}
  ,
\end{equation}
which also agrees with~\eqref{eq:odd-X-rightmost-singleton}.

\subsubsection{What happens to the leftmost $X$-peakon}
\label{sec:kill-rightmost-what-happens-X1}

The formulas~\eqref{eq:even-X-leftmost-singleton}
for the leftmost $X$-peakon before the killing are
\begin{equation}
  \hat{X}_1 =
  \frac{\hat{\detJ}_{K+1,K}^{00}}{\hat{\detJ}_{K,K-1}^{11} + \hat{C} \, \hat{\detJ}_{KK}^{10}}
  ,
  \qquad
  \hat{Q}_1 =
  \frac{\mu_1 \dotsm \mu_{K}}{\lambda_1 \dotsm \lambda_{K} \lambda_{K+1}}
  \left(
    \frac{\hat{\detJ}_{K,K-1}^{11}}{\hat{\detJ}_{KK}^{10}}
    + \hat{C}
  \right)
  .
\end{equation}
For the determinants
$\hat{\detJ}^{00}_{K+1,K}$ and $\hat{\detJ}^{11}_{K,K-1}$
we have $q=K$ and $k-q=-K-1$,
and for $\hat{\detJ}^{10}_{KK}$ we have $q=K-1$ and $k-q=-K$.
Since $K$ is positive,
the smallest power of $\epsilon$ is going to be the second term
in Lemma~\ref{lem:heineintegral-killright} for all the determinants.
Using $\hat{C} = C / \epsilon$
from~\eqref{eq:kill-rightmost-Y-substitutions}, we therefore obtain
\begin{equation}
  \begin{split}
    X_1
    = \lim_{\epsilon \to 0} \hat{X}_1
    &
    = \lim_{\epsilon \to 0}
    \frac{\detJ_{KK}^{00} D \epsilon^{-K-1} \bigl( 1 + \mathcal{O}(\epsilon) \bigr)}{\detJ_{K-1,K-1}^{11} D \epsilon^{-K-1} \bigl( 1 + \mathcal{O}(\epsilon) \bigr) + \frac{C}{\epsilon} \, \detJ_{K-1,K}^{10} D \epsilon^{-K} \bigl( 1 + \mathcal{O}(\epsilon) \bigr)}
    \\ &
    = \frac{\detJ_{KK}^{00}}{\detJ_{K-1,K-1}^{11} + C \, \detJ_{K-1,K}^{10}}
  ,
  \end{split}
\end{equation}
which is the position formula for the leftmost peakon
in~\eqref{eq:odd-X-leftmost-singleton}.
Similarly for the amplitude formula.

\subsubsection{What happens to the other $X$-peakons}

Having investigated the rightmost and leftmost $X$-peakons,
we now turn to the typical $j$th $X$-peakon from the right
in the $(K+1)+(K+1)$ interlacing configuration,
given by~\eqref{eq:even-X-typical-singleton},
\begin{equation}
  \hat{X}_{j'}
  =
  \frac{ \hat{\detJ}_{jj}^{00}}{\hat{\detJ}_{j-1,j-1}^{11}}
  ,\qquad
  \hat Q_{j'}
  =
  \frac{\hat{\detJ}_{j-1,j-1}^{11} \hat{\detJ}_{j,j-1}^{01}}{\hat{\detJ}_{jj}^{10} \hat{\detJ}_{j-1,j-1}^{10}}
  ,
\end{equation}
for $2 \le j \le K$ and $j'=K+2-j$.

Looking first at the positions,
we get $q=j-2$ and $k-q=1-j$
in Lemma~\ref{lem:heineintegral-killright},
for both $\hat{\detJ}_{jj}^{00}$ and~$\hat{\detJ}_{j-1,j-1}^{11}$.
The condition $j \ge 2$ implies $1-j \le -1$.
Thus the smallest power of~$\epsilon$ is~$-1$,
and only the second term survives in the expressions
in Lemma~\ref{lem:heineintegral-killright},
so
\begin{equation}
  X_{K+2-j}
  = X_{j'}
  = \lim_{\epsilon \to 0} \hat{X}_{j'}
  = \lim_{\epsilon \to 0} \frac{\hat{\detJ}_{jj}^{00}}{\hat{\detJ}_{j-1,j-1}^{11}}
  = \frac{ \detJ_{j-1,j}^{00}}{\detJ_{j-2,j-1}^{11}}
\end{equation}
for $2 \le j \le K$, which agrees with the formula for the positions
given in~\eqref{eq:odd-X-typical-singleton}.
The formula for the amplitudes are proved similarly.

\subsubsection{What happens to the other $Y$-peakons}

Now let us consider the $Y$-peakons (except for the rightmost one that we have killed),
for which we recall the solution formulas~\eqref{eq:even-Y-typical-singleton},
\begin{equation}
  \hat{Y}_{j'}
  =
  \frac{\hat{\detJ}_{j,j-1}^{00}}{\hat{\detJ}_{j-1,j-2}^{11}}
  ,\qquad
  \hat{P}_{j'}
  =
  \frac{\hat{\detJ}_{j-1,j-1}^{10} \hat{\detJ}_{j-1,j-2}^{11}}{\hat{\detJ}_{j,j-1}^{01} \hat{\detJ}_{j-1,j-2}^{01}}
  ,
\end{equation}
for $2 \le j \le K+1$ and $j'=K+2-j$.

Using Lemma~\ref{lem:heineintegral-killright} for the positions, with $k=-1$, we find for
$\hat{\detJ}_{j,j-1}^{00}$ and~$\hat{\detJ}_{j-1,j-2}^{11}$
that $q=j-1$ and $k-q=-j$.
The smallest power of $\epsilon$ is $-j \le -2 < 0$,
so only the second term in Lemma~\ref{lem:heineintegral-killright}
will remain in the numerator and in the denominator, as $\epsilon \to 0$.
Writing $p'=K+1-p$ after the killing, since the number of $Y$-peakons has decreased by one,
we get, with $p=j-1$,
\begin{equation}
  Y_{p'}
  = Y_{K+1-p}
  = Y_{K+2-j}
  = \lim_{\epsilon \to 0} \hat{Y}_{K+2-j}
  = \lim_{\epsilon \to 0} \frac{\hat{\detJ}_{j,j-1}^{00}}{\hat{\detJ}_{j-1,j-2}^{11}}
  = \frac{\detJ_{j-1,j-1}^{00}}{\detJ_{j-2,j-2}^{11}}
  = \frac{\detJ_{pp}^{00}}{\detJ_{p-1,p-1}^{11}}
  ,
\end{equation}
for $1 \le p \le K$.
This proves the position formula given in~\eqref{eq:odd-Y-singleton},
and the amplitude formula is obtained with a similar argument.

\subsection{How to successively kill other peakons}
\label{sec:kill-others-odd}

We now know the solution formulas for the odd interlacing case.
Any non-interlacing odd configuration can be interspersed with auxiliary peakons
to make it interlacing (and odd).
Then we can successively kill off the peakons that we inserted,
from right to left,
to reach the desired configuration,
just as for the even case in Section~\ref{sec:proofs-even}.

At a generic stage of this process we have an odd number of groups,
say $(K+1)+K$,
with ``finished'' groups to the right of the peakon that we are going to kill,
and only singletons to the left.
The effectuate the killing, we let $\epsilon \to 0$
after making substitutions analogous to~\eqref{eq:substitution-kill-peakon-even},
namely
\begin{equation}
  \label{eq:substitution-kill-peakon-odd}
  \begin{aligned}
    \lambda_K &= \frac{\alpha}{\epsilon}
    ,
    &
    \mu_K &= \frac{\beta}{\epsilon}
    ,
    \\
    a_K &= Z \, \epsilon^{k_1}
    ,
    &
    b_K &= W \, \epsilon^{k_2}
    .
  \end{aligned}
\end{equation}
If we want to kill the $\hat{Y}_{K+1-j}$-singleton which constitutes the $j$th $Y$-group from the right
($1 \le j \le K$),
then we choose the parameters as follows:
\begin{equation}
  \label{eq:substitution-kill-Y-oddd}
  \begin{gathered}
    k_1 = j-2
    ,\qquad
    k_2 = j-1
    ,
    \\[1ex]
    \alpha = \frac{\theta}{\tau_1}
    ,\qquad
    \beta=1
    ,\qquad
    Z = \frac{(\alpha+1) \, \tau_1}{\alpha^{j-2}}
    = (\tau_1 + \theta) (\tau_1 / \theta)^{j-2}
    ,\qquad
    W =  \sigma_1
    ,
    \\[1ex]
    \text{where $\tau_1 = \tau_{K+1-j,1}^X$ and $\sigma_1 = \sigma_{K+1-j,1}^X$}
    .
  \end{gathered}
\end{equation}
And to kill the $\hat{X}_{K+1-j}$-singleton which constitutes $X$-group number $j+1$ from the right
($1 \le j \le K-1$),
we use
\begin{equation}
  \label{eq:substitution-kill-X-oddd}
  \begin{gathered}
    k_1 = j-2
    ,\qquad
    k_2 = j
    ,
    \\[1ex]
    \alpha=1
    ,\qquad
    \beta=\frac{\theta}{\tau_1}
    ,\qquad
    Z = \sigma_1
    ,\qquad
    W = \frac{(\beta + 1) \, \tau_1}{\beta^{j}}
    = (\tau_1 + \theta) (\tau_1 / \theta)^{j}
    ,
    \\[1ex]
    \text{where $\tau_1 = \tau_{K-j,1}^Y$ and $\sigma_1 = \sigma_{K-j,1}^Y$}
    .
  \end{gathered}
\end{equation}
There is never any need to kill the rightmost peakon~$\hat{X}_{K+1}$,
since that would just bring us back to even case,
nor the leftmost peakon~$\hat{X}_1$, since by assumption our configurations
always begin with an $X$-peakon.
For redefining the parameters $\hat C$, $\hat{D}$, $\hat{\sigma}_i$ and~$\hat{\tau}_i$
in each step,
we follow rules analogous to those in Section~\ref{sec:proofs-even}.
We omit the detailed verification that this really works as claimed,
since it is very similar to the even case,
and since we also sketch a more direct proof in the next section.

\subsection{An alternative approach}
\label{sec:alt-proof-odd-case}

A slightly different proof for the odd case,
which avoids redoing the work of killing off
peakons one by one, goes as follows.
In order to determine the solution formulas
for a given odd configuration with $(K+1)+K$ groups,
we add a single auxiliary $Y$-peakon on the far right, so that
we obtain an even configuration with $(K+1)+(K+1)$ groups,
for which the complete solution formulas are already known
(proved in Section~\ref{sec:proofs-even}).
Then, if we can kill that rightmost $Y$-peakon,
we will obtain all the formulas for the odd configuration in a single stroke.

This requires reparametrizing not only the spectral parameters before
letting $\epsilon \to 0$,
but also the internal parameters ($\tau$ and~$\sigma$) in every non-singleton group.
It can be verified that
the substitution which does the trick is to transform the spectral variables
just as in~\eqref{eq:kill-rightmost-Y-substitutions}, namely
\begin{equation}
  \lambda_{K+1} = \frac{1}{\epsilon}
  ,\qquad
  a_{K+1} = \frac{D}{\epsilon}
  ,\qquad
  \hat{D} = \frac{\theta \epsilon}{D}
  ,\qquad
  \hat{C} = \frac{C}{\epsilon}
  ,
\end{equation}
with the other spectral variables unchanged,
and to transform all internal parameters as
\begin{equation}
  \label{eq:kill-rightmost-Y-theta-sigma}
  \hat{\tau} = \frac{\tau}{\epsilon^2}
  ,\qquad
  \hat{\sigma} = \sigma \epsilon
  ,
\end{equation}
where $\tau$ stands for any $\tau_{j,i}^X$ or~$\tau_{j,i}^Y$,
and $\sigma$ stands for any $\sigma_{j,i}^X$ or~$\sigma_{j,i}^Y$.

We have already shown in Section~\ref{sec:kill-rightmost}
that this does the right thing for all singletons.
We omit most of the details for the groups,
showing only what becomes of~$\hat{X}_{1,1}$, as an example.
The formula before the killing is~\eqref{eq:even-X-leftmost-group-pos},
\begin{equation}
  \hat{X}_{1,1}
  = \frac{\hat{\detJ}_{K+1,K}^{00}}{\hat{\detJ}_{K,K-1}^{11} + \frac{1}{\hat{\sigma}_1} \hat{\detJ}_{KK}^{11} + \hat{C} \left( \hat{\detJ}_{KK}^{10} + \frac{1}{\hat{\tau}_1} \hat{\detJ}_{K+1,K}^{10} \right)}
  ,
\end{equation}
and in a similar way as in Section~\ref{sec:kill-rightmost-what-happens-X1}
we find
\begin{equation}
  \begin{split}
    X_{1,1}
    &
    = \lim_{\epsilon \to 0} \hat{X}_{1,1}
    \\
    &
    = \lim_{\epsilon \to 0}
    \frac{\detJ_{KK}^{00} D \epsilon^{-K-1} \bigl( 1 + \mathcal{O}(\epsilon) \bigr)}
    {\left[
        \begin{aligned}
          & \detJ_{K-1,K-1}^{11} D \epsilon^{-K-1} \bigl( 1 + \mathcal{O}(\epsilon) \bigr) + \tfrac{1}{\sigma_1 \epsilon} \detJ_{K-1,K}^{11} D \epsilon^{-K} \bigl( 1 + \mathcal{O}(\epsilon) \bigr)
          \\ &
          + \tfrac{C}{\epsilon} \left( \detJ_{K-1,K}^{10} D \epsilon^{-K} \bigl( 1 + \mathcal{O}(\epsilon) \bigr) + \tfrac{\epsilon^2}{\tau_1} \detJ_{KK}^{10} D \epsilon^{-K-2} \bigl( 1 + \mathcal{O}(\epsilon) \bigr) \right)
        \end{aligned}
      \right]}
    \\ &
    = \frac{\detJ_{KK}^{00}}{\detJ_{K-1,K-1}^{11} + \frac{1}{\sigma_1} \detJ_{K-1,K}^{11} + C \left( \detJ_{K-1,K}^{10} + \frac{1}{\tau_1} \detJ_{KK}^{10} \right)}
  ,
  \end{split}
\end{equation}
in agreement with~\eqref{eq:odd-X-leftmost-group-pos}.

\section{Asymptotics  for  the odd case}
\label{sec:asymptotics-odd}

In this section, we study the limiting behavior for the odd case
($2K+1$ groups) of the Geng--Xue peakon solutions as $t \to \pm \infty$.
First, we state the asymptotic formulas for singletons in
Theorem~\ref{thm:asymptotics-singletons-odd}, then the asymptotics for
non-singleton groups in Theorem~\ref{thm:asymptotics-positions-odd}
for the positions
and Theorem~\ref{thm:asymptotics-amplitudes-odd}
for the amplitudes.
We omit the proofs, since the calculations are very similar to those
in Section~\ref{sec:asymptotics-even}
where we studied the even case.

As always, we number the eigenvalues in increasing order:
\begin{equation}
  \label{eq:eigenvalues-ordered-odd}
  0 < \lambda_1 < \lambda_2 < \dots < \lambda_K
  , \qquad
  0 < \mu_1 < \mu_2 < \dots < \mu_{K}
  .
\end{equation}
Recall the notation for the products of all eigenvalues:
\begin{equation*}
  L = \lambda_{[1,K]} = \lambda_1 \dotsm \lambda_K
  ,\qquad
  M = \mu_{[1,K]} = \mu_1 \dotsm \mu_K
  .
\end{equation*}

\begin{theorem}
  \label{thm:asymptotics-singletons-odd}

  Any singletons in a solution with $(K+1)+K$ groups,
  and in particular all peakons in the $(K+1)+K$ interlacing solution,
  where $K \ge 1$ and the eigenvalues are ordered as in~\eqref{eq:eigenvalues-ordered-odd},
  satisfy the following asymptotic formulas.
  
  \begin{itemize}
  \item As $t \to +\infty$:

    All $X$-peakons except the leftmost and rightmost ones
    ($j'=K+2-j$ with $2 \le j \le K$), if $K \ge 2$:
    \begin{equation}
      \begin{aligned}
	x_{j'}(t) &=
        \frac{t}{2} \left( \frac{1}{\lambda_{j-1}} + \frac{1}{\mu_j} \right)
        + \frac12 \ln \left( \frac{2 a_{j-1}(0) \, b_j(0) \, \Psi_{[1,j-1][1,j]}}{\lambda_{[1,j-2]} \, \mu_{[1,j-1]} \, \Psi_{[1,j-2][1,j-1]}} \right)
        + o(1)
	,
	\\[1ex]
        \ln m_{j'}(t) &=
        \frac{t}{2} \left( \frac{1}{\lambda_{j-1}} - \frac{1}{\mu_j} \right)
        + \frac12 \ln \left( \frac{2 a_{j-1}(0) \, b_j(0) \, \Psi_{[1,j-1][1,j]}}{\lambda_{[1,j-2]} \, \mu_{[1,j-1]} \, \Psi_{[1,j-2][1,j-1]}} \right)
	\\ & \quad
	+ \ln \left( \frac{\mu_{[1,j-1]}^2 \, \Psi_{[1,j-1][1,j-1]}}{2 b_j(0) \, \lambda_{[1,j-1]} \, \Psi_{[1,j-1][1,j]}} \right)
	+ o(1)
	.
      \end{aligned}
    \end{equation}
    The leftmost $X$-peakon:
    \begin{equation}
      \begin{aligned}
	x_1(t) &=
        \frac{t}{2} \left( \frac{1}{\lambda_{K}} \right)
        + \frac12 \ln \left( \frac{2 a_{K}(0) \, \Psi_{[1,K][1,K]}}{C \, \lambda_{[1,K-1]} \, \Psi_{[1,K-1][1,K]}} \right)
        + o(1)
	,
	\\[1ex]
        \ln m_1(t) &=
        \frac{t}{2} \left( \frac{1}{\lambda_{K}} \right)
        + \frac12 \ln \left( \frac{2 a_K(0) \, \Psi_{[1,K][1,K]}}{C \, \lambda_{[1,K-1]} \, \Psi_{[1,K-1][1,K]}} \right)
	+ \ln \left( \frac{C M}{2L} \right)
	+ o(1)
	.
      \end{aligned}
    \end{equation}
    The rightmost $X$-peakon:
    \begin{equation}
      \begin{aligned}
	x_{K+1}(t) &=
        \frac{t}{2} \left( \frac{1}{\lambda_1} + \frac{1}{\mu_1} \right)
        + \frac12 \ln \bigl( 2 a_1(0) \, b_1(0) \, \Psi_{\{1\}\{1\}} \bigr)
        + o(1)
        ,
	\\[1ex]
        \ln m_{K+1}(t) &=
        \frac{t}{2} \left( \frac{1}{\lambda_1} - \frac{1}{\mu_1} \right)
        + \frac12 \ln \bigl( 2 a_1(0) \, b_1(0) \, \Psi_{\{1\}\{1\}} \bigr)
	+\ln \left( \frac{1}{2 b_1(0)} \right)
	+ o(1)
	.
      \end{aligned}
    \end{equation}
    All $Y$-peakons ($j'=K+1-j$ with $1 \le j \le K$):
    \begin{equation}
      \begin{aligned}
    	y_{j'}(t) &=
        \frac{t}{2} \left( \frac{1}{\lambda_{j}} + \frac{1}{\mu_j} \right)
        + \frac12 \ln \left( \frac{2 a_{j}(0) \, b_j(0) \, \Psi_{[1,j][1,j]}}{\lambda_{[1,j-1]} \, \mu_{[1,j-1]} \, \Psi_{[1,j-1][1,j-1]}} \right)
        + o(1)
        ,
        \\[1ex]
        \ln n_{j'}(t) &=
        \frac{t}{2} \left( \frac{1}{\mu_j} - \frac{1}{\lambda_{j}} \right)
        + \frac12 \ln \left( \frac{2 a_{j}(0) \, b_j(0) \, \Psi_{[1,j][1,j]}}{\lambda_{[1,j-1]} \, \mu_{[1,j-1]} \, \Psi_{[1,j-1][1,j-1]}} \right)
        \\ & \quad
        + \ln \left( \frac{\lambda_{[1,j-1]}^2 \, \Psi_{[1,j-1][1,j]}}{2 a_j(0) \, \mu_{[1,j]} \, \Psi_{[1,j][1,j]}} \right)
        + o(1)
        .
      \end{aligned}
    \end{equation}

  \item As $t \to -\infty$:

    All $X$-peakons except the leftmost and rightmost ones
    ($2 \le j \le K$), if $K \ge 2$:
    \begin{equation}
      \begin{aligned}
	x_j(t) &=
        \frac{t}{2} \left( \frac{1}{\lambda_{j}} + \frac{1}{\mu_{j-1}} \right)
        + \frac12 \ln \left( \frac{2 a_{j}(0) \, b_{j-1}(0) \, \Psi_{[j,K][j-1,K]}}{\lambda_{[j+1,K]} \, \mu_{[j,K]} \, \Psi_{[j+1,K][j,K]}} \right)
        + o(1)
	,
	\\[1ex]
        \ln m_j(t) &=
        \frac{t}{2} \left( \frac{1}{\lambda_{j}} - \frac{1}{\mu_{j-1}} \right)
        + \frac12 \ln \left( \frac{2 a_{j}(0) \, b_{j-1}(0) \, \Psi_{[j,K][j-1,K]}}{\lambda_{[j+1,K]} \, \mu_{[j,K]} \, \Psi_{[j+1,K][j,K]}} \right)
	\\ & \quad
	+ \ln \left( \frac{\mu_{[j,K]}^2 \, \Psi_{[j,K][j,K]}}{2 b_{j-1}(0) \, \lambda_{[j,K]} \, \Psi_{[j,K][j-1,K]}} \right)
	+ o(1)
	.
      \end{aligned}
    \end{equation}
    The leftmost $X$-peakon:
    \begin{equation}
      \begin{aligned}
	x_1(t) &=
        \frac{t}{2} \left( \frac{1}{\lambda_1} + \frac{1}{\mu_1} \right)
        + \frac12 \ln \left( \frac{2 a_1(0) \, b_1(0) \, \Psi_{[1,K][1,K]}}{\lambda_{[2,K]} \, \mu_{[2,K]} \, \Psi_{[2,K][2,K]}} \right)
        + o(1)
	,
	\\[1ex]
        \ln m_1(t) &=
        \frac{t}{2} \left( \frac{1}{\lambda_1} - \frac{1}{\mu_1} \right)
        + \frac12 \ln \left( \frac{2 a_1(0) \, b_1(0) \, \Psi_{[1,K][1,K]}}{\lambda_{[2,K]} \, \mu_{[2,K]} \, \Psi_{[2,K][2,K]}} \right)
	\\ & \quad
	+ \ln \left( \frac{\mu_{[2,K]} \, M \, \Psi_{[2,K][2,K]}}{2 b_1(0) \, L \, \Psi_{[2,K][1,K]}} \right)
	+ o(1)
	.
      \end{aligned}
    \end{equation}
    The rightmost $X$-peakon:
    \begin{equation}
      \begin{aligned}
	x_{K+1}(t) &=
        \frac{t}{2} \left( \frac{1}{\mu_{K}} \right)
        + \frac12 \ln \bigl( 2 D \, b_{K}(0) \bigr)
        + o(1)
	,
	\\[1ex]
        \ln m_{K+1}(t) &=
        \frac{t}{2} \left( \frac{-1}{\mu_{K}} \right)
        + \frac12 \ln \bigl( 2 D \, b_{K}(0) \bigr)
	+ \ln \left( \frac{1}{2 b_{K}(0)} \right)
	+ o(1)
	.
      \end{aligned}
    \end{equation}
    All $Y$-peakons ($1 \le j \le K$):
    \begin{equation}
      \begin{aligned}
	y_j(t) &=
        \frac{t}{2} \left( \frac{1}{\lambda_{j}} + \frac{1}{\mu_{j}} \right)
        + \frac12 \ln \left( \frac{2 a_{j}(0) \, b_{j}(0) \, \Psi_{[j,K][j,K]}}{\lambda_{[j+1,K]} \, \mu_{[j+1,K]} \, \Psi_{[j+1,K][j+1,K]}} \right)
        + o(1)
	,
	\\[1ex]
        \ln n_j(t) &=
        \frac{t}{2} \left( \frac{1}{\mu_{j}} - \frac{1}{\lambda_{j}} \right)
        + \frac12 \ln \left( \frac{2 a_{j}(0) \, b_{j}(0) \, \Psi_{[j,K][j,K]}}{\lambda_{[j+1,K]} \, \mu_{[j+1,K]} \, \Psi_{[j+1,K][j+1,K]}} \right)
	\\ & \quad
	+ \ln \left( \frac{\lambda_{[j+1,K]}^2 \, \Psi_{[j+1,K][j,K]}}{2 a_{j}(0) \, \mu_{[j,K]} \, \Psi_{[j,K][j,K]}} \right)
	+ o(1)
	.
      \end{aligned}
    \end{equation}
  \end{itemize}
\end{theorem}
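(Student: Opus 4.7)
The proof strategy mirrors that of Theorem~\ref{thm:asymptotics-positions-even}, adapted to the fact that in the odd case the determinants $\detJ_{ij}^{rs}=\detJ[K,K,r,s,i,j]$ carry $K$ rather than $K-1$ eigenvalues of type $\mu$. The solution formulas for singletons in \eqref{eq:odd-X-leftmost-singleton}--\eqref{eq:odd-X-typical-singleton} and \eqref{eq:odd-Y-singleton} are rational expressions in the quantities $a_i(t)=a_i(0)e^{t/\lambda_i}$ and $b_j(t)=b_j(0)e^{t/\mu_j}$, so the entire analysis reduces to identifying the dominant exponential terms in numerator and denominator as $t\to\pm\infty$ and extracting the linear (in~$t$) and constant parts of the logarithm.

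The first step is to record the leading asymptotic behaviour of a bimoment determinant. Since \eqref{eq:eigenvalues-ordered-odd} makes $a_1\gg a_2\gg\cdots\gg a_K$ as $t\to+\infty$ (and the reverse as $t\to-\infty$), and similarly for the $b_j$'s, exactly one of the $\binom{K}{i}\binom{K}{j}$ summands in \eqref{eq:heine-integral-as-sum} dominates, giving
\begin{equation*}
\detJ_{ij}^{rs}(t) = \Psi_{[1,i][1,j]}\,\lambda_{[1,i]}^{r}\,\mu_{[1,j]}^{s}\,a_{[1,i]}(t)\,b_{[1,j]}(t)\,\bigl(1+o(1)\bigr), \qquad t\to+\infty,
\end{equation*}
and
\begin{equation*}
\detJ_{ij}^{rs}(t) = \Psi_{[K-i+1,K][K-j+1,K]}\,\lambda_{[K-i+1,K]}^{r}\,\mu_{[K-j+1,K]}^{s}\,a_{[K-i+1,K]}(t)\,b_{[K-j+1,K]}(t)\,\bigl(1+o(1)\bigr), \qquad t\to-\infty.
\end{equation*}
This is exactly the analogue of Lemma~9.1 in \cite{lundmark-szmigielski:2017:GX-dynamics-interlacing} but with $K$ parameters $\mu_j$ in play.

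Next I would substitute these asymptotics into each singleton formula in turn and simplify. For a typical $X$-singleton~\eqref{eq:odd-X-typical-singleton}, the ratio $\detJ_{j-1,j}^{00}/\detJ_{j-2,j-1}^{11}$ immediately produces, as $t\to+\infty$, a $t$-coefficient of $\tfrac12(1/\lambda_{j-1}+1/\mu_j)$; the constant term then emerges from the ratio of $\Psi$'s and $\lambda,\mu$ products, and the companion asymptotics of~$Q_{j'}$ from \eqref{eq:odd-X-typical-singleton} yield $\ln m_{j'}=x_{j'}+\ln(Q_{j'}/2)+o(1)$. All $Y$-singletons are handled identically through \eqref{eq:odd-Y-singleton}. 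For the leftmost $X$-singleton \eqref{eq:odd-X-leftmost-singleton} one must note that, as $t\to+\infty$, the term $C\detJ_{K-1,K}^{10}$ dominates $\detJ_{K-1,K-1}^{11}$ in the denominator (the latter contains the additional factor $b_K(t)\to 0$), while as $t\to-\infty$ the roles are reversed; for the rightmost $X$-singleton \eqref{eq:odd-X-rightmost-singleton} the analogous dichotomy between $\detJ_{11}^{00}$ and $D\detJ_{01}^{00}$ has to be traced. The case $K=1$ degenerates so that the relevant terms are present for all $t$, which is why no $o(1)$ appears.

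The only real obstacle is bookkeeping: one must, for each of the six subfamilies (leftmost/interior/rightmost $X$, and $Y$ in each time direction), identify which sub-sum of each $\detJ_{ij}^{rs}$ is dominant, extract the correct $\Psi$-ratio, and fold the surviving powers of $\lambda_i$, $\mu_j$ into the logarithm. Because of the asymmetric count of eigenvalues ($K$ values $\lambda_i$ versus $K$ values $\mu_j$, but with the configuration beginning and ending with an $X$-group), the incoming asymptotic velocities of the $X$-singletons at $t\to-\infty$, which involve~$\mu_j$ with the same index~$j$, are genuinely different from their outgoing velocities at $t\to+\infty$, which involve $\mu_j$ shifted by one—this is the source of the symmetry breaking visible in Example~\ref{ex:GX-4+3-interlacing} and Figure~\ref{fig:GX-4+3-interlacing-positions-all}. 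The $Y$-singleton formulas, by contrast, turn out to be symmetric under $t\to-t$ accompanied by $j\to K+1-j$, because their numerator and denominator \eqref{eq:odd-Y-singleton} have matched indices $(j,j)$ and $(j-1,j-1)$, and this symmetry is preserved by both of the dominant-term selections described above.
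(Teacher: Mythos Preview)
Your approach is correct and matches the paper's: the paper itself omits the proof entirely, stating only that ``the calculations are very similar to those in Section~\ref{sec:asymptotics-even},'' which is precisely the dominant-term analysis you outline.

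One small slip to fix: in your discussion of the leftmost $X$-singleton as $t\to+\infty$, you write that $\detJ_{K-1,K-1}^{11}$ ``contains the additional factor $b_K(t)\to 0$.'' In fact it is the \emph{other} term, $\detJ_{K-1,K}^{10}$, whose dominant summand carries the extra factor~$b_K(t)$, and $b_K(t)\to\infty$ as $t\to+\infty$; that is why $C\detJ_{K-1,K}^{10}$ wins. Your conclusion is right, but the stated reason is inverted. Also, the remark that ``no $o(1)$ appears'' for $K=1$ is not quite what the theorem says---the $o(1)$ is still written (and is simply zero in the exact cases, as in Example~\ref{ex:GX-2+1-interlacing}).
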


\begin{theorem}
  \label{thm:asymptotics-positions-odd}

  The positions for non-singleton groups in the odd case,
  with the eigenvalues ordered as in~\eqref{eq:eigenvalues-ordered-odd},
  satisfy the following asymptotic formulas.
  
  \begin{itemize}
  \item As $t \to +\infty$:

    All $X$-groups except the leftmost and rightmost ones
    ($j'=K+2-j$ with $2 \le j \le K$), if $K \ge 2$:
    \begin{equation}
      \begin{aligned}
	X_{j',i}(t) &=
        \frac{t}{2} \left( \frac{1}{\lambda_{j}} + \frac{1}{\mu_j} \right)
        + \frac12 \ln \left( \frac{2 a_{j}(0) \, b_j(0) \, \Psi_{[1,j][1,j]}}{\lambda_{[1,j-1]} \, \mu_{[1,j-1]} \, \Psi_{[1,j-1][1,j-1]}} \right)
        + o(1)
        ,
	\\ &
        \quad \text{for $1 \le i \le N_{j'}-1$}
        ,
        \\[1ex]
	X_{j',N_{j'}}(t) &=
        \frac{t}{2} \left( \frac{1}{\lambda_{j-1}} + \frac{1}{\mu_{j}} \right)
        + \frac12 \ln \left( \frac{2 a_{j-1}(0) \, b_{j}(0) \, \Psi_{[1,j-1][1,j]}}{\lambda_{[1,j-2]} \, \mu_{[1,j-1]} \, \Psi_{[1,j-2][1,j-1]}} \right)
        + o(1)
        .
      \end{aligned}
    \end{equation}
    The leftmost $X$-group:
    \begin{equation}
      \begin{aligned}
	X_{1,1}(t) &=
        \frac12 \ln \left( \frac{2 \tau_1}{C L} \right)
        + o(1)
        ,
        \\
	X_{1,i}(t) &=
        \frac12 \ln \left( \frac{2S_i}{L M} \right)
        + o(1)
        ,
        \quad
        \text{for $2 \le i \le N_1-1$}
        ,
	\\[1ex]
	X_{1,N_1}(t) &=
        \frac{t}{2} \left( \frac{1}{\lambda_K} \right) + \frac12 \ln \left( \frac{2 \sigma_{N_1-1} \, a_K(0) \, \Psi_{[1,K][1,K]}}{\lambda_{[1,K-1]} \, M \, \Psi_{[1,K-1][1,K]}} \right)
        + o(1)
        .
      \end{aligned}
    \end{equation}
    The rightmost $X$-group:
    \begin{equation}
      X_{K+1,i}(t) =
      \frac{t}{2} \left( \frac{1}{\lambda_1} + \frac{1}{\mu_1} \right)
      + \frac12 \ln \bigl( 2 a_1(0) \, b_1(0) \, \Psi_{\{1\}\{1\}} \bigr)
      + o(1)
      , \quad
      \text{for $1 \le i \le N_{j'}$}
      .
    \end{equation}
    All $Y$-groups except the leftmost one
    ($j'=K+1-j$ with $1 \le j \le K-1$), if $K \ge 2$:
    \begin{equation}
      \begin{aligned}
	Y_{j',i}(t) &=
        \frac{t}{2} \left( \frac{1}{\lambda_{j}} + \frac{1}{\mu_{j+1}} \right)
        + \frac12 \ln \left( \frac{2 a_{j}(0) \, b_{j+1}(0) \, \Psi_{[1,j][1,j+1]}}{\lambda_{[1,j-1]} \, \mu_{[1,j]} \, \Psi_{[1,j-1][1,j]}} \right)
        + o(1)
        ,
	\\ &
        \quad \text{for $1 \le i \le N_{j'}-1$}
        ,
	\\[1ex]
	Y_{j',N_{j'}}(t) &=
        \frac{t}{2} \left( \frac{1}{\lambda_{j}} + \frac{1}{\mu_{j}} \right)
        + \frac12 \ln \left( \frac{2 a_{j}(0) \, b_{j}(0) \, \Psi_{[1,j][1,j]}}{\lambda_{[1,j-1]} \, \mu_{[1,j-1]} \,  \Psi_{[1,j-1][1,j-1]}} \right)
        + o(1)
        .
      \end{aligned}
    \end{equation}
    The leftmost $Y$-group:
    \begin{equation}
      \begin{aligned}
	y_{1,i}(t) &=
        \frac{t}{2} \left( \frac{1}{\lambda_{K}} \right)
        + \frac12 \ln \left( \frac{2 T_i \, a_K(0) \, \Psi_{[1,K][1,K]}}{\lambda_{[1,K-1]} \, M \, \Psi_{[1,K-1][1,K]} } \right)
        + o(1)
        ,
        \\ &
        \quad \text{ for  $1 \le i \le N_1-1$}
        ,
        \\[1ex]
	y_{1,N_1}(t) &=
        \frac{t}{2} \left( \frac{1}{\lambda_{K}} + \frac{1}{\mu_K} \right)
        + \frac12 \ln \left( \frac{2 a_{K}(0) \, b_K(0) \, \Psi_{[1,K][1,K]}}{\lambda_{[1,K-1]} \, M \, \Psi_{[1,K-1][1,K-1]}} \right)
        + o(1)
        .
      \end{aligned}
    \end{equation}

  \item As $t \to -\infty$:

    All $X$-groups except the leftmost and the rightmost ones ($2 \le j \le K$), if $K \ge 2$:
    \begin{equation}
      \begin{aligned}
	x_{j,1}(t) &=
        \frac{t}{2} \left( \frac{1}{\lambda_{j}} + \frac{1}{\mu_{j-1}} \right)
        + \frac12 \ln \left( \frac{2 a_{j}(0) \, b_{j-1}(0) \, \Psi_{[j,K][j-1,K]}}{\lambda_{[j+1,K]} \, \mu_{[j,K]} \, \Psi_{[j+1,K][j,K]}} \right)
        + o(1)
        ,
	\\
	x_{j,i}(t) &=
        \frac{t}{2} \left( \frac{1}{\lambda_{j}} + \frac{1}{\mu_{j}} \right)
        + \frac12 \ln \left( \frac{2 a_{j}(0) \, b_{j}(0) \, \Psi_{[j,K][j,K]}}{\lambda_{[j+1,K]} \, \mu_{[j+1,K]} \, \Psi_{[j+1,K][j+1,K]}} \right)
        + o(1)
        ,
        \\ &
        \quad \text{for $2 \le i \le N_j$}
        .
      \end{aligned}
    \end{equation}
    The leftmost $X$-group:
    \begin{equation}
      \begin{aligned}
	x_{1,i}(t) &=
        \frac{t}{2} \left( \frac{1}{\lambda_1} + \frac{1}{\mu_1} \right)
        + \frac12 \ln \left( \frac{2 a_1(0) \, b_1(0) \, \Psi_{[1,K][1,K]}}{\lambda_{[2,K]} \, \mu_{[2,K]} \, \Psi_{[2,K][2,K]}} \right)
        + o(1)
        ,
	\\ &
        \quad \text{for $1 \le i \le N_1$}
        .
      \end{aligned}
    \end{equation}
    The rightmost $X$-group:
    \begin{equation}
      \begin{aligned}
	x_{K+1,1}(t) &=
        \frac{t}{2} \left( \frac{1}{\mu_{K}} \right)
        + \frac12 \ln \bigl( 2 T_1 \, b_K(0) \bigr)
        + o(1)
        ,
	\\
	x_{K+1,i}(t) &=
        \frac{1}{2} \ln(2 S_i)
        + o(1)
        ,
	\quad
        \text{for $2 \le i \le N_{K+1}-1$}
        ,
	\\
	x_{K+1,N_{K+1}}(t) &=
        \frac{1}{2} \ln \bigl( 2 (S_{N_{K+1}-1} + D \, \sigma_{N_{K+1}-1}) \bigr)
        + o(1)
        .
      \end{aligned}
    \end{equation}
    All $Y$-groups except the rightmost one ($1 \le j \le K-1$), if $K \ge 2$:
    \begin{equation}
      \begin{aligned}
	y_{j,1}(t) &=
        \frac{t}{2} \left( \frac{1}{\lambda_{j}} + \frac{1}{\mu_{j}} \right)
        + \frac12 \ln \left( \frac{2 a_{j}(0) \, b_{j}(0) \, \Psi_{[j,K][j,K]}}{\lambda_{[j+1,K]} \, \mu_{[j+1,K]} \, \Psi_{[j+1,K][j+1,K]}} \right)
        + o(1)
	,
	\\
	y_{j,i}(t) &=
        \frac{t}{2} \left( \frac{1}{\lambda_{j+1}} + \frac{1}{\mu_{j}} \right)
        + \frac12 \ln \left( \frac{2 a_{j+1}(0) \, b_{j}(0) \, \Psi_{[j+1,K][j,K]}}{\lambda_{[j+2,K]} \, \mu_{[j+1,K]} \, \Psi_{[j+2,K][j+1,K]} }
        \right)
        + o(1)
        ,
	\\ &
        \quad \text{for $2 \le i \le N_j$}
        .
      \end{aligned}
    \end{equation}
    The rightmost $Y$-group:
    \begin{equation}
      \begin{aligned}
        y_{K,1}(t) &=
        \frac{t}{2} \left( \frac{1}{\mu_K} +\frac{1}{\lambda_K} \right)
        + \frac12 \ln \left( 2 a_K(0) \, b_k(0) \, \Psi_{\{K\}\{K\}} \right)
        + o(1)
        ,
        \\
        y_{K,i}(t) &=
        \frac{t}{2} \left( \frac{1}{\mu_K} \right)
        + \frac12 \ln \left( \frac{2 S_i \, b_K(0)}{T_i} \right)
        + o(1)
        ,
        \quad
        \text{for $2 \le i \le N_K-1$}
        ,
        \\
        y_{K,N_K}(t) &=
        \frac{t}{2} \left( \frac{1}{\mu_K} \right)
        + \frac12 \ln \bigl( 2 \sigma_{N_K-1} \, b_K(0) \bigr)
        + o(1)
        .
      \end{aligned}
    \end{equation}
  \end{itemize}
\end{theorem}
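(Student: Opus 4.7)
The plan is to prove the theorem by direct asymptotic analysis of the explicit solution formulas collected in Section~\ref{sec:solutions-odd}, following exactly the strategy sketched at the end of the proof of Theorem~\ref{thm:asymptotics-positions-even} in the even case. In other words, the proof is a calculation rather than a conceptual argument: for each formula, one identifies the single dominant term in the numerator and the denominator of the relevant ratio of $\detJ_{ij}^{rs}$'s, extracts its exponential and constant pieces, and takes the logarithm.

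First I would set up the basic asymptotic building block, namely the fact that under the ordering~\eqref{eq:eigenvalues-ordered-odd},
\begin{equation*}
  \detJ_{ij}^{rs}(t) = \Psi_{[1,i][1,j]}\,\lambda_{[1,i]}^r\,\mu_{[1,j]}^s\,a_{[1,i]}(t)\,b_{[1,j]}(t)\,\bigl(1+o(1)\bigr)
  \quad\text{as $t\to+\infty$},
\end{equation*}
and the analogous formula as $t\to-\infty$ with $[1,i]$ replaced by $[K-i+1,K]$ and $[1,j]$ by $[K-j+1,K]$; this is Lemma~9.1 of~\cite{lundmark-szmigielski:2017:GX-dynamics-interlacing}, already used in the even proof. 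Once this is in place, the body of the proof consists of substituting into each of the formulas~\eqref{eq:odd-X-leftmost-singleton}--\eqref{eq:odd-Y-group-amp} and picking out which term dominates. In each expression of the form $\detJ_{ab}^{rs} + T_i\detJ_{a',b'}^{rs} + S_i\detJ_{a'',b''}^{rs}$, the dominant summand as $t\to+\infty$ is the one where $\detJ$ carries the \emph{largest} lower indices (largest product $a_{[1,\cdot]}b_{[1,\cdot]}$), while as $t\to-\infty$ it is the one with the \emph{smallest} lower indices. The ratio simplifies to an exponential times a constant, and the logarithm gives the claimed straight line $\frac{t}{2}c + d + o(1)$.

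I would organize the cases exactly as in the statement: typical $X$-groups~\eqref{eq:odd-X-typical-group-pos}, leftmost $X$-group~\eqref{eq:odd-X-leftmost-group-pos}, rightmost $X$-group~\eqref{eq:odd-X-rightmost-group-pos}, and typical/leftmost/rightmost $Y$-groups (all drawn from~\eqref{eq:odd-Y-group-pos} with $1\le j\le K$). Within each case I would further split on whether $1\le i\le N_{j'}-1$ (where the $\detJ_{jj}^{00}$-type term dominates and $i$ drops out of the leading behaviour) or $i=N_{j'}$ (where the formula is genuinely different). The exceptional asymptotics flagged in Remark~\ref{rem:how-to-read-asymptotics-pos}\,---\,the leftmost $X$-group as $t\to+\infty$, the leftmost $Y$-group as $t\to+\infty$, and the corresponding rightmost groups as $t\to-\infty$\,---\,arise precisely because a determinant that would normally carry the dominant growth (such as $\detJ_{KK}^{00}=0$ or $\detJ_{K+1,K}^{00}=0$) vanishes, so that one of the remaining parameter-dependent terms involving $\tau_1$, $S_i$, $\sigma_{N-1}$ or $D$ becomes the leading contribution; this is the mechanism by which the internal parameters enter the constant terms in these cases.

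The main obstacle is not conceptual difficulty but pure bookkeeping: there are many cases, most of the constant terms involve several ratios of $\Psi_{IJ}$'s and eigenvalue products, and the intervals $[1,j]$, $[j,K]$, $[j+1,K]$ etc.\ shift by $\pm 1$ between adjacent cases. To keep the write-up tractable I would treat one representative case in full detail\,---\,the typical $X$-group~\eqref{eq:odd-X-typical-group-pos} as $t\to+\infty$, mirroring the sample computation given for the even case\,---\,verify that the resulting formula matches the stated one, and then indicate that the remaining cases are handled by the same mechanical procedure. The odd-case peculiarity, that the outgoing and incoming asymptotic velocities for $X$-peakons do not match while those for $Y$-peakons do (compare~\eqref{eq:asymptotic-velocities-odd-example-neginf} with~\eqref{eq:asymptotic-velocities-odd-example-posinf}), emerges automatically from the shifted lower-index pattern in the odd-case solution formulas and requires no separate argument.
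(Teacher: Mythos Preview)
Your proposal is correct and follows essentially the same approach as the paper: the paper explicitly omits the proof of this theorem, stating that the calculations are very similar to those for Theorem~\ref{thm:asymptotics-positions-even} in the even case, which in turn is proved by exactly the method you describe (use Lemma~9.1 of~\cite{lundmark-szmigielski:2017:GX-dynamics-interlacing} to identify the dominant term in each $\detJ_{ij}^{rs}$, pick out the dominant summand in each numerator and denominator, and take logarithms). Your plan to treat one representative case in full and indicate that the rest are mechanical mirrors the paper's own presentation.
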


\begin{theorem}
  \label{thm:asymptotics-amplitudes-odd}

  The amplitudes for non-singleton groups in the odd case,
  with the eigenvalues ordered as in~\eqref{eq:eigenvalues-ordered-odd},
  satisfy the following  asymptotic formulas.
  
  \begin{itemize}
  \item As $t \to +\infty$:

    All $X$-groups except the leftmost and the rightmost ones
    ($j'=K+2-j$ with $2 \le j \le K$), if $K \ge 2$:
    \begin{equation}
      \begin{aligned}
        \ln m_{j',i}(t) &=
        \frac{t}{2} \left( \frac{1}{\lambda_{j}}-\frac{3}{\mu_j} \right)
        + \frac12 \ln \left( \frac{2 a_{j}(0) \, b_j(0) \, \Psi_{[1,j][1,j]}}{\lambda_{[1,j-1]} \, \mu_{[1,j-1]} \, \Psi_{[1,j-1][1,j-1]}} \right)
        \\ & \quad
        + \ln \left( \frac{(\sigma_i - \sigma_{i-1}) \, \mu_{[1,j-1]}^2 \, \Psi_{[1,j-1][1,j-1]}^2}{2 b_j^2(0) \, \lambda_{[1,j-1]} \, \Psi_{[1,j-1][1,j]}^2} \right)
        + o(1)
        ,
        \\ &
        \quad \text{for $1 \le i \le N_{j'}-1$}
        ,
        \\[1ex]
        \ln m_{j',N_{j'}}(t) &=
        \frac{t}{2} \left( \frac{1}{\lambda_{j-1}} - \frac{1}{\mu_{j}} \right)
        + \frac12 \ln \left( \frac{2 a_{j-1}(0) \, b_{j}(0) \, \Psi_{[1,j-1][1,j]}}{\lambda_{[1,j-2]} \, \mu_{[1,j-1]} \, \Psi_{[1,j-2][1,j-1]} } \right)
        \\ & \quad
        + \ln \left( \frac{\mu_{[1,j-1]}^2 \, \Psi_{[1,j-1][1,j-1]}}{2 b_j(0) \, \lambda_{[1,j-1]} \, \Psi_{[1,j-1][1,j]}} \right)
        + o(1)
        .
      \end{aligned}
    \end{equation}
    The leftmost $X$-group:
    \begin{equation}
      \begin{aligned}
        \ln m_{1,1}(t) &=
        \frac12 \ln \left( \frac{2 \tau_1}{C L} \right)
        + \ln \left( \frac{C M}{2L} \right)
        + o(1)
        ,
        \\[1ex]
        \ln m_{1,i}(t) &=
        \frac12 \ln \left( \frac{S_i}{L M} \right)
        + \ln \left( \frac{(\sigma_i - \sigma_{i-1}) \, M^2}{2 \sigma_i \, \sigma_{i-1} \, L} \right)
        + o(1)
        ,
        \\ &
        \quad \text{for $2 \le i \le N_1-1$}
        ,
        \\[1ex]
        \ln m_{1,N_1}(t) &=
        \frac{t}{2} \left( \frac{1}{\lambda_{K}} \right)
        + \frac12 \ln \left( \frac{2 \sigma_{N_1-1} \, a_K(0) \, \Psi_{[1,K][1,K]}}{\lambda_{[1,K-1]} \, M \, \Psi_{[1,K-1][1,K]}} \right)
        \\ & \quad
        +\ln \left( \frac{M^2}{2 \sigma_{N_1-1} \, L} \right)
        + o(1)
        .
      \end{aligned}
    \end{equation}
    The rightmost $X$-group:
    \begin{equation}
      \begin{aligned}
        \ln m_{K+1,i}(t) &=
        \frac{t}{2} \left( \frac{1}{\lambda_1} - \frac{3}{\mu_1} \right)
        + \frac12 \ln \bigl( 2 a_1(0) \, b_1(0) \, \Psi_{\{1\}\{1\}} \bigr)
        \\ & \quad
        +\ln \left( \frac{\sigma_i-\sigma_{i-1}}{2 b_1^2(0)} \right)
        + o(1)
        ,
        \quad
        \text{for $1 \le i \le N_{K+1}-1$}
        ,
        \\
        \ln m_{K+1,N_{K+1}}(t) &=
        \frac{t}{2} \left( \frac{1}{\lambda_1} - \frac{1}{\mu_1} \right)
        + \frac12 \ln \bigl( 2 a_1(0) \, b_1(0) \, \Psi_{\{1\}\{1\}} \bigr)
        \\ & \quad
        + \ln \left( \frac{1}{2 b_1(0)} \right)
        + o(1)
        .
      \end{aligned}
    \end{equation}
    All $Y$-groups except the leftmost one
    ($j'=K+1-j$ with $1 \le j \le K-1$), if $K \ge 2$:
    \begin{equation}
      \begin{aligned}
        \ln n_{j',i}(t) &=
        \frac{t}{2} \left( \frac{1}{\mu_{j+1}} - \frac{3}{\lambda_j} \right)
        + \frac12 \ln \left( \frac{2 a_j(0) \, b_{j+1}(0) \, \Psi_{[1,j][1,j+1]}}{\lambda_{[1,j-1]} \, \mu_{[1,j]} \, \Psi_{[1,j-1][1,j]}} \right)
        \\ & \quad
        + \ln \left( \frac{(\sigma_i - \sigma_{i-1}) \, \lambda_{[1,j-1]}^2 \, \Psi_{[1,j-1][1,j]}^2}{2 a_j^2(0) \, \mu_{[1,j]} \, \Psi_{[1,j][1,j]}^2} \right)
        + o(1)
        ,
        \\[1ex]
        \ln n_{j',N_{j'}}(t) &=
        \frac{t}{2} \left( \frac{1}{\mu_{j}} - \frac{1}{\lambda_j} \right)
        + \frac12 \ln \left( \frac{2 a_j(0) \, b_{j}(0) \, \Psi_{[1,j][1,j]}}{\lambda_{[1,j-1]} \, \mu_{[1,j-1]} \, \Psi_{[1,j-1][1,j-1]} } \right)
        \\ & \quad
        + \ln \left( \frac{\lambda_{[1,j-1]}^2 \, \Psi_{[1,j-1][1,j]}}{2 a_j(0) \, \mu_{[1,j]} \, \Psi_{[1,j][1,j]}} \right)
        + o(1)
        ,
        \quad \text{for $1 \le i \le N_{j'}-1$}
        .
      \end{aligned}
    \end{equation}
    The leftmost $Y$-group:
    \begin{equation}
      \begin{aligned}
        \ln n_{1,i}(t) &=
        \frac{t}{2} \left( \frac{-3}{\lambda_{K}} \right)
        + \frac12 \ln \left( \frac{2 T_i \, a_K(0) \, \Psi_{[1,K][1,K]}}{\lambda_{[1,K-1]} \, M \, \Psi_{[1,K-1][1,K]}} \right)
        \\ & \quad
        + \ln \left( \frac{(\sigma_i - \sigma_{i-1}) \, \lambda_{[1,K-1]}^2 \, \Psi_{[1,K-1][1,K]}^2}{2 a_K^2(0) \, M \, \Psi_{[1,K][1,K]}^2} \right)
        + o(1)
        ,
        \\[1ex]
        \ln n_{1,N_1}(t) &=
        \frac{t}{2} \left( \frac{1}{\mu_K} - \frac{1}{\lambda_{K}} \right)
        + \frac12 \ln \left( \frac{2 a_{K}(0) \, b_K(0) \, \Psi_{[1,K][1,K]}}{\lambda_{[1,K-1]} \, M \, \Psi_{[1,K-1][1,K-1]}} \right)
        \\ & \quad
        + \ln \left( \frac{\lambda_{[1,K-1]}^2 \, \Psi_{[1,K-1][1,K]}}{2 a_K(0) \, M \, \Psi_{[1,K][1,K]}} \right)
        + o(1)
        ,
        \quad \text{for $1 \le i \le N_1-1$}
        .
      \end{aligned}
    \end{equation}

  \item As $t \to -\infty$:

    All $X$-groups except the leftmost and the rightmost ones
    ($2 \le j \le K$), if $K \ge 2$:
    \begin{equation}
      \begin{aligned}
        \ln m_{j,1}(t) &=
        \frac{t}{2} \left( \frac{1}{\lambda_{j}} - \frac{1}{\mu_{j-1}} \right)
        + \frac12 \ln \left( \frac{2 a_{j}(0) \, b_{j-1}(0) \, \Psi_{[j,K][j-1,K]}}{\lambda_{[j+1,K]} \, \mu_{[j,K]} \, \Psi_{[j+1,K][j,K]}} \right)
        \\ & \quad
        +\ln \left( \frac{\mu^2_{[j,K]} \, \Psi_{[j,K][j,K]}}{2 b_{j-1}(0) \, \lambda_{[j,K]} \, \Psi_{[j,K][j-1,K]}} \right)
        + o(1)
        ,
        \\[1ex]
        \ln m_{j,i}(t) &=
        \frac{t}{2} \left( \frac{3}{\lambda_{j}} - \frac{1}{\mu_{j}} \right)
        + \frac12 \ln \left( \frac{2 a_{j}(0) \, b_{j}(0) \, \Psi_{[j,K][j,K]}}{\lambda_{[j+1,K]} \, \mu_{[j+1,K]} \, \Psi_{[j+1,K][j+1,K]}} \right)
        \\ & \quad
        + \ln \left( \frac{(\sigma_i -\sigma_{i-1}) \, S_i}{R_i \, R_{i-1}} \right)
        + \ln \left( \frac{a_j(0) \, \mu_{[j,K]} \, \mu_{[j+1,K]}}{b_j(0) \, \lambda_{[j+1,K]}} \right)
        \\ & \quad
        + \ln \left( \frac{\Psi_{[j,K][j,K]} \, \Psi_{[j+1,K][j+1,K]}}{2 \Psi_{[j+1,K][j,K]}^2} \right)
        + o(1)
        ,
        \\ &
        \quad \text{for $2 \le i \le N_j-1$}
        ,
        \\
        \ln m_{j,N_j}(t) &=
        \frac{t}{2} \left( \frac{3}{\lambda_{j}} - \frac{1}{\mu_j} \right)
        + \frac12 \ln \left( \frac{2 a_j(0) \, b_j(0) \, \Psi_{[j,K][j,K]}}{\lambda_{[j+1,K]} \, \mu_{[j+1,K]} \, \Psi_{[j+1,K][j+1,K]}} \right)
        \\ & \quad
        + \ln \left( \frac{\sigma_{N_j-1}}{R_{N_j-1} } \right)
        + \ln \left( \frac{a_j(0) \, \mu_{[j,K]} \, \mu_{[j+1,K]}}{b_j(0) \, \lambda_{[j+1,K]}} \right)
        \\ & \quad
        + \ln \left( \frac{\Psi_{[j,K][j,K]} \, \Psi_{[j+1,K][j+1,K]}}{2 \Psi_{[j+1,K][j,K]}^2} \right)
        + o(1)
        .
      \end{aligned}
    \end{equation}
    The leftmost $X$-group:
    \begin{equation}
      \begin{aligned}
        \ln m_{1,1}(t) &=
        \frac{t}{2} \left( \frac{1}{\lambda_1} - \frac{1}{\mu_1} \right)
        + \frac12 \ln \left( \frac{2 a_1(0) \, b_1(0) \, \Psi_{[1,K][1,K]}}{\lambda_{[2,K]} \ \mu_{[2,K]} \, \Psi_{[2,K][2,K]}} \right)
        \\ & \quad
        + \ln \left(  \frac{M \, \mu_{[2,K]} \Psi_{[2,K][2,K]}}{b_1(0) \, L \, \Psi_{[2,K][1,K]}} \right)
        + o(1)
        ,
        \\[1ex]
        \ln m_{1,i}(t) &=
        \frac{t}{2} \left( \frac{3}{\lambda_1} - \frac{1}{\mu_1} \right)
        + \frac12 \ln \left( \frac{2 a_1(0) \, b_1(0) \, \Psi_{[1,K][1,K]}}{\lambda_{[2,K]} \, \mu_{[2,K]} \, \Psi_{[2,K][2,K]}} \right)
        \\ & \quad
        + \ln \left( \frac{(\sigma_i-\sigma_{i-1}) \, S_i}{R_i \, R_{i-1}} \right)
        + \ln \left( \frac{a_1(0) \, M \, \mu_{[2,K]}}{b_1(0) \, \lambda_{[2,K]}} \right)
        \\ & \quad
        + \ln \left( \frac{\Psi_{[1,K][1,K]} \, \Psi_{[2,K][2,K]}}{\Psi_{[2,K][1,K]}^2} \right)
        + o(1)
        ,
        \quad \text{for $2 \le i \le N_1-1$}
        ,
        \\[1ex]
        \ln m_{1,N_1}(t) &=
        \frac{t}{2} \left( \frac{3}{\lambda_1} - \frac{1}{\mu_1} \right)
        + \frac12 \ln \left( \frac{2 a_1(0) \, b_1(0) \, \Psi_{[1,K][1,K]}}{\lambda_{[2,K]} \, \mu_{[2,K]} \, \Psi_{[2,K][2,K]}} \right)
        \\ & \quad
        + \ln \left( \frac{\sigma_{N_1-1}}{R_{N_1-1}} \right)
        + \ln \left( \frac{a_1(0) \, M \, \mu_{[2,K]}}{b_1(0) \, \lambda_{[2,K]}} \right)
        \\ & \quad
        + \ln \left( \frac{\Psi_{[1,K][1,K]} \, \Psi_{[2,K][2,K]}}{\Psi_{[2,K][1,K]}^2} \right)
        + o(1)
        .
      \end{aligned}
    \end{equation}
    The rightmost $X$-group:
    \begin{equation}
      \begin{aligned}
        \ln m_{K+1,1}(t) &=
        \frac{t}{2} \left( \frac{-1}{\mu_{K}} \right)
        + \frac12 \ln \bigl( 2 \tau_1 \, b_K(0) \bigr)
        + \ln \left( \frac{1}{2 b_K(0)} \right)
        + o(1)
        ,
        \\
        \ln m_{K+1,i}(t) &=
        \frac{1}{2} \ln(2 S_i)
        + \ln \left( \frac{\sigma_i - \sigma_{i-1}}{2 \sigma_i \, \sigma_{i-1}} \right)
        + o(1)
        ,
        \quad \text{for $1 \le i \le N_{K+1}-1$}
        ,
        \\
        \ln m_{K+1,N_{K+1}}(t) &=
        \frac{1}{2} \ln \bigl( 2 (S_{N_{K+1}-1} + D \, \sigma_{N_{K+1}-1}) \bigr)
        + \ln \left( \frac{1}{2 \sigma_{N_{K+1}-1}} \right)
        + o(1)
        .
      \end{aligned}
    \end{equation}
    All $Y$-groups except the rightmost one ($1 \le j \le K-1$), if $K \ge 2$:
    \begin{equation}
      \begin{aligned}
        \ln n_{j,1}(t) &=
        \frac{t}{2} \left( \frac{1}{\mu_{j}} - \frac{1}{\lambda_{j}} \right)
        + \frac12 \ln \left( \frac{2 a_{j}(0) \, b_{j}(0) \, \Psi_{[j,K][j,K]}}{\lambda_{[j+1,K]} \, \mu_{[j+1,K]} \, \Psi_{[j+1,K][j+1,K]} } \right)
        \\ & \quad
        + \ln \left( \frac{\lambda_{[j+1,K]}^2 \, \Psi_{[j+1,K][j,K]}}{2 a_j(0) \, \mu_{[j,K]} \, \Psi_{[j,K][j,K]}} \right)
        + o(1)
        ,
        \\
        \ln n_{j,i}(t) &=
        \frac{t}{2} \left( \frac{3}{\mu_{j}} - \frac{1}{\lambda_{j+1}} \right)
        + \frac12 \ln \left( \frac{2 a_{j+1}(0) \, b_{j}(0) \, \Psi_{[j+1,K][j,K]}}{\lambda_{[j+2,K]} \, \mu_{[j+1,K]} \Psi_{[j+2,K][j+1,K]} } \right)
        \\ & \quad
        + \ln \left( \frac{(\sigma_i - \sigma_{i-1}) \, S_i}{R_i \, R_{i-1}} \right)
        + \ln \left( \frac{b_j(0) \, \lambda_{[j+1,K]} \, \lambda_{[j+2,K]}}{a_{j+1}(0) \, \mu_{[j+1,K]}} \right)
        \\ & \quad
        + \ln \left( \frac{\Psi_{[j+1,K][j,K]} \, \Psi_{[j+2,K][j+1,K]}}{2 \Psi_{[j+1,K][j+1,K]}^2} \right)
        + o(1)
        ,
        \\ &
        \quad \text{for $2 \le i \le N_j-1$}
        ,
        \\
        \ln n_{j,N_j}(t) &=
        \frac{t}{2} \left( \frac{3}{\mu_{j}} - \frac{1}{\lambda_{j+1}} \right)
        + \frac12 \ln \left( \frac{2 a_{j+1}(0) \, b_{j}(0) \, \Psi_{[j+1,K][j,K]}}{\lambda_{[j+2,K]} \, \mu_{[j+1,K]} \, \Psi_{[j+2,K][j+1,K]}} \right)
        \\ & \quad
        + \ln \left( \frac{\sigma_{N_j-1}}{R_{N_j-1}} \right)
        + \ln \left( \frac{b_j(0) \, \lambda_{[j+1,K]} \, \lambda_{[j+2,K]}}{a_{j+1}(0) \, \mu_{[j+1,K]}} \right)
        \\ & \quad
        + \ln \left( \frac{\Psi_{[j+1,K][j,K]} \, \Psi_{[j+2,K][j+1,K]}}{2 \Psi_{[j+1,K][j+1,K]}^2} \right)
        + o(1)
        .
      \end{aligned}
    \end{equation}
    The rightmost $Y$-group:
    \begin{equation}
      \begin{aligned}
        \ln n_{K,1}(t) &=
        \frac{t}{2} \left( \frac{1}{\mu_K} - \frac{1}{\lambda_K} \right)
        + \frac12 \ln \left( 2 a_K(0) \, b_k(0) \, \Psi_{\{K\}\{K\}} \right)
        \\ & \quad
        + \ln \left( \frac{1}{2 a_K(0) \, \mu_K \, \Psi_{\{K\}\{K\}}} \right)
        + o(1)
        ,
        \\
        \ln n_{K,i}(t) &=
        \frac{t}{2} \left( \frac{3}{\mu_K} \right)
        + \frac12 \ln \left( \frac{2 S_i \, b_K(0)}{T_i} \right)
        + \ln \left( \frac{(\sigma_i - \sigma_{i-1}) \, T_i \, b_K(0)}{2 R_i \, R_{i-1}} \right)
        + o(1)
        ,
        \\ &
        \quad \text{for $2 \le i \le N_K-1$}
        ,
        \\
        \ln n_{K,N_K}(t) &=
        \frac{t}{2} \left( \frac{3}{\mu_K} \right)
        + \frac12 \ln \bigl( 2 \sigma_{N_K-1} \, b_K(0) \bigr)
        + \ln \left( \frac{b_K(0)}{2 R_{ N_K-1}} \right)
        + o(1)
        .
      \end{aligned}
    \end{equation}
  \end{itemize}
\end{theorem}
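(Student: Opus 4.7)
The plan is to prove Theorem~\ref{thm:asymptotics-amplitudes-odd} by directly extracting the dominant exponential terms in the explicit solution formulas of Section~\ref{sec:solutions-odd}, in complete parallel with the proof of Theorem~\ref{thm:asymptotics-amplitudes-even}. Recall that $\ln m_{j,i}(t) = x_{j,i}(t) + \ln(\tfrac12 Q_{j,i}(t))$ and $-\ln n_{j,i}(t) = -y_{j,i}(t) - \ln(\tfrac12 P_{j,i}(t))$, so once we know the asymptotics of the $x_{j,i}, y_{j,i}$ (already done in Theorem~\ref{thm:asymptotics-positions-odd}), it remains to find the asymptotics of the logarithms of $Q_{j,i}$ and $P_{j,i}$ given by \eqref{eq:odd-X-typical-group-amp}, \eqref{eq:odd-X-leftmost-group-amp}, \eqref{eq:odd-X-rightmost-group-amp} and \eqref{eq:odd-Y-group-amp}.

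The central tool is Lemma 9.1 of \cite{lundmark-szmigielski:2017:GX-dynamics-interlacing}, giving, with the time dependence $a_i(t)=a_i(0)e^{t/\lambda_i}$, $b_j(t)=b_j(0)e^{t/\mu_j}$ and the ordering~\eqref{eq:eigenvalues-ordered-odd},
\begin{equation*}
\detJ_{ij}^{rs}(t) = \Psi_{[1,i][1,j]}\,\lambda_{[1,i]}^r\,\mu_{[1,j]}^s\,a_{[1,i]}(t)\,b_{[1,j]}(t)\,\bigl(1+o(1)\bigr) \quad\text{as } t\to+\infty,
\end{equation*}
and the mirror statement with $[1,i],[1,j]$ replaced by $[i',K]$, $[j',K]$ (shifted integer intervals) as $t\to-\infty$. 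First I would treat typical interior $X$- and $Y$-groups: substitute this expansion into each of the three factors making up $Q_{j',i}$ or $P_{j',i}$, identify which $\detJ_{ij}^{rs}$ in each of the bracketed expressions $\detJ + T_i \detJ + S_i \detJ$ (and $\detJ + \sigma_i \detJ + R_i \detJ$) grows fastest, factor out that dominant exponential, and verify that the surviving constants match the claimed formulas. For the direction $t\to+\infty$ the dominant determinant is the one with the \emph{smallest} indices in the numerator of $Y$ and denominator of $X$ groups, so the internal parameters $(\sigma_i-\sigma_{i-1})$ from the numerator of \eqref{eq:odd-Y-group-amp} do not get absorbed and produce the $\ln(\sigma_i-\sigma_{i-1})$ term; for $t\to-\infty$ the dominant indices are the largest, the $R_i,R_{i-1}$ denominators matter, and the $S_i/(R_iR_{i-1})$ factors appear. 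The shift in lower indices of $\detJ_{ij}^{rs}$ relative to the even case is exactly what produces the different slope coefficients such as $\tfrac12(1/\lambda_{j-1}-3/\mu_j)$ instead of $\tfrac12(1/\lambda_{j+1}-3/\mu_j)$.

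Next I would handle the outermost groups, where the solution formulas \eqref{eq:odd-X-leftmost-group-amp} and \eqref{eq:odd-X-rightmost-group-amp} involve the spectral constants $C$ or $D$ and have shorter $R_i$-expansions (the term $\detJ_{j-1,j}^{10}$ is absent, since it vanishes at the left edge). This causes some dominant balances to collapse, which is responsible for the special slopes like $1/\lambda_K$ for $X_{1,N_1}$, the vanishing $t$-coefficient for $X_{1,i}$ with $2 \le i \le N_1-1$, and the exceptional lines parallel to the singleton line that we saw in Figure~\ref{fig:GX-4+3-allgroups-amplitudes-X1-X4}. In each such case I would explicitly identify which of the nominally dominant terms vanishes and pick up the next one.

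The main obstacle, as in the even case, will not be conceptual but combinatorial: systematically enumerating the correct dominant balance in every sub-case (typical, leftmost, rightmost, $K=1$, singleton-adjacent) while keeping track of the constants, including the $\Psi_{IJ}$ ratios and the products of eigenvalues, in a way that matches the stated formulas exactly. A useful bookkeeping device is to note that every asymptotic formula for an amplitude can be written as ``asymptotic formula for the position $\pm$ asymptotic formula for $\tfrac12 Q$ or $\tfrac12 P$'', which allows one to reuse Theorem~\ref{thm:asymptotics-positions-odd} and only analyse the $Q,P$ factors separately, thereby cutting the work roughly in half and automatically producing the ``constant term plus constant term'' structure displayed in the statement.
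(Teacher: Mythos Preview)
Your proposal is correct and follows essentially the same approach as the paper: the paper omits the proof entirely, stating that ``the calculations are very similar to those in Section~\ref{sec:asymptotics-even}'', and the method there is precisely what you describe---identify the dominant term in each $\detJ_{ij}^{rs}$ via Lemma~9.1 of \cite{lundmark-szmigielski:2017:GX-dynamics-interlacing}, combine with the relation $\ln m_{k,i} = x_{k,i} + \ln(\tfrac12 Q_{k,i})$ (and similarly for $n$), and handle the boundary groups separately where certain determinants vanish.
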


\section{Effective position and amplitude of a peakon group}
\label{sec:effective}

In this section we will explain a phenomenon that we have seen repeatedly throughout this paper,
namely that the solution formulas for the position and amplitude of a singleton peakon group
within a non-interlacing solution
are identical to the solution formulas for the corresponding peakon in the interlacing solution
(with the same spectral data).
In fact, we will show that any peakon group has an \emph{effective position}
and an \emph{effective amplitude} which behave like the position and amplitude
of the corresponding peakon in the interlacing solution, and if the group is a singleton,
then the effective position and amplitude of that group are just the actual position and amplitude
of that single peakon.

To motivate the definitions that follow, let us recall from the paper by
Lundmark and Szmigielski~\cite{lundmark-szmigielski:2016:GX-inverse-problem}
the two kinds of ``jump matrices'' that appear in the study of the spectral problems
connected with the Lax pairs for the Geng--Xue equation:
\begin{equation}
  \label{eq:jump-TS}
  S(x,m,\lambda)
  =
  \begin{pmatrix}
    1 & 0 & 0 \\
    m e^{x} & 1 & \lambda m e^{-x} \\
    0 & 0 & 1
  \end{pmatrix}
  ,
  \qquad
  T(x,m,\lambda)
  =
  \begin{pmatrix}
    1 & -2 \lambda m e^{-x} & 0 \\
    0 & 1 & 0 \\
    0 &  2 m e^{x} & 1
  \end{pmatrix}
  .
\end{equation}
These matrices are used for defining the spectral data for a given interlacing
peakon configuration.
As a simple example, we will consider the $2+2$ interlacing case $x_1 < y_1 < x_2 < y_2$,
and show how the eight peakon variables $\{ x_k, m_k, y_k, n_k \}_{k=1}^2$
define the eight spectral variables
$\lambda_1$, $\lambda_2$, $a_1$, $a_2$, $\mu_1$, $b_1$, $C$ and~$D$.
The equation
\begin{equation}
  \label{eq:ABC-2+2-interlacing}
  \begin{pmatrix}
    A(\lambda) \\
    B(\lambda) \\
    C(\lambda)
  \end{pmatrix}
  =
  T(y_2, n_2, \lambda)
  \,
  S(x_2, m_2, \lambda)
  \,
  T(y_1, n_1, \lambda)
  \,
  S(x_1, m_1, \lambda)
  \,
  \begin{pmatrix}
    1 \\
    0 \\
    0
  \end{pmatrix}
  ,
\end{equation}
defines the polynomials
\begin{equation}
  \begin{aligned}
    A(\lambda) &
    = 1 - 2 \lambda \left( m_1 n_1 e^{x_1- y_1} + m_1 n_2 e^{x_1-y_2} + m_2 n_2 e^{x_2-y_2} \right) 
    \\
    \quad &
    + (2\lambda)^2 \left( m_1 n_1 e^{x_1- y_1} \left( 1 - e^{2(y_1-x_2)} \right) m_2 n_2 e^{x_2-y_2} \right)
    ,
  \end{aligned}
\end{equation}
which turns out to be time-independent (its coefficients are constants of motion),
and 
\begin{equation}
  B(\lambda)
  =
  \left( m_1 e^{x_1} + m_2 e^{x_2} \right)
  - 2 \lambda m_1 n_1 e^{x_1- y_1} \left( 1 - e^{2(y_1-x_2)} \right) m_2 e^{x_2}
  ,
\end{equation}
which depends on time in a known way.
The eigenvalues $0 < \lambda_1 < \lambda_2$ are then defined as the zeros of~$A(\lambda)$,
which implies that they are constant in time,
while $a_1$ and~$a_2$ are defined as the residues in the partial fraction
decomposition of the so-called Weyl function
\begin{equation}
  -\frac{B(\lambda)}{A(\lambda)} =
  \frac{a_1}{\lambda - \lambda_1} + \frac{a_2}{\lambda - \lambda_2}
  .
\end{equation}
From the known time-dependence of $B(\lambda)$
one can show that $\dot a_k = a_k / \lambda_k$,
which gives $a_k(t) = a_k(0) \, e^{t / \lambda_k}$ since $\lambda_k$ is constant.
Similarly, from the equation
\begin{equation}
  \begin{pmatrix}
    \tilde A(\lambda) \\
    \tilde B(\lambda) \\
    \tilde C(\lambda)
  \end{pmatrix}
  =
  S(y_2, n_2, \lambda)
  \,
  T(x_2, m_2, \lambda)
  \,
  S(y_1, n_1, \lambda)
  \,
  T(x_1, m_1, \lambda)
  \,
  \begin{pmatrix}
    1\\
    0\\
    0
  \end{pmatrix}
  ,
\end{equation}
where the roles of the jump matrices are interchanged,
we get the polynomials
\begin{equation}
  \begin{aligned}
    \tilde A(\lambda) &= 1 - 2 \lambda n_1 m_2 e^{ y_1-x_2}
    ,
    \\
    \tilde B(\lambda) &= \left( n_1 e^{y_1} + n_2 e^{y_2} \right)
    - 2 \lambda m_2 n_1 e^{y_1-x_2} \left( 1 - e^{2(x_2-y_2)} \right) n_2 e^{y_2} 
    ,
  \end{aligned}
\end{equation}
where the eigenvalue~$\mu_1$ is defined as the zero of
time-independent polynomial~$\tilde A(\lambda)$, and $b_1$ and~$D$ are given by
\begin{equation}
  -\frac{\tilde B(\lambda)}{\tilde A(\lambda)} = -D + \frac{b_1}{\lambda - \mu_1}
  ,
\end{equation}
from which one can show that $\dot b_1 = b_1 / \mu_1$
and that
\begin{equation}
  D = \lim_{\lambda\to\infty} \frac{\tilde B(\lambda)}{\tilde A(\lambda)}
  = n_2 e^{y_2} \left( 1 - e^{2(x_2-y_2)} \right)
\end{equation}
is constant.
This constant of motion~$D$ was denoted by~$b_{\infty}$ in the papers by
Lundmark and Szmigielski~\cite{lundmark-szmigielski:2016:GX-inverse-problem, lundmark-szmigielski:2017:GX-dynamics-interlacing}.
The remaining constant parameter~$C$ is given by
\begin{equation}
  C = 2 b_{\infty}^* \frac{\lambda_1 \lambda_2}{\mu_1}
  ,
\end{equation}
where the constant of motion
\begin{equation}
  b_{\infty}^* 
  =
  m_1  e^{-x_1} \left( 1 - e^{2(x_1-y_1)} \right)
\end{equation}
comes from the so-called adjoint spectral problem,
or from the symmetry of the setup.

If we consider instead a non-interlacing configuration, say
\begin{equation*}
  x_1 < \underbrace{ y_{1,1} < y_{1,2} < y_{1,3} } < x_2 < y_2
  ,
\end{equation*}
and try do to the same thing, then
\begin{equation}
  \label{eq:ABC-2+2-noninterlacing}
  \begin{split}
    \begin{pmatrix}
      A(\lambda) \\
      B(\lambda) \\
      C(\lambda)
    \end{pmatrix}
    &
    =
    T(y_2, n_2, \lambda)
    \,
    S(x_2, m_2, \lambda)
    \\ &
    \times
    \underbrace{
      T(y_{1,3}, n_{1,3}, \lambda)
      \,
      T(y_{1,2}, n_{1,2}, \lambda)
      \,
      T(y_{1,1}, n_{1,1}, \lambda)
    }
    \,
    S(x_1, m_1, \lambda)
    \,
    \begin{pmatrix}
      1 \\
      0 \\
      0
    \end{pmatrix}
    ,
  \end{split}
\end{equation}
where the indicated product of the jump matrices for the three adjacent $Y$-peakons is
\begin{equation}
  \begin{split}
    &
    T(y_{1,3}, n_{1,3}, \lambda)
    \,
    T(y_{1,2}, n_{1,2}, \lambda)
    \,
    T(y_{1,1}, n_{1,1}, \lambda)
    \\[1ex]
    & \qquad
    =
    \begin{pmatrix}
      1 & -2 \lambda n_{1,1} e^{-y_{1,1}} - 2 \lambda n_{1,2} e^{-y_{1,2}} - 2 \lambda n_{1,3} e^{-y_{1,3}} & 0 \\
      0 & 1 & 0 \\
      0 &  2 n_{1,1} e^{y_{1,1}} + 2 n_{1,2} e^{y_{1,2}} + 2 n_{1,3} e^{y_{1,3}} & 1
    \end{pmatrix}
    ,
  \end{split}
\end{equation}
which happens to be of the form 
$T(\tilde y_1, \tilde n_1, \lambda)$, where
\begin{equation}
  \label{eq:effective-mass-and-position-y-example}
  \tilde n_1 e^{\tilde y_1}
  = \sum_{i=1}^3 n_{1,i} e^{y_{1,i}}
  ,\qquad
  \tilde n_1 e^{-\tilde y_1}
  = \sum_{i=1}^3 n_{1,i} e^{-y_{1,i}}
  .
\end{equation}
Thus, the polynomials $A(\lambda)$ and~$B(\lambda)$ from~\eqref{eq:ABC-2+2-noninterlacing}
will not be able to resolve the individual
positions $y_{1,i}$ and amplitudes $n_{1,i}$ in the $Y_1$-group,
but only the specific combinations $\tilde y_1$ and~$\tilde n_1$
defined by~\eqref{eq:effective-mass-and-position-y-example},
playing exactly the roles that $y_1$ and~$n_1$ did in the interlacing
case~\eqref{eq:ABC-2+2-interlacing}.
And the same thing happens for $\tilde A(\lambda)$ and~$\tilde B(\lambda)$,
since also
\begin{equation}
    S(y_{1,3},n_{1,3},\lambda)
    \,
    S(y_{1,2},n_{1,2},\lambda)
    \,
    S(y_{1,1},n_{1,1},\lambda)
    =
    S(\tilde y_1,\tilde n_1,\lambda)
    ,
\end{equation}
as is easily verified.
Thus, the definition of the spectral variables will be just like in the interlacing case,
except that $y_1$ and~$n_1$ are replaced by the quantities $\tilde y_1$ and~$\tilde n_1$,
which therefore act as the \emph{effective} position and amplitude of the $Y_1$-group
as a whole, as far as the spectral data are concerned.
The solution formulas for the interlacing case are nothing but the inverse spectral map
from the spectral variables back to the peakon variables,
so what they give us in this non-interlacing case is
$x_1(t)$, $\tilde y_1(t)$, $x_2(t)$, $y_2(t)$
and $m_1(t)$, $\tilde n_1(t)$, $m_2(t)$, $n_2(t)$.
In particular, they provide the formulas for all the singletons in the non-interlacing
solution.
(But the formulas for the individual variables
$y_{1,i}$ and~$n_{1,i}$ must be obtained in some other way.)

Let us now give an independent verification that things always work like this,
using the peakon ODEs directly, rather than the setup coming from the Lax pairs.

\begin{definition}
  The \textbf{effective position} $\tilde x_j$ and
  the \textbf{effective amplitude} $\tilde m_j$ of
  the $j$th $X$-group are defined by
  \begin{equation}
    \label{eq:effective-mass-and-position-X}
    \tilde m_j e^{\tilde x_j}
    = \sum_{i=1}^{N_j^X} m_{j,i} e^{x_{j,i}}
    ,\qquad
    \tilde m_j e^{-\tilde x_j}
    = \sum_{i=1}^{N_j^X} m_{j,i} e^{-x_{j,i}}
    .
  \end{equation}
  Similarly, for the $j$th $Y$-group,
  $\tilde y_j$ and $\tilde n_j$ are defined by
  \begin{equation}
    \label{eq:effective-mass-and-position-Y}
    \tilde n_j e^{\tilde y_j}
    = \sum_{i=1}^{N_j^Y} n_{j,i} e^{y_{j,i}}
    ,\qquad
    \tilde n_j e^{-\tilde y_j}
    = \sum_{i=1}^{N_j^Y} n_{j,i} e^{-y_{j,i}}
    .
  \end{equation}
\end{definition}

\begin{remark}
  In this definition, we are tacitly assuming (as always in this article)
  that all amplitudes $m_{j,i}$ and~$n_{j,i}$ are positive.
  This ensures that the defining system
  \begin{equation*}
    \tilde m e^{\tilde x} = A > 0
    ,\qquad
    \tilde m e^{-\tilde x} = B > 0
  \end{equation*}
  can be solved for the quantities being defined,
  \begin{equation*}
    \tilde x = \tfrac12 \ln(A/B)
    ,\qquad
    \tilde m = \sqrt{AB} > 0
    .
  \end{equation*}
  Without this positivity requirement, there may not even exist a real solution
  for $\tilde x$ and~$\tilde m$.
  Similarly for $\tilde y$ and~$\tilde n$, of course.
\end{remark}

\begin{proposition}
  The effective position of a group lies in the convex hull of the actual positions:
  \begin{equation}
    x_{j,1} < \tilde x_j < x_{j,N_j^X}
    ,\qquad
    y_{j,1} < \tilde y_j < y_{j,N_j^Y}
    .
  \end{equation}
\end{proposition}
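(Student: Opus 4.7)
The strategy is to eliminate $\tilde m_j$ from the defining system \eqref{eq:effective-mass-and-position-X} by taking the ratio rather than the product. Dividing the two equations gives the closed-form expression
\begin{equation*}
  e^{2\tilde x_j}
  = \frac{\sum_{i=1}^{N_j^X} m_{j,i} \, e^{x_{j,i}}}
         {\sum_{i=1}^{N_j^X} m_{j,i} \, e^{-x_{j,i}}},
\end{equation*}
both sums being strictly positive since all amplitudes $m_{j,i}$ are positive by assumption. Then $\tilde x_j$ is uniquely determined as the real logarithm of the square root of this ratio, and $\tilde m_j$ is recovered afterwards as the positive square root of the product of the two sums. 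Writing $N = N_j^X$ for brevity, the task is then to prove the two scalar inequalities $e^{2 x_{j,1}} < e^{2 \tilde x_j} < e^{2 x_{j,N}}$.

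For the upper bound, I would clear denominators: $e^{2 \tilde x_j} < e^{2 x_{j,N}}$ is equivalent to
\begin{equation*}
  \sum_{i=1}^{N} m_{j,i} \, e^{x_{j,i}}
  <
  \sum_{i=1}^{N} m_{j,i} \, e^{2 x_{j,N} - x_{j,i}},
\end{equation*}
i.e., to positivity of
\begin{equation*}
  \sum_{i=1}^{N} m_{j,i} \bigl( e^{2 x_{j,N} - x_{j,i}} - e^{x_{j,i}} \bigr)
  = \sum_{i=1}^{N-1} m_{j,i} \, e^{x_{j,N}}
    \bigl( e^{x_{j,N} - x_{j,i}} - e^{-(x_{j,N} - x_{j,i})} \bigr).
\end{equation*}
The $i = N$ term vanishes, and for $1 \le i \le N-1$ the ordering $x_{j,i} < x_{j,N}$ makes the bracket strictly positive; combined with $m_{j,i} > 0$, the sum is strictly positive whenever $N \ge 2$. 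The lower bound $e^{2 x_{j,1}} < e^{2 \tilde x_j}$ follows by exactly the same argument, pulling out $e^{-x_{j,1}}$ instead and using $x_{j,1} < x_{j,i}$ for $2 \le i \le N$. The $Y$-group case is word-for-word identical, with $(m_{j,i}, x_{j,i}, N_j^X)$ replaced by $(n_{j,i}, y_{j,i}, N_j^Y)$.

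There is no real obstacle here; the only point worth flagging is the degenerate case of a singleton group ($N_j = 1$), where the two inequalities collapse to $x_{j,1} = \tilde x_j = x_{j,1}$. In that case the statement should be read as an equality, so I would either restrict the proposition to non-singleton groups or phrase the conclusion using the closed convex hull; the proof above makes the distinction transparent since the summands in the displayed identity all vanish precisely when $N = 1$.
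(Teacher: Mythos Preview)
Your proof is correct and follows essentially the same approach as the paper: both form the ratio $e^{2\tilde x_j} = \bigl(\sum m_{j,i} e^{x_{j,i}}\bigr)\big/\bigl(\sum m_{j,i} e^{-x_{j,i}}\bigr)$ and then verify the bounds by elementary inequalities. The only cosmetic difference is that the paper bounds numerator and denominator separately via the total mass $M=\sum_i m_{j,i}$ (each sum lies in $(e^{x_{j,1}}M,\,e^{x_{j,N}}M)$, resp.\ $(M e^{-x_{j,N}},\,M e^{-x_{j,1}})$) and then multiplies, whereas you clear denominators and inspect the difference term by term; your remark on the degenerate singleton case is a reasonable addition that the paper leaves implicit.
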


\begin{proof}
  It is enough to show this for $\tilde x_j$,
  since the proof for $\tilde y_j$ is identical.
  Omitting the group index~$j$ for simplicity,
  we find since all $m_i>0$ and $x_1 < \dots < x_N$ that
  \begin{equation*}
    \tilde m e^{\tilde x}
    = \sum_{i=1}^N m_i e^{x_i}
    \in \bigl(e^{x_1} M, e^{x_N} M \bigr)
    ,\qquad
    \frac{\tilde m}{e^{\tilde x}}
    = \sum_{i=1}^N \frac{m_i}{e^{x_i}}
    \in \biggl( \frac{M}{e^{x_N}}, \frac{M}{e^{x_1}} \biggr)
    ,
  \end{equation*}
  where $M = \sum_{i=1}^N m_i > 0$, and hence
  \begin{equation*}
    e^{2 \tilde x}
    = \tilde m e^{\tilde x} \cdot \frac{e^{\tilde x}}{\tilde m}
    \in
    \biggl( e^{x_1} M \cdot \frac{e^{x_1}}{M}, e^{x_N} M \cdot \frac{e^{x_N}}{M} \biggr)
    = \bigl( e^{2x_1}, e^{2x_N} \bigr)
    .
  \end{equation*}
\end{proof}

\begin{theorem}
  For any peakon configuration, the effective positions and masses of the groups
  satisfy the ODEs for an interlacing peakon configuration.
\end{theorem}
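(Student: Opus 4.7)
The plan is to differentiate the two defining identities
\[
  A_j := \tilde m_j e^{\tilde x_j} = \sum_i m_{j,i} e^{x_{j,i}},
  \qquad
  B_j := \tilde m_j e^{-\tilde x_j} = \sum_i m_{j,i} e^{-x_{j,i}}
\]
along the peakon flow~\eqref{eq:GX-peakon-ode-new-notation}, and then to check that the resulting $\dot A_j$ and $\dot B_j$ agree with the time derivatives of $\tilde m_j e^{\pm \tilde x_j}$ computed under the interlacing ODEs~\eqref{eq:GX-peakon-ode} applied to the effective variables; from these two matchings the desired evolutions
\[
  \dot{\tilde x}_j = \tfrac{1}{2}(\dot A_j/A_j - \dot B_j/B_j),
  \qquad
  \dot{\tilde m}_j/\tilde m_j = \tfrac{1}{2}(\dot A_j/A_j + \dot B_j/B_j)
\]
follow at once. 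The $Y$-group case is symmetric under $u \leftrightarrow v$, $x \leftrightarrow y$, so I will focus on the $X$-groups. By the preceding proposition, each $\tilde x_j$ lies in $(x_{j,1}, x_{j,N_j^X})$ and each $\tilde y_j$ in $(y_{j,1}, y_{j,N_j^Y})$, so the effective configuration is itself interlacing and the target equations are well defined.

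The structural observation that makes everything work is that, on the open interval occupied by the $j$th $X$-group (which contains no $Y$-peakon), the field $v$ has the elementary form
\[
  v(x) = V_L^{(j)} e^{-x} + V_R^{(j)} e^{x},
  \qquad
  V_L^{(j)} := \sum_{k<j} \tilde n_k e^{\tilde y_k},
  \qquad
  V_R^{(j)} := \sum_{k\ge j} \tilde n_k e^{-\tilde y_k},
\]
obtained by splitting the defining sum for $v$ according to the sign of $x - y_{k,l}$ and collapsing each $Y$-group contribution via~\eqref{eq:effective-mass-and-position-Y}; similarly $v_x(x) = -V_L^{(j)} e^{-x} + V_R^{(j)} e^{x}$. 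An analogous splitting of $u(x_{j,i})$, which isolates the intra-group contribution, gives
\[
  u(x_{j,i}) = U_L^{(j)} e^{-x_{j,i}} + U_R^{(j)} e^{x_{j,i}} + u_j^{\mathrm{own}}(x_{j,i}),
\]
with $U_L^{(j)} := \sum_{k<j} \tilde m_k e^{\tilde x_k}$, $U_R^{(j)} := \sum_{k>j} \tilde m_k e^{-\tilde x_k}$, and likewise for $u_x$. The key point is that in the target effective interlacing system the corresponding expressions for $\tilde u, \tilde v$ and their derivatives at $\tilde x_j$ carry exactly the same coefficients $U_L^{(j)}, U_R^{(j)}, V_L^{(j)}, V_R^{(j)}$, but with $u_j^{\mathrm{own}}(\tilde x_j)$ replaced by the single effective mass $\tilde m_j$ and $u_{j,x}^{\mathrm{own}}(\tilde x_j)$ replaced by $0$ via the $\sgn(0)=0$ convention, so matching $\dot A_j$ on both sides amounts to showing that the intra-group moments behave as if they came from a single peakon at $\tilde x_j$ of mass $\tilde m_j$.

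Substituting these splittings into $\dot A_j = \sum_i e^{x_{j,i}} m_{j,i}\bigl[u(v+v_x) - 2 u_x v\bigr](x_{j,i})$ and using $v+v_x = 2 V_R^{(j)} e^{x}$ on the group interval, the right-hand side collapses to a linear combination of $A_j$, $B_j$, and two intra-group moments
\[
  \Sigma_1 := \sum_i m_{j,i}\, u_{j,x}^{\mathrm{own}}(x_{j,i}),
  \qquad
  \Sigma_2 := \sum_i m_{j,i}\, e^{2 x_{j,i}}\bigl[u_j^{\mathrm{own}} - u_{j,x}^{\mathrm{own}}\bigr](x_{j,i}).
\]
The sum $\Sigma_1$ vanishes because $\sgn(x_{j,i}-x_{j,l})$ is antisymmetric in $i \leftrightarrow l$. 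For $\Sigma_2$, the factor $1 + \sgn(x_{j,i}-x_{j,l})$ equals $2$ for $l < i$, $0$ for $l > i$, and $1$ at $l=i$ (thanks to $\sgn(0)=0$), so the double sum reorganises as
\[
  \Sigma_2 = \sum_i m_{j,i}^2 e^{2 x_{j,i}} + 2 \sum_{l<i} m_{j,i} m_{j,l} e^{x_{j,i}+x_{j,l}} = A_j^2;
\]
a mirror computation produces the analogous identity equal to $B_j^2$, which handles $\dot B_j$. With these two identities in hand, a direct coefficient-by-coefficient comparison with the expansion of $\dot{\tilde m}_j e^{\tilde x_j} + \tilde m_j e^{\tilde x_j} \dot{\tilde x}_j$ under the interlacing ODEs will match every term. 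The hard part will be spotting the identity $\Sigma_2 = A_j^2$: once it is in hand, it absorbs the apparently non-local intra-group interactions into precisely the $\tilde m_j^2$ contribution that a single effective peakon at $\tilde x_j$ would make, and the remainder of the verification is mechanical.
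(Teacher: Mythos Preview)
Your proposal is correct and follows essentially the same route as the paper's proof: both differentiate the defining quantities $\tilde m_j e^{\pm \tilde x_j}$, decompose $u$ and $v$ on the group's interval into exterior-left, exterior-right, and intra-group parts, and reduce the matching to the same two combinatorial identities (your $\Sigma_1=0$ is the paper's antisymmetry cancellation, and your $\Sigma_2=A_j^2$ is exactly the paper's squaring identity $\sum_i m_i e^{x_i}\bigl(m_i e^{x_i}+2\sum_{r<i}m_r e^{x_r}\bigr)=(\tilde m e^{\tilde x})^2$). The only cosmetic difference is that the paper replaces one group at a time by its effective peakon and argues that the remaining ODEs are unaffected, whereas you observe directly that the exterior coefficients $U_L^{(j)},U_R^{(j)},V_L^{(j)},V_R^{(j)}$ coincide in the original and the effective system; these are two equivalent ways of packaging the same computation.
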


\begin{proof}
  Recall from~\eqref{eq:GX-peakon-ode-new-notation} that the peakon ODEs are
  \begin{equation*}
    \begin{aligned}
      \dot x_{j,i} &= uv |_{x_{j,i}}
      ,&
      \dot m_{j,i} &= m_{j,i} \, (uv_x - u_xv)|_{x_{j,i}}
      ,\\
      \dot y_{j,i} &= uv |_{y_{j,i}}
      ,&
      \dot n_{j,i} &= n_{j,i} \, (vu_x - v_xu)|_{y_{j,i}}
      .
    \end{aligned}
  \end{equation*}
  Fix some~$p$, and define $\tilde u(x,t)$ by replacing $X$-group
  number $p$ in $u(x,t)$ with a singleton having the position~$\tilde x_p$
  and the amplitude~$\tilde m_p$.
  Then, for all the $X$-groups with $p\neq j$, and for all Y-groups,
  the ODEs above are unchanged if we replace $u$ with $\tilde u$.
  Indeed, we have
  \begin{equation*}
    \begin{aligned}
      u(x,t) &= \dotsb + \sum_{i=1}^{N} m_{p,i} e^{-\abs{x-x_{p,i}}} + \dotsb
      \qquad
      (\text{where $N=N_p^X$})
      ,\\
      \tilde u(x,t) &= \dotsb + \tilde m_{p} e^{-\abs{x-\tilde x_{p}}} + \dotsb
      ,
    \end{aligned}
  \end{equation*}
  where the dots denote terms which are identical in both functions,
  so if we evaluate at some $x<x_{p,1}$,
  remembering that $x_{p,1} < \tilde x_p < x_{p,N}$,
  we get
  \begin{equation*}
    \begin{aligned}
      u(x,t) &= \dotsb + \sum_{i=1}^{N} m_{p,i} e^{x-x_{p,i}} + \dotsb
      ,\\
      \tilde u(x,t) &= \dotsb + \tilde m_{p} e^{x-\tilde x_{p}} + \dotsb
      ,
    \end{aligned}
  \end{equation*}
  which is clearly the same thing (by the definition of $\tilde x_p$ and~$\tilde m_p$),
  and similarly we find for $x>x_{p,N}$ that
  \begin{equation*}
    \begin{aligned}
      u(x,t) &= \dotsb + \sum_{i=1}^{N} m_{p,i} e^{x_{p,i}-x} + \dotsb
      ,\\
      \tilde u(x,t) &= \dotsb + \tilde m_{p} e^{\tilde x_{p}-x} + \dotsb
      ,
    \end{aligned}
  \end{equation*}
  which are also equal.
  Thus,
  \begin{equation}
    u(x,t)=\tilde u(x,t)
    \qquad
    \text{for $x < x_{p,1}$ and for $x > x_{p,N}$}
    ,
  \end{equation}
  and in particular the functions $u$ and~$\tilde u$
  agree when evaluated at some $x_{j,i}$ with $j \neq p$
  or at some $y_{j,i}$,
  and likewise for their derivatives $u_x$ and~$\tilde u_x$.
  So the right-hand sides of the peakon ODEs (for all groups except the $X_p$-group)
  are unchanged, as we claimed.
  
  Next, we show that $\tilde x_p$ and $\tilde m_p$
  satisfy the correct singleton ODEs, namely
  \begin{equation*}
    \dot{\tilde x}_{p} = \tilde u v |_{\tilde x_{p}}
    ,\qquad
    \dot{\tilde m}_{p} = \tilde m_{p} \, (\tilde u v_x - 2 \tilde u_x v)|_{\tilde x_{p}}
    ,
  \end{equation*}
  or, equivalently,
  \begin{equation}
    \label{eq:x-m-tilde-correct-ODEs}
    \begin{aligned}
      \tfrac{d}{dt} \tilde m_{p} e^{\tilde x_{p}}
      &
      = \tilde m_{p} e^{\tilde x_{p}} \bigl( \tilde u (v_x+v) - 2 \tilde u_x v \bigr)|_{\tilde x_{p}}
      ,\\
      \tfrac{d}{dt} \tilde m_{p} e^{-\tilde x_{p}}
      &
      = \tilde m_{p} e^{-\tilde x_{p}} \bigl( \tilde u (v_x-v) - 2 \tilde u_x v \bigr)|_{\tilde x_{p}}
      .
    \end{aligned}
  \end{equation}
  What we are assuming is that the dynamics
  is induced from the noninterlacing configuration,
  \begin{equation*}
    \begin{aligned}
      \tfrac{d}{dt} \tilde m_{p} e^{\tilde x_{p}}
      &
      = \tfrac{d}{dt} \sum_{i=1}^N m_{p,i} e^{x_{p,i}}
      = \sum_{i=1}^N m_{p,i} e^{x_{p,i}} \bigl( u (v_x + v) - 2 u_x v \bigr)|_{x_{p,i}}
      ,
      \\
      \tfrac{d}{dt} \tilde m_{p} e^{-\tilde x_{p}}
      &
      = \tfrac{d}{dt} \sum_{i=1}^N m_{p,i} e^{-x_{p,i}}
      = \sum_{i=1}^N m_{p,i} e^{-x_{p,i}} \bigl( u (v_x - v) - 2 u_x v \bigr)|_{x_{p,i}}
      ,
    \end{aligned}
  \end{equation*}
  and what we need to show is that these expressions agree
  with~\eqref{eq:x-m-tilde-correct-ODEs},
  i.e.,
  \begin{equation}
    \label{eq:x-m-tilde-to-check}
    \begin{aligned}
      \sum_{i=1}^N m_{p,i} e^{x_{p,i}} \bigl( u (v_x + v) - 2 u_x v \bigr)|_{x_{p,i}}
      &
      = \tilde m_{p} e^{\tilde x_{p}} \bigl( \tilde u (v_x + v) - 2 \tilde u_x v \bigr)|_{\tilde x_{p}}
      ,
      \\
      \sum_{i=1}^N m_{p,i} e^{-x_{p,i}} \bigl( u (v_x - v) - 2 u_x v \bigr)|_{x_{p,i}}
      &
      = \tilde m_{p} e^{-\tilde x_{p}} \bigl( \tilde u (v_x - v) - 2 \tilde u_x v \bigr)|_{\tilde x_{p}}
      .
    \end{aligned}
  \end{equation}
  If for simplicity we write just
  \begin{equation*}
    x_i = x_{p,i}
    ,\qquad
    m_i = x_{p,i}
    ,\qquad
    \tilde x = \tilde x_p
    ,\qquad
    \tilde m = \tilde m_p
    ,
  \end{equation*}
  then for $x$ in the relevant range (the $p$th $X$-group) the functions
  $u$, $\tilde u$ and~$v$ have the form
  \begin{equation*}
    \begin{aligned}
      u &= A e^{-x} + \sum_{r=1}^N m_r e^{-\abs{x-x_r}} + B e^x
      ,\\
      \tilde u &= A e^{-x} + \tilde m e^{-\abs{x-\tilde x}} + B e^x
      ,\\
      v &= C e^{-x} + D e^x
      ,
    \end{aligned}
  \end{equation*}
  so
  \begin{equation*}
    \begin{aligned}
      u(x_i) &= A e^{-x_i} + e^{-x_i} \underbrace{\sum_{r<i} m_r e^{x_r}}_{=: E_i} + m_i + e^{x_i} \underbrace{\sum_{r>i} m_r e^{-x_r}}_{=: F_i} + B e^{x_i}
      \\
      &= (A+E_i) e^{-x_i} + m_i + (B+F_i) e^{x_i}
      ,\\
      u_x(x_i)
      &= -(A+E_i) e^{-x_i} + (B+F_i) e^{x_i}
      ,\\
      v(x_i) &= C e^{-x_i} + D e^{x_i}
      ,\\
      v_x(x_i) &= -C e^{-x_i} + D e^{x_i}
      .
    \end{aligned}
  \end{equation*}
  Thus, the right-hand side of the first equation in~\eqref{eq:x-m-tilde-to-check} is
  \begin{equation*}
    \begin{aligned}
      &
      \tilde m e^{\tilde x} (\tilde u(v_x+v) - 2 \tilde u_xv)|_{\tilde x}
      \\ &
      = \tilde m e^{\tilde x}
      \Bigl(
      (A e^{-\tilde x} + \tilde m + B e^{\tilde x}) \, 2D e^{\tilde x}
      + 2 \bigl( A e^{-\tilde x} - B e^{\tilde x} \bigr)
      \bigl( C e^{-\tilde x} + D e^{\tilde x} \bigr)
      \Bigr)
      \\ &
      = 2 \tilde m e^{\tilde x} (2AD-BC)
      + 2 D (\tilde m e^{\tilde x})^2
      + 2 AC \tilde m e^{-\tilde x}
      ,
    \end{aligned}
  \end{equation*}
  while the left-hand side is
  \begin{equation*}
    \begin{split}
      & 
      \sum_{i=1}^N m_{i} e^{x_{i}} (u(v_x+v) - 2 u_xv)|_{x_{i}}
      \\ &
      = \sum_{i=1}^N m_i e^{x_i}
      \biggl(
      \bigl( (A+E_i) e^{-x_i} + m_i + (B+F_i) e^{x_i} \bigr) \, 2D e^{x_i}
      \\
      & \qquad\qquad\qquad
      + 2 \bigl( (A+E_i) e^{-x_i} - (B+F_i) e^{x_i} \bigr)
      \bigl( C e^{-x_i} + D e^{x_i} \bigr)
      \biggr)
      \\ &
      = 2 \sum_{i=1}^N m_i e^{x_i}
      \biggl(
      2D (A+E_i) + D m_i e^{x_i}
      + C(A+E_i) e^{-2x_i}
      - C(B+F_i)
      \biggr)
      \\ &
      = 2 \tilde m e^{\tilde x} (2AD-BC)
      + 2D \underbrace{\sum_{i=1}^N m_i e^{x_i} \biggl( m_i e^{x_i} + 2 \sum_{r<i} m_r e^{x_r} \biggr)}_{= \biggl( \sum_{i=1}^N m_i e^{x_i} \biggr)^2 = (\tilde m e^{\tilde x})^2}
      \\
      & \quad
      + 2 \tilde m e^{-\tilde x} AC
      + 2C \underbrace{\Biggl(
        \sum_{i=1}^N m_i e^{-x_i} \biggl( \sum_{r<i} m_r e^{x_r} \biggr)
        - \sum_{i=1}^N m_i e^{x_i} \biggl( \sum_{r>i} m_r e^{-x_r} \biggr)
        \Biggr)}_{=0}
      ,
    \end{split}
  \end{equation*}
  so they are equal.
  The second equation in~\eqref{eq:x-m-tilde-to-check} is
  proved similarly.

  Interchanging $u$ and~$v$ in these calculations
  shows that the same holds for $Y$-groups.
  The conclusion is that if we successively replace all the $X$-groups and $Y$-groups
  with their corresponding effective positions and amplitudes,
  the interlacing configuration which remains in the end will satisfy the interlacing peakon ODEs.
\end{proof}

\begin{remark}
  As a further independent verification,
  one may also check that if $x_{j,i}$ and~$m_{j,i}$
  are given by our solution formulas, with expressions
  in terms of the determinants $\detJ_{ij}^{rs}$,
  then the sums in the definition of $\tilde x_k$ and~$\tilde m_k$
  will simplify in such a way that
  these quantities will indeed agree with the corresponding expressions from
  the solution formulas for the interlacing case.
  Here we omit these somewhat lenghty calculations, which involve
  an induction on the number of peakons in the group,
  together with determinant manipulations based on ``Lewis Carroll's identity'',
  similar to those in Section~A.3 in~\cite{lundmark-szmigielski:2016:GX-inverse-problem}.
\end{remark}

\section{Absence of collisions}
\label{sec:no-collisions}

This section is devoted to the proof of the following theorem:

\begin{theorem}
  \label{thm:no-collisions}
  Collisions cannot take place for a pure peakon configuration in the Geng--Xue equation.
  In other words, if all amplitudes are positive,
  then the strict ordering of the positions of the peakons
  is preserved for all~$t \in \R$.
\end{theorem}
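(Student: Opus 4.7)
The plan is to verify directly from the explicit solution formulas of Sections~\ref{sec:solutions-even} and~\ref{sec:solutions-odd} that the strict ordering of the positions is preserved for all $t \in \R$. Since the positions depend smoothly on $t$ and the full ordering holds at $t=0$ by the parameter constraints recorded in Section~\ref{sec:more-notation}, the issue is to show that none of the defining inequalities can be violated as $t$ varies. The inequalities split into two types: (I) consecutive peakons within the same group, i.e.\ $X_{j,i}(t) < X_{j,i+1}(t)$ and $Y_{j,i}(t) < Y_{j,i+1}(t)$; and (II) the last peakon of a group versus the first peakon of the next, i.e.\ $X_{j,N_j^X}(t) < Y_{j,1}(t)$ and $Y_{j,N_j^Y}(t) < X_{j+1,1}(t)$.

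For type (I), consider a typical $X$-group given by \eqref{eq:even-X-typical-group-pos}. Writing $(A,B,C) = (\detJ_{j+1,j}^{00}, \detJ_{jj}^{00}, \detJ_{j,j-1}^{00})$ and $(A',B',C') = (\detJ_{j,j-1}^{11}, \detJ_{j-1,j-1}^{11}, \detJ_{j-1,j-2}^{11})$, and using the identities $T_{i+1}-T_i = \tau_{i+1}$, $S_{i+1}-S_i = \sigma_i\tau_{i+1}$, and $T_iS_{i+1} - T_{i+1}S_i = \tau_{i+1}R_i$, a short computation gives
\begin{equation*}
X_{j',i+1} - X_{j',i}
=
\frac{\tau_{i+1}\,\bigl((BA'-AB') + \sigma_i(CA'-AC') + R_i(CB'-BC')\bigr)}{(A' + T_iB' + S_iC')(A' + T_{i+1}B' + S_{i+1}C')},
\end{equation*}
with a manifestly positive denominator. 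Since $\tau_{i+1},\sigma_i,R_i > 0$, positivity reduces to the positivity of the three $2\times 2$ Plücker-type minors $BA'-AB'$, $CA'-AC'$, and $CB'-BC'$ of the bimoment family $\bigl(\detJ_{ij}^{rs}\bigr)$. The comparison with the terminal formula for $X_{j',N_{j'}}$ is similar, and for type (II) the constraint \eqref{eq:constraint-last-sigma-first-tau}, combined with monotonicity of $s\mapsto (B+sC)/(B'+sC')$ in $s$ (which again amounts to one of the same $2\times 2$ minors being positive), gives $X_{j,N_j^X} < Y_{j,1}$. The edge constraints \eqref{eq:constraint-last-sigma-D-even}, \eqref{eq:constraint-C-simpler}, and \eqref{eq:constraint-C-general} handle the leftmost and rightmost groups in both parities by an entirely analogous expansion, and the $Y$-groups are treated symmetrically.

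The main obstacle is therefore the positivity of these $2\times 2$ (and, for the edge groups, occasionally $3\times 3$) minors of bimoments $\detJ_{ij}^{rs}$. This is a total-positivity statement for the bimoment family: expanding each minor using the Andréief/Heine representation \eqref{eq:heine-integral-as-sum} and applying a Cauchy--Binet-style reindexing yields a sum over pairs of index sets of products $\Psi_{IJ}\,\Psi_{I'J'}$ with positive spectral weights, whose cross-terms combine into sums of squared Vandermonde-type factors (cf.\ Remark~\ref{rem:J-bimoment-determinant}); the analogous positivity arguments in \cite{lundmark-szmigielski:2016:GX-inverse-problem} that establish the correct ordering of $x_j$ and $y_j$ in the interlacing case can be imported here with only minor bookkeeping modifications. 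Once this positivity is in place, the strict ordering is preserved for all $t \in \R$, and since the right-hand sides of \eqref{eq:GX-peakon-ode-new-notation} remain smooth as long as no collision occurs, the pure peakon solutions are automatically globally defined in time.
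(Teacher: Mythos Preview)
Your argument is correct and lands on the same input as the paper's proof, but the route differs. The paper never expands $X_{j',i+1}-X_{j',i}$; instead it uses a single elementary mediant fact --- if $a/b<c/d$ with all four positive, then $x\mapsto (a+xc)/(b+xd)$ is strictly increasing on $x>0$ and stays strictly between $a/b$ and $c/d$ --- and applies it iteratively. Starting from the interlacing ordering $\alpha/\beta<\gamma/\delta<\epsilon/\phi$ for the three consecutive singletons, one step gives $Y_{j'-1}<X_{j',1}<\gamma/\delta$, and then each further application (with $x=\tau_{i+1}$, using $\sigma_i<\sigma_{i+1}$ to bump the right endpoint) produces $X_{j',i}<X_{j',i+1}<(\gamma+\sigma_i\epsilon)/(\delta+\sigma_i\phi)$, all the way out to $X_{j',N}<Y_{j'}$. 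The inter-group and edge comparisons are handled by the same mediant move together with the constraints of Section~\ref{sec:more-notation}; no minor larger than $2\times2$ is ever needed, even at the edges, once $X_{1,1}$ is rewritten as in \eqref{eq:leftmost-group-rewritten-structure}.

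Your explicit-difference computation recovers exactly the same content: your three minors $BA'-AB'$, $CA'-AC'$, $CB'-BC'$ are nothing but the statements $A/A'<B/B'$, $A/A'<C/C'$, $B/B'<C/C'$, i.e.\ the interlacing ordering $Y_{(j+1)'}<X_{j'}<Y_{j'}$ plus transitivity, which is already part of Theorem~\ref{thm:interlacing-solution}. So you do not need to redo any Cauchy--Binet or total-positivity calculation --- what you call ``the main obstacle'' is literally the interlacing-case result, and you can cite it directly rather than import its proof. The paper's iterative mediant presentation buys a cleaner uniform treatment (no separate $i=N$ case, no mention of $3\times3$ minors at the edges), while your approach has the virtue of making the dependence on $\tau_{i+1}$, $\sigma_i$, $R_i$ explicit in one line.
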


\begin{proof}
  For simplicity, we will illustrate the general pattern with an example.
  Suppose we have a configuration with $5+5$ groups,
  where two $Y$-singletons border an $X_3$-group containing six peakons.
  Then the solution formulas for those positions have the form
  \begin{equation}
    \label{eq:X-group-structure-no-collisions}
    \begin{aligned}
      Y_2 &= \frac{\alpha}{\beta}
      \\[2ex]
      X_{3,1} &= \frac{\alpha + \tau_1 \gamma}{\beta + \tau_1 \delta}
      \\
      X_{3,2} &= \frac{\alpha + (\tau_1 + \tau_2) \gamma + \tau_2 \sigma_1 \epsilon}{\beta + (\tau_1 + \tau_2) \delta + \tau_2 \sigma_1 \phi}
      \\
      X_{3,3} &= \frac{\alpha + (\tau_1 + \tau_2 + \tau_3) \gamma + (\tau_2 \sigma_1 + \tau_3 \sigma_2) \epsilon}{\beta + (\tau_1 + \tau_2 + \tau_3) \delta + (\tau_2 \sigma_1 + \tau_3 \sigma_2) \phi}
      \\
      X_{3,4} &= \frac{\alpha + (\tau_1 + \tau_2 + \tau_3 + \tau_4) \gamma + (\tau_2 \sigma_1 + \tau_3 \sigma_2 + \tau_4 \sigma_3) \epsilon}{\beta + (\tau_1 + \tau_2 + \tau_3 + \tau_4) \delta + (\tau_2 \sigma_1 + \tau_3 \sigma_2 + \tau_4 \sigma_3) \phi}
      \\
      X_{3,5} &= \frac{\alpha + (\tau_1 + \tau_2 + \tau_3 + \tau_4 + \tau_5) \gamma + (\tau_2 \sigma_1 + \tau_3 \sigma_2 + \tau_4 \sigma_3 + \tau_5 \sigma_4) \epsilon}{\beta + (\tau_1 + \tau_2 + \tau_3 + \tau_4 + \tau_5) \delta + (\tau_2 \sigma_1 + \tau_3 \sigma_2 + \tau_4 \sigma_3 + \tau_5 \sigma_4) \phi}
      \\
      X_{3,6} &= \frac{\gamma + \sigma_5 \epsilon}{\delta + \sigma_5 \phi}
      \\[2ex]
      Y_3 &= \frac{\epsilon}{\phi}
      ,
    \end{aligned}
  \end{equation}
  where
  \begin{equation*}
    \alpha = \detJ_{43}^{00}
    ,\quad
    \beta = \detJ_{32}^{11}
    ,\quad
    \gamma =\detJ_{33}^{00}
    ,\quad
    \delta = \detJ_{22}^{11}
    ,\quad
    \epsilon = \detJ_{32}^{00}
    ,\quad
    \phi = \detJ_{21}^{11}
    ,
  \end{equation*}
  and where we know from the interlacing case,
  where the solution formula is $X_3 = \gamma/\delta$,
  that
  \begin{equation*}
    \frac{\alpha}{\beta} < \frac{\gamma}{\delta} < \frac{\epsilon}{\phi}
    .
  \end{equation*}
  By assumption, $\alpha$, $\beta$, $\gamma$, $\delta$, $\epsilon$, $\phi$, $\tau_i$ and $\sigma_i$
  are all positive,
  and $\sigma_1 < \dots < \sigma_5$.
  
  The basic proposition that we will use is if $a/b < c/d$ with
  $a$, $b$, $c$, $d$ positive, then the function
  \begin{equation*}
    f(x)
    = \frac{a+xc}{b+xd}
    = \frac{c}{d} - \underbrace{b \left( \frac{c}{d} - \frac{a}{b} \right)}_{> 0} \cdot \underbrace{\frac{1}{b+dx}}_{\text{decr.}}
  \end{equation*}
  is \textbf{increasing} in the interval $x > -b/d$, and in particular for $x \ge 0$.
  Thus, $f$ increases from $f(0)=a/b$ to $\lim_{x \to \infty} f(x) = c/d$:
  \begin{equation}
    \label{eq:little-proposition}
    \frac{a}{b} < \frac{a+xc}{b+xd} < \frac{c}{d}
    ,\qquad
    \text{if $x > 0$}    
    .
  \end{equation}
  This shows at once that
  \begin{equation}
    \underbrace{\frac{\alpha}{\beta}}_{=Y_2}
    <
    \underbrace{\frac{\alpha+\tau_1 \gamma}{\beta+\tau_1 \delta}}_{=X_{3,1}}
    \qquad\text{and}\qquad
    \underbrace{\frac{\gamma+\sigma_5 \epsilon}{\delta+\sigma_5 \phi}}_{=X_{3,6}}
    <
    \underbrace{\frac{\epsilon}{\phi}}_{=Y_3}
    .
  \end{equation}
  (And we also see again that $X_{3,1}$ and~$X_{3,6}$ lie on either side of
  the singleton $X_3 = \gamma/\delta$,
  as we already proved in Section~\ref{sec:effective}.)
  Next, \eqref{eq:little-proposition} also shows that
  \begin{equation*}
    \frac{\alpha+\tau_1 \gamma}{\beta+\tau_1 \delta}
    < \frac{\gamma}{\delta}
    < \frac{\gamma+\sigma_1 \epsilon}{\delta+\sigma_1 \phi}
    ,
  \end{equation*}
  and we can apply~\eqref{eq:little-proposition} to the outer members of this inequality,
  with $x=\tau_2$, to obtain that
  \begin{equation}
    \underbrace{\frac{\alpha+\tau_1 \gamma}{\beta+\tau_1 \delta}}_{=X_{3,1}}
    <
    \underbrace{\frac{\alpha+\tau_1 \gamma + \tau_2 (\gamma+\sigma_1 \epsilon)}{\beta+\tau_1 \delta + \tau_2 (\delta+\sigma_1 \phi)}}_{=X_{3,2}}
    < \frac{\gamma+\sigma_1 \epsilon}{\delta+\sigma_1 \phi}
    .
  \end{equation}
  Then also $X_{3,2} < \frac{\gamma+\sigma_2 \epsilon}{\delta+\sigma_2 \phi}$,
  since $\sigma_1 < \sigma_2$ and the right-hand side increases with~$\sigma$,
  so we can apply~\eqref{eq:little-proposition} to those two ratios, with $x=\tau_3$,
  to obtain
  \begin{equation}
    \underbrace{\frac{\alpha+\tau_1 \gamma + \tau_2 (\gamma+\sigma_1 \epsilon)}{\beta+\tau_1 \delta + \tau_2 (\delta+\sigma_1 \phi)}}_{=X_{3,2}}
    <
    \underbrace{\frac{\alpha+\tau_1 \gamma + \tau_2 (\gamma+\sigma_1 \epsilon) + \tau_3 (\gamma+\sigma_2 \epsilon)}{\beta+\tau_1 \delta + \tau_2 (\delta+\sigma_1 \phi) + \tau_3 (\delta+\sigma_2 \phi)}}_{=X_{3,3}}
    < \frac{\gamma+\sigma_2 \epsilon}{\delta+\sigma_2 \phi}
    .
  \end{equation}
  Continuing in the same manner, we find
  \begin{equation}
    X_{3,3} < X_{3,4} < \frac{\gamma+\sigma_3 \epsilon}{\delta+\sigma_3 \phi}
  \end{equation}
  and
  \begin{equation}
    X_{3,4} < X_{3,5} < \frac{\gamma+\sigma_4 \epsilon}{\delta+\sigma_4 \phi}
    < \underbrace{\frac{\gamma+\sigma_5 \epsilon}{\delta+\sigma_5 \phi}}_{= X_{3,6}}
    .
  \end{equation}
  Thus,
  \begin{equation}
    Y_2 < X_{3,1} < X_{3,2} < X_{3,3} < X_{3,4} < X_{3,5} < X_{3,6} < Y_3
    ,
  \end{equation}
  as desired.

  It is also easy to show that two (typical) adjacent non-singleton groups cannot overlap,
  since the position of the rightmost peakon in the first group and
  the position of the leftmost peakon in the second group are both given by
  expressions of the same form
  \begin{equation*}
    \frac{\alpha + x \gamma}{\beta + x \delta}
  \end{equation*}
  where $x$ is the last $\sigma$ in the first group,
  or the first $\tau$ in the second group,
  which satisfy $\sigma < \tau$ by assumption.

  All other cases (involving the outermost groups) can be checked in a similar manner,
  using the constraints in Section~\ref{sec:more-notation} whenever necessary.
\end{proof}

\section{Characteristic curves}
\label{sec:characteristic-curves}

Let us define the \emph{characteristic curves} (or \emph{characteristics}) for a given solution
$\bigl( u(x,t), \, v(x,t) \bigr)$ of the Geng--Xue equation
as the solutions $x = \xi(t)$ of the ODE
\begin{equation}
  \label{eq:characteristic-curves}
  \frac{d \xi}{dt}(t) = u \bigl( \xi(t),t \bigr) \, v \bigl( \xi(t),t \bigr)
  ,
\end{equation}
or $\dot\xi = u(\xi) \, v(\xi)$ for short.
The trajectories of the peakons, $x = x_{j,i}(t)$ and $x = y_{j,i}(t)$,
are particular characteristic curves, according to the peakon
ODEs~\eqref{eq:GX-peakon-ode-new-notation}.
Some of the characteristic curves \emph{between} the peakons are obtained
as a byproduct of our proofs in the form of ``ghostpeakons'',
as we noted already in Example~\ref{ex:proof-technique};
see Figure~\ref{fig:proof-example-characteristics}
in particular.
And the remaining characteristic curves can also be found by taking suitable limits,
as we shall see.

We will summarize the formulas for all the characteristics in this sections,
because they may shed some light on the structure of the solution formulas for the
positions of the peakons.

For a typical non-singleton group (not the leftmost or rightmost group),
the structure of the solution formulas for the positions is
\begin{equation*}
  \text{$X_{j,i}$ or $Y_{j,i}$} =
  \begin{cases}
    \dfrac{\alpha + T_i \gamma + S_i \epsilon}{\beta + T_i \delta + S_i \phi}
    ,&
    1 \le i \le N-1
    ,
    \\[2ex]
    \dfrac{\gamma + \sigma_{N-1} \epsilon}{\delta + \sigma_{N-1} \phi}
    ,&
    i = N
    ,
  \end{cases}
\end{equation*}
where,
like in~\eqref{eq:X-group-structure-no-collisions},
the letters $\alpha$, $\beta$, $\gamma$, $\delta$, $\epsilon$, $\phi$ symbolize
certain determinants~$\detJ_{ab}^{rs}$
which are independent of~$i$.
The exact choice of indices for these determinants depends on~$j$,
and on whether the group is an $X$-group or a $Y$-group,
and on whether we are in the even or odd case;
for details see the formulas
\eqref{eq:even-X-typical-group-pos},
\eqref{eq:even-Y-typical-group-pos},
\eqref{eq:odd-X-typical-group-pos}
and~\eqref{eq:odd-Y-group-pos},
which all have the above structure in common.
What we directly get by examining the ghostpeakon formulas from the proofs in
Sections \ref{sec:proofs-even} and~\ref{sec:proofs-odd},
such as equation~\eqref{eq:proof-Y-ghost} and its remnants as
further peakons are killed,
is that the family of characteristic curves $x = \xi(t;\theta)$
between peakons number $i$ and $i+1$ in such a group is given by
\begin{equation}
  \label{eq:char-within-typical}
  \Xi := \tfrac12 e^{2 \xi}
  = \frac{\alpha + (T_i + \theta) \gamma + (S_i + \theta \sigma_i) \epsilon}{\beta + (T_i + \theta) \delta + (S_i + \theta \sigma_i) \phi}
  ,
\end{equation}
where the variable~$\theta$ which indexes the family is allowed to vary in the range
$0 < \theta < \tau_{i+1}$ for $1 \le i \le N-2$
and $0 < \theta < \infty$ for $i = N-1$.
Note that in the limit as $\theta$ tends to its lower or upper bounding value,
the characteristic curve converges to the neighbouring peakon curve
to its left or right, respectively.
In particular, when $\theta \to \infty$ for $i=N-1$,
this explains nicely how the ``exceptional'' formula for
the rightmost peakon in the group actually fits naturally into the pattern
formed by the other formulas.

We also find for the rightmost group ($Y_K$ in the even case,
$X_{K+1}$ in the odd case),
where the solution formulas
\eqref{eq:even-Y-rightmost-group-pos}
and~\eqref{eq:odd-X-rightmost-group-pos}
have the structure
\begin{equation*}
  \text{$Y_{K,i}$ or $X_{K+1,i}$} =
  \begin{cases}
    \alpha + T_i \beta + S_i
    ,&
    1 \le i \le N-1
    ,
    \\
    \alpha + (T_{N-1} + D) \beta + S_i + D \sigma_{N-1}
    ,&
    i = N
    ,
  \end{cases}
\end{equation*}
that the characteristics between peakons number $i$ and $i+1$ are given by
\begin{equation}
  \label{eq:char-within-rightmost}
  \Xi = \alpha + (T_i + \theta) \beta + S_i + \theta \sigma_i
  ,
\end{equation}
where
$0 < \theta < \tau_{i+1}$ for $1 \le i \le N-2$
and $0 < \theta < D$ for $i = N-1$.

For the leftmost group, if we let
\begin{equation}
  \theta_0 = \frac{\tau_1 M}{\sigma_1 C}
  ,\quad
  \text{where $M = \prod_j \mu_j$}
  ,
\end{equation}
and rewrite the formula for~$X_{1,1}$
from \eqref{eq:even-X-leftmost-group-pos}
or~\eqref{eq:odd-X-leftmost-group-pos}
a little,
the structure is
\begin{equation}
  \label{eq:leftmost-group-rewritten-structure}
  X_{1,i} =
  \begin{cases}
    \dfrac{(S_1 + \theta_0 \sigma_1) \delta}{\alpha + (T_1 + \theta_0) \beta + (S_1 + \theta_0 \sigma_1) \gamma}
    ,&
    i = 1
    ,
    \\[2ex]
    \dfrac{S_i \delta}{\alpha + T_i \beta + S_i \gamma}
    ,&
    2 \le i \le N-1
    ,
    \\[2ex]
    \dfrac{\sigma_{N-1} \delta}{\beta + \sigma_{N-1} \gamma}
    ,&
    i = N
    ,
  \end{cases}
\end{equation}
and the characteristics between peakons number $i$ and $i+1$ are given by
\begin{equation}
  \label{eq:char-within-leftmost}
  \Xi =
  \frac{(S_i + \theta \sigma_i) \delta}{\alpha + (T_i + \theta) \beta + (S_i + \theta \sigma_i) \gamma}
  ,
\end{equation}
where (if $N \ge 3$)
$\theta_0 < \theta < \tau_2$ for $i = 1$,
$0 < \theta < \tau_{i+1}$ for $2 \le i \le N-2$
and $0 < \theta < \infty$ for $i = N-1$.
This requires that $\theta_0 < \tau_2$, which is exactly what
the constraint~\eqref{eq:constraint-C-general} says.
If $N=2$, so that the middle case $2 \le i \le N-1$ is absent
from~\eqref{eq:leftmost-group-rewritten-structure},
then the range of $\theta$ is instead
$\theta_0 < \theta < \infty$ (for $i=1$, i.e., between the only two peakons in the group).

\bigskip

For the remaining characteristics,
that we have not already obtained as byproducts,
there is a bit more work left to do,
but they can be obtained via much simpler substitutions and limits than the ghostpeakons
so far.

To obtain the characteristics between two groups,
add an extra peakon to the left group by increasing its $N$ to $N+1$,
set $\sigma_N = \theta$ in the solution formula for the rightmost peakon in that group
(letting $\theta$ inherit whatever constraints $\sigma_N$ had),
and let $\tau_N \to \infty$.
One can check that this will bring the amplitude of the auxiliary peakon to zero,
leaving a ghostpeakon between the groups.
For example, between a typical $X$-singleton
and a typical $Y$-singleton, given by formulas of the structure
\begin{equation*}
  X_j = \frac{\alpha}{\beta}
  ,\qquad
  Y_j = \frac{\gamma}{\delta}
  ,
\end{equation*}
this produces the characteristics
\begin{equation}
  \label{eq:char-between-XY}
  \Xi = \frac{\alpha + \theta \gamma}{\beta + \theta \delta}
  ,
\end{equation}
where $0 < \theta < \infty$.
Between a typical non-singleton $X$-group and a typical non-singleton $Y$-group,
whose neighbouring members are given by
\begin{equation*}
  X_{j,N} = \frac{\alpha + \sigma_{N-1}^X \gamma}{\beta + \sigma_{N-1}^X \delta}
  ,\qquad
  Y_{j,1} = \frac{\alpha + \tau_1^Y \gamma}{\beta + \tau_1^Y \delta}
  ,
\end{equation*}
we get again~\eqref{eq:char-between-XY}, but now with
$\sigma_{N-1}^X < \theta < \tau_1^Y$
(which is possible because of the constraint~\eqref{eq:constraint-last-sigma-first-tau}
that the last $\sigma$ in one group
must be less than the first~$\tau$ in the next group
whenever two non-singleton groups are adjacent).
And if one of the groups is a singleton and the other one isn't,
then once again the characteristics are given by~\eqref{eq:char-between-XY},
but with
$\sigma_{N-1}^X < \theta < \infty$
or
$0 < \theta < \tau_1^Y$.

Characteristics between a typical $Y$-group and a typical $X$-group follow the analogous pattern.

\begin{remark}
  These results reveal one thing that happens when going
  from the interlacing to the non-interlacing case:
  when a typical singleton (not the leftmost or rightmost) in the interlacing case
  is replaced with a group of several peakons,
  the outermost peakons in that group will travel along
  characteristic curves for the interlacing solution.
  The parameter~$\tau_1$ picks out the $\theta$-value of the characteristic
  that the leftmost peakon in the group will follow
  (in the family of characteristics between the singleton in question and its left neighbour),
  and $\sigma_{N-1}$ does the same for the rightmost peakon in the group
  (in the family of characteristics towards the right neighbour).
  However, what happens ``inside'' the group is more complicated, and cannot
  be determined from the characteristics of the interlacing solution.
\end{remark}

Between the second rightmost group and the rightmost group ($Y_K$/$X_{K+1}$ in the even/odd case),
the same formula~\eqref{eq:char-between-XY}
for the characteristics holds, with the additional caveat
that there is an upper bound $\theta < D$ instead of $\theta < \infty$
if the rightmost group is a singleton.
Similarly, between the $X_1$-group and the $Y_2$-group,
the formula is the same, but with a lower bound $M/C < \theta$
instead of $0 < \theta$ if $N_1^X = 1$, in which case we
define $\alpha$, $\beta$, $\gamma$, $\delta$ by rewriting the solution
formula~\eqref{eq:even-X-leftmost-singleton} for the even case as
\begin{equation}
  \label{eq:leftmost-singleton-rewritten-structure}
  X_1
  = \frac{\detJ_{K,K-1}^{00}}{\detJ_{K-1,K-2}^{11} + C \, \detJ_{K-1,K-1}^{10}}
  = \frac{0 + \frac{M}{C} \, \detJ_{K,K-1}^{00}}{\detJ_{K-1,K-1}^{11} + \frac{M}{C} \, \detJ_{K-1,K-2}^{11}}
  = \frac{\alpha + \frac{M}{C} \, \gamma}{\beta + \frac{M}{C} \, \delta}
  ,
\end{equation}
Thus, if $N_1^X = 1$ and $N_1^Y \ge 2$, with
\begin{equation}
  Y_{1,1} = \frac{0 + \tau_1 \detJ_{K,K-1}^{00}}{\detJ_{K-1,K-1}^{11} + \tau_1 \detJ_{K-1,K-2}^{11}}
  = \frac{\alpha + \tau_1 \gamma}{\beta + \tau_1 \delta}
\end{equation}
by~\eqref{eq:even-Y-typical-group-pos},
then the range of~$\theta$ is
$\frac{M}{C} < \theta < \tau_{1,1}^Y$,
which is possible because of the constraint~\eqref{eq:constraint-C-simpler}.
Similarly with~\eqref{eq:odd-X-leftmost-singleton}
and~\eqref{eq:odd-Y-group-pos} for the odd case.

Finally, we have the characteristic curves on the far right or far left.
To obtain a ghostpeakon to the right of the rightmost peakon,
add an extra peakon by changing $N$ to $N+1$ in the rightmost group,
let
\begin{equation}
  \hat{\sigma}_N = \frac{1}{\epsilon}
  ,\qquad
  \hat{\tau}_N = D
  ,\qquad
  \hat{D} = \theta \epsilon
  ,
\end{equation}
and let $\epsilon \to 0$.
The result is that if the rightmost group is a singleton,
\begin{equation*}
  \text{$X_{K+1}$ or $Y_{K}$} = \alpha + D \beta
  ,
\end{equation*}
then the characteristics on its right
(as we incidentally already knew in the odd case
from~\eqref{eq:kill-rightmost-ghost})
are given by
\begin{equation}
  \label{eq:char-far-right-singleton}
  \Xi = \alpha + D \beta + \theta
  ,\qquad
  0 < \theta < \infty
  ,
\end{equation}
and if it's a group with $N \ge 2$ peakons, the rightmost of which is
\begin{equation*}
  \text{$X_{K+1,N}$ or $Y_{K,N}$} = \alpha + (T_{N-1} + D) \beta + S_i + D \sigma_{N-1}
  ,
\end{equation*}
then
\begin{equation}
  \label{eq:char-far-right-group}
  \Xi =
  \alpha + (T_{N-1} + D) \beta + S_i + D \sigma_{N-1} + \theta
  ,\qquad
  D < \theta < \infty
  .
\end{equation}
In other words, the formula for $\Xi$ is simply obtained by adding~$\theta$
to the formula for the rightmost $X$ or~$Y$.

For the characteristics to the left of the leftmost peakon,
it's easiest to use symmetry: add~$1/\theta$ to the formula for $1/X_1$ to get~$1/\Xi$.
Thus, if the leftmost group is a singleton
(cf.~\eqref{eq:leftmost-singleton-rewritten-structure}),
\begin{equation*}
  X_1 = \frac{\frac{M}{C} \, \gamma}{\beta + \frac{M}{C} \, \delta}
  ,
\end{equation*}
then the characteristics on its left are given by
\begin{equation}
  \label{eq:char-far-left-singleton}
  \Xi = \frac{1}{\dfrac{\beta + \frac{M}{C} \, \delta}{\frac{M}{C} \, \gamma} + \dfrac{1}{\theta}}
  = \frac{\theta \frac{M}{C} \, \gamma}{\theta \left( \beta + \frac{M}{C} \, \delta \right) + \frac{M}{C} \, \gamma}
  ,\qquad
  0 < \theta < \infty
  ,
\end{equation}
and if it's a group with $N \ge 2$ peakons, the leftmost of which is
(cf.~\eqref{eq:leftmost-group-rewritten-structure})
\begin{equation*}
  X_{1,1}
  = \frac{(S_1 + \theta_0 \sigma_1) \delta}{\alpha + (T_1 + \theta_0) \beta + (S_1 + \theta_0 \sigma_1) \gamma}
  = \frac{\theta_0 \sigma_1 \delta}{\alpha + (\tau_1 + \theta_0) \beta + \theta_0 \sigma_1 \gamma}
  ,
\end{equation*}
then
\begin{equation}
  \label{eq:char-far-left-group}
  \Xi = \frac{1}{\dfrac{\alpha + (\tau_1 + \theta_0) \beta + \theta_0 \sigma_1 \gamma}{\theta_0 \sigma_1 \delta} + \dfrac{1}{\theta}}
  = \frac{\theta \, \theta_0 \sigma_1 \delta}{\theta \, \bigl( \alpha + (\tau_1 + \theta_0) \beta + \theta_0 \sigma_1 \gamma \bigr) + \theta_0 \sigma_1 \delta}
  ,\qquad
  0 < \theta < \infty
  .
\end{equation}
These results can also be obtained by
considering a leftmost group with $N+1$ peakons
and making the following somewhat elaborate substititions before letting $\epsilon \to 0$:
if $N = 1$, let
\begin{equation}
  \hat{C} = \frac{\epsilon}{\theta L}
  ,\qquad
  \hat{\tau}_1 = \epsilon
  ,\qquad
  \hat{\sigma}_1 = \frac{M}{C}
  ,
\end{equation}
and if $N \ge 2$, let
\begin{equation}
  \begin{aligned}
    \hat{C} &= \left( \frac{C}{\tau_1} + \frac{1}{\theta L} \right) \epsilon
    ,\\
    \hat{\tau}_1 &= \epsilon
    ,\\
    \hat{\tau}_2 &= \tau_1 + \frac{M \tau_1}{C \sigma_1}
    ,\\
    \hat{\tau}_3 &= \tau_2 - \frac{M \tau_1}{C \sigma_1}
    \qquad
    \text{(if $N \ge 3$)}
    ,\\
    \hat{\tau}_i &= \tau_{i-1}
    ,\qquad
    \text{for $4 \le i \le N$}
    \qquad
    \text{(if $N \ge 4$)}
    ,\\
    \hat{\sigma}_1 &= \frac{1}{\frac{1}{\sigma_1} + \frac{C}{M}}
    ,\\
    \hat{\sigma}_i &= \sigma_{i-1}
    ,\qquad
    \text{for $2 \le i \le N$}
    .
  \end{aligned}
\end{equation}

\phantomsection
\addcontentsline{toc}{section}{Acknowledgements}
\section*{Acknowledgments}
We thank Krzysztof Marciniak for many useful comments.

\small
\bibliographystyle{GX-noninterlacing}
\bibliography{GX-noninterlacing}

\begin{thebibliography}{15}
\providecommand{\enquote}[1]{``#1''}
\providecommand{\url}[1]{{\tt #1}}
\providecommand{\href}[2]{#2}
\setlength{\itemsep}{0pt}


\bibitem[1]{beals-sattinger-szmigielski:2000:moment}
Richard Beals, David H. Sattinger, and Jacek Szmigielski (2000). \href{http://dx.doi.org/10.1006/aima.1999.1883}{Multipeakons and the classical moment problem}. \textit{Adv. Math.} \textbf{154}(2):\discretionary{}{}{}229--257. \href{http://www.ams.org/mathscinet-getitem?mr=1784675}{MR1784675 (2001h:37151)}. arXiv:\href{http://arxiv.org/abs/solv-int/9906001}{solv-int/9906001}.

\bibitem[2]{camassa-holm:1993:CH-orginal-paper}
Roberto Camassa and Darryl D. Holm (1993). \href{http://dx.doi.org/10.1103/PhysRevLett.71.1661}{An integrable shallow water equation with peaked solitons}. \textit{Phys. Rev. Lett.} \textbf{71}(11):\discretionary{}{}{}1661--1664. \href{http://www.ams.org/mathscinet-getitem?mr=1234453}{MR1234453}. arXiv:\href{http://arxiv.org/abs/patt-sol/9305002}{patt-sol/9305002}.

\bibitem[3]{camassa-holm-hyman:1994:CH-new-integrable}
Roberto Camassa, Darryl D. Holm, and James M. Hyman (1994). \href{http://dx.doi.org/10.1016/S0065-2156(08)70254-0}{A new integrable shallow water equation}. \textit{Advances in Applied Mechanics} \textbf{31}:\discretionary{}{}{}1--33.

\bibitem[4]{degasperis-holm-hone:2002:new-integrable-equation-DP}
Antonio Degasperis, Darryl D. Holm, and Andrew N. W. Hone (2002). \href{http://dx.doi.org/10.1023/A:1021186408422}{A new integrable equation with peakon solutions}. \textit{Theor. Math. Phys.} \textbf{133}(2):\discretionary{}{}{}1463--1474. \href{http://www.ams.org/mathscinet-getitem?mr=2001531}{MR2001531}. arXiv:\href{http://arxiv.org/abs/nlin/0205023}{nlin/0205023 [nlin.SI]}.  Proceedings of NEEDS 2001 (Cambridge, UK, July 24--31, 2001).

\bibitem[5]{degasperis-procesi:1999:asymptotic-integrability}
Antonio Degasperis and Michela Procesi (1999). Asymptotic integrability. In A. Degasperis and G. Gaeta, editors, \textit{Symmetry and Perturbation Theory ({Rome}, 1998)}, pp.~23--37. World Scientific Publishing, River Edge, NJ. \href{http://www.ams.org/mathscinet-getitem?mr=1844104}{MR1844104}.

\bibitem[6]{dong-zhou:2018:interlacing-peakons-two-component-CH}
Fengfeng Dong and Lingjun Zhou (2018). \href{http://dx.doi.org/10.1080/14029251.2018.1452674}{Inverse spectral problem and peakons of an integrable two-component {Camassa}--{Holm} system}. \textit{J. Nonlinear Math. Phys.} \textbf{25}(2):\discretionary{}{}{}290--308. \href{http://www.ams.org/mathscinet-getitem?mr=3776562}{MR3776562}.

\bibitem[7]{geng-xue:2009:GX-peakon-equation-cubic-nonlinearity}
Xianguo Geng and Bo Xue (2009). \href{http://dx.doi.org/10.1088/0951-7715/22/8/004}{An extension of integrable peakon equations with cubic nonlinearity}. \textit{Nonlinearity} \textbf{22}(8):\discretionary{}{}{}1847--1856. \href{http://www.ams.org/mathscinet-getitem?mr=2525813}{MR2525813 (2010i:37160)}.

\bibitem[8]{hone-lundmark-szmigielski:2009:novikov}
Andrew N. W. Hone, Hans Lundmark, and Jacek Szmigielski (2009). \href{http://dx.doi.org/10.4310/DPDE.2009.v6.n3.a3}{Explicit multipeakon solutions of {Novikov's} cubically nonlinear integrable {Camassa}--{Holm} type equation}. \textit{Dyn. Partial Differ. Equ.} \textbf{6}(3):\discretionary{}{}{}253--289. \href{http://www.ams.org/mathscinet-getitem?mr=2569508}{MR2569508 (2010i:37172)}. arXiv:\href{http://arxiv.org/abs/0903.3663}{0903.3663 [nlin.SI]}.

\bibitem[9]{hone-wang:2008:cubic-nonlinearity}
Andrew N. W. Hone and Jing Ping Wang (2008). \href{http://dx.doi.org/10.1088/1751-8113/41/37/372002}{Integrable peakon equations with cubic nonlinearity}. \textit{J. Phys. A: Math. Theor.} \textbf{41}(37):\discretionary{}{}{}372002 (10 pages). \href{http://www.ams.org/mathscinet-getitem?mr=2430566}{MR2430566 (2009i:35311)}. arXiv:\href{http://arxiv.org/abs/0805.4310}{0805.4310 [nlin.SI]}.

\bibitem[10]{lundmark-shuaib:2018p:ghostpeakons}
Hans Lundmark and Budor Shuaib (2018). Ghostpeakons and characteristic curves for the {Camassa}--{Holm}, {Degasperis}--{Procesi} and Novikov equations. arXiv:\href{http://arxiv.org/abs/1807.01910}{1807.01910 [nlin.SI]}.

\bibitem[11]{lundmark-szmigielski:2005:DPlong}
Hans Lundmark and Jacek Szmigielski (2005). \href{http://dx.doi.org/10.1155/IMRP.2005.53}{Degasperis--{Procesi} peakons and the discrete cubic string}. \textit{Int. Math. Res. Pap.} \textbf{2005}(2):\discretionary{}{}{}53--116. \href{http://www.ams.org/mathscinet-getitem?mr=2150256}{MR2150256}. arXiv:\href{http://arxiv.org/abs/nlin/0503036}{nlin/0503036 [nlin.SI]}.

\bibitem[12]{lundmark-szmigielski:2016:GX-inverse-problem}
Hans Lundmark and Jacek Szmigielski (2016). \href{http://dx.doi.org/10.1090/memo/1155}{An inverse spectral problem related to the {Geng}--{Xue} two-component peakon equation}. \textit{Mem. Amer. Math. Soc.} \textbf{244}(1155):\discretionary{}{}{}viii+87 pages. \href{http://www.ams.org/mathscinet-getitem?mr=3545110}{MR3545110}. arXiv:\href{http://arxiv.org/abs/1304.0854}{1304.0854 [math.SP]}.

\bibitem[13]{lundmark-szmigielski:2017:GX-dynamics-interlacing}
Hans Lundmark and Jacek Szmigielski (2017). \href{http://dx.doi.org/10.1093/integr/xyw014}{Dynamics of interlacing peakons (and shockpeakons) in the {Geng}--{Xue} equation}. \textit{J. Integrable Syst.} \textbf{2}(1):\discretionary{}{}{}xyw014 (65 pages). \href{http://www.ams.org/mathscinet-getitem?mr=3682465}{MR3682465}. arXiv:\href{http://arxiv.org/abs/1605.02805}{1605.02805 [nlin.SI]}.

\bibitem[14]{novikov:2009:generalizations-of-CH}
Vladimir Novikov (2009). \href{http://dx.doi.org/10.1088/1751-8113/42/34/342002}{Generalizations of the {Camassa}--{Holm} equation}. \textit{J. Phys. A: Math. Theor.} \textbf{42}(34):\discretionary{}{}{}342002 (14 pages). \href{http://www.ams.org/mathscinet-getitem?mr=2530232}{MR2530232 (2011b:35466)}. arXiv:\href{http://arxiv.org/abs/0905.2219}{0905.2219 [nlin.SI]}.

\bibitem[15]{xia-qiao:2015:two-component-CH-with-peakons}
Baoqiang Xia and Zhijun Qiao (2015). \href{http://dx.doi.org/10.1098/rspa.2014.0750}{A new two-component integrable system with peakon solutions}. \textit{Proc. R. Soc. A.} \textbf{471}(2175):\discretionary{}{}{}20140750 (20 pages). \href{http://www.ams.org/mathscinet-getitem?mr=3326340}{MR3326340}. arXiv:\href{http://arxiv.org/abs/1211.5727}{1211.5727 [nlin.SI]}.


\end{thebibliography}

\end{document}